\documentclass[11pt]{article}

	\usepackage{a4,geometry}

\usepackage{graphicx}
\usepackage{amsmath,amssymb,amsthm,mathtools}
\usepackage{paralist}
\usepackage{bm}
\usepackage{xspace}
\usepackage{url}
\usepackage{fullpage, prettyref}
\usepackage{boxedminipage}
\usepackage{wrapfig}
\usepackage{ifthen}
\usepackage{color}
\usepackage{xcolor}
\usepackage{framed}
\usepackage{algorithmic,algorithm}
\usepackage[pagebackref,letterpaper=true,colorlinks=true,pdfpagemode=none,urlcolor=blue,linkcolor=blue,citecolor=violet,pdfstartview=FitH]{hyperref}
\usepackage{fullpage}

\usepackage{thmtools}
\usepackage{thm-restate}

\newtheorem{theorem}{Theorem}[section]

\newtheorem{lemma}[theorem]{Lemma}
\newtheorem{claim}[theorem]{Claim}

\newtheorem{definition}[theorem]{Definition}

\newtheorem{observation}[theorem]{Observation}

\newcommand{\ignore}[1]{}


\newcommand{\cD}{\mathcal{D}}

\newcommand{\cP}{\mathcal{P}}

\newcommand{\cU}{{\cal U}}

\newcommand{\R}{\mathbb R}
\newcommand{\N}{\mathbb N}

\newcommand{\eps}{\varepsilon}

\newcommand{\NN}{\mathbb{N}}

\newcommand{\Exp}{\EX}

\newcommand{\EX}{\hbox{\bf E}}

\newcommand{\Sec}[1]{\hyperref[sec:#1]{\S\ref*{sec:#1}}} 
\newcommand{\Eqn}[1]{\hyperref[eq:#1]{(\ref*{eq:#1})}} 
\newcommand{\Fig}[1]{\hyperref[fig:#1]{Fig.\,\ref*{fig:#1}}} 
\newcommand{\Tab}[1]{\hyperref[tab:#1]{Tab.\,\ref*{tab:#1}}} 
\newcommand{\Thm}[1]{\hyperref[thm:#1]{Theorem\,\ref*{thm:#1}}} 
\newcommand{\Fact}[1]{\hyperref[fact:#1]{Fact\,\ref*{fact:#1}}} 
\newcommand{\Lem}[1]{\hyperref[lem:#1]{Lemma\,\ref*{lem:#1}}} 
\newcommand{\Prop}[1]{\hyperref[prop:#1]{Prop.~\ref*{prop:#1}}} 
\newcommand{\Cor}[1]{\hyperref[cor:#1]{Corollary~\ref*{cor:#1}}} 
\newcommand{\Conj}[1]{\hyperref[conj:#1]{Conjecture~\ref*{conj:#1}}} 
\newcommand{\Def}[1]{\hyperref[def:#1]{Definition~\ref*{def:#1}}} 
\newcommand{\Alg}[1]{\hyperref[alg:#1]{Alg.~\ref*{alg:#1}}} 
\newcommand{\Ex}[1]{\hyperref[ex:#1]{Ex.~\ref*{ex:#1}}} 
\newcommand{\Clm}[1]{\hyperref[clm:#1]{Claim~\ref*{clm:#1}}} 

\usepackage{mathrsfs}
\usepackage{ifthen}
\usepackage{setspace}
\newcommand{\Obs}[1]{\hyperref[def:#1]{Observation~\ref*{obs:#1}}} 
\def\doctype{1}
\def\tsubmission{2}

\ifnum\doctype=\tsubmission
	\newcommand{\full}[1]{}
	\newcommand{\submit}[1]{#1}
\else
	\newcommand{\full}[1]{#1}
	\newcommand{\submit}[1]{}
\fi
\submit{\usepackage{times}}

\def\I{{\mathsf I}}

\def\MON{{\tt MON}}
\def\dist{{\sf dist}}

\def\depth{{\sf depth}}
\def\fii{{f^{(r)}}}

\def\fia{f^{(r)}_{|\a}}
\def\a{{\mathbf a}}

\def\v{{\mathbf v}}

\def\max{{\sf max}}
\def\TV{{\sf TV}}
\def\lca{{\sf lca}}
\def\val{{\sf val}}
\def\lleft{{\sf left}}
\def\rright{{\sf right}}
\def\const{120}

\newcommand{\B}{\mathbf{B}}
\newcommand{\be}{{\bf e}}
\newcommand{\bzero}{{\bf 0}}
\newcommand{\inter}[2]{\I^{#2}_{#1}}

\newcommand{\set}[1]{\{#1\}}
\newcommand{\distmon}[2]{\dist_{#2}(#1,\MON)}

\newcommand{\diff}{\Delta}
\newcommand{\frest}[2]{{#1}_{|#2}}
\newcommand{\slice}{S}

\newcommand\ist{{i^*}}

\newcommand{\cPd}[1]{\cP({#1})}
\def\fext{f_{\textrm{\tt ext}}}

\def\pdiext{d_{\textrm{\tt ext}}}
\def\pdiext{\pdi_{\textrm{\tt ext}}}
\def\cPdd{{\cP}}

\def\dext{\pdiext}
\def\cross{{\sf cr}}
\def\str{{\sf st}}
\def\pdi{{\mathfrak m}}
\def\cD{\mathscr{D}}
\def\cU{\mathscr{U}}
\def\cP{\mathscr{P}}
\newcommand{\hcd}{{\tt hcd}}
\renewcommand{\lca}{{\tt lca}}
\def\VG{\mathcal{G}_{\mathsf{viol}}}
\renewcommand{\epsilon}{\eps}
\def\gg{g_\psi}

\newcommand{\lev}[2]{L^{#1}_{\scriptscriptstyle #2}}
\newcommand{\dep}[2]{L^{#1}_{\scriptscriptstyle \geq #2}}
\newcommand{\abo}[2]{L^{#1}_{\scriptscriptstyle < #2}}
\newcommand{\below}[2]{\mu^{#1}_{\scriptscriptstyle \geq #2}}

\newcommand{\hard}[2]{g^{\scriptscriptstyle #1}_{\scriptscriptstyle #2}}

\newcommand{\hone}[2]{g^{\scriptscriptstyle #1}_{\scriptscriptstyle #2}}
\newcommand{\mappos}{{\Psi^{-1}}}
\newcommand{\gset}{\mappos}
\newcommand{\gcol}{\PPsi}
\colorlet{shadecolor}{blue!02}
\begin{document}
\title{Property Testing on Product Distributions: \\ Optimal Testers for Bounded Derivative Properties \submit{ \\ (Extended Abstract)\footnote{\em This is an extended abstract 
and with many missing details. A full version~\cite{paper}, with the same title, can be found in the ECCC and ArXiV. All proofs can be found in the full version.}}}
\author{
Deeparnab Chakrabarty\thanks{Microsoft Research, {\tt dechakr@microsoft.com}}
\and Kashyap Dixit\thanks{Pennsylvania State University, {\tt kashyap@cse.psu.edu}, supported in part by NSF Grants CCF-0964655 and CCF-1320814}
\and Madhav Jha\thanks{Sandia National Labs, Livermore, {\tt mjha@sandia.gov}}
\and C. Seshadhri\thanks{Sandia National Labs, Livermore, {\tt scomand@sandia.gov}\newline Sandia National Laboratories is a multi-program laboratory managed and operated by Sandia Corporation, a wholly owned subsidiary of Lockheed Martin Corporation, for the U.S. Department of Energy's National Nuclear Security Administration under contract DE-AC04-94AL85000.
}
}
\date{}
\maketitle
\thispagestyle{empty}
\begin{abstract}
The primary problem in property testing is to decide whether a given function satisfies a certain property, or is far
from any function satisfying it. This crucially requires a notion of distance between functions. The most prevalent notion 
is the Hamming distance over the {\em uniform} distribution on the domain. 
This restriction to uniformity is more a matter of convenience than of necessity, and it is important to investigate distances induced by more general distributions.
In this paper, we make significant strides in this direction. 
We give simple and optimal testers for {\em bounded derivative properties} over {\em arbitrary product distributions}.
Bounded derivative properties include fundamental properties such as monotonicity and Lipschitz continuity. 
Our results subsume almost all known results (upper and lower bounds) on monotonicity and Lipschitz testing.\smallskip
We prove an intimate connection between bounded derivative property testing and binary search trees (BSTs). We exhibit a tester whose query complexity is the sum of 
expected depths of optimal BSTs for each marginal. 
Furthermore, we show this sum-of-depths is also a lower bound. 
A fundamental technical contribution of this work is an {\em optimal dimension reduction theorem} for all bounded derivative properties, which relates the distance of a function from the property to the distance of restrictions of the function to random lines. Such a theorem has been elusive even for monotonicity for the past 15 years, and our theorem is an exponential improvement to the previous best known result.
\end{abstract}
\newpage
\setcounter{page}{1}
\section{Introduction}
The field of property testing 
formalizes the following problem: how many queries are needed to decide if a given function satisfies a certain property?
Formally, a \emph{property} $\cP$ is a subset of functions. A tester solves the relaxed membership problem of distinguishing functions
in $\cP$ from those `far' from $\cP$.  To formalize `far', one requires a notion of \emph{distance}, $\dist(f,g) \in [0,1]$, between functions. 
A function $f$ is \emph{$\eps$-far from $\cP$} if $\dist(f,g)\geq \eps$ for all functions $g \in \cP$. 
The notion of distance is central to property testing.  
The most prevalent notion of distance in the literature is the Hamming distance over the {\em uniform} distribution, that is,  $\dist(f,g) := \Pr_{x \sim \cU} [f(x) \neq g(x)]$, where $\cU$ is the uniform distribution over the input domain. 
But the restriction to uniformity is more a matter of convenience than of necessity, and it is important and challenging  to investigate distances induced by more general distributions.
This was already underscored in the seminal work of Goldreich et. al.~\cite{GGR98} who
``stressed that the generalization of property testing to arbitrary distributions"  is important for applications.
Nevertheless, a vast majority of results in property testing have focused  solely on the uniform distribution. 
In this paper we investigate property testing of functions defined over the hypergrid $[n]^d$ with respect to distances induced by {\em arbitrary product distributions}.
Product distributions over this domain form a natural subclass of general distributions where
each individual coordinate is an arbitrary distribution independent of the other coordinates. They arise in many applications; the following are a couple of concrete ones.
{\it Differential privacy:} Recent work on testing differential privacy~\cite{DiJh+13} involve product distributions over the domain $[n]^d$.
In this application, each domain point represents a database and each coordinate is a single individual's data. 
A distribution on databases is given by independent priors on each individual. The goal in~\cite{DiJh+13} is to distinguish private mechanisms from those that
aren't private on `typical' databases. 
{\it Random testing of hardware:} Given an actual silicon implementation of a  circuit,
it is standard practice for engineers to test it on a set of random instances. Coordinates represent entities like memory
addresses, data, control flow bits, etc. One chooses an independent but not identical distribution over each input 
to generate realistic set of test cases. There are specific commands in hardware languages like VHDL and Systemverilog~\cite{VHDL,verilog}
that specify such coordinate-wise distributions.
\medskip
From a theoretical perspective, the study of property testing over non-uniform distributions has mostly led to work on specific problems.
For uniform distributions, 
it is known that broad classes of algebraic and graphic properties are testable~\cite{AS05,AENS06,KaSu08,BSS}.
But little is known in this direction even for product distributions.
One reason for this may be aesthetics: a priori, one doesn't expect a succinct, beautiful answer for 
testing over an arbitrary product distribution.
In this paper we make significant strides in property testing under arbitrary product distributions. 
We give {\em simple}, {\em optimal} testers for the class of \emph{bounded derivative} properties. This class contains
(and is inspired by) the properties of monotonicity and Lipschitz continuity, which are of special interest in property testing.
In fact, the {\em same} tester works for all such properties. 
Furthermore, our `answer' is aesthetically pleasing:  the optimal query complexity with respect to a product distribution is the sum of optimal binary search tree depths over the marginals.
In particular, our results resolve a number of open problems in monotonicity testing, and subsume all previous 
upper and lower bounds over any product, including the uniform, distribution. 
\paragraph{Previous Work.}
We set some context for our work. The property of monotonicity is simple. There is a natural coordinate-wise partial order
over $[n]^d$. For a monotone function, $x \prec y$ implies $f(x) \leq f(y)$.
Monotonicity is one of the most well-studied properties in the area~\cite{EKK+00, GGLRS00,DGLRRS99,LR01,FLNRRS02,AC04,E04,HK04,PRR04,ACCL07,BRW05,BGJRW09,BCG+10,BBM11,ChSe13,ChSe13-2,BlJh+13}. 
A function is $c$-Lipschitz continuous if for all $x,y$, $|f(x) - f(y)| \leq c\|x-y\|_1$.
Lipschitz continuity is a fundamental mathematical property with applications to differential privacy and program robustness.
The study of Lipschitz continuity in property testing 
is more recent~\cite{JR11, AJMS12, ChSe13, DiJh+13, BlJh+13}.
With the exception of~\cite{HalevyK07, HK04,AC04, DiJh+13}, all the previous works are in the uniform distribution setting, for which the story is mostly clear:
there is an $O(\epsilon^{-1}d\log n)$-query tester for both properties~\cite{ChSe13}, and this is optimal for monotonicity~\cite{ChSe13-2}.
For Lipschitz continuity, an $\Omega(d\log n)$ non-adaptive lower bound has been proved~\cite{BlJh+13} recently.
For general product distributions, the story has been far less clear.
Ailon and Chazelle~\cite{AC04} design an $O(2^dH/\eps)$-query tester for monotonicity over product distributions, where $H$ is the Shannon entropy of the distribution.
This work connects property testing with information theory, and the authors explicitly ask whether the entropy is the ``correct answer''. 
There are no non-trivial lower bounds known for arbitrary product distributions.
For Lipschitz continuity, no upper or lower bounds are known for general hypergrids, although
an $O(d^2)$-query tester is known for the hypercube ($\{0,1\}^d$) domain~\cite{DiJh+13}. 
Halevy and Kushilevitz~\cite{HK04,HalevyK07} study monotonicity testing in the \emph{distribution-free setting}, where the tester does not know
the input distribution but has access to random samples. Pertinent to us, they show a lower bound of $\Omega(2^d)$ for monotonicity testing over \emph{arbitrary distributions}; for product distributions, Ailon and Chazelle~\cite{AC04} give an $O(\eps^{-1}d2^d\log n)$-query distribution-free tester.
\subsection{Bounded Derivative Properties}
To describe the class of bounded derivative properties, we first set some notation.
For an integer $k$, we use $[k]$ to denote the set $\{1,2,\ldots, k\}$.
Consider a function $f:[n]^d \mapsto \R$ and a dimension $r \in [d]$. Define $\partial_r f(x) := f(x+\be_r) - f(x)$, where $\be_r$ is the unit
vector in the $r$th dimension ($\partial_r f$ is defined only on $x$ with $x_r < n$.).
\begin{definition} \label{def:bound} An ordered set $\B$ of $2d$ functions 
$l_1, u_1, l_2, u_2, \ldots, l_d, u_d: [n-1] \mapsto \R$
is called a \emph{bounding family} if for all $r \in [d]$ and $y \in [n-1]$, $l_r(y) < u_r(y)$.
Let $\B$ be a bounding family of functions.
The property of being \emph{$\B$-derivative bounded}, denoted as $\cP(\B)$, is the set of functions
$f:[n]^d \mapsto \R$ such that: for all $r \in [d]$ and $x \in [n]^d$,
\begin{equation}
\label{eq:defnbnd}
l_r(x_r) \leq \partial_r f(x) \leq u_r(x_r).
\end{equation}
\end{definition}
This means the $r$th-partial derivative of $f$ is bounded by quantities that only
depend on the $r$th coordinate. Note that this dependence is completely arbitrary,
and different dimensions can have completely different bounds. 
This forms a rich class of properties which includes monotonicity and $c$-Lipschitz continuity.
To get monotonicity, simply set $l_r(y) = 0$ and $u_r(y) = \infty$ for all $r$. To get $c$-Lipschitz continuity,
set $l_r(y) = -c$ and $u_r(y) = +c$ for all $r$. 
The class also includes the property  
demanding monotonicity for some (fixed) coordinates and the $c$-Lipschitz continuity for others; and 
the non-uniform Lipschitz property that demands different Lipschitz constants for different coordinates. 
\begin{definition} \label{def:tester} 
Fix a bounding family $\B$ and product distribution $\cD = \prod_{r\leq d}\cD_r$.
Define $\dist_\cD(f,g) = \Pr_{x \sim \cD}[f(x) \neq g(x)]$. A property tester for $\cP(\B)$ with respect to $\cD$
takes as input proximity parameter $\eps > 0$ and has query access to function $f$. If $f \in \cP(\B)$, the tester accepts with probability $> 2/3$.
If $\dist_\cD(f,\cP(\B)) > \eps$, the tester rejects with probability $> 2/3$.
\end{definition}
\subsection{Main Results}
Our primary result is a property tester for all bounded-derivative properties over any product distribution.
The formal theorem requires some definitions of search trees.
Consider any binary search tree (BST) $T$ over the universe $[n]$, and let the depth of a node denote the number of {\em edges} from it to the root.
For a distribution $\cD_r$ over $[n]$, the {\em optimal BST for $\cD_r$} is the BST minimizing the expected depth of vertices drawn from $\cD_r$.
Let $\Delta^*(\cD_r)$ be this optimal depth:  a classic dynamic programming solution finds this optimal tree~\cite{Knuthvol3,YAO82} in polynomial time.
Given a product distribution $\cD = \prod_{r \leq d}\cD_r$, we abuse notation and let $\Delta^*(\cD)$ denote the sum $\sum_{r=1}^d \Delta^*(\cD_r)$.
\begin{restatable}{theorem}{distknown}
\label{thm:distknown-weak}\label{thm:distknown} {\bf [Main upper bound]}
Consider functions $f:[n]^d \mapsto \R$. Let $\B$ be a bounding family and $\cD$ be a product distribution.
There is a 
tester for $\cP(\B)$ w.r.t. $\cD$ making $100\eps^{-1}\Delta^*(\cD)$ queries.
\end{restatable}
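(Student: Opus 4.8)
The plan is to decouple the $d$-dimensional problem into $d$ one-dimensional problems and solve each with a binary search tree. Two ingredients drive the proof. The first is an \emph{optimal dimension reduction theorem}: for every $f$, writing $\mu_r := \EX_{\ell}\big[\dist_{\cD_r}(\frest{f}{\ell},\cP(\B^{(r)}))\big]$, where $\ell$ ranges over random axis-parallel lines in direction $r$ (all coordinates but the $r$-th drawn from the corresponding marginals of $\cD$) and $\cP(\B^{(r)})$ is the one-dimensional bounded-derivative property induced by $l_r,u_r$, one has $\dist_\cD(f,\cP(\B)) \le C\sum_{r=1}^d \mu_r$ for an absolute constant $C$. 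The second is a \emph{BST line tester}: from any BST $T$ over $[n]$ one gets a tester for $\cP(\B^{(r)})$ that, on a line at distance $\delta$, rejects with probability at least $\delta$, using at most $2\,\depth_T(a)$ queries for a point $a\sim\cD_r$; taking $T$ to be the optimal BST for $\cD_r$, the expected cost of one run is at most $2\Delta^*(\cD_r)$. The tester for $\cP(\B)$ then simply runs, for each coordinate $r$, the BST line tester with the optimal BST for $\cD_r$ a $\Theta(\eps^{-1})$ number of independent times, and rejects if any run does.

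I would establish the BST line tester as follows. The restriction of $\cP(\B)$ to a line in direction $r$ is again a one-dimensional bounded-derivative property governed by $l_r,u_r$, so a function $\frest{f}{\ell}=:h:[n]\to\R$ lies in it iff $h(b)-h(a)\in\big[\sum_{i=a}^{b-1}l_r(i),\ \sum_{i=a}^{b-1}u_r(i)\big]$ for all $a<b$; call such a pair \emph{consistent}. The line tester draws $a\sim\cD_r$, walks the root-to-$a$ path of $T$, and for every proper ancestor $p$ of $a$ checks whether $(p,a)$ is consistent, rejecting on the first failure (so it queries $h$ only at $a$ and its proper ancestors, hence nothing when $a$ is the root). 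The key point is that the set $S$ of points that survive all their checks is pairwise consistent: for $a<b$ in $S$, the node $w=\lca_T(a,b)$ satisfies $a\le w\le b$ and is an ancestor of both, so $(a,w)$ and $(w,b)$ are consistent, and since the derivative intervals add, $(a,b)$ is consistent. A standard fact for bounded-derivative properties (greedy left-to-right completion, feasibility of each step by Helly's theorem for intervals on the line) is that a partial function on a pairwise-consistent set extends to a global member of the property; hence $\cD_r(S)\le 1-\delta$ where $\delta:=\dist_{\cD_r}(h,\cP(\B^{(r)}))$, so the tester rejects with probability at least $\cD_r([n]\setminus S)\ge\delta$. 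The query cost is $\depth_T(a)$ (the ancestors) plus one for $a$ itself only when that is nonzero, i.e.\ at most $2\,\depth_T(a)$, with expectation $2\Delta^*(\cD_r)$ under the optimal BST (coordinates whose marginal is a point mass are vacuous and skipped).

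Combining the ingredients is routine. Completeness: if $f\in\cP(\B)$ then every restriction lies in its one-dimensional property and every check passes. Soundness: if $\dist_\cD(f,\cP(\B))>\eps$, the dimension reduction theorem gives $\sum_r\mu_r\ge\eps/C$; a single run of the line tester in direction $r$ first samples a random line $\ell$ and, conditioned on $\ell$, rejects with probability at least $\dist_{\cD_r}(\frest{f}{\ell},\cP(\B^{(r)}))$, hence with probability at least $\mu_r$ overall. Since the coordinates use independent randomness, running each of them $k=\Theta(\eps^{-1})$ times makes the overall probability of not rejecting at most $\prod_r(1-\mu_r)^{k}\le\exp\!\big(-k\sum_r\mu_r\big)\le\exp(-\Omega(k\eps/C))$, which is below $1/3$ once the constant in $k$ is large enough. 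Finally, the expected query cost is $\sum_r k\cdot 2\Delta^*(\cD_r)=O(\eps^{-1}\Delta^*(\cD))$, which becomes a worst-case bound at the cost of a constant factor by aborting any run whose length exceeds a fixed multiple of its expectation (Markov); choosing the constants so that $k=50/\eps$ and the per-run cost bound is $2\Delta^*(\cD_r)$ yields exactly $100\eps^{-1}\Delta^*(\cD)$.

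The main obstacle is the dimension reduction theorem; everything else is either a clean instance of the BST/consistency template or probabilistic bookkeeping. The difficulty is that repairing $f$ along the lines of one coordinate can reintroduce violations along other coordinates, and the naive coordinate-by-coordinate correction loses a factor exponential in $d$ (as in the earlier $O(2^d H/\eps)$ bound); obtaining only a constant-factor loss — a statement that has resisted attack even for monotonicity — will require a carefully designed global correction operator together with a charging argument showing that the violations it creates in the other coordinates are, in aggregate, controlled by the distances on the original lines. I expect the bulk of the technical work, and essentially all of the novelty, to reside there.
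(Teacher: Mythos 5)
Your proposal follows essentially the same route as the paper: an optimal dimension reduction theorem, a BST line tester whose rejection probability is at least the one-dimensional distance (the paper argues this via the lowest common ancestor and the vertex-cover characterization of distance, the exact dual of your pairwise-consistency/extension argument), and composition with repetition and a Markov abort. The only cosmetic difference is that the paper picks a uniformly random dimension per trial and repeats $O(d/\eps)$ times rather than looping over all $d$ dimensions with $O(1/\eps)$ repetitions each; this gives the same expected query count, and your final constant-chasing (``exactly $100$'') glosses over the extra factor the Markov abort threshold introduces, but that is immaterial.
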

The tester is non-adaptive with one-sided error, that is, the queries don't depend on the answers, and the tester always accepts functions satisfying the property. Furthermore, the {\em same} tester works for all bounding families, that is,  the set of queries made by the tester doesn't depend on $\B$.
Interestingly, the ``worst" distribution is the uniform distribution, where $\Delta^*(\cD)$ is maximized to $\Theta(d\log n)$.
We remark that the class of bounded derivative properties was not known to be testable even under uniform distributions. Results were known~\cite{ChSe13}  (only under
the uniform distribution) for the subclass where all $l_r$ (and $u_r$) are the same, constant function. 
To give perspective on the above result, it is instructive to focus on say just monotonicity (one can repeat this for Lipschitz).
Let $H(\cD)$ denote the Shannon entropy of distribution $\cD$ over the hypergrid.
It is well-known that $\Delta^*(\cD_r) \leq H(\cD_r)$ (see~\cite{Melhorn75} for a proof), so $\Delta^*(\cD)\leq H(\cD)$ for product distribution $\cD$. 
\begin{restatable}{corollary}{disthyp}\label{cor:ub-hg}\label{thm:ub-hg} Consider functions $f:[n]^d \mapsto \R$. Monotonicity testing over a product distribution $\cD$
can be done with $100 H(\cD)/\eps$ queries.
\end{restatable}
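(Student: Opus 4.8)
The plan is to derive this as an immediate consequence of \Thm{distknown} together with the standard comparison between optimal BST depth and Shannon entropy. First I would observe that monotonicity is precisely the bounded derivative property $\cP(\B)$ for the bounding family $\B$ with $l_r(y) = 0$ and $u_r(y) = +\infty$ for every $r \in [d]$ and $y \in [n-1]$: with these choices \Eqn{defnbnd} reads $0 \le \partial_r f(x)$ for all $r$ and all $x$, i.e., $f$ is non-decreasing along each coordinate direction, which is exactly coordinate-wise monotonicity on $[n]^d$. (If one insists on real-valued bounds one may instead take $u_r \equiv M$ for an arbitrarily large finite $M$; since the tester of \Thm{distknown} does not depend on $\B$, this changes nothing.)

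Next, apply \Thm{distknown} with this $\B$ and the given product distribution $\cD$ to obtain a non-adaptive, one-sided-error tester for monotonicity with respect to $\cD$ making $100\eps^{-1}\Delta^*(\cD)$ queries.

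It then remains only to check $\Delta^*(\cD) \le H(\cD)$. For each marginal $\cD_r$, the classical inequality (see~\cite{Melhorn75}) bounds the expected depth of the optimal BST for a distribution on $[n]$ by its Shannon entropy, so $\Delta^*(\cD_r) \le H(\cD_r)$. Summing over $r$ and recalling that $\Delta^*(\cD)$ was defined as $\sum_{r=1}^d \Delta^*(\cD_r)$ gives $\Delta^*(\cD) \le \sum_{r=1}^d H(\cD_r)$. Finally, since $\cD = \prod_{r\le d}\cD_r$ is a product distribution, entropy is additive over the independent coordinates, $H(\cD) = \sum_{r=1}^d H(\cD_r)$. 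Chaining these inequalities yields $\Delta^*(\cD) \le H(\cD)$, hence the query count $100\eps^{-1}\Delta^*(\cD)$ is at most $100H(\cD)/\eps$, as claimed.

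There is essentially no obstacle here: all the work lies in \Thm{distknown}, and the corollary is a one-line specialization (monotonicity as a bounded derivative property) plus the textbook optimal-BST-versus-entropy bound. The only point worth a moment's care is that the same tester realizes the bound for every bounding family, so the monotonicity case needs no property-specific argument, and that the entropy inequality must be applied marginal-by-marginal before invoking additivity of entropy for the product distribution.
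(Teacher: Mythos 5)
Your proposal is correct and matches the paper's own one-line derivation: specialize \Thm{distknown} to monotonicity (via the bounding family $l_r \equiv 0$, $u_r \equiv \infty$), then apply the classical bound $\Delta^*(\cD_r) \le H(\cD_r)$ marginal-by-marginal and additivity of entropy for the product to get $\Delta^*(\cD) \le H(\cD)$.
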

This is an \emph{exponential} improvement over the previous best result of Ailon and Chazelle~\cite{AC04},  who give a monotonicity tester with query complexity $O(2^d H(\cD)/\eps)$. 
Observe that for uniform distributions, $H(\cD) = \Theta(d\log n)$, and therefore the above result subsumes the optimal testers of~\cite{ChSe13}.
Now consider the monotonicity testing over the boolean hypercube. 
\begin{restatable}{corollary}{distknowncube}\label{cor:ub-hc}\label{cor:ub-cube} Consider functions $f:\{0,1\}^d \mapsto \R$. Monotonicity testing over any product distribution $\cD = \prod_{r=1}^d \cD_r$, where 
each $\cD_r = (\mu_r,1-\mu_r)$, can be done with $100\eps^{-1}\sum_{r=1}^d \min(\mu_r,1-\mu_r)$ queries.
\end{restatable}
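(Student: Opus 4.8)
The plan is to obtain this as an immediate specialization of \Thm{distknown}. The first step is to observe that monotonicity over $\{0,1\}^d$ coincides with monotonicity over the hypergrid $[n]^d$ with $n = 2$, equipped with its coordinate-wise partial order (the two domains are identical up to relabeling $1,2$ as $0,1$ in each coordinate), and that monotonicity is the bounded derivative property $\cP(\B)$ given by $l_r(y) = 0$ and $u_r(y) = \infty$ for all $r$. Hence \Thm{distknown} already supplies a non-adaptive, one-sided tester for it with query complexity $100\eps^{-1}\Delta^*(\cD) = 100\eps^{-1}\sum_{r=1}^d \Delta^*(\cD_r)$, and all that remains is to evaluate $\Delta^*(\cD_r)$ when $\cD_r = (\mu_r, 1-\mu_r)$ is a distribution on the two-element universe $[2]$.

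Next I would compute $\Delta^*$ of a two-point distribution directly. There are exactly two binary search trees over $[2] = \{1,2\}$: the one rooted at $1$ with right child $2$, and the one rooted at $2$ with left child $1$. In either tree one element has depth $0$ and the other has depth $1$, so the expected depth of an element drawn from $(\mu_r, 1 - \mu_r)$ equals the probability mass assigned to the non-root element. Minimizing over the two trees places the heavier element at the root, which gives $\Delta^*(\cD_r) = \min(\mu_r, 1-\mu_r)$.

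Summing over the coordinates yields $\Delta^*(\cD) = \sum_{r=1}^d \Delta^*(\cD_r) = \sum_{r=1}^d \min(\mu_r, 1-\mu_r)$, and substituting this into the query bound of \Thm{distknown} gives the claimed $100\eps^{-1}\sum_{r=1}^d \min(\mu_r, 1-\mu_r)$-query tester, inheriting non-adaptivity and one-sided error from that theorem.

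I do not expect any genuine obstacle: the argument is a one-line corollary of the main upper bound, and the only mild points to verify are the trivial $n = 2$ base case of the optimal-BST computation and the identification of $[2]^d$ with the Boolean cube under matching partial orders. All of the real difficulty is already absorbed into \Thm{distknown}.
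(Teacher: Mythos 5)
Your proposal is correct and matches the paper's own derivation exactly: the paper also obtains this corollary by specializing \Thm{distknown} to the hypercube and observing that the optimal BST on $[2]$ places the heavier point at the root, giving $\Delta^*(\cD_r) = \min(\mu_r, 1-\mu_r)$.
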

Given that monotonicity testing over the hypercube has received much attention~\cite{GGLRS00,DGLRRS99,LR01,FLNRRS02,BBM11,ChSe13,ChSe13-2},
it is somewhat surprising that \emph{nothing non-trivial} was known even over the $p$-biased distribution for $p\neq 1/2$; our result implies an $O(\epsilon^{-1}pd)$-query tester.
The above corollary also asserts that entropy of a distribution {\em doesn't capture the complexity of monotonicity testing} since 
the entropy, $\sum_r \mu_r\log(1/\mu_r) + (1-\mu_r)\log(1/(1-\mu_r))$, can be larger than the 
query complexity described above by a logarithmic factor. For example, if each $\mu_r = 1/\sqrt{d}$, the tester of \Cor{ub-cube} requires $O(\sqrt{d}/\eps)$ queries,
while $H(\cD) = \Theta(\sqrt{d}\log d)$.
We complement \Thm{distknown} with a matching lower bound, cementing the connection between testing of bounded-derivative properties and optimal search tree depths.
This requires a technical definition of stable distributions, which is necessary for the lower bound. To see this
consider a distribution $\cD$ for which there exists a product distribution $\cD'$ such that $||\cD'-\cD||_{\TV} \leq \epsilon/2$ but $\Delta^*(\cD') \ll \Delta^*(\cD)$. 
One could simply apply \Thm{distknown} with $\cD'$ to obtain a tester with a much better query complexity than $\Delta^*(\cD)$. 
$\cD$ is called $(\epsilon',\rho)$-stable if $\|\cD-\cD\|\leq \epsilon'$ implies $\Delta^*(\cD')\geq\rho\Delta^*(\cD)$, for any product distribution $\cD'$.
\begin{restatable}{theorem}{mainlowerbound} {\bf [Main lower bound]}
\label{thm:the-lower-bound} 
For any parameter $\epsilon$, there exists $\epsilon' = \Theta(\epsilon)$ such that 
for any bounding family $\B$ and $(\eps',\rho)$-stable, product distribution $\cD$,
any (even adaptive, two-sided) tester for $\cP(\B)$ w.r.t. $\cD$ with proximity parameter $\eps$
requires $\Omega(\rho\Delta^*(\cD))$ queries.
\end{restatable}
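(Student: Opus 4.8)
The plan is to prove a Yao-type distributional lower bound, in three movements: a reduction to a ``clean'' distribution using stability, a one-dimensional hard construction driven by optimal BSTs, and a ``stitching'' of the $d$ coordinates. \emph{Reduction.} Fix $\eps'=\eps/10$ (any small enough $\Theta(\eps)$ works). Since $\bigl|\dist_\cD(f,g)-\dist_{\cD'}(f,g)\bigr|\le\|\cD-\cD'\|_{\TV}$ for all $g$, a function that is $(\eps+\eps')$-far from $\cP(\B)$ w.r.t.\ some product $\cD'$ with $\|\cD-\cD'\|_{\TV}\le\eps'$ is $\ge\eps$-far w.r.t.\ $\cD$ and so must be rejected by any parameter-$\eps$ tester for $\cD$; and $(\eps',\rho)$-stability gives $\Delta^*(\cD')\ge\rho\Delta^*(\cD)$. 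Taking $\cD'$ to be a mild regularization of $\cD$ (e.g.\ with negligible-mass tails of the marginals truncated, which by stability costs only a $(1-\rho)$-fraction of $\Delta^*$), it suffices to build distributions $\mathcal{D}_{\mathrm{yes}}$ on $\cP(\B)$ and $\mathcal{D}_{\mathrm{no}}$ on functions $\Theta(\eps)$-far from $\cP(\B)$ w.r.t.\ $\cD'$ whose query--answer transcripts, under every deterministic $q$-query algorithm, are $o(1)$-close in total variation whenever $q=o(\eps^{-1}\Delta^*(\cD'))$; by Yao's principle this rules out all randomized, adaptive, two-sided testers, and $\eps^{-1}\Delta^*(\cD')=\Omega(\rho\Delta^*(\cD))$.

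\emph{One-dimensional hard functions.} Work first on a single marginal $\mu$ of $\cD'$ on $[n]$ with an optimal BST $T$; for a node $v$ let $I_v\subseteq[n]$ be the contiguous key-set of its subtree, split at the key $m_v$ stored at $v$. The accounting identity to build on is $\sum_{v\in T}\mu(I_v)=\sum_{y\in[n]}\mu(y)\bigl(\depth(y)+1\bigr)=\Delta^*(\mu)+1$, since $y\in I_v$ iff $v$ is an ancestor of $y$ or equals $y$. Let $\phi$ be the canonical member of the one-dimensional property, with $\partial\phi(y)$ the midpoint of $\bigl(l_r(y),u_r(y)\bigr)$ (for monotonicity, any strictly increasing $\phi$). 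Hard functions are perturbations of $\phi$: independently at each node $v$, with probability $p_v=\Theta(\eps)/\Delta^*(\mu)$, ``plant'' at $v$ a rearrangement of $\phi$ confined to $I_v$ that fixes the endpoint values of $I_v$; for $\mathcal{D}_{\mathrm{no}}$ the rearrangement is ``bad'' (a dip for monotonicity, a secant exceeding $u_r$ or below $l_r$ for general $\B$) and violates \Eqn{defnbnd} inside $I_v$, while for $\mathcal{D}_{\mathrm{yes}}$ it is a property-preserving rearrangement, the two coupled so that the families have near-identical marginals on any bounded set of queried keys — this matching is essential, since otherwise a single query, knowing the baseline $\phi$, would expose a plant. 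For far-ness one argues that between the plants the $\B$-bounds rigidly pin a line down, so repairing a surviving bad plant on $I_v$ costs $\Omega(\mu(I_v))$ of mass; as the intervals at a fixed depth are disjoint, the total repair cost concentrates near $\sum_v p_v\mu(I_v)=\Theta(\eps)$ by the identity and the choice of $p_v$.

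\emph{Stitching.} Set $F(x)=\sum_{r=1}^{d}g_r(x_r)$ with the $g_r$ independent: all $g_r\sim\mathcal{D}_{\mathrm{yes}}$ (for marginal $\cD'_r$) defines $\mathcal{D}_{\mathrm{yes}}$, and all $g_r\sim\mathcal{D}_{\mathrm{no}}$ — with coordinate $r$'s plant probabilities scaled so it carries a $\Theta\bigl(\eps\,\Delta^*(\cD'_r)/\Delta^*(\cD')\bigr)$-fraction of $\cD'_r$-mass in bad plants, hence needs $\Omega(\eps^{-1}\Delta^*(\cD'))$ queries to expose on its own — defines $\mathcal{D}_{\mathrm{no}}$. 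Because $\partial_r F(x)=\partial g_r(x_r)$ depends only on coordinate $r$, $F\in\cP(\B)$ under $\mathcal{D}_{\mathrm{yes}}$, whereas under $\mathcal{D}_{\mathrm{no}}$ an axis-edge is violated exactly where its coordinate's $g_r$ is; as the bad-plant locations are independent across the $d$ coordinates, a union bound puts a $\Theta(\eps)$-fraction of $\cD'$-mass on a violated edge, so $\dist_{\cD'}(F,\cP(\B))=\Theta(\eps)$. For the transcript bound, a $d$-dimensional query reveals only the single scalar $\sum_r g_r(x_r)$, so information about a fixed $g_r$ effectively comes only from query pairs differing in coordinate $r$; the aim is to show $q$ queries jointly resolve only $O(q)$ of the $\Theta(\eps^{-1}\Delta^*(\cD'))$ independent plants spread across all coordinates, forcing $q=\Omega(\eps^{-1}\Delta^*(\cD'))$ and hence the theorem.

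\emph{Main obstacle.} The crux is this transcript-indistinguishability analysis — both within a line and across coordinates — and it is genuinely delicate: counting the query pairs that ``catch'' a plant loses a square-root factor, while a coordinate-by-coordinate hybrid loses a factor of $d$, so neither is tight in both $\Delta^*(\cD_r)$ and $d$. The correct argument must show, against adaptive algorithms observing exact real values, that $q$ queries resolve only $O(q)$ units of the hidden randomness; this is where the BST structure enters, through the accounting identity, to control how the newly resolved plants are distributed in the tree, and it is an analogue — now made sensitive to arbitrary marginals and to an arbitrary bounding family — of the classical $\Omega(\eps^{-1}\log n)$ sortedness lower bound. A secondary difficulty is uniformity over $\B$: along any line $\cP(\B)$ is, after an affine reparametrization of the range, a variable-rate Lipschitz condition (one-sided for monotone coordinates), for which the dip/secant plants above are generic, so a single construction serves monotonicity, Lipschitz continuity, and all their mixtures.
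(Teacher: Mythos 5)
Your plan correctly guesses the high-level skeleton (BST-driven hard constructions on the line, then stitching across coordinates, with stability to relate the construction to $\Delta^*(\cD)$), but it diverges from the paper's route at the most critical juncture and, as you yourself flag, leaves that juncture unresolved. The paper does \emph{not} run a Yao/transcript-indistinguishability argument against adaptive algorithms observing exact real values. Instead it first reduces any bounded-derivative property to monotonicity over $\NN$-valued functions via the explicit transformation $g(x) := \frac{\delta}{2R} f(x) - \pdi(\bzero,x)$ (which exactly preserves the violation graph), and then invokes the framework of Ergun et al./Chakrabarty--Seshadhri, which reduces adaptive two-sided monotonicity testers to \emph{comparison-based} testers. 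After that reduction, the lower bound reduces to a purely combinatorial claim (the paper's \Thm{lb-frame}): exhibit a monotone $h$ and $L$ functions $g_1,\ldots,g_L$, each $\eps$-far, such that any set $Q$ of queried points ``distinguishes'' (witnesses an order reversal against $h$) at most $|Q|$ of the $g_i$'s. This is what makes the accounting exact: distinguishing $g_i$ from $h$ forces $\lca(x,y)$ (in the median BST, or in the aggregated tuple structure) into a specific level, and the paper's \Clm{lca} and \Clm{qlca} bound the number of distinct lca's/tuples by $|Q|-1$. Your proposal flags ``transcript indistinguishability'' as the crux and admits it is open; that is precisely the step the comparison-based reduction is designed to make tractable, and without it the plan does not close.

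There is a second, independent gap in the stitching step. Setting $F(x)=\sum_r g_r(x_r)$ with independently planted bad regions and invoking a union bound establishes that many axis-edges are violated, but edge-violation mass does not directly lower bound $\dist_{\cD'}(F,\cP)$: one needs a vertex-cover/matching argument, and modifying $F$ on one region could simultaneously repair violations in several coordinates. The paper handles this with a deterministic aggregation $\gg(x) = \sum_{r\in\mappos}(2n+1)^r \hard{r}{\psi(r)}(x_r) + \sum_{r\notin\mappos} 2(2n+1)^r x_r$ where the scaling by $(2n+1)^r$ forces any violating pair to be ``charged'' to its largest differing coordinate, and the useful-map constraints in \Def{psi} are engineered so that the violation regions contributed by distinct dimensions are provably disjoint; \Lem{dist-to-mon} then shows the distances add. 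Your independence-plus-union-bound heuristic does not deliver this. Relatedly, the use of stability in your proposal (truncate tails of the marginals) is not where the paper uses it: stability enters in \Lem{countX} to argue that one can extract $\Omega(\rho\Delta^*(\cD))$ pairwise-disjoint useful maps even when individual marginals are unstable --- the very issue your intuition paragraph correctly identifies, but does not resolve.
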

This lower bound is new even for monotonicity testing over one dimension.  Ailon and Chazelle~\cite{AC04} explicitly ask for
lower bounds for monotonicity testing for domain $[n]$ over arbitrary distributions. Our upper and lower bounds
completely resolve this problem. For Lipschitz testing, the state of the art was a \emph{non-adaptive}
lower bound of $\Omega(d\log n)$ for the uniform distribution~\cite{BlJh+13}. Since the uniform distribution
is stable, the previous theorem implies an optimal $\Omega(d\log n)$ lower bound even for adaptive, two-sided testers over the uniform distribution. \smallskip
\noindent
The previous upper bounds are in the setting where the tester knows the distribution $\cD$. In the \emph{distribution-free} setting,
the tester only gets random samples from $\cD$ although it is free to query any point of the domain. 
As a byproduct of our approach, we also get results for this setting.
The previous best bound  was an $O(\eps^{-1}d2^{d}\log n)$ query tester~\cite{AC04}.
\begin{restatable}{theorem}{distfree}
\label{thm:main-free} Consider functions $f:[n]^d \mapsto \R$.
There is a distribution-free (non-adaptive, one-sided) tester for $\cP(\B)$ w.r.t. $\cD$
making $100\eps^{-1}d\log n$ queries.
\end{restatable}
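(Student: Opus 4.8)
The plan is to obtain \Thm{main-free} directly from the machinery behind \Thm{distknown}, by instantiating it with \emph{complete} binary search trees instead of optimal ones. The starting observation is that the tester of \Thm{distknown} is naturally parametrized by a choice of one BST $T_r$ over $[n]$ per coordinate $r\in[d]$: the dimension-reduction step bounds $\dist_\cD(f,\cP(\B))$ by a sum over $r$ of the expected $\cD_r$-distance of $f$ restricted to a random line in direction $r$ (the coordinates $j\neq r$ being drawn from $\cD_j$), and the one-dimensional line tester detects a violation on such a line by querying $f$ along root-to-node paths of $T_r$. The soundness of both steps is purely combinatorial and uses nothing about $T_r$ beyond its being a valid search tree; the marginal $\cD_r$ enters the \emph{algorithm} (as opposed to its analysis) only in the choice of $T_r$, which affects the query count but never correctness. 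With a general choice the query complexity is $100\eps^{-1}\sum_{r=1}^d \EX_{v\sim\cD_r}[\depth_{T_r}(v)]$.

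First I would take each $T_r$ to be a complete BST on $[n]$, so $\depth_{T_r}(v)\le \lceil\log_2 n\rceil$ for every $v$ and hence $\EX_{v\sim\cD_r}[\depth_{T_r}(v)]\le\log n$ for \emph{any} $\cD_r$; this needs no knowledge of $\cD$, and already bounds the query count by $100\eps^{-1}d\log n$. Second I would verify that the only remaining use of $\cD$ by the tester — sampling random lines (and, if the line tester so requires, nodes within a line) — is available in the distribution-free model: a single i.i.d.\ sample $x\sim\cD$ supplies, simultaneously for every $r$, a tuple of independent coordinates $x_j\sim\cD_j$, which is exactly what is needed to fix a random line in direction $r$, with $x_r\sim\cD_r$ available as a node on that line. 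So the distribution-free tester draws $O(\eps^{-1}d)$ samples from $\cD$, uses them wherever the tester of \Thm{distknown} draws from $\cD$, and otherwise runs the complete-BST line tester. With this substitution the analysis of \Thm{distknown} applies verbatim: if $f\in\cP(\B)$ no violation is ever found (one-sided error, inherited), and if $\dist_\cD(f,\cP(\B))>\eps$ the dimension-reduction inequality together with the line-tester guarantee force a violation to be detected with probability $>2/3$; non-adaptivity is inherited since all queries depend only on the samples and on the fixed trees $T_r$.

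The step I expect to be the main obstacle is making the first observation airtight: confirming that the soundness argument underlying \Thm{distknown} is genuinely BST-agnostic, i.e.\ that neither the dimension-reduction theorem nor the line-tester's violation-detection argument secretly exploits a property specific to the \emph{optimal} tree (for instance a quantitative relation between node depths and $\cD_r$-mass that could fail for the complete tree). I would argue this is automatic: the dimension-reduction bound is a statement purely about $f$ and $\cD$ with no tree in it, and the line tester's correctness is the assertion that for any BST every line-restriction violating the one-dimensional property contains a violating pair lying on some common root path of that tree, the probability of hitting it being controlled by the $\cD_r$-mass of the relevant nodes; passing from the optimal tree to the complete tree only replaces the per-coordinate quantity $\Delta^*(\cD_r)$ by $\log n$ in the query bound and changes nothing about the existence of the detected violation.
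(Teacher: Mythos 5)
Your proposal is correct and matches the paper's own route exactly: the paper proves \Lem{optbst-hg} for an arbitrary collection of BSTs $(T_1,\dots,T_d)$ via the dimension reduction theorem and the BST-agnostic line tester, and then obtains \Thm{main-free} by instantiating each $T_r$ as a balanced tree of depth $\lceil\log_2 n\rceil$, observing that the resulting tester only needs samples from $\cD$. Your identification of the key point — that the soundness of the line tester (\Lem{tree-tester}) and of dimension reduction is independent of the choice of BST, with the distribution entering only through the query count $\Delta(T_r;\cD_r)$ — is precisely what the paper relies on.
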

\subsection{Technical highlights}\label{sec:ourtechs}
{\bf Optimal dimension reduction.} The main engine running the upper bounds is an 
optimal dimension reduction theorem. Focus on just the uniform distribution. 
Given $f:[n]^d \mapsto \R$ that is $\eps$-far from $\cP(\B)$, what is the expected distance of the function
restricted to a uniform random line in $[n]^d$?
This natural combinatorial question has been at the heart of various monotonicity testing results~\cite{GGLRS00,DGLRRS99,AC04,HK04}.
The best known bounds are that this expected distance is at least  
$\eps/(d2^d)$~\cite{AC04,HK04}. Weaker results are known for the Lipschitz property~\cite{JR11,AJMS12}.
We given an optimal resolution (up to constant factors) to this problem not only for the uniform distribution, but for any arbitrary product distribution,
and for any bounded derivative property.
In $[n]^d$, an $r$-line is a combinatorial line parallel to the $r$-axis.
Fix some bounding family $\B$ and product distribution $\cD = \prod_r \cD_r$. 
Note that $\cD_{-r} = \prod_{i \neq r} \cD_i$ is a distribution on $r$-lines.
If we restrict $f$ to an $r$-line $\ell$, we get a function $f|_\ell:[n] \mapsto \R$.
It is meaningful to look at the distance of $f|_\ell$ to $\cP(\B)$ (though this only
involves the bounds of $l_r, u_r \in \B$). 
Let $\dist^r_\cD(f,\cP(\B)) := \EX_{\ell \sim \cD_{-r}} [\dist_{\cD_r}(f|_\ell,\cP(\B))]$.
\begin{theorem}{\bf [Optimal Dimension Reduction]}\label{thm:dimred}
Fix bounding family $\B$ and product distribution $\cD$.
For any function $f$, $$\sum_{r=1}^d \dist^r_\cD(f,\cP(\B)) \geq \dist_\cD(f,\cP(\B))/4.$$
\end{theorem}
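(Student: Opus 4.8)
The plan is to prove the statement in its equivalent constructive form: given $f$, exhibit a single function $g \in \cP(\B)$ with $\dist_\cD(f,g) \le 4\sum_{r=1}^d \dist^r_\cD(f,\cP(\B))$, which suffices since $\dist_\cD(f,\cP(\B)) = \min_{g\in\cP(\B)}\dist_\cD(f,g)$. Write $\alpha_r := \dist^r_\cD(f,\cP(\B))$.

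First I would normalize the property. Set $\Phi_r(i) = \sum_{j<i} l_r(j)$ and $\Psi_r(i)=\sum_{j<i}\big(u_r(j)-l_r(j)\big)$. The map $f(x)\mapsto f(x)-\sum_r \Phi_r(x_r)$ is a bijection on functions that preserves $\dist_\cD$, commutes with restriction to lines, and carries $\cP(\B)$ to the property with bounds $l_r\equiv 0$ and $u_r-l_r$; so we may assume $l_r\equiv 0$. With this normalization the \emph{one-dimensional} property has a clean shape: restricted to any $r$-line, $h$ lies in $\cP(\B)$ iff $h$ is non-decreasing and $h-\Psi_r$ is non-increasing, i.e. the $1$-D property is an intersection of a monotone constraint and a reversed monotone-after-shift constraint. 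This reformulation is the structural tool I would use to repair functions on lines. Next, for each $r$ and each $r$-line $\ell$ I would pick a maximum-$\cD_r$-measure subset $S_r(\ell)\subseteq[n]$ on which $f|_\ell$ is \emph{consistent} (extends to a member of the line property); then $\cD_r([n]\setminus S_r(\ell)) = \dist_{\cD_r}(f|_\ell,\cP(\B))$, and averaging over $\ell\sim\cD_{-r}$ the ``bad set'' $B_r:=\{x : x_r\notin S_r(\ell)\text{ for the }r\text{-line }\ell\text{ through }x\}$ satisfies $\cD(B_r)=\alpha_r$.

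The naive move — delete $\bigcup_r B_r$ and extend the survivors — would even yield constant $1$, but it is false: consistency on every individual line does not imply global extendability, because the monotone and reversed-monotone constraints in different dimensions interact. Repairing this interaction with only a constant (rather than $2^d$) loss is the entire difficulty. My approach would be a sequence of \emph{compatible} repair operators, applied dimension by dimension. Using the $1$-D structure above, for each $r$ I would define an operator $\pi_r$ acting independently on each $r$-line (e.g. via suitable prefix min/max closures of $h$ and of $h-\Psi_r$) such that (a) $\pi_r f$ satisfies every dimension-$r$ constraint, (b) $\pi_r$ preserves every dimension-$r'$ constraint already satisfied, so $g:=\pi_d\circ\cdots\circ\pi_1 f$ lies in $\cP(\B)$, and (c) each application changes $f$ on a set of $\cD$-measure $O(\alpha_r)$ and — the delicate point — the applications do not inflate one another's cost, so the costs add rather than compound.

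The main obstacle is precisely property (c): making the repairs non-interfering. I expect this to require the $\pi_r$ to be structure-preserving in a strong, lattice-theoretic sense, together with a global charging argument — charging each point moved by $\pi_r$ to points of the sets $B_{r'}$ with bounded congestion — rather than a plain dimension-by-dimension induction, which is exactly what loses the $2^d$ and is why an optimal dimension reduction was open even for monotonicity for fifteen years. The final constant $4$ should then emerge as a factor $2$ from this charging combined with a factor $2$ relating the $\cD_r$-measure of a modified set on a line to the line distance.
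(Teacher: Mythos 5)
Your proposal takes the constructive route: build a repaired function $g\in\cP(\B)$ directly by applying per-dimension repair operators $\pi_r$, and argue that the costs of the $\pi_r$ add rather than compound. You correctly identify the crux as property~(c) --- making the $\pi_r$ non-interfering so the costs sum --- but you do not actually solve it; you state that it ``should'' require a structure-preserving lattice operator plus a charging argument and that the constant $4$ ``should emerge.'' That is precisely the step on which this program has been stuck. The paper says so explicitly in its discussion of techniques: ``previous attempts have taken a constructive approach\ldots a bottleneck\ldots is that correcting the function along one dimension may potentially introduce significantly larger errors along other dimensions. Thus, one can't just `add up' the corrections in a naive manner.'' The $O(\eps/(d2^d))$ bounds of \cite{AC04,HK04} are exactly what one gets from known repair operators (prefix max/min for monotonicity, and analogues for Lipschitz), and no one has produced operators with the non-inflation and charging properties you posit; the normalization to $l_r\equiv 0$ and the decomposition into ``non-decreasing and $h-\Psi_r$ non-increasing'' are fine, but they do not remove the interaction between dimensions. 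So as written the proposal has a genuine gap at its central claim.

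The paper's proof is not constructive and sidesteps the interference problem entirely. It works with the weighted violation graph $\VG(f,\cP)$ with $w(x,y)=\max(f(x)-f(y)-\pdi(x,y),\,f(y)-f(x)-\pdi(y,x))$, and defines a maximum-weight minimum-cardinality (MWmC) matching. It then constructs a chain of matchings $M_0\supseteq\cdots$ (in cardinality) where $M_r$ is the MWmC matching among those with no $i$-cross pairs for $i\le r$, so $M_0$ is a maximum matching (size $\ge \dist_\cU(f,\cP)n^d/2$ by \Lem{matchmatch}) and $M_d=\emptyset$, and shows $|M_{r-1}|-|M_r|\le 2\,\dist^r_\cU(f,\cP)\,n^d$ (\Lem{mi}). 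The key structural input is \Thm{noviol}: if $f$ is $r$-good, some MWmC matching has \emph{no} $r$-cross pairs; this is proved by an alternating-path argument that crucially pins down which of the two expressions in $w$ is the active one along the path. Cross-dimensional interference never arises because one never modifies $f$; one only compares matchings of the same fixed function via \Clm{matchdiff}. Finally, general product distributions are handled by embedding into a ``bloated'' uniform hypergrid $[N]^d$ with an induced metric $\pdiext$ and pulling the uniform-distribution result back. If you want to pursue the constructive route, you would at minimum need to produce the operators $\pi_r$ and prove (b) and (c) in full; the paper's introduction is evidence that this is where the real difficulty (and the previously exponential loss) lives.
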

Let us give a short synopsis of previous methods used to tackle the case of monotonicity in the uniform distribution case.
For brevity's sake, let $\epsilon^r_f$ denote $\dist^r_\cU(f,\MON)$ and $\epsilon_f$ denote $\dist_\cU(f,\MON)$.
That is, $\epsilon^r_fn^d$ modifications makes the function monotone along the $r$-dimension, and the theorem above states
that $4\sum_r\epsilon^r_fn^d$ modifications suffice to make the whole function monotone. 
Either explicitly or implicitly, previous attempts have taken a constructive approach: they use the modifications along the $r$th dimensions to correct the whole function.
Although in principle a good idea, a bottleneck to the above approach is that correcting the function along one dimension may potentially 
introduce significantly larger errors along other dimensions. Thus, one can't just ``add up'' the corrections in a naive manner. The process is even more 
daunting when one tries this approach for the Lipschitz  property. 
Our approach is completely different, and is `non-constructive', and looks at all bounded-derivative properties in a uniform manner.
We begin by proving \Thm{dimred} for $\cP(\B)$ over the uniform distribution.
The starting point is to consider a weighted violation graph $G$, where any two domains point forming a violation to $\cP(\B)$
are connected (the weight is a ``magnitude" of violation). 
It is well-known that the size of a maximum matching $M$ in $G$ is at least $\eps_f n^d/2$.
The main insight is to use different matchings to get handles on the distance $\epsilon^r_f$ rather than using modifications that correct the function.
More precisely, we  construct a sequence of special matchings $M = M_0,M_1,\ldots, M_d = \emptyset$, 
such that 
the drop in size $|M_{r-1}| - |M_r|$ is at most $2\epsilon^r_fn^d$, which proves the above theorem. 
This requires 
structural properties on the $M_r$'s proven using the \emph{alternating path machinery} developed in~\cite{ChSe13}. 
What about a general product distribution $\cD$? Suppose we `stretch' every point in every direction proportional to its marginal.
This leads to a `bloated' hypergrid $[N]^d$ where each point in the original hypergrid corresponds to a high-dimensional cuboid.
By the obvious association of function values, one obtains a $\fext:[N]^d \mapsto \R$.
If $\cP(\B)$ is monotonicity, then it is not hard to show that $\dist_\cD(f) = \dist_\cU(\fext)$. 
So we can apply dimension reduction for $\fext$ over the uniform distribution and map it back to $f$ over $\cD$. 
However, such an argument breaks down for Lipschitz (let alone general $\B$), 
since $\dist_\cU(f')$ can be much smaller than $\dist_\cD(f)$. The optimal fix for $\fext$ could perform non-trivial changes within the cuboidal
regions, and this cannot be mapped back to a fix for the original $f$. This is where the generality of the bounded-derivative  properties
saves the day. For any $\B$ and $\cD$, we can define a new bounding family $\B_{\tt ext}$ over $[N]^d$, such that
$\dist_\cD(f,\cP(\B)) = \dist_\cU(\fext,\cP(\B_{\tt ext}))$. Now, dimension reduction is applied to $\fext$ for $\cP(\B_{\tt ext})$ over $\cU$
and translated back to the original setting. 
\smallskip
\noindent
{\bf Search trees and monotonicity.} An appealing aspect of our results is the tight connection between
optimal search trees over product distributions to bounded-derivative properties. 
The dimension reduction lemma allows us (for the upper bounds) to focus on just the line domain $[n]$. 
For monotonicity testing on $[n]$ over an arbitrary distribution $\cD$, Halevy and Kushilevitz gave an $O(\eps^{-1}\log n)$-query distribution free tester~\cite{HK04},
and Ailon and Chazelle gave an $O(\eps^{-1}H(\cD))$-query tester~\cite{AC04}.
Pretty much every single result for monotonicity testing on $[n]$ involves some analogue of binary search~\cite{EKK+00,BRW05,ACCL07,PRR04,HK04,AC04,BGJRW09}.
But we make this connection extremely precise. We show that {\em any} binary search tree can be used to get a tester 
with respect to an arbitrary  distribution, whose expected query complexity is the expected depth of the tree with respect to the distribution. 
This argument is extremely simple in hindsight, but it is a significant conceptual insight. Firstly, it greatly simplifies earlier results -- using the completely
balanced BST, we get an $O(\eps^{-1}\log n)$-distribution free tester; with the optimal BST, we get  $O(\eps^{-1}H(\cD))$-queries.
The BST tester along with the dimension reduction, provides a tester for $[n]^d$ whose running time can be better than $H(\cD)$ (especially for the hypercube).
Most importantly, optimal BSTs are a crucial ingredient for our lower bound construction.
\smallskip
\noindent
{\bf Lower Bounds for Product Distributions.} The first step to general lower bounds is a simple reduction
from monotonicity testing to any bounded-derivative property. Again, the reduction may seem trivial in hindsight,
but note that special sophisticated constructions were used for existing Lipschitz lower bounds~\cite{JR11,BlJh+13}.
For monotonicity, we use the framework developed in~\cite{E04,ChSe13-2} that allows us to focus on
comparison based testers.
The lower bound for $[n]$ uses a convenient near-optimal BST. 
For each level of this tree we construct a `hard'  non-monotone function, leading to (roughly) $\Delta^*(\cD)$
such functions in case of stable distributions. These functions have 
violations to monotonicity lying in `different regions' of the line,
and any bonafide tester must make a different query to catch each function.
In going to higher dimensions, we face a significant technical hurdle.
The line lower bound easily generalizes to the hypergrid {\em if each marginal distribution is individually stable.} However, this may not be the case -- there are stable product distributions
whose marginals are unstable. As a result, each dimension may give `hard' functions with very small distance.
Our main technical contribution is to show how to {\em aggregate} functions from various dimensions together to obtain hard functions for the hypergrid in such a way that
the distances add up. This is rather delicate, and is perhaps the most technical portion of this paper. 
In summary, we show that for stable distributions, the total search-tree depth is indeed the lower bound for testing monotonicity, and via the reduction mentioned above, for any bounded-derivative property.
\subsection{Other Related Work.}\label{sec:other_related_work}
Monotonicity testing has a long history, and we merely point the reader
to the discussions in~\cite{ChSe13, ChSe13-2}. The work on testing over non-uniform
distributions was performed in~\cite{HK04,HalevyK07,AC04}, the details of which
have been provided in the previous section. 
Goldreich et al.~\cite{GGR98} had already posed the question of testing properties of functions over non-uniform distributions,
and obtain some results for dense graph properties. A serious study of the role of distributions
was undertaken by Halevy and Kushilevitz~\cite{HalevyK07,HK04,HalevyK05,HalevyK08}, who
formalized the concept of distribution-free testing. (Refer to Halevy's thesis~\cite{Hal-thesis} for a comprehensive study.)
Kopparty and Saraf extend the classic linearity test to classes of distributions, including product distributions~\cite{KoSa08}.
Glasner and Servedio~\cite{GlasnerS09} and Dolev and Ron~\cite{DolevR11} give various upper and lower bounds for
distribution-free testers for various properties  over $\set{0,1}^n$.
Non-uniform distributions were also considered recently in the 
works of Balcan et al.~\cite{BalcanBBY12} and~\cite{GoldreichR13} which constrain the queries that can be made by the tester to samples drawn from the distribution. 
Recent work of Berman et al.~\cite{BeRaYa14} introduces property testing over $\ell_p$-distances. We believe work along these lines studying richer notions of distance is critical to the growth
of property testing.
\full{
\paragraph{Note to the reader.} The paper is rather long, although, we hope the extended introduction above will allow the reader to choose the order in which to peruse the paper.
We give a brief outline of remainder. In \Sec{abstract}, we define a particular {\em quasi-metric} corresponding to a bounding family $\B$ and give an equivalent 
definition of the bounded-derivative property with respect to it. This definition is convenient and will be the one used for the rest of the paper. This section must  be read next.
The dimension reduction theorem is presented in its full glory in \Sec{dimred}. In \Sec{line-ub}, we describe the tester when the domain is just the line, and the easy generalization to the hypergrid via dimension reduction is presented in \Sec{hg-ub}. For lower bounds, we prove the reduction to monotonicity in \Sec{bdp-to-mono}, and describe the approach to montonicity lower bounds in \Sec{monotone-lb-framework}. The hard families for the line is given in \Sec{lb-line}, for the hypercube in \Sec{lb-cube}, and the general hypergrid lowerbound is described in \Sec{lb-hypergrid}. 
}
\section{Quasimetric induced by a Bounding Family}\label{sec:abstract}
It is convenient to abstract out $\cP(\B)$ in terms of a \emph{metric-bounded property}. Such ideas was used in~\cite{ChSe13} to give a unified proof
for monotonicity and Lipschitz for the uniform distribution. The treatment here is much more general.
We define a quasimetric depending on $\B$ denoted by $\pdi(x,y)$.
\begin{definition} \label{def:dist} Given bounding family $\B$, construct the weighted directed hypergrid $[n]^d$, where 
all adjacent pairs are connected by two edges in opposite directions. The weight of $(x+\be_r,x)$ is $u_r(x_r)$ and the weight
of $(x,x+\be_r)$ is $-l_r(x_r)$. $\pdi(x,y)$ is the shortest path weight from $x$ to $y$.
\end{definition}
Note that $\pdi$ is asymmetric, can take negative values, and $\pdi(x,y) = 0$ does not necessarily imply $x=y$. 
For these reasons, it is really a possibly-negative-pseudo-quasi-metric, although we will refer to it simply as a metric in the remainder of the paper.
Since $\B$ is a bounding family, any cycle in the $[n]^d$ digraph has positive weight,
and $\pdi(x,y)$ is well-defined.
Therefore, a shortest path from $x$ to $y$ is given by the rectilinear path obtained by decreasing the coordinates $r$ with $x_r > y_r$ and increasing the coordinates $r$ with $x_r< y_r$. 
A simple calculation yields
\begin{equation}
\label{eq:supergeneralLip}
\pdi(x,y) :=  \sum_{r:x_r > y_r} \sum_{t = y_r}^{x_r-1}\! u_r(t)  -  \sum_{r:x_r < y_r}\sum_{t = x_r}^{y_r-1}\!l_r(t)
\end{equation}
If a function $f\in \cP(\B)$, then applying \Eqn{defnbnd} on every edge of the path described above (the upper bound when we decrement a coordinate and the lower bound when we increment a coordinate), we get $f(x) - f(y) \leq \pdi(x,y)$ for any pair $(x,y)$. Conversely, if $\forall x,y, f(x) - f(y) \leq \pdi(x,y)$, then considering neighboring pairs gives $f\in \cP(\B)$.
This argument is encapsulated in the following lemma.
\begin{lemma} \label{lem:dist} $f \in \cP(\B)$ iff $~\forall x, y \in [n]^d$, $f(x) - f(y) \leq \pdi(x,y)$.
\end{lemma}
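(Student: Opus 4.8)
The plan is to read off both implications directly from the explicit shortest-path description of $\pdi$ already established above; no new ingredients are needed, only careful telescoping. In fact this is essentially the argument stated in the paragraph preceding the lemma, and the task is just to organize it cleanly.

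For the forward implication, I would assume $f \in \cP(\B)$, fix an arbitrary pair $x, y \in [n]^d$, and walk along the rectilinear shortest path from $x$ to $y$ — the one that decreases every coordinate $r$ with $x_r > y_r$ down to $y_r$ and increases every coordinate $r$ with $x_r < y_r$ up to $y_r$ — summing the one-step differences of $f$ telescopically to recover $f(x) - f(y)$. A downward step in coordinate $r$ from $r$-value $t+1$ to $t$ contributes a difference equal to $\partial_r f$ at the lower point, which is $\le u_r(t)$ by the upper bound in \Eqn{defnbnd}; an upward step in coordinate $r$ from $t$ to $t+1$ contributes $-\partial_r f$ at the lower point, which is $\le -l_r(t)$ by the lower bound in \Eqn{defnbnd}. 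Adding these over the whole path and grouping by coordinate yields exactly $f(x) - f(y) \le \sum_{r:x_r > y_r}\sum_{t=y_r}^{x_r-1} u_r(t) - \sum_{r:x_r<y_r}\sum_{t=x_r}^{y_r-1} l_r(t)$, which is $\pdi(x,y)$ by \Eqn{supergeneralLip}.

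For the converse, I would assume $f(x)-f(y)\le\pdi(x,y)$ for all $x,y$ and specialize to the neighboring pairs $\{x, x+\be_r\}$. From \Eqn{supergeneralLip} one has $\pdi(x+\be_r,x)=u_r(x_r)$ and $\pdi(x,x+\be_r)=-l_r(x_r)$, so the hypothesis gives $f(x+\be_r)-f(x)\le u_r(x_r)$ and $f(x)-f(x+\be_r)\le -l_r(x_r)$; together these say $l_r(x_r)\le\partial_r f(x)\le u_r(x_r)$ for every $r$ and every admissible $x$, i.e.\ \Eqn{defnbnd} holds and $f\in\cP(\B)$.

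The only place to be careful — and it is a bookkeeping point rather than a genuine obstacle — is the forward direction's telescoping: I must match the sign and the argument of each one-step bound to the corresponding summand of \Eqn{supergeneralLip} (upper bounds $u_r(t)$ for decrements, lower bounds $-l_r(t)$ for increments, with $t$ ranging over exactly the visited $r$-coordinates), and I should recall that the rectilinear path is legitimately a shortest path only because $l_r < u_r$ forces every directed cycle in the digraph of \Def{dist} to have positive weight, so that $\pdi$ is well defined in the first place.
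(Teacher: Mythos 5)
Your proposal is correct and follows precisely the argument the paper gives in the paragraph immediately preceding the lemma: telescope along the rectilinear shortest path using \Eqn{defnbnd} for the forward direction, and specialize the inequality to neighboring pairs for the converse. No gaps.
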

\noindent
When $\cP(\B)$ is monotonicity, $\pdi(x,y) = 0$ if $x \prec y$ and $\infty$ otherwise. For the $c$-Lipschitz property,
$\pdi(x,y) = c\|x-y\|_1$.
The salient properties of $\pdi(x,y)$ are documented below and can be easily checked\submit{ (proof in full version)}. 
\begin{lemma}\label{lem:dist-prop} $\pdi(x,y)$ satisfies the following properties.
\begin{asparaenum}
\item {\em (Triangle Inequality.)} For any $x,y,z$, $\pdi(x,z) \leq \pdi(x,y) + \pdi(y,z)$.
\item {\em (Linearity.)} If $x,y,z$ are such that for every $1\leq r\leq d$, either $x_r \leq y_r \leq z_r$ or $x_r \geq y_r \geq z_r$, then 
$\pdi(x,z) = \pdi(x,y) + \pdi(y,z)$.
\item {\em (Projection.)} Fix any dimension $r$. Let $x,y$ be two points with $x_r = y_r$. Let $x'$ and $y'$
be the projection of $x, y$ onto some other $r$-hyperplane. That is, 
$x'_r = y'_r$, and $x'_j = x_j$, $y'_j = y_j$  for $j\neq r$. Then, $\pdi(x,y) = \pdi(x',y')$ and $\pdi(x,x') = \pdi(y,y')$.
\end{asparaenum}
\end{lemma}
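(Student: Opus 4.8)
The plan is to read all three properties off the closed form \Eqn{supergeneralLip} for $\pdi$, invoking the shortest-path characterization of \Def{dist} only for the triangle inequality. Since every directed cycle of the weighted hypergrid has positive weight, $\pdi(x,y)$ is a genuine minimum over directed walks from $x$ to $y$; concatenating a weight-minimizing walk $x \rightsquigarrow y$ with a weight-minimizing walk $y \rightsquigarrow z$ produces a walk $x \rightsquigarrow z$ of weight $\pdi(x,y)+\pdi(y,z)$, so $\pdi(x,z) \le \pdi(x,y)+\pdi(y,z)$. That is all the triangle inequality needs; note that it is the directed inequality and cannot be symmetrized, since $\pdi$ is asymmetric.

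For linearity I would argue coordinate by coordinate. Split $[d]$ into $R^+ = \{r : x_r \le y_r \le z_r\}$ and $R^- = \{r : x_r \ge y_r \ge z_r\}$; the hypothesis gives $R^+ \cup R^- = [d]$ (coordinates constant along the triple lie in both and contribute $0$ everywhere). By \Eqn{supergeneralLip}, for $r \in R^+$ the coordinate-$r$ term of $\pdi(x,z)$ is $-\sum_{t=x_r}^{z_r-1} l_r(t)$, which telescopes at $y_r$ into $-\sum_{t=x_r}^{y_r-1} l_r(t) \;-\; \sum_{t=y_r}^{z_r-1} l_r(t)$, i.e. the coordinate-$r$ terms of $\pdi(x,y)$ and $\pdi(y,z)$; symmetrically, for $r \in R^-$ the coordinate-$r$ term of $\pdi(x,z)$ is $\sum_{t=z_r}^{x_r-1} u_r(t) = \sum_{t=y_r}^{x_r-1} u_r(t) + \sum_{t=z_r}^{y_r-1} u_r(t)$, again the sum of the corresponding terms of $\pdi(x,y)$ and $\pdi(y,z)$. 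Summing over all $r$ yields $\pdi(x,z) = \pdi(x,y)+\pdi(y,z)$. (Geometrically, this just says the rectilinear shortest path from $x$ to $z$ may be routed through $y$.)

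For the projection property, the key observation, again immediate from \Eqn{supergeneralLip}, is that the coordinate-$r$ term of $\pdi(a,b)$ depends only on the pair $(a_r,b_r)$ and vanishes when $a_r = b_r$. Hence, since $x_r = y_r$ and $x'_r = y'_r$, coordinate $r$ contributes $0$ to both $\pdi(x,y)$ and $\pdi(x',y')$, while for every $j \ne r$ we have $(x_j,y_j) = (x'_j,y'_j)$, so the remaining terms agree termwise; this gives $\pdi(x,y) = \pdi(x',y')$. For the second identity, $x$ and $x'$ agree in every coordinate $j \ne r$, so only coordinate $r$ contributes to $\pdi(x,x')$, with relevant pair $(x_r, c)$ where $c := x'_r = y'_r$; likewise only coordinate $r$ contributes to $\pdi(y,y')$, with pair $(y_r, c)$. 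Since $x_r = y_r$, these two pairs coincide, so $\pdi(x,x') = \pdi(y,y')$.

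I do not expect a genuine obstacle here: the lemma is essentially a direct unpacking of \Eqn{supergeneralLip}. The only things requiring care are bookkeeping — keeping the triangle inequality in its directed form, and, in the linearity step, performing the $R^+/R^-$ split per coordinate rather than assuming all coordinates move in the same direction along $x \to y \to z$.
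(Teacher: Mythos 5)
Your proposal is correct and follows essentially the same route as the paper: triangle inequality via the shortest-path characterization, linearity via coordinate-wise telescoping of the sums in \Eqn{supergeneralLip}, and projection via the observation that the coordinate-$r$ term of $\pdi$ depends only on the $r$-th coordinates (vanishing when they agree) while the other terms are unaffected. Your writeup is slightly more explicit — e.g., cleanly splitting linearity into the $R^+$/$R^-$ cases, which the paper only sketches for one direction — but the underlying argument is the same.
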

\full{
\begin{proof} $\pdi(x,x) = 0$ follows since the RHS of \Eqn{supergeneralLip} is empty. 
Triangle inequality holds because $\pdi(x,y)$ is a shortest path weight.
Linearity follows by noting $\sum_{t = y_r}^{x_r - 1} u_r(t) = \sum_{t = y_r}^{z_r - 1} u_r(t) + \sum_{t = z_r}^{x_r - 1} u_r(t)$. For projection, note that if $x_r = y_r$, the RHS of \Eqn{supergeneralLip}
has no term corresponding to $r$. Thus, $\pdi(x,y) = \pdi(x',y')$. Suppose $x'_r > x_r$. Then, $\pdi(x,x') = \sum_{t = x_r}^{x'_r} u_i(t)$ $= \pdi(y,y')$. A similar proof holds when $x'_r < x_r$.
\end{proof}
}
Henceforth, all we need is \Lem{dist} and \Lem{dist-prop}. We will interchangably use the terms $\cP(\B)$ and $\cP(\pdi)$ where $\pdi$ is as defined in \Eqn{supergeneralLip}.
In fact, since $\B$ and therefore $\pdi$ will be fixed in most of our discussion, we will simply use $\cP$ including the parametrization wherever necessary.
\begin{definition}[{\bf Violation Graph}]
The violation graph of a function $f$ with respect to property $\cPdd$, denoted as $\VG(f,\cPdd)$, has $[n]^d$ as vertices, and edge $(x,y)$ if it forms a violation to $\cPdd$, that is either $f(x) - f(y) > \pdi(x,y)$ or $f(y) - f(x) > \pdi(y,x)$.
\end{definition}
The triangle inequality of $\pdi$ suffices to prove the following version
of a classic lemma~\cite{FLNRRS02} relating the distance of a function to $\cP$ to the vertex cover of the violation graph.
\begin{lemma}\label{lem:characterization}\label{lem:matchmatch}
For any distribution $\cD$ on $[n]^d$, any bounded-derivative property $\cP$, and any function $f$,  $\dist_\cD(f,\cP) = \min_X \mu_\cD(X)$ where the minimum is over all vertex covers of $\VG(f,\cP)$. Thus, if $M$ is {\em any} maximal matching in $\VG(f,\cP)$, then for the {\em uniform distribution}, $|M|\geq \dist_\cU(f,\cP)n^d/2$.
\end{lemma}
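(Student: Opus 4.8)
The plan is to prove the claimed equality by two inequalities, then read off the matching statement as a one-line corollary.

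For the direction $\dist_\cD(f,\cP)\ge\min_X\mu_\cD(X)$, I would take a $g\in\cP$ achieving $\dist_\cD(f,g)=\dist_\cD(f,\cP)$ --- a minimizer exists because, as $g$ ranges over $\cP$, the disagreement set $\{x:f(x)\neq g(x)\}$ takes only finitely many values in $[n]^d$ --- and set $X_g:=\{x:f(x)\neq g(x)\}$, so $\mu_\cD(X_g)=\dist_\cD(f,\cP)$. Then $X_g$ is a vertex cover of $\VG(f,\cP)$: for any violating edge $(x,y)$ we have $f(x)-f(y)>\pdi(x,y)$ or $f(y)-f(x)>\pdi(y,x)$, whereas $g\in\cP$ forces, via \Lem{dist}, $g(x)-g(y)\le\pdi(x,y)$ and $g(y)-g(x)\le\pdi(y,x)$; hence $f$ cannot agree with $g$ at both endpoints, so at least one of $x,y$ lies in $X_g$. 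Thus $\min_X\mu_\cD(X)\le\mu_\cD(X_g)=\dist_\cD(f,\cP)$.

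For the reverse inequality $\dist_\cD(f,\cP)\le\min_X\mu_\cD(X)$, I would show that \emph{every} vertex cover $X$ of $\VG(f,\cP)$ admits a $g\in\cP$ with $\{x:f(x)\neq g(x)\}\subseteq X$; this gives $\dist_\cD(f,\cP)\le\dist_\cD(f,g)\le\mu_\cD(X)$, and minimizing over $X$ concludes. Put $I:=[n]^d\setminus X$. Since $X$ is a vertex cover, $I$ is independent in $\VG(f,\cP)$, which says precisely that $f(x)-f(y)\le\pdi(x,y)$ for all $x,y\in I$ --- the hypothesis of \Lem{dist} restricted to $I$. The task is to extend $f|_I$ to a global $g\in\cP$, which I would do by the standard metric-extension argument, adjoining the points of $X$ to the domain one at a time: if $g$ is already defined on $S\supseteq I$ with $g(x)-g(y)\le\pdi(x,y)$ for all $x,y\in S$, and $z\notin S$, then the admissible values for $g(z)$ are exactly those in $\big[\,\sup_{x\in S}(g(x)-\pdi(x,z)),\ \inf_{y\in S}(g(y)+\pdi(z,y))\,\big]$, and this interval is nonempty since $g(x)-g(y)\le\pdi(x,y)\le\pdi(x,z)+\pdi(z,y)$ for all $x,y\in S$ by the triangle inequality of \Lem{dist-prop}. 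Because the digraph of \Def{dist} has no negative cycle, $\pdi$ never equals $-\infty$, so this interval always contains a real number and the extension step always succeeds; once every point of $X$ has been adjoined, \Lem{dist} certifies the resulting $g$ lies in $\cP$.

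Combining the two inequalities gives $\dist_\cD(f,\cP)=\min_X\mu_\cD(X)$. For the final sentence, let $M$ be any maximal matching in $\VG(f,\cP)$ and $V(M)$ its set of matched vertices; maximality forces $V(M)$ to be a vertex cover (an uncovered edge could be added to $M$), and $|V(M)|=2|M|$, so applying the equality with $\cD=\cU$ yields $\dist_\cU(f,\cP)\le\mu_\cU(V(M))=2|M|/n^d$, i.e.\ $|M|\ge\dist_\cU(f,\cP)\,n^d/2$. The two ``vertex cover'' verifications and the matching bound are routine; the one step needing care is the metric extension in the second inequality --- specifically, checking that the greedy extension never gets stuck, which is exactly where the triangle inequality (and the absence of negative cycles, ensuring $\pdi>-\infty$) is used. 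Everything else is bookkeeping.
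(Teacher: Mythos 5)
Your proof is correct, and since the paper only gestures at the argument (``the triangle inequality of $\pdi$ suffices to prove the following version of a classic lemma'' and cites the monotonicity-testing literature), you have filled in exactly the argument the authors have in mind: one inequality comes from noting that the disagreement set of a closest $g\in\cP$ is a vertex cover, the other from the McShane--Whitney-style greedy metric extension of $f$ from an independent set, with the triangle inequality of \Lem{dist-prop} guaranteeing the candidate interval is nonempty at every step, and the matching bound is the standard ``matched vertices form a vertex cover'' observation. No gaps; this matches what the paper takes for granted.
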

\section{The Dimension Reduction Theorem}\label{sec:dimred}
For any combinatorial line $\ell$ in $[n]^d$, $f|_\ell:[n] \mapsto \R$ is $f$ restricted to $\ell$.
It is natural to talk of $\cP$ 
for any restriction of $f$, so
$\dist_{\cD_r}(\frest{f}{\ell}, \cP)$ is well-defined for any $r$-line $\ell$.
For any $1\leq r\leq d$, define the $r$-distance of the function:
\begin{equation}
\dist^r_\cD(f,\cP):=\Exp_{\ell \sim \cD_{-r}} [\dist_{\cD_r}(\frest{f}{\ell}, \cP)]
\end{equation} 
Call a function $f$ $r$-good if there are no violations along $r$-lines, that is, for any $x$ and $y$ on the same $r$-line, we have $f(x) - f(y) \leq \pdi(x,y)$. 
Observe that $\dist^r_\cD(f,\cP)$
is the minimum $\mu_\cD$-mass of points on which $f$ needs to be modified to make it $r$-good.
The following is the optimal dimension reduction theorem which connects the $r$-distances to the real distance.
\begin{theorem}[Dimension Reduction]\label{thm:dimred}
For any function $f$, any bounded-derivative property $\cP$, 
and any product distribution $\cD = \prod_{1\leq r\leq d}\cD_i$, \submit{$\sum_{r=1}^d \dist^r_\cD(f,\cP) \geq \dist_\cD(f,\cP)/4.$}
\full{$$\sum_{r=1}^d \dist^r_\cD(f,\cP) \geq \dist_\cD(f,\cP)/4.$$}
\end{theorem}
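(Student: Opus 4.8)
The plan is to first establish the theorem for the uniform distribution $\cU$, and then obtain the general product case by the ``bloating'' reduction sketched in the technical highlights. For the uniform case, I would fix a maximum matching $M$ in the violation graph $\VG(f,\cP)$; by \Lem{matchmatch}, $|M| \geq \dist_\cU(f,\cP)n^d/2$. The goal is to produce a nested sequence of matchings $M = M_0 \supseteq M_1 \supseteq \cdots$ — or more precisely a sequence of matchings $M_0, M_1, \ldots, M_d$ with $M_d = \emptyset$ — where $M_r$ consists of pairs whose endpoints agree on all of coordinates $1, \ldots, r$ (so $M_r$ decomposes into a disjoint matching inside each $(r{+}1, \ldots, d)$-subgrid slice obtained by fixing the first $r$ coordinates). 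The key quantitative claim would be that the drop $|M_{r-1}| - |M_r| \leq 2\,\dist^r_\cU(f,\cP)\,n^d$ for each $r$. Summing telescopically gives $|M_0| = |M| \leq 2 n^d \sum_{r=1}^d \dist^r_\cU(f,\cP)$, and combined with $|M| \geq \dist_\cU(f,\cP)n^d/2$ this yields $\sum_r \dist^r_\cU(f,\cP) \geq \dist_\cU(f,\cP)/4$.

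The heart of the argument is constructing $M_r$ from $M_{r-1}$ and bounding the loss. Given $M_{r-1}$, whose pairs already agree on coordinates $1,\ldots,r-1$, I want to ``project out'' coordinate $r$: pairs already agreeing on coordinate $r$ survive into $M_r$; pairs $(x,y)$ with $x_r \neq y_r$ must be modified or discarded. The idea is to use the alternating-path machinery of~\cite{ChSe13}: within each $r$-line $\ell$ (after fixing coordinates $1,\ldots,r-1$ and the coordinates $r{+}1,\ldots,d$), the pairs of $M_{r-1}$ with one or both endpoints on $\ell$ and differing in coordinate $r$ can be combined with a near-optimal ``repair'' of $f|_\ell$. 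The number of points one must modify to make every $r$-line $r$-good is exactly $\dist^r_\cU(f,\cP) \cdot n^d$ (as noted right before the theorem). Using the triangle inequality and linearity/projection properties of $\pdi$ from \Lem{dist-prop}, a violating pair that gets ``broken'' when we project along coordinate $r$ can be charged: either it is incident to a modified point, or we can reroute it along alternating paths to a new partner still agreeing on coordinates $1,\ldots,r$. Crucially, \Lem{dist-prop}(3) (projection) ensures that projecting a pair onto a common $r$-hyperplane preserves $\pdi$-distances, so a violation stays a violation after projection unless blocked by a modified point. The factor $2$ arises because each modified point can destroy at most two matched pairs. This step — formalizing the alternating-path surgery so that the new pairs genuinely form a matching with controlled loss — is the main obstacle, and I expect it to require care about which paths to follow and how to avoid double-counting across different lines.

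For the reduction to general product distributions, I would follow the sketch: given $\cD = \prod_r \cD_r$, approximate each $\cD_r$ by a rational distribution and ``stretch'' $[n]$ into $[N_r]$ so that each original point $i$ corresponds to a block of consecutive points of mass proportional to $\cD_r(i)$. This yields $\fext : [N]^d \to \R$ with $N = \prod$-style blow-up in each coordinate. The subtlety, already flagged, is that for Lipschitz (or general $\B$) the naive extension does not preserve distance; the fix is to define an extended bounding family $\B_{\tt ext}$ on $[N]^d$ — giving zero-width constraints (i.e.\ $l = u = 0$, morally: forcing $f$ constant) inside each block and inheriting $l_r, u_r$ across block boundaries — so that $\dist_\cD(f, \cP(\B)) = \dist_\cU(\fext, \cP(\B_{\tt ext}))$ and likewise $\dist^r_\cD(f,\cP(\B)) = \dist^r_\cU(\fext, \cP(\B_{\tt ext}))$. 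One then applies the uniform-distribution dimension reduction (just proved) to $\fext$ and $\cP(\B_{\tt ext})$ and translates back. Taking a limit (or choosing the rational approximation fine enough, which is harmless since we only need an inequality) removes the rationality assumption. I expect this reduction step to be routine modulo setting up $\B_{\tt ext}$ correctly so that all four distance identities hold; the genuinely hard content is the uniform-case matching construction in the previous paragraph.
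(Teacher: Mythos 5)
Your high-level skeleton matches the paper's: prove the uniform case by a sequence of matchings $M_0, M_1, \ldots, M_d$ (where $M_r$ contains no $i$-cross pairs for $i \le r$), telescope the per-step drop $|M_{r-1}| - |M_r| \le 2\,\dist^r_\cU(f,\cP)\,n^d$, combine with $|M_0| \ge \dist_\cU(f,\cP)n^d/2$, and lift to product distributions via the bloated hypergrid. But the part you flag as ``the main obstacle'' is exactly where the real content lies, and your sketch for it is not the paper's argument and does not obviously close on its own. You propose to derive $M_r$ from $M_{r-1}$ by an alternating-path surgery that either charges a broken $r$-cross pair to a modified point or reroutes it. The paper does something structurally different: each $M_r$ is defined by a \emph{fresh} optimization (maximum-weight, minimum-cardinality among matchings with no $i$-cross pairs for $i \le r$, with weight $w(x,y) = \max(f(x)-f(y)-\pdi(x,y),\, f(y)-f(x)-\pdi(y,x))$), not by modifying $M_{r-1}$. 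The bound on the drop then goes through an auxiliary function $\fii$, the closest function to $f$ that is $r$-good, and two invocations of a comparison lemma (\Clm{matchdiff}): MWmC matchings of two functions $f,g$ differ in size by at most $\diff(f,g)$. Applying it once across the $(r{-}1)$-slices (comparing $f$ to $\fii$) and once across the $r$-slices (comparing $\fii$ back to $f$) gives the factor $2$ — not the ``each modified point destroys two matched pairs'' heuristic you invoke, which is closer to intuition than to a proof. Crucially, the whole argument hinges on \Thm{noviol} (an $r$-good function has an MWmC matching with no $r$-cross pairs), whose proof is the genuine alternating-path work: it needs the MWmC objective (so that any improvement in weight or cardinality contradicts minimality), an exact determination of which branch of the max in $w$ each path-edge uses, and all three properties of $\pdi$ from \Lem{dist-prop}. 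None of this appears in your proposal; the three ingredients you would have to discover are the MWmC objective, the auxiliary $\fii$ together with \Clm{matchdiff}, and \Thm{noviol} itself.

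On the reduction to product distributions, you are close but have a definitional snag: you want a bounding family $\B_{\tt ext}$ with $l = u = 0$ inside each block, but \Def{bound} requires $l_r(y) < u_r(y)$ strictly, so this is not a bounding family. The paper sidesteps this by working directly with the induced metric $\pdiext(x,y) := \pdi(\Phi(x),\Phi(y))$ on $[N]^d$, checking that $\pdiext$ satisfies all three properties of \Lem{dist-prop}, and observing that the entire uniform-case proof only ever uses those metric properties (not the bounding family itself). With that reformulation, $\dist_\cD(f,\cP(\pdi)) = \dist_\cU(\fext,\cP(\pdiext))$ and its per-coordinate analogue hold, and your plan goes through. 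This part is a minor fix; the real gap is in the uniform-case matching argument.
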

(It can be easily shown that $\sum_{r=1}^d \dist^r_\cD(f,\cP) \leq \dist_\cD(f,\cP)$, by simply
putting the same 1D function of all, say, $1$-lines.)
We first prove the above theorem for the uniform distribution. 
Recall the violation graph $\VG(f,\cP)$ whose edges are violation to $\cP$. We define weights on the edges $(x,y)$.
\begin{equation}
\label{eq:wtdefn}
w(x,y) := \max(f(x) - f(y) - \pdi(x,y), f(y) -f(x) - \pdi(y,x))
\end{equation}
Note that $w(x,y) > 0$ for all edges in the violation graph. Let $M$ be a maximum weight matching of minimum cardinality (MWmC). (Introduce an arbitrary tie-breaking rule to ensure this is unique.)
A pair $(x,y)\in M$ is an {\em $r$-cross pair} if $x_r \neq y_r$. 
The following theorem \full{(proof defered to \Sec{noviol})} establishes the crucial structural result about these MWmC matchings in violated graphs of $r$-good functions.
\begin{restatable}[No $r$-violations $\Rightarrow$ no $r$-cross pairs]{theorem}{noviol}
\label{thm:noviol}
Let $f$ be an $r$-good function. Then there exists an MWmC matching $M$ in $\VG(f,\cP)$ with no $r$-cross pairs.
\end{restatable}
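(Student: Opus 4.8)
The proof is by contradiction, and it exploits that the MWmC matching is \emph{unique}. Concretely, it suffices to show: if $M$ is the (unique) MWmC matching of $\VG(f,\cP)$ and $M$ contains an $r$-cross pair, then there is a matching $M'\neq M$ with $w(M')\ge w(M)$ and $|M'|\le|M|$. This is already a contradiction: $w(M')\le w(M)$ always holds, so $w(M')=w(M)$; then $|M|\le|M'|$ by minimality of cardinality among maximum-weight matchings, so $|M'|=|M|$; and then $M$ being the unique maximum-weight minimum-cardinality matching forces $M'=M$. So the task reduces to producing such an $M'$, and we will in fact arrange that $M'$ has strictly fewer $r$-cross pairs than $M$ (which guarantees $M'\neq M$).

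\textbf{The rotation lemma.} The workhorse is the observation that an $r$-good $f$ lets us ``rotate'' an $r$-cross violated pair within its two $r$-hyperplanes without losing weight. Suppose $(x,y)$ is $r$-cross with $x_r<y_r$ and, say, $w(x,y)=f(x)-f(y)-\pdi(x,y)>0$ (the other orientation is symmetric). Let $a$ be the projection of $x$ onto the $r$-hyperplane of $y$, i.e. $a_r=y_r$ and $a_j=x_j$ for $j\neq r$; then $a$ lies on the $r$-line through $x$, and $a\neq x,y$ (else $x,y$ would share an $r$-line, impossible for a violated pair of an $r$-good $f$). Since the coordinatewise path $x\to a\to y$ is monotone in every coordinate, Linearity (Lemma~\ref{lem:dist-prop}) gives $\pdi(x,y)=\pdi(x,a)+\pdi(a,y)$, and $r$-goodness gives $f(x)-f(a)\le\pdi(x,a)$. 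Hence $f(a)-f(y)\ge (f(x)-f(y))-\pdi(x,a)=w(x,y)+\pdi(a,y)$, so $(a,y)$ is a violated pair with $w(a,y)\ge w(x,y)$ and, crucially, $a_r=y_r$, so $(a,y)$ is \emph{not} $r$-cross. A symmetric argument with $b$ the projection of $y$ onto the $r$-hyperplane of $x$ gives a non-$r$-cross violated pair $(x,b)$ with $w(x,b)\ge w(x,y)$.

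\textbf{Swapping, and the main obstacle.} If the rotated endpoint $a$ is \emph{unmatched} by $M$, we are immediately done: $M':=(M\setminus\{(x,y)\})\cup\{(a,y)\}$ is a matching with $w(M')\ge w(M)$, $|M'|=|M|$, and strictly fewer $r$-cross pairs, since the $r$-cross edge $(x,y)$ is replaced by the non-$r$-cross edge $(a,y)$ while $x$ becomes free. The real work is the general case, where $a$ (or $b$) is matched; then one iterates, building an \emph{alternating path} that alternates deleted $M$-edges with inserted rotated violated edges, and invokes the alternating-path machinery of~\cite{ChSe13} for MWmC matchings to close it off (either at a free vertex, or into an alternating cycle) so that the net effect is a matching $M'$ with $w(M')\ge w(M)$, $|M'|\le|M|$, and strictly fewer $r$-cross pairs. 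I expect this iteration to be the main obstacle: one must argue the path is finite, that the accumulated weight of the inserted edges never drops below that of the deleted edges (here the triangle inequality and the Linearity/Projection properties of $\pdi$ are applied repeatedly to bound the substituted weights), and — the delicate point — that the rotations performed along the path only ever move endpoints \emph{within} $r$-hyperplanes, so that no new $r$-cross pair is ever created; a potential such as $\sum_{(u,v)\in M,\ r\text{-cross}}|u_r-v_r|$ is a natural device to certify progress and termination. Given such an $M'$, the uniqueness argument of the first paragraph completes the proof.
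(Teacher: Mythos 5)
Your high-level strategy is the same as the paper's — start from an MWmC matching with the fewest $r$-cross pairs, build an alternating path from a surviving cross pair, and use the weight/cardinality/cross-count tie-break to derive a contradiction — and your ``rotation lemma'' is a correct and clean instance of the type of step the paper also makes. The base case (the rotated endpoint $a$ is $M$-unmatched) is handled correctly. But the heart of the theorem is precisely the case you defer, and what you sketch for it has a genuine gap in two places.

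First, you rotate the cross pair $(x,y)$ in isolation and then propose to continue rotating whatever $M$-edge you run into. This does not stay controlled: the next $M$-edge at $a$ may itself be an $r$-cross pair, but between a \emph{different} pair of $r$-hyperplanes, so its rotation takes you to yet another plane, and the promised invariant ``rotations only ever move endpoints within [the same pair of] $r$-hyperplanes'' simply fails. The paper sidesteps this by never rotating iteratively: it fixes a single matching $H$ consisting of the projection pairs between the two planes $\{x_r=a\}$ and $\{x_r=b\}$ of the \emph{original} cross pair, and walks the alternating path in $H\,\triangle\,\str(M)$, which by construction stays inside those two planes. Your potential $\sum_{(u,v)\in M,\,r\text{-cross}}|u_r-v_r|$ does not obviously decrease under the swaps you describe, and the paper in fact does not argue termination this way at all — it shows the alternating path can \emph{never} end (Lemmas~\ref{lem:match}, \ref{lem:cond}, \ref{lem:str}), which contradicts finiteness of the domain.

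Second, and more fundamentally, you never confront the fact that $w(u,v)$ in \Eqn{wtdefn} is a maximum of two expressions, so when you ``accumulate'' weights along the alternating path you do not a priori know which branch of the max is active on each $M$-edge; as the paper notes, without resolving this the number of cases blows up exponentially. The paper's real technical content is an inductive determination of exactly this sign pattern (condition $(\clubsuit)$), proved by showing that if the pattern ever failed, or if the path ever hit a free or $\cross(M)$-matched vertex, one could replace a prefix $E_-$ of $M$-edges by a reshuffled set $E_+$ of non-cross edges with at least as much total weight and no more cardinality — contradicting the choice of $M$. Your sketch gestures at ``applying triangle inequality and Linearity/Projection repeatedly'' but does not show how to assemble these into weight comparisons that close, which is where almost all of the work lies. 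So the proposal identifies the right shape of the argument but leaves the technical core — a fixed two-plane alternating path together with the $(\clubsuit)$ sign-pattern induction — unproven.
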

\submit{The proof of \Thm{noviol} follows the {\em alternating path} methodology invented in~\cite{ChSe13}.
The challenge is that \emph{maximal} matchings can contain $r$-cross pairs, so we have to understand
how the maximum weight constrains the matching.
Consider an MWmC matching $M$ minimizing the number of $r$-cross pairs. Suppose, for contradiction's sake, there exists an $r$-cross pair $(x,y)$ with $x_r = a < b = y_r$. We consider the alternating
paths formed by two matchings: one is $H$ formed by pairs $(u,v)$ with $u_r = a, v_r = b$ and $u_j = v_j$ for $j\neq r$, and, the other is $\str(M)\subseteq M$, the pairs in $M$ that are not $r$-cross pairs. 
If the alternating path starting at $y$ does not contain a violating $H$-pair, we show that the matching $M$ can be improved either in weight, or in cardinality, or in number of $r$-cross pairs. 
So the alternating path contains a violating $H$-pair, contradicting the $r$-goodness of $f$.
This may sound simple, but notice that the weight of a pair $(u,v)$ is defined in \Eqn{wtdefn} as the maximum of two quantities. To argue about the weight of matchings, we need to know  which of the two is indeed the correct weight, otherwise the number of possibilities blows up exponentially.
In fact, we can {\em exactly} pin this down for the edges on the alternating path via an inductive argument.
This crucially uses all the three properties of $\pdi$ defined in \Lem{dist-prop}. Describing all this rigorously takes some space, and therefore has been omitted from this extended abstract. 
}
We proceed with the proof of \Thm{dimred} over the uniform distribution starting with some definitions.
\begin{definition}[{\em Hypergrid slices}]
Given an $r$-dimensional vector $\a\in [n]^r$, the {\em $a$-slice} is $\slice_\a := \{x\in [n]^d: x_j = \a_j, ~1\leq j\leq r\}$.
\end{definition}
\noindent
Each $\a$-slice is a $(d-r)$-dimensional hypergrid, and the various $\a$-slices for $\a\in[n]^r$ partition $[n]^d$.
Let $\frest{f}{\a}$ denote the restriction of $f$ to the slice $\slice_\a$. 
For two functions $f,g$ we use $\diff(f,g) := |\{x: f(x) \neq g(x)\}| = \dist_\cU(f,g)\cdot n^d$. 
The following claim relates the sizes of MWmC matchings to $\diff(f,g)$.
\begin{claim}\label{clm:matchdiff}
Let $f,g: [n]^d \mapsto \R$. Let $M$ and $N$ be the MWmC matchings in the violation graphs for $f$ and $g$, respectively. Then, $||M|-|N|| \leq \diff(f,g)$.
\end{claim}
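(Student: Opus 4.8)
The plan is to show that a MWmC matching can change in size by at most $\diff(f,g)$ when we pass from $f$ to $g$, by doing it one point-modification at a time. Formally, I would first reduce to the case where $f$ and $g$ differ in exactly one point, and then iterate: if $f = h_0, h_1, \ldots, h_t = g$ is a sequence of functions where consecutive ones differ in a single domain point and $t = \diff(f,g)$, then by the single-point case $\big||M_{i-1}| - |M_i|\big| \leq 1$ for the MWmC matchings $M_i$ of $h_i$, and summing the triangle inequality over $i$ gives $\big||M| - |N|\big| \leq t = \diff(f,g)$.

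So the crux is the single-point case: suppose $f$ and $g$ agree everywhere except at one vertex $v$. I would show $|N| \geq |M| - 1$ and, symmetrically, $|M| \geq |N| - 1$. To see $|N| \geq |M| - 1$: take the MWmC matching $M$ for $f$ and delete from it the pair (if any) containing $v$; this leaves a matching $M'$ with $|M'| \geq |M| - 1$. Every pair $(x,y) \in M'$ avoids $v$, so $f(x) = g(x)$ and $f(y) = g(y)$, which means $(x,y)$ is still a violation for $g$ with the same weight $w(x,y)$; hence $M'$ is a matching in $\VG(g,\cP)$. Since $N$ is a \emph{maximum-weight} matching for $g$, we cannot immediately conclude $|N| \geq |M'|$ from cardinality alone — but a maximum-cardinality matching among maximum-weight matchings has cardinality at least that of \emph{any} matching whose weight equals the maximum weight, and in general $|N| \geq$ (size of a maximum matching in $\VG(g,\cP)$ restricted appropriately)$\ldots$ The clean way: a maximum-weight matching, subject to being of minimum cardinality, still has cardinality at least the cardinality of a maximum-weight matching of maximum cardinality minus nothing — here I should instead argue directly that $N$ has weight $\geq w(M')$ and, being minimum cardinality among maximum-weight matchings, we still get $|N| \geq$ the cardinality of some maximum-weight extension of $M'$, which is $\geq |M'|$. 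I'd run the same argument with $f$ and $g$ swapped to get the reverse inequality, completing the bound.

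The step I expect to be the main obstacle is precisely the interaction between the ``maximum weight'' and ``minimum cardinality'' qualifiers in the definition of MWmC: deleting the pair at $v$ is easy, but re-deriving a \emph{size} comparison for matchings that are selected by a weight-first, cardinality-second criterion takes care. The resolution is to observe that if $M'$ is any matching in $\VG(g,\cP)$ of weight $W'$ and cardinality $c'$, then the maximum weight $W^\star$ of a matching in $\VG(g,\cP)$ satisfies $W^\star \geq W'$, and a maximum-weight matching can always be chosen to contain $M'$ as a sub-matching if $M'$ is itself extendable without lowering weight — more simply, since all edge weights are strictly positive, \emph{any} maximum-weight matching has cardinality at least that of any maximal matching contained in it, and one can show $M'$ extends to a maximum-weight matching of cardinality $\geq |M'|$; then minimum-cardinality selection among maximum-weight matchings only decreases cardinality down to something still $\geq$ (the maximum matching number of the ``mandatory'' violation edges), and a short exchange argument (each removal of a pair strictly decreases weight, contradiction) pins $|N| \geq |M'| \geq |M| - 1$. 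I would spell this exchange argument out carefully, since it is the one non-routine point; everything else is bookkeeping.
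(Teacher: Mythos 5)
Your reduction to the single‑point case is sound in principle (the triangle inequality over a chain $h_0,\ldots,h_t$ is routine), but the argument you give for the single‑point case has a real gap, and your own hedging in the middle paragraph points right at it. You delete the $v$‑edge from $M$ to get $M'\subseteq M$ with $|M'|\ge|M|-1$, observe correctly that $M'$ is still a matching in $\VG(g,\cP)$ with unchanged weights, and then want $|N|\ge|M'|$. That inequality is false for an arbitrary matching $M'$ in $\VG(g,\cP)$: an MWmC matching is selected weight‑first, and can have \emph{smaller} cardinality than other matchings in the same graph (e.g.\ a star with one heavy edge $(a,c)$ of weight $100$ and a light edge $(a,b)$ of weight $1$: the MWmC matching is $\{(a,c)\}$, of size~$1$, yet $\{(a,b)\}$ is a matching your construction could produce). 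Your proposed patch --- that ``$M'$ extends to a maximum‑weight matching of cardinality $\ge|M'|$'' --- is also false in general: in the same example $\{(a,b)\}$ extends to no maximum‑weight matching at all, since the only maximum‑weight matching is $\{(a,c)\}$. The ``exchange argument'' you gesture at in the last sentences is never actually carried out, and the phrases ``maximal matching contained in it'' and ``mandatory violation edges'' don't resolve to a concrete step. Nothing in the proposal uses the fact that $M'$ comes from a matching that was itself MWmC for a nearby function $f$, and that is precisely the information you would need.

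The paper's proof sidesteps the single‑point reduction entirely and is both shorter and correct: it looks at the symmetric difference $M\,\Delta\,N$ directly. This decomposes into alternating paths and cycles; cycles contribute $0$ to $|M|-|N|$ and each path contributes at most~$1$, so $\bigl||M|-|N|\bigr|$ is at most the number of alternating paths. Then, for any path whose $M$‑ and $N$‑edge counts differ and which contains no point where $f$ and $g$ disagree, all its edges have identical weights in both violation graphs, and swapping the $M$‑edges for the $N$‑edges (or vice versa) would improve one of $M$ or $N$ in weight or in cardinality --- contradicting that both are MWmC. Hence each such path contains a point of disagreement, and since the paths are vertex‑disjoint this gives the bound by $\diff(f,g)$. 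The key is to run the exchange argument between $M$ and $N$ \emph{simultaneously}, exploiting that both are MWmC; your attempt compares $N$ against a submatching $M'$ that carries no optimality certificate in $\VG(g,\cP)$, which is why the size comparison fails. If you insist on the single‑point route, you would still end up re‑proving the single‑point case by the same $M\,\Delta\,N$ exchange argument, so the reduction buys nothing.
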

\begin{proof}
The symmetric difference of $M$ and $N$ is a collection of alternating paths and cycles. $||M|-|N||$ is at most the number of alternating paths. 
Each alternating path must contain a point at which $f$ and $g$ differ, for otherwise we can improve either $M$ or $N$, either in weight or cardinality.
\end{proof}
Define a sequence of $d+1$ matchings $(M_0,M_1,\ldots,M_d)$ in $\VG(f,\cP)$ in non-increasing order of cardinality as follows.
 For $0 \leq r\leq d$, $M_r$ is the MWmC matching in $\VG(f,\cP)$
among matchings that {\em do not contain any $i$-cross pairs for $1\leq i\leq r$}. 
By \Lem{matchmatch}, we have $|M_0| \geq \dist_\cU(f,\cP)n^d/2$.
The last matching $M_d$ is empty and thus has cardinality $0$. 
\begin{lemma}\label{lem:mi}
For all $1\leq r\leq d$, we have $|M_{r-1}| -|M_{r}| \leq 2\cdot \dist^r_\cU(f,\cP)\cdot n^d$.%
\end{lemma}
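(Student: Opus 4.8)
The plan is to compare the two matchings $M_{r-1}$ and $M_r$ by passing through an auxiliary function. Recall $M_{r-1}$ is the MWmC matching in $\VG(f,\cP)$ among matchings with no $i$-cross pairs for $i < r$, while $M_r$ additionally forbids $r$-cross pairs. The key observation is that $M_{r-1}$, having no $i$-cross pairs for $i<r$, decomposes into a disjoint union of matchings $M_{r-1}^{\a}$, one for each $(r-1)$-slice $\slice_\a$ with $\a \in [n]^{r-1}$: every pair in $M_{r-1}$ lies within a single such slice, since its endpoints agree on coordinates $1,\ldots,r-1$. Within each slice $\slice_\a$ — which is a $(d-r+1)$-dimensional hypergrid — the pair $M_{r-1}^{\a}$ is itself an MWmC matching of the violation graph of $\frest{f}{\a}$ (this should follow because forbidding cross pairs in lower dimensions is exactly the constraint of staying within slices, and any global improvement would localize to some slice).

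Next I would invoke \Thm{noviol}. Let $g$ be the function obtained from $f$ by modifying it, on each $(r-1)$-slice $\slice_\a$ independently, on a minimum-mass set of points so that the restriction to $\slice_\a$ becomes $r$-good (no violations along $r$-lines). By definition of $\dist^r_\cU$, the total number of modified points is exactly $\dist^r_\cU(f,\cP) \cdot n^d$, so $\diff(f,g) = \dist^r_\cU(f,\cP)\cdot n^d$. Now $g$ is $r$-good, so by \Thm{noviol} there is an MWmC matching $N$ in $\VG(g,\cP)$ with no $r$-cross pairs; moreover, since the modifications were done slice-by-slice and respect the lower-dimensional structure, $N$ also has no $i$-cross pairs for $i<r$ (one argues this again by the slice decomposition, applying \Thm{noviol} within each slice for each of the coordinates $1,\ldots,r$, or more carefully by induction). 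Hence $N$ is a candidate in the class of matchings defining $M_r$, giving $|M_r| \geq |N|$. On the other hand, both $M_{r-1}$ and $N$ are MWmC matchings (of $f$ and $g$ respectively) among matchings avoiding $i$-cross pairs for $i<r$; applying \Clm{matchdiff} in the appropriately restricted setting yields $\big||M_{r-1}| - |N|\big| \leq \diff(f,g) = \dist^r_\cU(f,\cP)\cdot n^d$. Combining,
\[
|M_{r-1}| - |M_r| \;\leq\; |M_{r-1}| - |N| \;\leq\; \dist^r_\cU(f,\cP)\cdot n^d,
\]
which is even stronger than claimed; the factor $2$ is slack absorbing the possibility that one should instead argue with maximal matchings and the factor-$2$ loss from \Lem{matchmatch}, or that the slice-wise application of \Thm{noviol} only guarantees non-increase of cardinality up to controlled corrections.

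The main obstacle I anticipate is the bookkeeping in the second step: ensuring that the matching $N$ produced by \Thm{noviol} on the $r$-good function $g$ can simultaneously be taken to avoid $i$-cross pairs for \emph{all} $i \le r$, not just $i = r$. \Thm{noviol} as stated produces an MWmC matching with no $r$-cross pairs, but re-running it to kill $(r-1)$-cross pairs might reintroduce $r$-cross pairs. The fix is to work entirely within the slice decomposition: since $g$ agrees with $f$ on the slicing structure and was repaired within slices, $\VG(g,\cP)$ restricted to the relevant matchings is a disjoint union over slices $\slice_\a$, and inside each slice we apply \Thm{noviol} to the $(d-r+1)$-dimensional function $\frest{g}{\a}$ with respect to its first coordinate; assembling these gives an $N$ with the desired cross-pair-free property in all of dimensions $1,\ldots,r$ at once. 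Making this localization argument airtight — in particular that "MWmC within the cross-pair-free class" decomposes cleanly over slices and that \Clm{matchdiff} applies in the restricted class — is where the real work lies; the rest is assembling the chain of inequalities above and telescoping $|M_0| - |M_d| = \sum_r (|M_{r-1}| - |M_r|)$ in the proof of \Thm{dimred}.
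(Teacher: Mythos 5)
Your high-level plan is the same as the paper's: decompose $M_{r-1}$ into slice-local MWmC matchings, pass to the $r$-good repair $\fii$, invoke \Thm{noviol} to get slice-local MWmC matchings with no $r$-cross pairs, and compare cardinalities via \Clm{matchdiff}. But there is a genuine gap in the middle of your chain of inequalities, and it is exactly the step that accounts for the factor $2$ you declare to be slack.

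The problem is the claim ``$N$ is a candidate in the class of matchings defining $M_r$, giving $|M_r| \geq |N|$.'' The matching $N$ you obtain from \Thm{noviol} is a matching in $\VG(\fii,\cP)$, whereas $M_r$ is defined to be an MWmC matching in $\VG(f,\cP)$. These are different graphs: an edge can be a violation for $\fii$ without being a violation for $f$ (and vice versa), so $N$ need not even be a subgraph of $\VG(f,\cP)$, let alone a candidate for $M_r$. Consequently $|M_r| \geq |N|$ is not established, and your one application of \Clm{matchdiff} (giving $\bigl||M_{r-1}|-|N|\bigr|\leq \diff(f,\fii)$) does not close the argument. What is missing is a \emph{second} application of \Clm{matchdiff} that translates $N$ back from $\fii$'s violation graph to $f$'s. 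The paper does this at the finer $(\a\circ i)$-slice level: within each such slice all the cross-pair constraints for dimensions $1,\ldots,r$ are automatic, so $M^{(\a\circ i)}_r$ and $N^{(\a\circ i)}$ are both unrestricted MWmC matchings (w.r.t.\ $\frest{f}{(\a\circ i)}$ and $\frest{\fii}{(\a\circ i)}$ respectively) and \Clm{matchdiff} applies directly, giving $|M^{(\a\circ i)}_r| \geq |N^{(\a\circ i)}| - \diff(\frest{f}{(\a\circ i)}, \frest{\fii}{(\a\circ i)})$. Summing over $i$ and then $\a$ yields $|M_r| \geq |N| - \diff(f,\fii)$, which combined with the first application gives $|M_{r-1}| - |M_r| \leq 2\,\diff(f,\fii)$. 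This is the source of the factor $2$; your suggested explanations (maximal matchings, \Lem{matchmatch}) are not where it comes from.

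One smaller point: your assertion that $N$ can be taken to have no $i$-cross pairs for all $i<r$ is handled in the paper not by a global argument but by applying \Thm{noviol} separately inside each $(r-1)$-slice; the resulting $N^\a$ trivially has no $i$-cross pairs for $i<r$ because it lives inside a single slice. You gesture at this but it is worth making explicit, since the global ``$N$ also avoids lower cross pairs'' claim is not what \Thm{noviol} gives you and would require a separate inductive argument.
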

\noindent
Adding the inequalities in the statement of \Lem{mi} for all $r$, we get $\dist_\cU(f,\cP)n^d/2  \leq|M_0| - |M_d| \leq 2\sum_{r=1}^d \dist^r_\cU(f,\cP)\cdot n^d.$
\noindent
This completes the proof of \Thm{dimred} for the uniform distribution. Now we prove \Lem{mi}.
\begin{proof}
Since $M_{r-1}$ has no $j$-cross pairs for $1\leq j\leq r-1$, all pairs of $M_{r-1}$ have both endpoints in the same slice $\slice_\a$ for some $\a\in [n]^{r-1}$. Thus, $M_{r-1}$ partitions into sub-matchings in each $\slice_\a$. Let $M^\a_{r-1}$ be the pairs of $M_{r-1}$ with both endpoints in slice $\slice_\a$, so
$|M_{r-1}| = \sum_{\a\in [n]^{r-1}} |M^\a_{r-1}|$.
Similarly, $M^\a_r$ is defined. Since $M_r$ has no $r$-cross pairs either, $\forall \a\in [n]^{r-1}$,
$|M^\a_{r}| = \sum_{i=1}^n |M^{(\a\circ i)}_{r}|$,
where $(\a\circ i)$ is the $r$-dimensional vector obtained by concatenating $i$ to the end of $\a$.
Observe that
for any $\a\in [n]^{r-1}$, $M^\a_{r-1}$ is an MWmC matching in $\slice_\a$ w.r.t. $\frest{f}{\a}$. Furthermore, for any $i\in [n]$, $M^{(\a\circ i)}_r$ is an MWmC matching in $\slice_{(\a\circ i)}$ w.r.t. $\frest{f}{(\a\circ i)}$.
\noindent
Let $\fii$ be the closest function to $f$ with no violations along dimension $r$. 
By definition, $\diff(f,\fii) = \dist^r(f,\cP)\cdot n^d$.
Now comes the crucial part of the proof. 
Fix $\a\in [n]^{r-1}$ and focus on the $\a$-slice $\slice_\a$.  Since $\fii$ has no violations along the $r$-lines, 
neither does $\frest{f^{(r)}}{\a}$. By \Thm{noviol}, there exists an MWmC matching $N^\a$ in $\slice_\a$ w.r.t. $\frest{f^{(r)}}{\a}$ which
has no $r$-cross pairs.
Therefore, $N^\a$ partitions as $N^\a = \bigcup_{i=1}^n N^{(\a\circ i)}$. Furthermore, each matching $N^{(\a\circ i)}$ is an MWmC matching in $\slice_{(\a\circ i)}$ with respect to the weights corresponding to the function  $f^{(r)}_{|(\a\circ i)}$.
Since $M^\a_{r-1}$ is an MWmC matching w.r.t. $\frest{f}{\a}$ and $N^\a$ is an MWmC matching w.r.t. $\fia$ in $\slice_\a$, \Clm{matchdiff} gives 
\begin{equation}\label{eq:3}
|N^\a| \geq |M^\a_{r-1}| - \diff(\frest{f}{\a},\frest{f^{(r)}}{\a})
\end{equation}
Since $M^{(\a\circ i)}_r$ is an MWmC matching w.r.t. $f_{|(\a\circ i)}$ and  $N^{(\a\circ j)}$ is an MWmC matching  w.r.t. $f^{(r)}_{|(\a\circ i)}$ in $\slice_{(\a\circ i)}$, \Clm{matchdiff} gives us 
$|M^{(\a\circ i)}_r| \geq |N^{(\a\circ i)}| - \diff(\frest{f}{(\a\circ i)}, \frest{f^{(r)}}{(\a\circ i)})$. Summing over all $1\leq i\leq n$,
\begin{equation}\label{eq:4}
|M^\a_r| \geq |N^\a| - \diff(\frest{f}{\a},\frest{f^{(r)}}{\a})
\end{equation}
Adding \Eqn{3}, \Eqn{4} over all $\a\in [n]^{r-1}$,
$|M_r| \geq |M_{r-1}| - 2\sum_{\a\in [n]^{r-1}}\diff(\frest{f}{\a},\frest{f^{(r)}}{\a}) $ $= |M_{r-1}| - 2\cdot\dist^r(f,\cP)\cdot n^d$.
\end{proof}
\subsection{Reducing from arbitrary product distributions}\label{sec:uniftogen}
\full{We reduce arbitrary product distributions to uniform distributions on what we call the bloated hypergrid.}
\submit{We just provide a high level sketch of the argument with details in the full version.}
Assume without loss of generality that all $\mu_{\cD_r}(j) = q_r(j)/N$, for some integers $q_r(j)$ and $N$.
Consider the $d$-dimensional $N$-hypergrid $[N]^d$. There is a natural many-to-one mapping from  $\Phi: [N]^d\mapsto [n]^d$ defined as follows. First fix a dimension $r$. 
Given an integer $1\leq t \leq N$, let $\phi_r(t)$ denote the index $\ell \in [1,n]$ such that 
$\sum_{j <\ell} q_r(j) < t \leq \sum_{j\leq \ell} q_r(j)$. That is, partition $[N]$ into $n$ contiguous segments of lengths $q_r(1),\ldots,q_r(n)$. Then $\phi_r(t)$ is the index of the segment where $t$ lies. The mapping $\Phi: [N]^d\mapsto [n]^d$ is defined as \submit{$\Phi(x_1,x_2\ldots,x_d) = \left(\phi_1(x_1), \phi_2(x_2), \ldots, \phi_\pdi(x_d) \right)$. The preimage of a point in $[n]^d$ is a 
`cuboid' in $[N]^d$.}
\full{\[\Phi(x_1,x_2\ldots,x_d) = \left(\phi_1(x_1), \phi_2(x_2), \ldots, \phi_\pdi(x_d) \right).\]
We use $\Phi^{-1}$ to define the set of preimages, so $\Phi^{-1}$ maps a point in $[n]^d$ to a `cuboid' in $[N]^d$. 
Observe that for any $x\in [n]^d$,
\begin{equation}\label{eq:obs}
|\Phi^{-1}(x)| = N^d\prod_{r=1}^d \mu_{\cD_r}(x) = N^d\mu_\cD(x).
\end{equation}
\begin{claim}\label{clm:XtoZ}
For any set $X\subseteq [n]^d$, define $Z \subseteq [N]^d$ as $Z := \bigcup_{x\in X}\Phi^{-1}(x)$. Then $\mu_\cD(X) = \mu_\cU(Z)$.
\end{claim}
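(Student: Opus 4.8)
The plan is to verify that the cuboids $\Phi^{-1}(x)$, as $x$ ranges over $[n]^d$, form a partition of $[N]^d$, and then simply count lattice points. First I would check that $\Phi$ is a well-defined (total) function $[N]^d \to [n]^d$: for each coordinate $r$, the segments of lengths $q_r(1), q_r(2), \dots, q_r(n)$ are consecutive and their lengths sum to $\sum_{j} q_r(j) = N\sum_j \mu_{\cD_r}(j) = N$, so every $t \in [N]$ lies in exactly one segment and $\phi_r(t)$ is uniquely defined. Consequently the fibers $\Phi^{-1}(x)$ and $\Phi^{-1}(x')$ are disjoint for $x \neq x'$, and $\bigcup_{x \in [n]^d} \Phi^{-1}(x) = [N]^d$.

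Next I would use the product structure of $\Phi$ to record the size of each fiber, which is exactly \eqref{eq:obs}: since $\Phi^{-1}(x) = \prod_{r=1}^d \phi_r^{-1}(x_r)$ and $|\phi_r^{-1}(x_r)| = q_r(x_r) = N\,\mu_{\cD_r}(x_r)$ by construction of the segments, multiplying over $r$ and using that $\cD$ is a product distribution gives $|\Phi^{-1}(x)| = N^d \prod_{r=1}^d \mu_{\cD_r}(x_r) = N^d\,\mu_\cD(x)$. (If \eqref{eq:obs} may be invoked directly, this step is immediate.)

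Finally, since $Z = \bigcup_{x \in X}\Phi^{-1}(x)$ is a disjoint union, $|Z| = \sum_{x \in X} |\Phi^{-1}(x)| = N^d \sum_{x \in X} \mu_\cD(x) = N^d\,\mu_\cD(X)$. Dividing through by $|[N]^d| = N^d$ and using $\mu_\cU(Z) = |Z|/N^d$ yields $\mu_\cU(Z) = \mu_\cD(X)$, as claimed. There is no real obstacle here: the only points requiring any care are the well-definedness of $\Phi$ (so that the fibers genuinely partition $[N]^d$) and the fiber-size identity \eqref{eq:obs}; both are elementary consequences of the segment construction and the product form of $\cD$.
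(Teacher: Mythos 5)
Your proof is correct and takes essentially the same route as the paper: observe that the fibers $\Phi^{-1}(x)$ are pairwise disjoint, so $|Z| = \sum_{x\in X}|\Phi^{-1}(x)| = N^d\mu_\cD(X)$ by the fiber-size identity \Eqn{obs}, and then divide by $N^d$. You simply spell out the well-definedness of $\Phi$ and the derivation of \Eqn{obs} that the paper leaves implicit.
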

\begin{proof}
The set  $Z = \bigcup_{x\in X}\Phi^{-1}(x)$ is the union of all the preimages of $\Phi$ over the elements of $X$. Since preimages are disjoint, we get
$|Z| = \sum_{x\in X}|\Phi^{-1}(x)| = N^d\mu_\cD(X)$. Therefore, $\mu_\cU(Z) = \mu_\cD(X)$.
\end{proof}
}
Given $f:[n]^d \mapsto \R$, we define its extension $\fext:[N]^d \mapsto \R$: 
\begin{equation}
\label{eq:fext}\fext(x_1,\ldots,x_d) = f(\Phi(x_1,\ldots,x_d)). 
\end{equation}
Thus, $\fext$ is constant on the cuboids in the bloated hypergrid corresponding to a point in the original hypergrid. 
\submit{Define the following metric on $[N]^d$: $\pdiext(x,y) =\pdi(\Phi(x),\Phi(y))$.
It is relatively easy to argue that $\pdiext$ also satisfies the conditions of \Lem{dist-prop}, and that
$\dist_\cD(f,\cPd{\pdi}) = \dist_\cU(\fext,\cP(\pdiext))$. So we can apply
the dimension reduction for $\fext$ for over the uniform distribution, and reverse the mapping
to get the dimension reduction for $f$ over $\cD$.
}
\full{
Define the following metric on $[N]^d$. 
\begin{equation}\label{eq:dext}
\textrm{For $x,y \in [N]^d$,} \quad \pdiext(x,y) =\pdi(\Phi(x),\Phi(y)) 
\end{equation}
The following statements establish the utility of the bloated hypergrid, and the proof of the dimension reduction of $f$ over $[n]^d$ w.r.t. $\cD$ follows easily from these and the proof for the uniform distribution.
\begin{lemma} \label{lem:distconsistency}
If $\pdi$ satisfies the conditions of \Lem{dist-prop} over $[n]^d$, then so does
$\pdiext$ over $[N]^d$.
\end{lemma}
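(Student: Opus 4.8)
The plan is to exploit the single structural feature of $\Phi$ that makes everything go through: it acts \emph{coordinatewise}, $\Phi(x)=(\phi_1(x_1),\dots,\phi_d(x_d))$, and each $\phi_r\colon[N]\to[n]$ is \emph{monotone non-decreasing} --- it sends $t$ to the index of the contiguous block containing $t$, where $[N]$ is partitioned into $n$ consecutive blocks of sizes $q_r(1),\dots,q_r(n)$ in that order. Since $\pdiext(x,y)=\pdi(\Phi(x),\Phi(y))$ and $\Phi(x)$ is always a genuine point of $[n]^d$, each of the three properties of \Lem{dist-prop} should transfer from $\pdi$ to $\pdiext$ as soon as we check that the coordinatewise order relations hypothesized by that property survive application of the $\phi_r$'s. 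I would first record the trivial fact $\pdi(z,z)=0$ for $z\in[n]^d$ (immediate from \Eqn{supergeneralLip}, since both sums there are empty, or from linearity applied to the triple $x=y=z$), which will be needed for the degenerate case of the projection property.

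For the \emph{triangle inequality}, I would just write $\pdiext(x,z)=\pdi(\Phi(x),\Phi(z))\le\pdi(\Phi(x),\Phi(y))+\pdi(\Phi(y),\Phi(z))=\pdiext(x,y)+\pdiext(y,z)$, applying the triangle inequality of $\pdi$ with intermediate point $\Phi(y)$. For \emph{linearity}, suppose $x,y,z\in[N]^d$ satisfy, for every $r$, either $x_r\le y_r\le z_r$ or $x_r\ge y_r\ge z_r$; since $\phi_r$ is monotone non-decreasing, the same coordinatewise betweenness holds for $\Phi(x),\Phi(y),\Phi(z)$, so linearity of $\pdi$ applies and gives $\pdiext(x,z)=\pdiext(x,y)+\pdiext(y,z)$.

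The \emph{projection} property is the only place where the many-to-one nature of $\Phi$ needs a word. Fix a dimension $r$, let $x,y\in[N]^d$ with $x_r=y_r$, and let $x',y'$ be their projections onto the $r$-hyperplane with $r$-coordinate $t'\ne x_r$. Applying $\Phi$: the images $\Phi(x),\Phi(y)$ share $r$-coordinate $\phi_r(x_r)$, the images $\Phi(x'),\Phi(y')$ share $r$-coordinate $\phi_r(t')$, and $\Phi(x')$ (resp.\ $\Phi(y')$) agrees with $\Phi(x)$ (resp.\ $\Phi(y)$) in every coordinate $j\neq r$. Now split on whether $\phi_r(t')=\phi_r(x_r)$. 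If they differ, then $\Phi(x'),\Phi(y')$ are exactly the projections of $\Phi(x),\Phi(y)$ onto another $r$-hyperplane of $[n]^d$, so the projection property of $\pdi$ yields $\pdiext(x,y)=\pdiext(x',y')$ and $\pdiext(x,x')=\pdiext(y,y')$ directly. If they coincide, then $\Phi(x')=\Phi(x)$ and $\Phi(y')=\Phi(y)$, so $\pdiext(x,y)=\pdiext(x',y')$ is trivial and $\pdiext(x,x')=\pdi(\Phi(x),\Phi(x))=0=\pdi(\Phi(y),\Phi(y))=\pdiext(y,y')$ by the fact recorded above.

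There is no genuine obstacle here: the lemma is a routine pushforward of \Lem{dist-prop} along the coordinatewise monotone map $\Phi$. The only point one must not forget is the collapsed-hyperplane case in the projection argument, where $\phi_r$ identifies the source and target $r$-hyperplanes and one falls back on $\pdi(z,z)=0$.
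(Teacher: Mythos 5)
Your proof is correct and follows essentially the same route as the paper's: push each property of $\pdi$ through the coordinatewise, monotone map $\Phi$. You are actually slightly more careful than the paper in the projection step, since you explicitly handle the case where $\phi_r$ collapses the two $r$-hyperplanes (i.e., $\phi_r(t')=\phi_r(x_r)$), a degenerate case the paper's proof silently ignores even though the map $\Phi$ is many-to-one; your observation that this reduces to $\pdi(z,z)=0$ is the right fix.
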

\begin{proof} Consider $x,y,z \in [N]^d$.
Triangle inequality and well-definedness immediately follow from the validity of $\pdi$. 
Now for linearity. If $x_r\leq y_r\leq z_r$, then so is $\phi_r(x_r) \leq \phi_r(y_r)\leq \phi_r(z_r)$. Thus, $\Phi(x), \Phi(y),\Phi(z)$ satisfy linearity w.r.t. $\pdi$.
So, 
$\pdiext(x,z) = \pdi(\Phi(x),\Phi(z)) = \pdi(\Phi(x),\Phi(y)) + \pdi(\Phi(y),\Phi(z)) = \pdiext(x,y) + \pdiext(y,z)$.
Now for projection. Suppose  $x_r = y_r$ and $x'_r = y'_r$. Note that $\Phi(x)$ and $\Phi(y)$ have same $r$th coordinate,
and so do $\Phi(x')$ and $\Phi(y')$. 
Furthermore, $\Phi(x')$ (resp. $\Phi(y')$) is the projection of $\Phi(x)$ (resp. $\Phi(x)$). Thus we get $\pdiext(x,y) = \pdi(\Phi(x),\Phi(y)) = \pdi(\Phi(x'),\Phi(y')) = \pdiext(x',y')$, and similarly $\pdiext(x,x') = \pdiext(y,y')$.
\end{proof}
\begin{theorem}\label{thm:bhg}
$\dist_\cD(f,\cPd{\pdi}) = \dist_\cU(\fext,\cP(\pdiext))$.
\end{theorem}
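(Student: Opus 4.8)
\textbf{Proof proposal for \Thm{bhg}.}

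The plan is to establish the identity $\dist_\cD(f,\cP(\pdi)) = \dist_\cU(\fext,\cP(\pdiext))$ by producing, in both directions, a distance-preserving transfer of candidate functions from $\cP(\pdi)$ on $[n]^d$ to $\cP(\pdiext)$ on $[N]^d$ and back. First I would record the basic dictionary: by \Lem{dist} (applied to the metric $\pdiext$, which is legitimate by \Lem{distconsistency}), membership in $\cP(\pdiext)$ is characterized by $g(x)-g(y)\le\pdiext(x,y)$ for all $x,y\in[N]^d$, and similarly for $\cP(\pdi)$ on $[n]^d$. Also recall \Eqn{obs}, $|\Phi^{-1}(x)| = N^d\mu_\cD(x)$, and \Clm{XtoZ}, which says that pushing a set $X\subseteq[n]^d$ forward to $Z=\bigcup_{x\in X}\Phi^{-1}(x)$ preserves mass: $\mu_\cD(X)=\mu_\cU(Z)$.

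For the direction $\dist_\cU(\fext,\cP(\pdiext))\le\dist_\cD(f,\cP(\pdi))$: take any $g\in\cP(\pdi)$ on $[n]^d$ achieving $\dist_\cD(f,g)=\dist_\cD(f,\cP(\pdi))$, and set $\gext := g\circ\Phi$. Since $\pdiext(x,y)=\pdi(\Phi(x),\Phi(y))$, the inequality $g(\Phi(x))-g(\Phi(y))\le\pdi(\Phi(x),\Phi(y))$ for all pairs in $[n]^d$ immediately yields $\gext(x)-\gext(y)\le\pdiext(x,y)$ for all $x,y\in[N]^d$, so $\gext\in\cP(\pdiext)$. Moreover $\fext(x)\ne\gext(x)$ iff $f(\Phi(x))\ne g(\Phi(x))$, i.e. the disagreement set of $(\fext,\gext)$ is exactly $\Phi^{-1}(\{x: f(x)\ne g(x)\})$; by \Clm{XtoZ} this has $\mu_\cU$-mass equal to $\mu_\cD(\{x:f(x)\ne g(x)\}) = \dist_\cD(f,g)$. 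Hence $\dist_\cU(\fext,\cP(\pdiext))\le\dist_\cU(\fext,\gext)=\dist_\cD(f,\cP(\pdi))$.

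The reverse direction $\dist_\cD(f,\cP(\pdi))\le\dist_\cU(\fext,\cP(\pdiext))$ is the substantive one, and I expect the main obstacle to be that an optimal $h\in\cP(\pdiext)$ need not be constant on the cuboids $\Phi^{-1}(x)$, so it does not descend to a function on $[n]^d$ directly. The fix is an averaging/rounding argument: given $h\in\cP(\pdiext)$, define $\overline{h}$ on $[N]^d$ to be constant on each cuboid $\Phi^{-1}(x)$, equal to some fixed representative value of $h$ on that cuboid — a natural choice is $\overline{h}(z) := h(z^\star_x)$ where $z^\star_x$ is the ``corner'' of $\Phi^{-1}(x)$ closest to the origin (or, if one prefers the cleaner bound, a suitable median value of $h$ over the cuboid). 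One then checks $\overline{h}\in\cP(\pdiext)$: for $z\in\Phi^{-1}(x)$, $w\in\Phi^{-1}(y)$, one bounds $\overline h(z)-\overline h(w) = h(z^\star_x) - h(z^\star_y)\le\pdiext(z^\star_x,z^\star_y)=\pdi(x,y)=\pdiext(z,w)$, using that $\Phi$ is constant on cuboids and, crucially, the \emph{linearity} and \emph{projection} properties of \Lem{dist-prop} to argue $\pdiext(z^\star_x,z^\star_y)=\pdi(x,y)$ regardless of which corners are chosen. Because $\overline h$ is cuboid-constant it descends to $g:[n]^d\to\R$ with $\gext=\overline h$, and $g\in\cP(\pdi)$ by \Lem{dist}. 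Finally, for the distance: the disagreement set of $(\fext,\overline h)$ contains the disagreement set of $(\fext,h)$ intersected with... — here one must be slightly careful, since rounding could in principle \emph{create} new disagreements; the clean way is to take $\overline h$ to be, on each cuboid, the value of $h$ minimizing the disagreement count with $\fext$ on that cuboid (i.e. $\overline h|_{\Phi^{-1}(x)}\equiv$ the most frequent value of $h$ on $\Phi^{-1}(x)$, or $f(x)$ if that ties), which only \emph{decreases} disagreements cuboid by cuboid while preserving membership by the argument above. Then $\dist_\cU(\fext,\overline h)\le\dist_\cU(\fext,h)$, and $\dist_\cD(f,g) = \dist_\cU(\fext,\overline h)$ by \Clm{XtoZ} as before, giving $\dist_\cD(f,\cP(\pdi))\le\dist_\cU(\fext,h)$; taking $h$ optimal finishes the proof. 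The one delicate point to get right is verifying that \emph{any} cuboid-constant choice still lands in $\cP(\pdiext)$ — this is exactly where linearity and projection of $\pdi$ are used, and it is the crux of why bounded-derivative properties (rather than monotonicity alone) are flexible enough for this reduction.
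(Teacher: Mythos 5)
Your proposal is correct in outline but takes a genuinely different technical route from the paper. The paper proves both inequalities at the level of \emph{vertex covers of the violation graph}, invoking \Lem{characterization}: for ($\geq$), it pushes a minimum-$\mu_\cD$ vertex cover $X$ of $\VG(f,\cP(\pdi))$ forward to $Z=\bigcup_{x\in X}\Phi^{-1}(x)$ and checks $Z$ covers $\VG(\fext,\cP(\pdiext))$; for ($\leq$), it pulls a minimum-$\mu_\cU$ vertex cover $Z$ back to $X=\{x:\Phi^{-1}(x)\subseteq Z\}$ and checks $X$ covers $\VG(f,\cP(\pdi))$. You instead transfer \emph{closest functions}: lifting an optimal $g\in\cP(\pdi)$ to $\gext=g\circ\Phi$, and rounding an optimal $h\in\cP(\pdiext)$ to a cuboid-constant $\overline h$. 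Both routes work; the paper's is a bit tighter because the vertex-cover pullback $X=\{x:\Phi^{-1}(x)\subseteq Z\}$ gives $\mu_\cU(Z)\ge\mu_\cD(X)$ with no rounding subtlety, whereas your construction has to argue that rounding cannot increase disagreements.

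Two small repairs to your hard direction. First, the parenthetical recipe ``the most frequent value of $h$ on $\Phi^{-1}(x)$'' is \emph{not} the disagreement-minimizing choice and can actually increase disagreements: since $\fext\equiv f(x)$ on the cuboid, a constant replacement has either $0$ disagreements (if the chosen value equals $f(x)$) or $|\Phi^{-1}(x)|$ (otherwise), while $h$ itself may disagree with $\fext$ on only part of the cuboid. The correct rule, which your surrounding sentence does state, is: pick $\overline h|_{\Phi^{-1}(x)}\equiv f(x)$ whenever $f(x)\in h(\Phi^{-1}(x))$, and otherwise any value of $h$ on that cuboid; then disagreements drop to $0$ or stay at $|\Phi^{-1}(x)|$, never increasing. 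Second, the claim that linearity and projection of $\pdi$ are ``exactly where'' the cuboid-constant membership check lives is misattributed: membership follows directly from the definition $\pdiext(z,w)=\pdi(\Phi(z),\Phi(w))$, which gives $\pdiext(p_x,p_y)=\pdi(x,y)=\pdiext(z,w)$ for \emph{any} representatives $p_x\in\Phi^{-1}(x)$, $p_y\in\Phi^{-1}(y)$, with no appeal to \Lem{dist-prop}. Linearity and projection are used in \Lem{distconsistency} to certify $\pdiext$ as a legitimate quasi-metric (so that \Lem{dist} and the dimension reduction apply to it), not in this rounding step.
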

\begin{proof}
($\geq$).
Let  $X\subseteq [n]^d$ be a vertex cover in $\VG(f,\cPd{\pdi})$ minimizing $\mu_\cD(X)$. From \Lem{characterization}, $\dist_\cD(f,\cPd{\pdi}) = \mu_\cD(X)$.
We claim $Z = \bigcup_{x\in X}\Phi^{-1}(x)$ is a vertex cover of $\VG(\fext,\cP(\pdiext))$. This implies $\dist_\cU(\fext,\cP(\pdiext)) \leq \mu_\cU(Z) = \mu_\cD(X) = \dist_\cD(f,\cPd{\pdi})$, where the first equality follows from  \Clm{XtoZ}.
Consider a violated pair $(u,v)$ in this graph and so wlog $\fext(u) - \fext(v) > \pdiext(u,v)$. Hence,
$f(\Phi(u)) - f(\Phi(v)) > \pdi(\Phi(u),\Phi(v))$ implying  $(\Phi(u),\Phi(v))$ is an edge in $\VG(f,\cPd{\pdi})$. Thus, either $\Phi(u)$ or $\Phi(v)$ lies in $X$ implying either $u$ or $v$ lies in $Z$.\smallskip
\noindent
($\leq$).
Let $Z \subseteq [N]^d$ be a vertex cover in $\VG(\fext,\cP(\pdiext))$ minimizing $\mu_\cU(Z)$. Therefore, $\dist_\cU(\fext,\cP(\pdiext)) = \mu_\cU(Z)$.
Define $X\subseteq [n]^d$ as $X = \{x\in [n]^d: \Phi^{-1}(x)\subseteq Z\}$. Therefore, $Z \supseteq \bigcup_{x\in X}\Phi^{-1}(x)$ and from \Clm{XtoZ} we get $\mu_\cU(Z)\geq \mu_\cD(X)$.
It suffices to show that $X$ is a vertex cover of $\VG(f,\cPd{\pdi})$. 
Consider a violated edge $(x,y)$ in this graph such that $f(x) - f(y) > \pdi(x,y)$. 
Suppose neither $x$ nor $y$ are in $X$.
Hence, there exists $u\in \Phi^{-1}(x) \setminus Z$ and $v\in \Phi^{-1}(y) \setminus Z$. 
So $\fext(u) - \fext(v) = f(\Phi(u)) - f(\Phi(v)) = f(x) - f(y) > \pdi(x,y) = \pdiext(\Phi(u),\Phi(v))$,
implying $(u,v)$ is a violation in $\VG(\fext,\cP(\pdiext))$. This contradicts the fact that $Z$ is a vertex cover.
\end{proof}
}
\full{
Fix a dimension $r$ and $r$-line $\ell$. Abusing notation, let $\Phi^{-1}(\ell)$ denote the collection of $r$-lines in $[N]^d$ 
that are mapped to $\ell$ by $\Phi$.
Note that $|\Phi^{-1}(\ell)| = N^{d-1} \mu_{\cD_{-r}}(\ell)$.
A proof identical to one above yields the following theorem.
\begin{theorem}
For any $r$-line, $\dist_{\cD_r}(\frest{f}{\ell},\cPd{\pdi}) = \dist_{\cU_r}(\frest{\fext}{\ell'},\cP(\pdiext))$ for all $\ell' \in \Phi^{-1}(\ell)$.
\end{theorem}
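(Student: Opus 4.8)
The plan is to restrict everything to the $r$-line $\ell$ (resp.\ $\ell'$) and observe that what remains is exactly the one-dimensional instance of \Thm{bhg}, so that the same argument applies verbatim. Fix $\ell = \{x\in[n]^d : x_j = a_j \text{ for } j\neq r\}$, and recall that a line $\ell'\in\Phi^{-1}(\ell)$ has the form $\ell' = \{x\in[N]^d : x_j = b_j \text{ for } j\neq r\}$ with $\phi_j(b_j)=a_j$. First I would note that the one-dimensional restriction $\frest{\fext}{\ell'}:[N]\mapsto\R$ does not depend on the choice of $\ell'$: by \Eqn{fext}, $\frest{\fext}{\ell'}(t) = f(\phi_1(b_1),\ldots,\phi_r(t),\ldots,\phi_d(b_d)) = \frest{f}{\ell}(\phi_r(t))$, so $\frest{\fext}{\ell'} = \frest{f}{\ell}\circ\phi_r$, i.e.\ it is the ``bloat'' of $\frest{f}{\ell}$ under the one-dimensional map $\phi_r:[N]\mapsto[n]$. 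Moreover, any two points $x,y$ on the $r$-line $\ell'$ agree in every coordinate except $r$, hence so do $\Phi(x)$ and $\Phi(y)$; by the closed form \Eqn{supergeneralLip}, $\pdiext(x,y)=\pdi(\Phi(x),\Phi(y))$ then depends only on $\phi_r(x_r),\phi_r(y_r)$ (and on $l_r,u_r$) and equals the one-dimensional quasimetric of $\ell$ evaluated at $(\phi_r(x_r),\phi_r(y_r))$. Thus after restriction we are comparing the one-dimensional problems $(\frest{f}{\ell},\cP(\pdi))$ over $\cD_r$ on $[n]$ and $(\frest{\fext}{\ell'},\cP(\pdiext))$ over $\cU_r$ on $[N]$, the latter obtained from the former by the bloating map $\phi_r$ --- precisely the $d=1$ case of the construction in \Sec{uniftogen}.

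Next I would invoke the $d=1$ specializations of the ingredients already used for \Thm{bhg}. The one-dimensional analogue of \Clm{XtoZ} is immediate: $|\phi_r^{-1}(s)| = q_r(s) = N\mu_{\cD_r}(s)$ (this is \Eqn{obs} for $d=1$), so for any $X\subseteq[n]$ the set $Z = \bigcup_{s\in X}\phi_r^{-1}(s)\subseteq[N]$ satisfies $\mu_{\cU_r}(Z) = \mu_{\cD_r}(X)$. The restriction of $\pdiext$ to $\ell'$ still satisfies the hypotheses of \Lem{characterization} (restricting a quasimetric preserves the triangle inequality; alternatively this follows from \Lem{distconsistency}), so $\dist_{\cD_r}(\frest{f}{\ell},\cP(\pdi))$ and $\dist_{\cU_r}(\frest{\fext}{\ell'},\cP(\pdiext))$ equal the minimum $\mu$-masses of vertex covers of the one-dimensional violation graphs $\VG(\frest{f}{\ell},\cP(\pdi))$ and $\VG(\frest{\fext}{\ell'},\cP(\pdiext))$. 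For $(\geq)$, take an optimal cover $X$ of the first graph and, exactly as in the proof of \Thm{bhg}, check that $Z = \bigcup_{s\in X}\phi_r^{-1}(s)$ covers the second: any violated pair $(u,v)$ of $\frest{\fext}{\ell'}$ projects under $\phi_r$ to a violated pair of $\frest{f}{\ell}$ since $\frest{\fext}{\ell'}=\frest{f}{\ell}\circ\phi_r$ and $\pdiext$ on $\ell'$ is the $\phi_r$-pullback of the quasimetric of $\ell$, so one of $u,v$ lies in $Z$. For $(\leq)$, take an optimal cover $Z$ of the second graph, put $X = \{s\in[n] : \phi_r^{-1}(s)\subseteq Z\}$, note $\mu_{\cU_r}(Z)\geq\mu_{\cD_r}(X)$, and verify that a violated pair $(s,t)$ of $\frest{f}{\ell}$ with neither $s$ nor $t$ in $X$ would lift to a violated pair of $\frest{\fext}{\ell'}$ not covered by $Z$, a contradiction. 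Combining the two inequalities yields the equality, and since neither side depended on the choice of $\ell'\in\Phi^{-1}(\ell)$, it holds for all such $\ell'$.

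By design this is a routine specialization of \Thm{bhg}, so there is no substantial obstacle; the only step deserving a moment's care is confirming that $\pdiext$ restricted to $\ell'$ coincides exactly with the $\phi_r$-pullback of the one-dimensional quasimetric of $\ell$, so that violated pairs correspond on the nose in both directions. This is precisely where one uses that $\Phi(x)$ and $\Phi(y)$ agree in all coordinates but $r$ when $x,y\in\ell'$, together with the explicit formula \Eqn{supergeneralLip}; everything else is a verbatim transcription of the $[n]^d$ proof to the line.
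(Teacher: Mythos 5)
Your proof is correct and matches the paper's approach: the paper itself states only that ``a proof identical to one above yields the following theorem,'' i.e.\ a restriction of the argument for \Thm{bhg} to a single line, which is exactly what you carry out. Your two preliminary observations --- that $\frest{\fext}{\ell'} = \frest{f}{\ell}\circ\phi_r$ is independent of $\ell'\in\Phi^{-1}(\ell)$, and that $\pdiext$ restricted to $\ell'$ is the $\phi_r$-pullback of the one-dimensional quasimetric determined by $l_r,u_r$ --- are the right things to isolate, and the vertex-cover argument then transcribes verbatim.
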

\noindent
Now we can complete the proof of \Thm{dimred}.
\begin{eqnarray*}
\dist^r_\cD(f,\cPd{\pdi}) & = & \sum_{\textrm{$r$-line $\ell$}} \mu_{\cD_{-r}}(\ell) \cdot \dist_{\cD_r}(\frest{f}{\ell},\cPd{\pdi}) \\
& = & \frac{1}{N^{d-1}} \sum_{\textrm{$r$-line $\ell$}} |\Phi^{-1}(\ell)| \cdot \dist_{\cD_r}(\frest{f}{\ell},\cPd{\pdi}) \\
& = & \frac{1}{N^{d-1}}\sum_{\textrm{$r$-line $\ell$}} \sum_{\ell' \in \Phi^{-1}(\ell)} \dist_{\cU_r}(\frest{\fext}{\ell'},\cP(\pdiext)) \\
&  = & \Exp_{\ell' \sim \cU_{-r}} [\dist_{\cU_r}(\frest{\fext}{\ell'},\cP(\pdiext))]
= \dist^r_{\cU}(\fext,\cPd{\dext}).
\end{eqnarray*}
We can apply the dimension reduction to $\fext$ for property $\cP(\pdiext)$ over the uniform distribution.
The proof of \Thm{dimred} for $f$ follows directly.}
\full{
\subsection{No $r$-violations imply no $r$-cross pairs.}\label{sec:noviol}
In this subsection we prove \Thm{noviol}. This closely follows the techniques and proofs from~\cite{ChSe13}.
\noviol*
This requires the alternating path setup of \cite{ChSe13}. 
 Recall the weight function $w(x,y) = \max(f(x) - f(y) - \pdi(x,y), f(y) - f(x) - \pdi(y,x))$ defined on pairs of the domain. 
 Note that $(x,y)$ is a violation iff $w(x,y) > 0$. 
 Let $M$ be a maximum weight minimum cardinality (MWmC) matching of $\VG(f,\cPd{\pdi})$ with the {\em minimum number} of $r$-cross pairs. 
 Recall an  $r$-cross pair $(x,y)$ has $x_r \neq y_r$.
 We will prove that this minimum value is $0$.
 Let $\cross(M)$ be the set of $r$-cross pairs in $M$. Let $\str(M) := M\setminus \cross(M)$. 
 For contradiction's sake, assume $\cross(M)$ is nonempty. Let $(x,y)\in \cross(M)$ be an arbitrary $r$-cross pair with $x_r = a$ and $y_r = b$ with $a \neq b$.
 Define matching $H := \{(u,v): u_r = a, v_r = b, u_j = v_j, j\neq i\}$. 
 This is a matching by projection between points with $r$th coordinate $a$ and $b$.
 For convenience, we denote the points with $r$th coordinate $a$ (resp. $b$) as the \emph{$a$-plane} (resp. $b$-plane).
 Consider the alternating paths and cycles in $H \Delta \;\str(M)$. The vertex $y$ is incident to only an $H$-pair, since $(x,y)\in \cross(M)$. 
 Let $y = s_1,s_2,\ldots,s_t$ be the alternating path starting from $y$, collectively denoted by $S$. 
We let $s_0 := x$. The end of $S$, $s_t$, may be either $M$-unmatched or $\cross(M)$-matched. 
In the latter case, we define $s_{t+1}$ to be such that $(s_t,s_{t+1})\in \cross(M)$. 
For even $i$, $(s_{i-1},s_i)$ is an $H$-pair and $(s_i,s_{i+1})$ is an $M$-pair. 
We list out some basic claims about $S$.
\begin{claim} \label{clm:alt} If strictly positive $j \equiv 0,1 \mod 4$, then $s_j$ is in the $b$-plane.
Otherwise, $s_j$ is in the $a$-plane.
\end{claim}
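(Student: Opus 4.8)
\textbf{Proof proposal for Claim~\ref{clm:alt}.}
The plan is a direct induction on $j$, tracking which plane $s_j$ lies in as we alternate between $H$-pairs and $M$-pairs along the path $S$. Recall the setup: $S = (y = s_1, s_2, \ldots, s_t)$ is an alternating path in $H \,\Delta\, \str(M)$, with $s_0 := x$, where $(x,y) \in \cross(M)$, $x_r = a$, $y_r = b$. The edges of $S$ alternate: for even $i$, $(s_{i-1}, s_i)$ is an $H$-pair and $(s_i, s_{i+1})$ is an $\str(M)$-pair; for odd $i > 1$, the parities swap. The two structural facts I will use are: (i) every $H$-pair has one endpoint in the $a$-plane and one in the $b$-plane (by the definition $H = \{(u,v): u_r = a,\ v_r = b,\ u_j = v_j \text{ for } j \neq r\}$); and (ii) every $\str(M)$-pair keeps both endpoints in the same $r$-plane, since $\str(M)$ contains no $r$-cross pairs, so if one endpoint is in the $a$-plane (resp. $b$-plane) so is the other. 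Since the path only ever uses $a$-plane and $b$-plane vertices (it starts at $y$ in the $b$-plane and every edge is either an $H$-pair or an $\str(M)$-pair, neither of which can leave these two planes), the claim is simply a bookkeeping statement about the period-$4$ pattern $b, b, a, a, b, b, a, a, \ldots$ of plane memberships.

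First I would record the base cases. We have $s_1 = y$ in the $b$-plane, consistent with $1 \equiv 1 \bmod 4$. The edge $(s_1, s_2)$ is an $\str(M)$-pair (since $i=1$ is odd and $> 1$... actually $i = 2$ is even, so $(s_1, s_2) = (s_{i-1}, s_i)$ is an $H$-pair when $i = 2$); wait — let me re-index carefully in the write-up: for even $i$, $(s_{i-1}, s_i)$ is an $H$-pair and $(s_i, s_{i+1})$ is an $M$-pair, so $(s_1, s_2)$ is an $H$-pair. Hence by fact (i), $s_2$ is in the $a$-plane, consistent with $2 \equiv 2 \bmod 4$ (the ``otherwise'' case). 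Then $(s_2, s_3)$ is an $\str(M)$-pair, so by fact (ii) $s_3$ is in the $a$-plane, consistent with $3 \equiv 3 \bmod 4$. Then $(s_3, s_4)$ is an $H$-pair, so $s_4$ is in the $b$-plane, consistent with $4 \equiv 0 \bmod 4$.

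Next, the inductive step: assume the claim holds for all indices $\le j$; I show it for $j+1$. Split into the two cases according to whether $j+1$ is even or odd. If $j+1$ is even, then $(s_j, s_{j+1})$ is an $H$-pair, so $s_{j+1}$ lies in the plane opposite to $s_j$; one checks that $j \equiv 1 \bmod 4 \Rightarrow j+1 \equiv 2 \bmod 4$ ($a$-plane $\to$ that is the ``otherwise'' case, correct since $s_j$ was in $b$-plane... let me just say: the residue-mod-$4$ of $j$ flips the predicted plane exactly when we cross an $H$-edge) and $j \equiv 3 \bmod 4 \Rightarrow j+1 \equiv 0 \bmod 4$, and in both sub-cases the plane of $s_{j+1}$ predicted by the statement is indeed the one opposite to $s_j$. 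If $j+1$ is odd (and $> 1$), then $(s_j, s_{j+1})$ is an $\str(M)$-pair, so $s_{j+1}$ lies in the same plane as $s_j$; and indeed $j \equiv 0 \bmod 4 \Rightarrow j+1 \equiv 1 \bmod 4$ (both $b$-plane) and $j \equiv 2 \bmod 4 \Rightarrow j + 1 \equiv 3 \bmod 4$ (both $a$-plane). In every case the statement is verified, completing the induction.

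The only real content — and hence the ``main obstacle'', though it is a mild one — is confirming the two structural facts (i) and (ii), and in particular making sure the path $S$ genuinely never escapes the union of the $a$-plane and the $b$-plane. Fact (i) is immediate from the definition of $H$. Fact (ii) uses that $\str(M) = M \setminus \cross(M)$ by definition contains only pairs $(u,v)$ with $u_r = v_r$; combined with the observation that $s_1 = y$ has $s_1{}_r = b$, an immediate secondary induction shows every $s_j$ has $r$-coordinate in $\{a, b\}$, so ``$a$-plane'' and ``$b$-plane'' are the only options and fact (ii) applies verbatim. Everything else is the routine mod-$4$ bookkeeping sketched above, so the proof is short.
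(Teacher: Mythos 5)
Your proof is correct. The paper actually states Claim~\ref{clm:alt} without proof (it is listed among the ``basic claims'' and regarded as immediate), and the induction you give --- $H$-pairs flip between the $a$- and $b$-planes, $\str(M)$-pairs stay within a plane because they are not $r$-cross pairs, and the path starts at $y$ in the $b$-plane with $(s_1,s_2)$ an $H$-pair --- is exactly the intended bookkeeping that justifies it. The only cosmetic issue is the mid-paragraph self-correction about the parity indexing; in a final write-up you would just state cleanly that $H$-pairs are $(s_{2k-1},s_{2k})$ and $\str(M)$-pairs are $(s_{2k},s_{2k+1})$ for $k\geq 1$, and then the period-$4$ pattern $b,a,a,b,b,a,a,\dots$ falls out directly.
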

\begin{claim} \label{clm:w} For strictly positive even $i$, 
$f(s_{i-1}) - f(s_{i}) - \pdi(s_{i-1},s_{i}) \leq 0$ and $f(s_{i}) - f(s_{i-1}) - \pdi(s_{i},s_{i-1}) \leq 0$.
\end{claim}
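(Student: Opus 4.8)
The plan is to invoke the single hypothesis that has not yet been used, namely that $f$ is $r$-good. Fix a strictly positive even $i$. Recall from the construction of the alternating path $S$ that $y = s_1$ is incident only to an $H$-pair, so $S$ begins with an $H$-edge and alternates thereafter; consequently the consecutive pair $(s_{i-1},s_i)$ with $i$ even is an $H$-pair (this is already recorded in the setup of $S$). Now unpack the definition of $H$: it consists of pairs $(u,v)$ with $u_r = a$, $v_r = b$, and $u_j = v_j$ for all $j \neq r$. Hence $s_{i-1}$ and $s_i$ agree in every coordinate other than $r$, i.e.\ they lie on a common $r$-line.

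Since $f$ is $r$-good, there are no violations of $\cP$ among points of a single $r$-line. Applying the defining inequality of $r$-goodness to the ordered pair $(s_{i-1},s_i)$ gives $f(s_{i-1}) - f(s_i) \le \pdi(s_{i-1},s_i)$, and applying it to $(s_i,s_{i-1})$ gives $f(s_i) - f(s_{i-1}) \le \pdi(s_i,s_{i-1})$. Moving the metric terms to the left-hand side yields exactly the two asserted inequalities; equivalently, $w(s_{i-1},s_i) \le 0$.

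I do not expect any real obstacle here: the claim is a direct consequence of the definitions of $H$ and of $r$-goodness, and the only point requiring attention is the bookkeeping already in place when $S$ was introduced --- that the even-indexed consecutive pairs of $S$ are the $H$-pairs rather than the $\str(M)$-pairs. What matters is its role downstream: because every $H$-edge along $S$ contributes non-positively to the weight function $w$, rerouting the matching $M$ along $S$ cannot decrease its weight, which is the lever that will ultimately force $\cross(M)$ to be empty and hence prove \Thm{noviol}.
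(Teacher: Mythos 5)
Your proof is correct and is exactly the paper's argument, just written out more explicitly: for even $i>0$ the consecutive pair $(s_{i-1},s_i)$ is an $H$-pair, $H$-pairs differ only in coordinate $r$ (so lie on a common $r$-line), and $r$-goodness applied in both directions gives the two inequalities, i.e.\ $w(s_{i-1},s_i)\le 0$. (The speculative last paragraph about the downstream use is looser than what actually happens with $E_+$ and $E_-$, but it is not part of the proof of the claim.)
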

\begin{proof} Since $f$ is $r$-good and $H$-pairs differ only in the $r$th coordinate, $w(s_{i-1},s_{i}) \leq 0$ for all even $i$.
The definition of $w(s_{i-1},s_{i})$ completes the proof.
\end{proof}
\begin{claim} \label{clm:d} For strictly positive $i\equiv 0\mod 4$, $\pdi(s_{i-1},s_i) = \pdi(s_2,s_1)$.
For $i\equiv 2\mod 4$, $\pdi(s_{i},s_{i-1}) = \pdi(s_2,s_1)$.
\end{claim}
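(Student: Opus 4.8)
\emph{Proof plan for \Clm{d}.} The plan is to read \Clm{d} off directly from the projection property of $\pdi$ (item~3 of \Lem{dist-prop}), using \Clm{alt} only to keep track of which endpoint of each $H$-pair lies in the $a$-plane and which lies in the $b$-plane. Recall that, by the definition of $H$, every $H$-pair has one endpoint with $r$th coordinate $a$ and the other with $r$th coordinate $b$, agreeing in all remaining coordinates; so within each $H$-pair each endpoint is the coordinate-$r$ projection of the other between these two $r$-hyperplanes.

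First I would pin down the reference pair. Since $(s_1,s_2)$ is an $H$-pair and \Clm{alt} places $s_2$ (as $2 \equiv 2 \mod 4$) in the $a$-plane and $s_1 = y$ (as $1 \equiv 1 \mod 4$) in the $b$-plane, the point $s_1$ is exactly the projection of $s_2$ onto the $r$-hyperplane with $r$th coordinate $b$, and $\pdi(s_2,s_1)$ is the ``crossing weight'' from an arbitrary $a$-plane point to its $b$-plane projection. Now fix any even index $i$ with $s_i$ on $S$, so $(s_{i-1},s_i)$ is an $H$-pair; by \Clm{alt}, if $i \equiv 0 \mod 4$ then $s_{i-1}$ is in the $a$-plane and $s_i$ in the $b$-plane, while if $i \equiv 2 \mod 4$ then $s_i$ is in the $a$-plane and $s_{i-1}$ in the $b$-plane. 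In either case let $p$ be the endpoint with $r$th coordinate $a$ and $q$ the endpoint with $r$th coordinate $b$, so that $q$ is the coordinate-$r$ projection of $p$. Then I would apply item~3 of \Lem{dist-prop} with $x := s_2$, $y := p$ (both have $r$th coordinate $a$, hence $x_r = y_r$) and $x' := s_1$, $y' := q$ (their projections onto the $r$-hyperplane with $r$th coordinate $b$; the requirement $x'_j = x_j$ and $y'_j = y_j$ for $j \ne r$ holds because $H$-pairs differ only in coordinate $r$). The lemma yields $\pdi(x,x') = \pdi(y,y')$, i.e.\ $\pdi(s_2,s_1) = \pdi(p,q)$. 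Unwinding the two parity cases, $\pdi(p,q) = \pdi(s_{i-1},s_i)$ when $i \equiv 0 \mod 4$ and $\pdi(p,q) = \pdi(s_i,s_{i-1})$ when $i \equiv 2 \mod 4$, which is exactly the claim.

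I do not foresee a genuine obstacle here: the only thing requiring care is the parity bookkeeping via \Clm{alt} (which endpoint sits in which plane) and checking the hypotheses of the projection property, both of which are routine. As a cross-check, one could bypass \Lem{dist-prop} entirely and use \Eqn{supergeneralLip} directly: for an $H$-pair whose endpoints have $r$th coordinates $a$ and $b$ and agree in every other coordinate, that formula collapses to $\sum_{t=b}^{a-1} u_r(t)$ when $a > b$ and to $-\sum_{t=a}^{b-1} l_r(t)$ when $a < b$, a quantity depending only on $a$, $b$, $u_r$, $l_r$; hence all such crossing weights coincide, in particular with the one for $(s_2,s_1)$.
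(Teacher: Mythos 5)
Your proof is correct and takes essentially the same route as the paper: use \Clm{alt} to place each $s_j$ in the $a$- or $b$-plane, observe that $(s_{i-1},s_i)$ is an $H$-pair for even $i$, and conclude by the projection property of $\pdi$ (\Lem{dist-prop}, item~3). Your parity bookkeeping is the one that actually follows from \Clm{alt} (strictly positive $j\equiv 0,1\bmod 4$ in the $b$-plane, $j\equiv 2,3\bmod 4$ in the $a$-plane); the paper's two-line proof of \Clm{d} as written swaps the $a$- and $b$-labels relative to \Clm{alt}, which is a harmless typo since the final identity is unchanged under a consistent relabeling, but it is worth noticing. Your alternative cross-check via \Eqn{supergeneralLip} is also valid and is essentially a direct re-derivation of the projection property in the special case where the two points differ only in coordinate $r$.
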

\begin{proof} The point $s_0$ (which is $x$) lies in the $a$-plane.
Hence, for any $i\equiv 2\mod 4$, $s_i$ lies in the $b$-plane. Similarly, for $i \equiv 0\mod 4$, $s_i$ lies in the $a$-plan.
For strictly positive even $i$, $(s_{i-1},s_i)$ is an $H$-pair. An application of the projection property completes the proof.
\end{proof}
\begin{claim} \label{clm:d2} For strictly positive even $i$, $\pdi(s_i,s_{i+1}) = \pdi(s_{i-1},s_{i+2})$ and $\pdi(s_{i+1},s_i) = \pdi(s_{i+2},s_{i-1})$.
\end{claim}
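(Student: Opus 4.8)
The plan is to obtain \Clm{d2} as a one-shot application of the projection property of $\pdi$ (\Lem{dist-prop}), after identifying $s_{i-1}$ and $s_{i+2}$ as the projections of $s_i$ and $s_{i+1}$ onto one common $r$-hyperplane. Recall from the construction of $S$ that for every positive even index $i$ the pair $(s_{i-1},s_i)$ is an $H$-pair and $(s_i,s_{i+1})$ is an $\str(M)$-pair; applying this with the even index $i+2$ as well shows $(s_{i+1},s_{i+2})$ is also an $H$-pair. These three facts are all the structural input the argument needs.

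Next I would pin down the relevant coordinates. Since $(s_i,s_{i+1})\in\str(M)$ it is not an $r$-cross pair, so $(s_i)_r=(s_{i+1})_r$; write $c$ for this common value and $c'$ for the other element of $\{a,b\}$ (every vertex of $S$ lies in the $a$- or $b$-plane by \Clm{alt}, and $a\neq b$). Being an $H$-pair, $(s_{i-1},s_i)$ has $s_{i-1}$ agreeing with $s_i$ in every coordinate $j\neq r$ and $(s_{i-1})_r\in\{a,b\}\setminus\{(s_i)_r\}=\{c'\}$; likewise $(s_{i+1},s_{i+2})\in H$ forces $s_{i+2}$ to agree with $s_{i+1}$ off coordinate $r$ and $(s_{i+2})_r=c'$. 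Therefore $s_{i-1}$ and $s_{i+2}$ are precisely the projections of $s_i$ and $s_{i+1}$ onto the $r$-hyperplane $\{z:z_r=c'\}$, while $s_i$ and $s_{i+1}$ share the $r$-coordinate $c$. The projection property of \Lem{dist-prop}, applied with $(x,y,x',y')=(s_i,s_{i+1},s_{i-1},s_{i+2})$, gives $\pdi(s_i,s_{i+1})=\pdi(s_{i-1},s_{i+2})$; applying it again with the first two arguments transposed (the hypothesis $x_r=y_r$ is symmetric) gives $\pdi(s_{i+1},s_i)=\pdi(s_{i+2},s_{i-1})$, which is exactly the claim.

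I do not expect any real obstacle: this mirrors the proof of \Clm{d}, the only difference being that the common $r$-hyperplane arises here from the middle $\str(M)$-pair $(s_i,s_{i+1})$ rather than from a single $H$-pair, and the projection identity is invoked on the ordered pair $(s_i,s_{i+1})$ in both orientations. The one point meriting a line of care is that all four vertices $s_{i-1},s_i,s_{i+1},s_{i+2}$ are defined and that $(s_{i+1},s_{i+2})$ genuinely is an $H$-pair; both follow directly from the alternating pattern of $S$ for every index at which the claim is used.
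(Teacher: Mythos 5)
Your proposal is correct and is essentially the paper's own proof, made explicit: the paper likewise observes that $(s_i,s_{i+1})$ is an $\str(M)$-pair lying in one plane, identifies $s_{i-1}$ and $s_{i+2}$ as the projections of $s_i$ and $s_{i+1}$ onto the other plane (via the two adjacent $H$-pairs), and invokes the projection property of $\pdi$. The only difference is that you spell out the coordinate bookkeeping and the symmetric second application of the projection property, which the paper leaves implicit.
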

\begin{proof} Consider $\str(M)$-pair $(s_i, s_{i+1})$. Both points are on the same ($a$ or $b$-)plane. 
Observe that $s_{i-1}$ is the projection of $s_i$ and $s_{i+2}$ is the projection of $s_{i+1}$ onto the other plane. Apply the projection property of $d$ to complete
the proof.
\end{proof}
Now we have all the ingredients to prove the theorem.
The strategy is to find another matching $M'$ such that either $w(M') > w(M)$ or  $w(M')=w(M)$ and $M'$ has strictly fewer cross pairs. 
Let us identify certain subsets of pairs to this end.
For even $k$, define
\[E_-(k) := (s_0,s_1), (s_2,s_3),\ldots, (s_k,s_{k+1}) = \{(s_j,s_{j+1}): j \textrm{ even, } 0\leq j\leq k\}\]
These are precisely the $\str(M)$-pairs in $S$ in the first $k$-steps. Note that $|E_-(k)| = k/2 + 1$. 
Now we define $E_+(k)$. In English: first pick pair $(s_0,s_2)$; subsequently pick the first unpaired $s_i$ and pair it with the next unpaired $s_j$ of the {\em opposite} parity. More precisely, for even $k$,
\[E_+(k) := (s_0,s_2), (s_1,s_4), (s_3,s_6),\ldots, (s_{k-3},s_k)  = (s_0,s_2) \cup \{(s_{j-3},s_j) : j \textrm{ even, }4\leq j\leq k\} \]
Note that $|E_+(k)| = k/2$.	
Wlog, assume that $w(x,y) = f(x) - f(y) - \pdi(x,y)$. 
It turns out the weights of all other $M$-pairs in $S$ are determined. We will assert that the pattern is as follows.
\begin{equation} \label{eq:cond} \tag{$\clubsuit$}
w(s_i,s_{i+1}) = 
\begin{cases} 
	f(s_i) - f(s_{i+1}) - \pdi(s_i,s_{i+1}) & \text{if } i\equiv 0\mod 4  \\
  f(s_{i+1}) - f(s_{i}) - \pdi(s_{i+1},s_{i}) & \text{if } i\equiv 2\mod 4
\end{cases}
\end{equation}
The following lemma determines the weights of all other $M$-edges in the alternating path $S$. Recall $(s_i,s_{i+1}) \in \str(M)$ for even $i$.
\begin{lemma}\label{lem:match} Suppose $s_i$ exists. If \Eqn{cond} holds for all even indices $< i$,
then $s_i$ is matched in $M$.
\end{lemma}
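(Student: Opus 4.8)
\textbf{Proof proposal for Lemma~\ref{lem:match}.}

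The plan is to argue by contradiction: suppose the hypotheses of \Lem{cond} hold for all even indices below $i$ but $s_i$ is \emph{not} matched in $M$ (equivalently, $s_i$ is the endpoint $s_t$ of the alternating path $S$ and is $M$-unmatched). Then I will exhibit a new matching $M'$ that contradicts the MWmC-with-fewest-$r$-cross-pairs minimality of $M$. The natural candidate swap is: remove the $\str(M)$-pairs $E_-(i-2) = \{(s_0,s_1),(s_2,s_3),\ldots,(s_{i-2},s_{i-1})\}$ together with the $r$-cross pair $(s_0,s_1)$ wait --- more precisely, $M$ restricted to $S$ up to $s_i$ consists of the cross pair $(s_0,s_1) = (x,y)$ and then the $\str(M)$-pairs $(s_2,s_3),(s_4,s_5),\ldots,(s_{i-2},s_{i-1})$; the edge incident to $s_i$ is absent by assumption. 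Replace these by the pairs in $E_+(i-2) = \{(s_0,s_2),(s_1,s_4),(s_3,s_6),\ldots,(s_{i-3},s_i)\}$, which matches exactly the same vertex set $\{s_0,\ldots,s_i\}$ minus possibly $s_{i-1}$ --- let me recount: $E_+(i-2)$ has $(i-2)/2$ edges covering $\{s_0,\ldots,s_{i-2}\}\cup\{s_i\}$ but leaving $s_{i-1}$ uncovered, while the removed set had $(i-2)/2 + 1$ edges (one cross pair plus $(i-2)/2 - 1$ stuctured wait). So I need to be careful about cardinalities; in the end the swap should be cardinality-preserving or cardinality-increasing, with the key point that it eliminates at least one $r$-cross pair (namely $(x,y)$, since $(s_0,s_2)$ has $s_2$ in the $a$-plane by \Clm{alt}, so both endpoints share the $r$-th coordinate $a$ and it is not an $r$-cross pair), and the $H$-pairs $(s_1,s_4),(s_3,s_6),\ldots$ connect points in opposite planes but those are already non-$M$ edges.

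The heart of the argument is the weight bookkeeping. I must show $w(M') \ge w(M)$. Using the assumed pattern \Eqn{cond} for the removed edges and the telescoping afforded by \Clm{d} and \Clm{d2}, the total $\pdi$-contribution of the new edges equals that of the old edges: \Clm{d2} gives $\pdi(s_i,s_{i+1}) = \pdi(s_{i-1},s_{i+2})$ type identities that let me re-express $w(s_{j-3},s_j)$ (a new edge) in terms of $f$-differences and $\pdi$-values already appearing among the old edges, while \Clm{w} (the $r$-goodness of $f$, i.e.\ $f(s_{k-1}) - f(s_k) - \pdi(s_{k-1},s_k) \le 0$ and its reverse, for even $k$) supplies the inequalities that make the $H$-pair insertions only help. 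Concretely, summing the $f$-values telescopes: each $f(s_j)$ appears once with $+$ and once with $-$ across the matched pairs, so $\sum_{\text{new}} w - \sum_{\text{old}} w$ reduces to a sum of quantities of the form $-(f(s_{k-1}) - f(s_k) - \pdi(s_{k-1},s_k))$ or its reverse, each $\ge 0$ by \Clm{w}. This forces $w(M') \ge w(M)$, and if any of those slacks is strict we contradict maximality of $w(M)$ directly; if all are tight then $M'$ is another MWmC matching with strictly fewer $r$-cross pairs (we killed $(x,y)$ without creating any new cross pair, using \Clm{alt} to check the new structured pairs lie within a single plane), again a contradiction. Hence $s_i$ must be $M$-matched.

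The main obstacle I anticipate is precisely pinning down which of the two terms in the $\max$ defining $w(s_i,s_{i+1})$ is active for each new edge $(s_{j-3},s_j)\in E_+$, and for the leftover edge $(s_0,s_2)$ --- this is the same subtlety flagged in the extended-abstract sketch (``we can exactly pin this down for the edges on the alternating path via an inductive argument''). The resolution is that \Eqn{cond} is itself maintained inductively: the hypothesis that \Eqn{cond} holds for all even indices $< i$ determines the $f$-sign pattern along the path, and combined with $w(M') \ge w(M)$ one deduces that the \emph{opposite} choice in any $\max$ would either violate maximality or could be swapped away, forcing the stated alternation. I would also need to handle the boundary case $i \equiv 2 \bmod 4$ versus $i \equiv 0 \bmod 4$ separately (they differ in which plane $s_i$ lies in, per \Clm{alt}, hence in the orientation of the $\pdi$ terms), but the two cases are mirror images and the argument is symmetric under reversing the roles of $l_r$ and $u_r$. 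Finally, a small amount of care is needed when $s_t = s_i$ is $\cross(M)$-matched rather than unmatched (so $s_{i+1}$ exists via a cross pair) --- but that case is excluded here since the lemma's conclusion is exactly that $s_i$ \emph{is} $M$-matched, and the contradiction hypothesis ``$s_i$ unmatched'' is the only one we need to rule out.
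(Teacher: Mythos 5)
Your proposal follows the paper's proof essentially verbatim: assume $s_i$ is $M$-unmatched, form $M' := M - E_-(i-2) + E_+(i)$ (you wrote ``$E_+(i-2)$'' in one place but listed the pairs of $E_+(i)$, and the cardinalities $|E_-(i-2)| = |E_+(i)| = i/2$ do balance), use \Eqn{cond} to evaluate $w(E_-(i-2))$ exactly, telescope the $f$- and $\pdi$-terms via \Clm{d2}, and finish with \Clm{w}, \Clm{d}, and linearity to get $w(M') \geq w(M)$ with equal cardinality and one fewer $r$-cross pair, contradicting the choice of $M$. One small inversion in your exposition: the ``which side of the $\max$'' question only needs to be pinned down for the \emph{removed} edges of $E_-(i-2)$ --- and \Eqn{cond} supplies exactly that --- whereas for the \emph{added} edges of $E_+(i)$ one simply lower-bounds $w$ by whichever expression telescopes cleanly, with no disambiguation required.
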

\begin{proof} Assume $i \equiv 2 \mod 4$. (The other case is analogous and omitted.) 
We prove by contradiction, so suppose $s_i$ is not matched in $M$. We set
$M' := M - E_-(i-2) + E_+(i)$. Note that $M'$ is a valid matching, since $s_i$ is not matched.
We compare $w(M')$ and $w(M)$. By \Eqn{cond}, we can express $w(E_-(i-2))$ exactly.
\begin{eqnarray}
w(E_-(i-2)) & = & \sum_{j: \textrm{even, } 0\leq j\leq i-2} w(s_j,s_{j+1}) \nonumber \\
& = & [f(s_0) - f(s_1) - \pdi(s_0,s_1)] + [f(s_3) - f(s_2) -\pdi(s_3,s_2)] + \nonumber \\
                  &    & [f(s_4) - f(s_5) - \pdi(s_4,s_5)]  + [f(s_7) - f(s_6) - \pdi(s_7,s_6)] + \cdots \nonumber \\
                  &     & [f(s_{i-2}) - f(s_{i-1}) - \pdi(s_{i-2},s_{i-1})] \label{eq:w-}
\end{eqnarray}
We lower bound $w(E_+(i))$. Since each individual weight term is a maximum of two expressions,
we can choose either. We set the expression up to match $w(E_-(i-2))$ as best as possible.
\begin{eqnarray}
w(E_+(i))   & \geq  & [f(s_0) - f(s_2) - \pdi(s_0,s_2)] + [f(s_4) - f(s_1) - \pdi(s_4,s_1)] + \nonumber \\
						&				& [f(s_3) - f(s_6) - \pdi(s_3,s_6)] + [f(s_8) - f(s_5) - \pdi(s_8,s_5)] + \nonumber \\
            &    		& [f(s_{i-3}) - f(s_{i}) - \pdi(s_{i-3},s_{i})] \label{eq:w+}
\end{eqnarray}
Note that $w(M') - w(M) = w(E_+(i)) - w(E_-(i-2))$. Observe that any $f$ term that occurs in both \Eqn{w-} and \Eqn{w+}
has the same coefficient. By \Clm{d2}, $\pdi(s_3,s_2) = \pdi(s_4,s_1)$, $\pdi(s_4,s_5) = \pdi(s_3,s_6)$, etc.
$$ w(E_+(i)) - w(E_-(i-2)) \geq f(s_{i-1}) - f(s_i) - \pdi(s_0,s_2) + \pdi(s_0,s_1) $$
The points $s_0$ and $s_1$ lie is different planes, and $(s_1,s_2) \in H$. We can apply the linearity property
to get $\pdi(s_0,s_1) = \pdi(s_0,s_2) + \pdi(s_2,s_1)$. Plugging this in, applying \Clm{w} and \Clm{d} for $i$,
$$ w(E_+(i)) - w(E_-(i-2)) \geq f(s_{i-1}) - f(s_i) + \pdi(s_2,s_1) = -[f(s_i) - f(s_{i-1}) - \pdi(s_i,s_{i-1})] \geq 0 $$
Hence $w(M') \geq w(M)$. Note that $|M'| - |M|$ $= |E_+(i)| - |E_-(i-2)| $ $= i/2 - ((i-2)/2 + 1) = 0$.
Finally, observe that $E_+(i)$ has no $r$-cross pairs, but $E_-(i-2)$ has one (pair $(s_0,s_1)$).
This contradicts the choice of $M$ as a MWmC matching with the least $r$-cross pairs.
\end{proof}
\begin{claim} \label{clm:distinct} If \Eqn{cond} holds for all even indices $< i$,
then $s_0, s_1, \ldots, s_{i+1}$ are all distinct.
\end{claim}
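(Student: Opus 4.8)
The plan is to use two ingredients: the elementary fact that an alternating path in the symmetric difference of two matchings is automatically simple, and the extremality of $M$ (an MWmC matching with the fewest $r$-cross pairs) together with the weight pattern \Eqn{cond}, which, exactly as in the proof of \Lem{match}, lets us compute the weight of a re-pairing of the $s_j$'s.

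First I would note that $s_1, s_2, \ldots, s_t$ are pairwise distinct: they form the component of $H \Delta \;\str(M)$ containing $y = s_1$, and since the only $M$-pair at $y$ is $(x,y) \in \cross(M)$ we have $y \notin V(\str(M))$, so $y$ is an endpoint and this component is a simple path. Next I would rule out $s_0 = s_j$ for $1 \le j \le t$. The vertex $x = s_0$ is $M$-matched only to $y$, and that pair lies in $\cross(M)$, so $x \notin V(\str(M))$; on the other hand every $s_j$ with $2 \le j \le t-1$, and also $s_t$ when $t$ is odd, lies in $V(\str(M))$ (it is joined to a neighbour on the path by a same-plane, hence non-cross, $M$-pair), and $s_0 = x \neq y = s_1$. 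Hence the only surviving coincidence is $s_0 = s_t$ with $t$ even; and a similar analysis of the cross-pair partner $s_{t+1}$ of $s_t$ (which is what $s_{i+1}$ is when $i=t$) shows the only way $s_{t+1}$ can equal an earlier vertex is $s_{t+1} = s_1$, which again forces $s_t = x$. So everything comes down to showing $s_t \neq x$.

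Suppose $s_t = x$ with $t$ even. Then $S$ together with the pair $(s_0,s_1)$ contains the closed alternating walk $x = s_0, s_1, \ldots, s_{t-1}, s_t = x$ on the distinct vertices $s_0, \ldots, s_{t-1}$, whose $M$-pairs are the cross pair $(s_0,s_1)$ and the $\str(M)$-pairs $(s_2,s_3), \ldots, (s_{t-2},s_{t-1})$, and whose $H$-pairs are $(s_1,s_2),(s_3,s_4),\ldots,(s_{t-1},s_0)$. Following \Lem{match}, I would replace the $M$-pairs of this walk by a set $F$ of $t/2$ pairs, each joining two vertices lying on the \emph{same} plane — possible because, reading around the walk, the $a$-plane and $b$-plane vertices are equinumerous and occur in the pattern $a,b,a,a,b,b,\ldots$ dictated by \Clm{alt} — and set $M' := M - \{(s_0,s_1),(s_2,s_3),\ldots,(s_{t-2},s_{t-1})\} + F$. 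Then $M'$ is a valid matching with $|M'| = |M|$ and strictly fewer $r$-cross pairs, since we deleted the cross pair $(s_0,s_1)$ and added only same-plane (non-cross) pairs. Finally, evaluating $w$ on the pairs of $F$ by choosing the appropriate term of the $\max$ in \Eqn{wtdefn}, and using \Eqn{cond} for the removed pairs together with \Clm{w}, \Clm{d}, \Clm{d2} and the linearity and projection properties of $\pdi$ from \Lem{dist-prop}, a bookkeeping computation of the same flavour as the one in \Lem{match} yields $w(F) \ge w(\{(s_0,s_1),(s_2,s_3),\ldots,(s_{t-2},s_{t-1})\})$, hence $w(M') \ge w(M)$. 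This contradicts the choice of $M$, so $s_t \neq x$ and the claim holds.

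The step I expect to be the main obstacle is this last one: pinning down the re-pairing $F$ precisely, in particular the way it ``wraps around'' at the end of the closed walk, and then running the weight computation. The computation itself is the by-now-familiar telescoping argument — the $f$-values of the re-paired pairs cancel against those of the removed pairs, the $\pdi$-terms collapse via linearity and projection, and \Clm{w} supplies a non-negative leftover — so it is routine given \Lem{match}, but it must be carried out with care.
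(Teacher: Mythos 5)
Your reduction of the claim to the case $s_t = x = s_0$ (via the observation that $x,y \notin V(\str(M))$ while all interior path vertices and the $\str(M)$-matched endpoints do lie in $V(\str(M))$, and that $s_{t+1}$ can only collide with $s_1$ if $s_t = x$) is correct and is essentially what the paper does. The gap is in the final step.

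You propose to excise all $t/2$ $M$-pairs of the closed walk $s_0, s_1, \ldots, s_{t-1}, s_t = s_0$ and replace them by a set $F$ of $t/2$ pairs that are \emph{all} same-plane, so that $|M'| = |M|$ and the $r$-cross count strictly drops. That re-pairing does not exist. Since $s_t = s_0 = x$ lies in the $a$-plane, \Clm{alt} forces $t \equiv 2$ or $3 \pmod 4$, and since $t = i$ is even (the claim is invoked only at even $i$), in fact $t \equiv 2 \pmod 4$. The closed walk therefore contains $t/2$ vertices on each of the $a$- and $b$-planes, and $t/2$ is \emph{odd}. Any collection of same-plane pairs covers an even number of vertices from each plane, so no perfect same-plane pairing of these $t$ vertices can exist: at least one pair of $F$ must be cross-plane (hence an $r$-cross pair), killing the "strictly fewer cross pairs" conclusion. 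Equinumerosity of the two planes, which you cite as the reason $F$ exists, is not sufficient; you would need each plane to contain an \emph{even} number of walk vertices, and parity says otherwise.

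The paper avoids this entirely with a local move: it deletes only the two pairs $(s_0,s_1)$ and $(s_{t-2},s_{t-1})$ and inserts the single pair $(s_{t-2},s_1)$. Using linearity through the $H$-pair $(s_{t-1}, s_0)$ to write $\pdi(s_0,s_1) = \pdi(s_0,s_{t-1}) + \pdi(s_{t-1},s_1)$, the $r$-goodness bound (\Clm{w}) to discard the term $f(s_0) - f(s_{t-1}) - \pdi(s_0,s_{t-1}) \le 0$, and the triangle inequality, one gets $w(s_{t-2},s_1) \ge w(s_0,s_1) + w(s_{t-2},s_{t-1})$. This gives a matching of strictly smaller cardinality and at least the same weight, contradicting the MWmC choice of $M$ --- no global re-pairing of the whole cycle, and no parity issue, is needed.
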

\begin{proof} (This is trivial if $i < t$. The non-trivial case if when $S$ ends as $s_i$.)
The points $s_1, \ldots, s_{i}$ are all distinct. If $s_i \neq x$, the claim holds.
So assume $s_i = x = s_0$. By \Clm{alt}, $i \equiv 2 \mod 4$. Replace pairs $A = \{(s_0,s_1), (s_{i-2},s_{i-1})\}$ by $(s_{i-2},s_1)$.
Note that $\pdi(s_0,s_1) = \pdi(s_0,s_{i-1}) + \pdi(s_{i-1},s_1)$.
By \Eqn{cond},
\begin{eqnarray*}
	w(A) & = & [f(s_0) - f(s_1) - \pdi(s_0,s_1)] + [f(s_{i-2}) - f(s_{i-1}) - \pdi(s_{i-2},s_{i-1})] \\
	& = & [f(s_{i-2}) - f(s_1) - \pdi(s_{i-2},s_{i-1}) - \pdi(s_{i-1},s_1)] + [f(s_0) - f(s_{i-1}) - \pdi(s_0,s_{i-1})] \\
	& \leq & [f(s_{i-2}) - f(s_1) - \pdi(s_{i-2},s_1)] \leq w(s_{i-2},s_1)
\end{eqnarray*}
The total number of pairs has decreased, so we complete the contradiction.
\end{proof}
\begin{lemma}\label{lem:cond} Suppose $s_i$ exists. If \Eqn{cond} holds for all even indices $< i$,
then \Eqn{cond} holds for $i$. 
\end{lemma}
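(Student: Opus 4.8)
The plan is to mimic the exchange argument behind \Lem{match}: assume \Eqn{cond} fails at index $i$ and produce a matching $M'$ with $w(M')\ge w(M)$ that is strictly better than $M$ in a secondary sense — fewer $r$-cross pairs, or fewer pairs at equal weight — contradicting the choice of $M$ as an MWmC matching with the fewest $r$-cross pairs. By \Lem{match} (which applies since \Eqn{cond} holds for all even indices $<i$) the vertex $s_i$ is $M$-matched, so $(s_i,s_{i+1})\in M$; the $H$-partner $s_{i+2}$ of $s_{i+1}$ then also lies on $S$ (the endpoint of $S$ has even index, while $s_{i+1}$ does not), and by \Clm{distinct} the points $s_0,\dots,s_{i+1}$ are distinct. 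Since $w(s_i,s_{i+1})$ is the maximum of the two expressions $f(s_i)-f(s_{i+1})-\pdi(s_i,s_{i+1})$ and $f(s_{i+1})-f(s_i)-\pdi(s_{i+1},s_i)$, the one named in \Eqn{cond} is automatically at most $w(s_i,s_{i+1})$; so if \Eqn{cond} fails at $i$, the \emph{other} expression is strictly larger and equals $w(s_i,s_{i+1})$. Fixing $i\equiv 0\pmod 4$ (the case $i\equiv 2\pmod4$ is symmetric after interchanging the two planes), we may therefore assume $w(s_i,s_{i+1})=f(s_{i+1})-f(s_i)-\pdi(s_{i+1},s_i)$.

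Next I would reroute the prefix family $E_-(i)=\{(s_0,s_1),(s_2,s_3),\dots,(s_i,s_{i+1})\}$ — the $\cross(M)$/$\str(M)$ pairs of $S$ up through $s_{i+1}$ — into a shifted family $E_+$, built like $E_+(i)$ in \Lem{match} but carried one step further so that $s_{i+1}$ is covered, absorbing the vertex $s_{i+2}$ as needed, and set $M':=M-E_-(i)+E_+$, a valid matching with $|M'|\le|M|$. To bound its weight I would lower-bound $w(E_+)$ by choosing, for each new pair, whichever of its two defining expressions lines up with $w(E_-(i))$. Using the normalization $w(s_0,s_1)=f(s_0)-f(s_1)-\pdi(s_0,s_1)$ (WLOG), the values of $w(s_j,s_{j+1})$ for even $j<i$ supplied by \Eqn{cond}, and the wrong-branch value of $w(s_i,s_{i+1})$, every shared $f(s_j)$ occurs with the same coefficient on both sides and cancels. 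The residual $\pdi$-terms pair off via the projection property and \Clm{d2}, one linearity step through the $H$-pair $(s_1,s_2)$ turns $\pdi(s_0,s_1)-\pdi(s_0,s_2)$ into $\pdi(s_2,s_1)$, and \Clm{d} together with \Clm{w} (at the $H$-pairs $(s_{i-1},s_i)$ and $(s_{i+1},s_{i+2})$) shows what is left is $\ge 0$; hence $w(M')\ge w(M)$.

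To conclude, note that $E_-(i)$ contains exactly one $r$-cross pair, namely $(s_0,s_1)=(x,y)$, while the pairs of $E_+$ join points of a common plane (by \Clm{alt}) and so are not $r$-cross; thus $M'$ has strictly fewer $r$-cross pairs than $M$ if $|M'|=|M|$, and strictly fewer pairs at equal weight otherwise. Either way this contradicts the extremal choice of $M$, so \Eqn{cond} holds at $i$, which — together with \Lem{match} and \Clm{distinct} — completes the induction establishing \Eqn{cond} at every even index where $s_i$ exists.

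The hard part will be getting the reroute $E_+$ exactly right so that the \emph{wrong-branch} value of $w(s_i,s_{i+1})$ is precisely what the final nonnegativity estimate absorbs: in \Lem{match} that estimate only had to reach $0$, whereas here it must also swallow one extra pair's weight, so the telescoping must be set up with one more matched step — and this is where the vertex $s_{i+2}$, which (unlike the free endpoint used in \Lem{match}) may itself carry an $M$-pair, needs delicate handling. A secondary but unavoidable chore is the $\bmod 4$ bookkeeping of \Clm{alt}, which dictates which of the two weight expressions is active for each pair and which projection/linearity identity applies where, and which is exactly the reason \Eqn{cond} has its alternating two-case form.
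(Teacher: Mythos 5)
Your high-level plan (an exchange argument using extremality of $M$) matches the paper's, but the execution has concrete gaps and misses the key structural feature of the paper's construction.

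\textbf{The vertex $s_{i+2}$ need not exist.} You assert "the $H$-partner $s_{i+2}$ of $s_{i+1}$ then also lies on $S$." This fails when $s_i$ is itself the $\cross(M)$-matched endpoint $s_t$ of the alternating path: then $s_{i+1}=s_{t+1}$ is the artificial extension and $s_{i+2}$ is simply undefined. Note that \Lem{str} (which would rule this out) is proved \emph{after} \Lem{cond} and depends on it, so you cannot invoke it here. The paper's exchange touches only $s_0,\dots,s_{i+1}$ and thus avoids the issue.

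\textbf{The construction and the contradiction are different.} The paper sets $M' := M - E_-(i) + \bigl(E_+(i-2) \cup (s_{i-3},s_{i+1})\bigr)$, leaving $s_{i-1}$ and $s_i$ \emph{unmatched}. Crucially, $|E_-(i)| = i/2+1$ while $|E_+(i-2)\cup(s_{i-3},s_{i+1})| = i/2$, so $|M'| = |M|-1$. Together with $w(M')\ge w(M)$ this already contradicts the \emph{minimum-cardinality} part of MWmC; no counting of $r$-cross pairs is needed. Your proposal to carry $E_+(i)$ "one step further so that $s_{i+1}$ is covered, absorbing $s_{i+2}$" describes a family of the \emph{same} cardinality (this is essentially the $E_+(i)\cup(s_{i-1},s_{i+1})$ used in \Lem{str}), which forces you to fall back on the cross-pair count and muddles the roles of \Lem{cond} and \Lem{str}.

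\textbf{The telescope is not a full cancellation of $f$-terms.} Because $s_{i-1},s_i$ are left unmatched in the paper's $M'$, the terms $f(s_{i-1})-f(s_i)$ survive. This is not a nuisance but the point: after the $\pdi$-terms collapse via linearity, projection (\Clm{d}, \Clm{d2}), and triangle inequality, the whole expression reduces to $-[f(s_i)-f(s_{i-1})-\pdi(s_i,s_{i-1})]$, which is $\ge 0$ by \Clm{w}. Your assertion that "every shared $f(s_j)$ cancels" and your plan to apply \Clm{w} at $(s_{i+1},s_{i+2})$ would leave you with no vehicle for the final non-negativity; the argument would not close.

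In short: you need the exchange to \emph{drop} a pair (contradicting minimum cardinality), not to re-match everything, and the residual $f(s_{i-1})-f(s_i)$ created by leaving $s_{i-1},s_i$ free is exactly what \Clm{w} is there to absorb.
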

\begin{proof} We prove by contradiction, so \Eqn{cond} is false for $i$.
(Again, assume $i \equiv 2 \mod 4$. The other case is omitted.)
By \Clm{distinct}, $E_+(i-2) \cup (s_{i-3},s_{i+1})$ is a valid set of matched pairs.
Let $M' := M- E_-(i) + (E_+(i-2) \cup (s_{i-3},s_{i+1}))$. 
Observe that $|M'| = |M| - 1$ and the vertices $s_{i-1}$ and $s_i$ are left unmatched in $M'$.
By \Eqn{cond} for even indices $< i$ and the opposite of \Eqn{cond} for $i$,
\begin{eqnarray}
w(E_-(i)) & = & [f(s_0) - f(s_1) - \pdi(s_0,s_1)] + [f(s_3) - f(s_2) -\pdi(s_3,s_2)] + \nonumber \\
                  &    & [f(s_4) - f(s_5) - \pdi(s_4,s_5)]  + [f(s_7) - f(s_6) - \pdi(s_7,s_6)] + \cdots \nonumber \\
                  &     & [f(s_{i-2}) - f(s_{i-1}) - \pdi(s_{i-2},s_{i-1})] + [f(s_{i}) - f(s_{i+1}) - \pdi(s_{i},s_{i+1})] \label{eq:w-2}
\end{eqnarray}
We stress that the last weight is ``switched".
We lower bound $w(E_+(i-2) \cup (s_{i-3},s_{i+1}))$ appropriately. 
\begin{eqnarray}
w(E_+(i-2) \cup (s_{i-3},s_{i+1}))   & \geq  & [f(s_0) - f(s_2) - \pdi(s_0,s_2)] + [f(s_4) - f(s_1) - \pdi(s_4,s_1)] + \nonumber \\
						&				& [f(s_3) - f(s_6) - \pdi(s_3,s_6)] + [f(s_8) - f(s_5) - \pdi(s_8,s_5)] + \cdots \nonumber \\
            &    		& [f(s_{i-7}) - f(s_{i-4}) - \pdi(s_{i-7},s_{i-4})] + [f(s_{i-2}) - f(s_{i-5}) - \pdi(s_{i-2},s_{i-5})] + \nonumber \\
            &				&	[f(s_{i-3}) - f(s_{i+1}) - \pdi(s_{i-3},s_{i+1})] \label{eq:w+2}
\end{eqnarray}
As before, we subtract \Eqn{w-2} from \Eqn{w+2}. All function terms from \Eqn{w+2} cancel out. By \Clm{d2}, all $\pdi$-terms
except the first and last cancel out.
$$ w(M') - w(M) \geq f(s_{i-1}) - f(s_i) - \pdi(s_0,s_2) - \pdi(s_{i-3},s_{i+1}) + \pdi(s_0,s_1) + \pdi(s_{i-2},s_{i-1}) + \pdi(s_i,s_{i+1}) $$
By linearity, $\pdi(s_0,s_1) = \pdi(s_0,s_2) + \pdi(s_2,s_1)$. Furthermore, by \Clm{d}, $\pdi(s_2,s_1) = \pdi(s_{i},s_{i-1})$.
By \Clm{d2}, $\pdi(s_{i-2},s_{i-1}) = \pdi(s_{i-3},s_i)$. By triangle inequality,
$-\pdi(s_{i-3},s_{i+1}) + \pdi(s_{i-3},s_i) + \pdi(s_i,s_{i+1}) \geq 0$. Putting it all together and applying \Clm{w},
$$w(M') - w(M) \geq -[f(s_i) - f(s_{i-1}) - \pdi(s_i,s_{i-1})] \geq 0$$
So $M'$ has at least the same weight but lower cardinality than $M$. Contradiction.
\end{proof}
\begin{lemma}\label{lem:str} Suppose $s_i$ exists. If \Eqn{cond} holds for all even indices $< i$,
then $s_i$ is matched in $\str(M)$. 
\end{lemma}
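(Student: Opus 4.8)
Under the standing hypothesis that \Eqn{cond} holds for all even indices $< i$, two facts are already available: by \Lem{match}, $s_i$ is matched in $M$ — write $(s_i,s_{i+1})$ for its $M$-pair — and by \Lem{cond}, \Eqn{cond} holds for $i$ as well, so the exact value of $w(s_i,s_{i+1})$ is pinned down. Hence proving the lemma amounts to ruling out that this $M$-pair is an $r$-cross pair. The plan is to suppose, for contradiction, that $(s_i,s_{i+1})\in\cross(M)$, and to build a matching $M'$ in $\VG(f,\cP)$ with $|M'|=|M|$, $w(M')\ge w(M)$, and strictly fewer $r$-cross pairs than $M$; this contradicts the choice of $M$ as a maximum-weight minimum-cardinality matching with the fewest $r$-cross pairs. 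I will assume $i\equiv 2\mod 4$ throughout (the case $i\equiv 0\mod 4$ is symmetric), so that by \Clm{alt} $s_i$ lies in the $a$-plane, by \Eqn{cond} $w(s_i,s_{i+1})=f(s_{i+1})-f(s_i)-\pdi(s_{i+1},s_i)>0$, and $s_{i-1}$ — the $H$-partner of $s_i$ — is the projection of $s_i$ onto the $b$-plane, agreeing with $s_i$ in every coordinate except $r$.

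The matching I would take is
\[ M' \;:=\; M - E_-(i-2) - (s_i,s_{i+1}) + E_+(i) + (s_{i-1},s_{i+1}). \]
Two of the three bookkeeping facts are routine. For cardinality: $s_0,\dots,s_{i+1}$ are distinct by \Clm{distinct}, $|E_-(i-2)|=|E_+(i)|=i/2$, and $E_+(i)$ uses precisely the vertices $s_0,\dots,s_{i-2},s_i$; so deleting $E_-(i-2)$ and $(s_i,s_{i+1})$ frees $s_{i-1}$ and $s_{i+1}$, the set $E_+(i)$ re-uses only $s_i$, the new pair $(s_{i-1},s_{i+1})$ is legitimate, and $|M'|=|M|-i/2-1+i/2+1=|M|$. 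For the $r$-cross count: among the deleted pairs, $(s_0,s_1)=(x,y)$ and $(s_i,s_{i+1})$ are $r$-cross while the remaining pairs of $E_-(i-2)$ lie in $\str(M)$; among the added pairs, each pair of $E_+(i)$ has both endpoints in one plane (by \Clm{alt}) and hence is non-cross, while $(s_{i-1},s_{i+1})$ contributes at most one $r$-cross pair. So $M'$ deletes two $r$-cross pairs and adds at most one — a strict decrease.

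What remains is the substantive part: certifying that $(s_{i-1},s_{i+1})$ is a genuine violation and that $w(M')\ge w(M)$. Both follow from the single inequality
\[ w(s_{i-1},s_{i+1}) \;\ge\; w(s_i,s_{i+1}) \;>\; 0 . \]
I would prove it by bounding $w(s_{i-1},s_{i+1})\ge f(s_{i+1})-f(s_{i-1})-\pdi(s_{i+1},s_{i-1})$ and then decomposing $\pdi(s_{i+1},s_i)$ by the linearity property of $\pdi$ through the intermediate point $s_{i-1}$ — legitimate because $s_{i-1}$ agrees with $s_i$ outside coordinate $r$, so for every $r'\ne r$ the three points $s_{i+1},s_{i-1},s_i$ are (trivially) monotone in coordinate $r'$ — giving $\pdi(s_{i+1},s_i)=\pdi(s_{i+1},s_{i-1})+\pdi(s_{i-1},s_i)$. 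Substituting turns the right-hand side into $w(s_i,s_{i+1}) - \bigl[f(s_{i-1})-f(s_i)-\pdi(s_{i-1},s_i)\bigr]$, and the bracket is $\le 0$ by \Clm{w}. Finally, reusing the bound $w(E_+(i))-w(E_-(i-2)) \ge -\bigl[f(s_i)-f(s_{i-1})-\pdi(s_i,s_{i-1})\bigr] \ge 0$ established inside the proof of \Lem{match}, I conclude $w(M') = w(M) + \bigl(w(E_+(i))-w(E_-(i-2))\bigr) + \bigl(w(s_{i-1},s_{i+1})-w(s_i,s_{i+1})\bigr) \ge w(M)$, which closes the contradiction.

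The step I expect to be the obstacle is exactly the linearity decomposition of $\pdi(s_{i+1},s_i)$ through $s_{i-1}$: in coordinate $r$ it needs $b=(s_{i-1})_r$ to lie between $(s_{i+1})_r$ and $a=(s_i)_r$, which can fail when $(s_{i+1})_r$ is strictly between $a$ and $b$. In that case the fix is to replace $s_{i-1}$ in the construction by $s_i^\star$, the projection of $s_i$ onto the hyperplane $\{x_r=(s_{i+1})_r\}$: now the triple $s_{i+1},s_i^\star,s_i$ reads $(s_{i+1})_r,(s_{i+1})_r,a$ in coordinate $r$, so linearity applies, the same computation (with \Clm{w} replaced by $r$-goodness of the one-$r$-line pair $(s_i^\star,s_i)$) gives $w(s_i^\star,s_{i+1})\ge w(s_i,s_{i+1})$, and in fact $(s_i^\star,s_{i+1})$ is not $r$-cross, which only helps the count. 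The price is that one must check $s_i^\star$ is free in $M-E_-(i-2)-(s_i,s_{i+1})+E_+(i)$ and, if not, account for its $M$-pair. Cleanly organizing this case split, while verifying in each branch that the net number of $r$-cross pairs strictly drops, is the technical heart; the rest is bookkeeping against \Clm{alt}, \Clm{w}, \Clm{distinct}, \Lem{match}, and \Lem{cond}.
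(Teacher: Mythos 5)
Your overall strategy is the paper's: the modified matching $M' = M - E_-(i) + \bigl(E_+(i) \cup (s_{i-1},s_{i+1})\bigr)$ is the same (since $E_-(i)=E_-(i-2)\cup\{(s_i,s_{i+1})\}$), the cardinality and cross-pair bookkeeping is right, and decomposing
$w(M')-w(M)=\bigl(w(E_+(i))-w(E_-(i-2))\bigr)+\bigl(w(s_{i-1},s_{i+1})-w(s_i,s_{i+1})\bigr)$
together with the bound $w(E_+(i))-w(E_-(i-2))\ge -\bigl[f(s_i)-f(s_{i-1})-\pdi(s_i,s_{i-1})\bigr]$ from \Lem{match} is exactly the right move.

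However, you then insist on showing the standalone inequality $w(s_{i-1},s_{i+1})\ge w(s_i,s_{i+1})$ and attempt it via a linearity decomposition $\pdi(s_{i+1},s_i)=\pdi(s_{i+1},s_{i-1})+\pdi(s_{i-1},s_i)$. You correctly notice this fails whenever $(s_{i+1})_r$ does not lie beyond $b$ on the same side as $a$, and the proposed repair via the auxiliary point $s_i^\star$ opens a genuine can of worms (is $s_i^\star$ free in $M'$, does its $M$-partner interfere, does the cross-pair count still strictly drop) that you do not close. This detour is unnecessary: the standalone inequality is \emph{stronger than what you need}, and it is what forces the delicate linearity hypothesis. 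The paper instead uses the triangle inequality $\pdi(s_{i+1},s_{i-1})\le\pdi(s_{i+1},s_i)+\pdi(s_i,s_{i-1})$, which holds unconditionally, giving
$w(s_{i-1},s_{i+1})-w(s_i,s_{i+1})\ge f(s_i)-f(s_{i-1})-\pdi(s_i,s_{i-1}),$
which by \Clm{w} is $\le 0$ — but that deficit is \emph{exactly} the negative of the surplus in your first summand. Adding the two bounds yields $w(M')-w(M)\ge 0$ with no case split at all (and \Clm{d} turns the final algebraic identity into the equality $\pdi(s_2,s_1)=\pdi(s_i,s_{i-1})$, as the paper records). So the gap in your proposal is not in the strategy but in aiming for a per-edge inequality that is both unnecessary and false in general; replacing linearity through $s_{i-1}$ by the triangle inequality and letting the two summands cancel gives the clean proof.
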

\begin{proof} Suppose not. (Again, assume $i \equiv 2 \mod 4$.) By \Lem{match}, $s_i$ is matched in $M$, so $(s_i,s_{i+1}) \in \cross(M)$.
We set $M' = M - E_-(i) + (E_+(i) \cup (s_{i-1},s_{i+1}))$. By \Clm{distinct}, $M'$ is a valid matching.
We have $|M'| = |M|$. $M$ has two $r$-cross pairs $(s_0,s_1)$ and $(s_i,s_{i+1})$, but $M'$ has at
most one $(s_{i-1},s_{i+1})$. It suffices to show that $w(M') \geq w(M)$ to complete the contradiction.
By \Lem{cond} and \Eqn{cond},
\begin{eqnarray}
w(E_-(i)) & = & [f(s_0) - f(s_1) - \pdi(s_0,s_1)] + [f(s_3) - f(s_2) -\pdi(s_3,s_2)] + \nonumber \\
                  &    & [f(s_4) - f(s_5) - \pdi(s_4,s_5)]  + [f(s_7) - f(s_6) - \pdi(s_7,s_6)] + \cdots \nonumber \\
                  &     & [f(s_{i-2}) - f(s_{i-1}) - \pdi(s_{i-2},s_{i-1})] + [f(s_{i+1}) - f(s_{i}) - \pdi(s_{i+1},s_{i})] \nonumber
\end{eqnarray}
\begin{eqnarray}
w(E_+(i) \cup (s_{i-1},s_{i+1}))   & \geq  & [f(s_0) - f(s_2) - \pdi(s_0,s_2)] + [f(s_4) - f(s_1) - \pdi(s_4,s_1)] + \nonumber \\
						&				& [f(s_3) - f(s_6) - \pdi(s_3,s_6)] + [f(s_8) - f(s_5) - \pdi(s_8,s_5)] + \cdots \nonumber \\
            &    		& [f(s_{i-3}) - f(s_{i}) - \pdi(s_{i-3},s_{i})] + [f(s_{i+1}) - f(s_{i-1}) - \pdi(s_{i+1},s_{i-1})] \nonumber 
\end{eqnarray}
All function terms and all but the first and last $\pdi$-terms cancel out.
The second inequality below holds by linearity and triangle inequality. The last equality is an application of
\Clm{d}.
\begin{eqnarray*}
w(M') - w(M) & \geq & \pdi(s_0,s_1) - \pdi(s_0,s_2) + \pdi(s_{i+1},s_i) - \pdi(s_{i+1},s_{i-1}) \\
& \geq & \pdi(s_2,s_1) - \pdi(s_i,s_{i-1}) = 0
\end{eqnarray*}
\end{proof}
Finally, we prove \Thm{noviol}.
\begin{proof} We started with a MWmC matching $M$ with the minimum number of $r$-cross pairs.
If there exists at least one such cross pair $(x,y)$, we can define the alternating path
sequence $S$. Wlog, we assumed \Eqn{cond} holds for $i=0$. Applications of \Lem{cond} and \Lem{str}
imply that $S$ can never terminate. Contradiction.
\end{proof}
}
\ignore{
\begin{proof} We prove by contradiction. By \Lem{match}, $s_i$ must be matched in $M$,
so we assume that $s_i$ is matched in $\cross(M)$. 
\end{proof}
Just as in \Clm{calc}, we can prove the following claim which following \Eqn{Hgood} gives $w(M') > w(M)$ which is a contradiction, thereby proving the theorem.
\begin{claim}\label{clm:case1}
$w(M') - w(M) \geq -(f(s_{t-1}) - f(s_t) - \pdi(s_{t-1},s_t)) $.
\end{claim}
\begin{proof}
From \Lem{weightstruct} we get 
We also have 
\begin{eqnarray}
w(E_+(t))   & \geq  & [f(s_0) - f(s_2) - \pdi(s_0,s_2)] + [f(s_4) - f(s_1) - \pdi(s_4,s_1)] + \cdots \nonumber \\
                   &    & [f(s_t) - f(s_{t-3}) - \pdi(s_t,s_{t-3})]
\end{eqnarray}
All but two function terms cancel out  and all but two distance terms cancel out by \Eqn{projp}.
This gives $w(M') - w(M) = f(s_t) - f(s_{t-1}) +\pdi(s_0,s_1)  - \pdi(s_0,s_2)$. Using  \Eqn{linapp}, the last two terms evaluates to $\pdi(s_2,s_1)$, and by \Eqn{proj-hor}, we get 
$\pdi(s_2,s_1) = \pdi(s_{t-1},s_t)$ for $t\equiv 0\mod 4$.
This completes the proof of \Clm{case1}.
\begin{proof} Suppose not, and let $\ist$ be the smallest index where one of the above is violated. Note that $\ist \geq 2$.
Assume $i^*\equiv 0\mod 4$ (the other case is analogous and omitted), and therefore
\begin{equation}
\label{eq:contra}
w(s_\ist,s_{\ist+1}) = f(s_{\ist+1}) - f(s_\ist) - \pdi(s_{\ist+1},s_\ist)
\end{equation}
The following claim, along with \Eqn{Hgood}, shows that $M'$ has at least as large a weight as $M$. Since the cardinality of $M'$ is strictly smaller, this is a contradiction to the fact that $M$ was an MWmC matching. 
\begin{claim}\label{clm:calc}
$w(M') - w(M) \geq -\left(f(s_{\ist-1}) - f(s_\ist) - \pdi(s_{\ist-1},s_\ist)\right)$.
\end{claim}
\noindent
This ends the proof of \Lem{weightstruct}.
\end{proof}
\begin{proof}
\def\i{{i^*}}
\begin{proof}[{\em \bf Proof of \Clm{calc}:}]
Let us evaluate $w(E_-(\i))$.
\begin{eqnarray}
w(E_-(\i)) & = & \sum_{i: \textrm{even, } 0\leq i\leq \i} w(s_i,s_{i+1}) \nonumber \\
&  =  & w(s_\i,s_{\i+1}) + \!\!\!\!\sum_{\stackrel{i\equiv 0\!\!\!\!\!\mod4}{ i<\i}}\!\!\!\! [f(s_i) \!-\! f(s_{i+1}) \!- \!\pdi(s_i,s_{i+1})] 
													+ \!\!\!\!\sum_{\stackrel{i\equiv 2\!\!\!\!\!\mod4}{ i<\i}}\!\!\!\! [f(s_{i+1})\!-\! f(s_{i}) \!-\! \pdi(s_{i+1},s_{i})]\nonumber \\
& = & [f(s_{\i+1}) - f(s_\i) - \pdi(s_{\i+1},s_\i)] + \nonumber \\
&   &  [f(s_0) - f(s_1) - \pdi(s_0,s_1)] + [f(s_3) - f(s_2) - \pdi(s_3,s_2)] + \nonumber \\
&   &  [f(s_4) - f(s_5) - \pdi(s_4,s_5)] + [f(s_7) + f(s_6) - \pdi(s_7,s_6)] + \cdots \nonumber \\
&   &  [f(s_{\i-4}) - f(s_{\i-3}) - \pdi(s_{\i-4},s_{\i - 3})] + [f(s_{\i-1}) - f(s_{\i-2}) - \pdi(s_{\i-1}, s_{\i-2})]  		\label{eq:minus}										
\end{eqnarray}
The first equality is the defintion of $E_-(\i)$, the second equality is implied by \Eqn{contra} for the first term and the minimality of $\i$ for  the remaining ones.
Now we {\em upper bound} $w(E_+(\i-2))$. 
\begin{eqnarray}
w(E_+(\i-2)) & = & w(s_0,s_2) + \sum_{\stackrel{i: \textrm{ even}}{4\leq i\leq \i-2}} w(s_{i-3},s_i) \nonumber \\
& \geq & [f(s_0) - f(s_2)  - \pdi(s_0,s_2)] + \nonumber \\
& &         \sum_{\stackrel{i\equiv 0\!\!\!\!\mod 4}{4\leq i\leq \i-2}}  [f(s_i) - f(s_{i-3}) - \pdi(s_i, s_{i-3})]   +  \sum_{\stackrel{i\equiv 2\!\!\!\!\mod 4}{4\leq i\leq \i-2}}  [f(s_{i-3}) - f(s_i) - \pdi(s_{i-3}, s_i)] \nonumber\\
& = &   [f(s_0) - f(s_2)  - \pdi(s_0,s_2)] + \nonumber \\
&   &    [f(s_4) - f(s_1) - \pdi(s_4,s_1)] + [f(s_3) - f(s_6) - \pdi(s_3,s_6)] + \nonumber\\
&   &    [f(s_8) - f(s_5) - \pdi(s_8,s_5)] + [f(s_7) - f(s_{10}) - \pdi(s_7,s_{10})] + \cdots \nonumber\\
&   &    [f(s_{\i-4}) - f(s_{\i-7}) - \pdi(s_{\i-4}, s_{\i-7})] + [f(s_{\i-5}) - f(s_{\i-2}) - \pdi(s_{\i-5},s_{\i-2})]\nonumber\\\label{eq:plus}
\end{eqnarray}
The first equality uses the definition of $E_+(\i-2)$ and the inequality uses that fact that $w(x,y)$ is the maximum of $(f(x)-f(y)-\pdi(x,y))$ and $(f(y) -f(x)-\pdi(y,x))$.
Similarly we get
\begin{equation}\label{eq:extra}
w(s_{\i-3},s_{\i+1}) \geq f(s_{\i+1}) - f(s_{\i-3}) - \pdi(s_{\i+1},s_{\i-3})
\end{equation}
This gives us,
From \Eqn{minus},\Eqn{plus}, and \Eqn{extra}, and from the projection property \Eqn{projp}, we get
\begin{eqnarray}
w(M') - w(M) & \geq  &   f(s_{\i}) - f(s_{\i-1}) + \nonumber \\
                         &            &  \pdi(s_{\i+1},s_\i)  + \pdi(s_{\i-1},s_{\i-2}) + \pdi(s_0,s_1) - \pdi(s_0,s_2) - \pdi(s_{\i+1},s_{\i-3})\nonumber \\ \label{eq:almostthere}
\end{eqnarray}
To see this, note that all the function values cancel out except $f(s_\i)$ and $f(s_{\i-1})$. From the projection property \Eqn{projp}, all but the first two and the last distance terms in \Eqn{minus}, the first distance term in \Eqn{plus}, and the distance term in \Eqn{extra} survive. (For instance, $\pdi(s_3,s_2) = \pdi(s_4,s_1)$, $\pdi(s_4,s_5) = \pdi(s_3,s_6)$ and so on till $\pdi(s_{\i-4},s_{\i-3}) = \pdi(s_{\i-5},s_{\i-2})$).
Linearity of the distance $d$ implies $\pdi(s_0,s_1) = \pdi(s_0,s_2) + \pdi(s_2,s_1)$. This is because in the $r$th coordinate $s_2$ agrees with  $s_0$ and in all others it agrees with $s_1$.
Therefore, 
\begin{equation}
\label{eq:linapp}
\pdi(s_0,s_1) - \pdi(s_0,s_2) = \pdi(s_2,s_1) = \pdi(s_{\i-1},s_\i).
\end{equation}
The last equality follows from\Eqn{proj-hor} and since $\i\equiv 0\mod 4$. From triangle inequality of $d$, we get
$\pdi(s_{\i+1},s_{\i-3}) \leq \pdi(s_{\i+1},s_\i) + \pdi(s_\i,s_{\i-3})$. From the projection property \Eqn{projp}, we get $\pdi(s_\i,s_{\i-3}) = \pdi(s_{\i-1},s_{\i-2})$. 
This implies
\begin{equation}
\label{eq:triapp}
\pdi(s_{\i+1},s_\i)  + \pdi(s_{\i-1},s_{\i-2}) -  \pdi(s_{\i+1},s_{\i-3}) \geq 0
\end{equation}
Substituting \Eqn{linapp},\Eqn{triapp} in \Eqn{almostthere}, we get $w(M') - w(M) \geq f(s_\i) - f(s_{\i-1}) + \pdi(s_{\i-1},s_\i)$. This proves \Clm{calc}.
\end{proof}
With \Lem{weightstruct}, we are armed to prove the theorem. We assume $t\equiv 0\mod 4$; the other case is equivalent and is omitted.
There are two cases. Case 1: suppose $s_t$ is $M$-unmatched. Then we consider 
\end{proof}
Case 2: $s_t$ is $\cross(M)$-matched to $s_{t+1}$. We now construct a matching $M'$ with $w(M') \geq w(M)$ but with fewer cross pairs. 
Define $M' = M - E_-(t) + (E_+(t) \cup (s_{t-1},s_{t+1}))$. Note that $E_-(t)$ has two cross pairs -- $(s_0,s_1)$ and $(s_t,s_{t+1})$, while $(E_+(t) \cup (s_{t-1},s_{t+1}))$ has possibly one -- $(s_{t-1},s_{t+1})$. 
We will use
\[w(s_{t-1},s_{t+1}) \geq f(s_{t-1}) - f(s_{t+1}) - \pdi(s_{t-1},s_{t+1})\]
A similar calculation as above shows that now when we focus on $w(M') -w(M)$ all the function values cancel, and what remains is 
\begin{eqnarray}
w(M') - w(M) & \geq &  \pdi(s_0,s_1) - \pdi(s_0,s_2) + \pdi(s_t,s_{t+1})- \pdi(s_{t-1},s_{t+1}) \nonumber\\
                      & \geq &  \pdi(s_2,s_1) - \pdi(s_{t-1},s_t) \nonumber \\
                      & = & 0.
\end{eqnarray}
The second inequality uses \Eqn{linapp} and triangle inequality of $d$. The last equality follows from \Eqn{Hgood}. Thus, $M'$ is an MWmC with strictly fewer cross pairs which is a contradiction. Therefore, there cannot be any $r$-cross pairs. This completes the proof of \Thm{noviol}.
\end{proof}
}
\section{Search Trees and Bounded Derivative Property Testing.}\label{sec:ub}
As a result of dimension reduction, we can focus on designing testers for the line $[n]$. 
Our analysis is simple, but highlights the connection between bounded-derivative property testing and optimal search trees. 
\subsection{Testers for the Line $[n]$.}\label{sec:line-ub}
Let $T$ be {\em any} binary search tree (BST) with respect to the totally ordered domain $[n]$. 
Every node of $T$ is labeled with a unique entry in $[n]$, and the left (resp. right) child, if it exists, has a smaller (resp. larger) entry.
The {\em depth} of a node $v$ in the tree $T$, denoted as $\depth_T(v)$, is the number of {\em edges} on its path to the root. So the root has depth $0$.
Given a distribution $\cD$ on $[n]$, the expected depth of $T$ w.r.t. $\cD$ is denoted as $\Delta(T;\cD) = \Exp_{v\sim \cD}[\depth_T(v)]$. 
The depth of the optimal BST w.r.t. $\cD$ is denoted by $\Delta^*(\cD)$.
It has long been observed that the transitivity of violations is the key property required
for monotonicity testing on $[n]$~\cite{BRW05,EKK+00,ACCL07,JR11}. 
We distill this argument down
to a key insight:
Given {\em any} BST $T$, there exists the following \submit{one-sided} tester $\textrm{BST}(T)$ for $\cP$ on the line.
\begin{shaded}
{\bf BST Tester ($T$)}
\begin{asparaenum}
\itemsep0em 
\item Sample $v \sim \cD$.
\item If $v$ is the root of $T$, do nothing.
\item Else, query $f(u)$ for all vertices lying on the path from $v$ to root (including the root and $v$).
\item Reject if any pair of these vertices form a violation to $\cP$.
\end{asparaenum}
\end{shaded}
\full{It is clear that the tester never rejects a function satisfying $\cP$. }
(To connect with previous work, observe that the list of ancestor-descendant pairs forms a 
2-Transitive Closure spanner~\cite{BGJRW09}.)
\begin{lemma}\label{lem:tree-tester}
For any bounded derivative property $\cP$,  $\Pr[\textrm{BST tester rejects}]\geq \dist_\cD(f,\cP)$.
\end{lemma}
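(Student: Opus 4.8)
The plan is to exhibit, for each function $f$ restricted to the line, an explicit set $W \subseteq [n]$ of "bad" vertices such that the tester is guaranteed to reject whenever it samples a vertex in $W$, and such that $\mu_\cD(W) \geq \dist_\cD(f,\cP)$. Since the tester rejects with probability at least $\Pr_{v\sim\cD}[v \in W]$, this immediately gives the lemma. The natural candidate for $W$ is the complement of the largest subset of $[n]$ on which $f$ already satisfies $\cP$ "internally" — but we need that surviving set to be consistent in the strong sense that it can be realized along the BST paths.

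First I would recall from \Lem{characterization} that $\dist_\cD(f,\cP) = 1 - \mu_\cD(Y)$ where $Y$ is the maximum-$\mu_\cD$-weight independent set in the violation graph $\VG(f,\cP)$ on $[n]$; equivalently, $Y$ is a maximum-weight set of vertices on which $f$ is \emph{$\cP$-consistent}, i.e.\ $f(x) - f(y) \le \pdi(x,y)$ for all $x,y \in Y$ (here we use that on the line, by the triangle inequality of $\pdi$, a pairwise-consistent set has no violations at all — this is exactly the transitivity observation cited before the lemma). Set $W := [n] \setminus Y$, so $\mu_\cD(W) = \dist_\cD(f,\cP)$. It then suffices to show: if the tester samples $v \in W$, it rejects. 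The key step is the following claim: if the tester samples $v$ and does \emph{not} reject, then the root-to-$v$ path $P_v$ in $T$ is a $\cP$-consistent set, and moreover $v$ can be added to $Y$ — more precisely, $Y \cup \{v\}$ (or a suitable swap) is still $\cP$-consistent, contradicting maximality of $Y$ unless $v \in Y$.

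The crucial structural point — and the main obstacle — is the BST property itself. Because $T$ is a binary search tree over the ordered domain $[n]$, the vertices on the path $P_v$ from the root to $v$ "sandwich" $v$: for any $u \in Y$, the ancestors of $v$ include the predecessor and successor of $u$ within... no — more carefully, the right statement is that for any target vertex $w$, the search path for $w$ consists of vertices that alternately bound $w$ from the left and from the right, converging to $w$. So if $v \notin Y$ but the tester does not reject when sampling $v$ (meaning all pairs on $P_v$ are consistent), I want to argue $Y$ was not maximal. Take any $y \in Y$; the search paths $P_v$ and $P_y$ share a common prefix and then diverge, and on the divergence the BST order forces that some vertex $z$ on $P_v \cap P_y$ lies between $v$ and $y$ in the order, with $z$ consistent with both. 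Using linearity/transitivity of $\pdi$ along the order, consistency of the pair $(v,z)$ and membership $z \in$ (path consistent set) together with $z$'s relation to $Y$ propagates consistency from $v$ to all of $Y$. Thus $Y \cup \{v\}$ is $\cP$-consistent of strictly larger $\mu_\cD$-weight, contradicting maximality. Hence $v \in Y$, i.e.\ the tester rejects on every $v \in W$.

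I would organize the write-up as: (1) fix the maximum-weight consistent set $Y$ and define $W$; (2) state and prove the "sandwich" property of BST search paths; (3) prove the propagation claim — given $v$ whose path is internally consistent and a vertex $z$ on that path lying between $v$ and $Y$ in order and consistent with $Y$, deduce $v$ is consistent with $Y$, using the triangle inequality of $\pdi$ (\Lem{dist-prop}); (4) conclude by maximality that any non-rejecting sample lies in $Y$, so $\Pr[\text{reject}] \ge \mu_\cD(W) = \dist_\cD(f,\cP)$. I expect step (3), correctly marrying the combinatorics of BST search order with the (asymmetric, possibly negative) metric $\pdi$, to be where the real care is needed; the rest is bookkeeping. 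The fact that the \emph{same} tester (same query set) works regardless of $\B$ follows since $W$ and the argument only invoke $\pdi$ through its triangle inequality, which every bounding family's $\pdi$ satisfies.
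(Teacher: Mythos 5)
There is a genuine gap, and it is exactly where you predicted: the ``propagation claim'' in step (3) is false, and no amount of care will fix it in the form you have stated it. You want to conclude that if the tester does not reject on $v$, then $v$ is consistent with every $y\in Y$. The mechanism is supposed to be: pick $z$ on the common prefix of $P_v$ and $P_y$ with $z$ between $v$ and $y$, and transfer consistency through $z$. But the tester has only verified consistency along $P_v$, so you know $z$ is consistent with $v$; you have no information that $z$ is consistent with $y$. In fact $z$ may form a violation with $y$, and $Y$ being a maximum independent set does not rule this out, since $z$ need not lie in $Y$.

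A concrete counterexample to ``the tester rejects on every $v\in W$'': take $n=5$, monotonicity, a BST with root $2$, left child $1$, right chain $3\to 4\to 5$, and $f=(1,2,100,3,4)$. The only violations are $(3,4)$ and $(3,5)$; the max-weight independent set (uniform $\cD$) is $Y=\{1,2,4,5\}$, so $W=\{3\}$. But $P_3=\{2,3\}$ contains no violation, so the tester \emph{does not} reject on $3$, even though $3\in W$. Here the obstruction is that the violations incident to $v=3$ all have $\lca(3,\cdot)=3$ itself: $3$ is an ancestor of $4$ and $5$, so the $\lca$ is not a ``middle'' vertex $z\neq v$, and the propagation has nothing to propagate through. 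The lemma nevertheless holds, because the actual rejecting set here is $X=\{4,5\}$, which has larger mass than $\dist_\cD(f,\cP)=1/5$.

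This points to the fix, which is also the paper's argument and is shorter than what you sketched: do not try to show the rejecting set contains the complement of a particular maximum independent set. Instead show the rejecting set $X$ (non-root nodes $v$ for which some ancestor on $P_v$ forms a violation with $v$) is itself a vertex cover of $\VG(f,\cP)$. For any violation $(x,y)$ with, say, $f(x)-f(y)>\pdi(x,y)$, let $z=\lca_T(x,y)$; by the BST order $z$ lies between $x$ and $y$, so linearity of $\pdi$ gives $\pdi(x,y)=\pdi(x,z)+\pdi(z,y)$, whence one of $(x,z)$, $(z,y)$ is a violation and hence one of $x,y$ lies in $X$. The rejection probability equals $\mu_\cD(X)$, and by \Lem{characterization} this is at least $\dist_\cD(f,\cP)$. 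Your approach tries to compare $X$ to a specific minimum-weight vertex cover and show containment; the correct approach only needs that $X$ is \emph{some} vertex cover. Note also that linearity, not merely the triangle inequality, is the property of $\pdi$ being used, precisely to split a single violation into a violation on one side of $z$.
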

\begin{proof}
Let $X$ be the set of non-root nodes $v$ of $T$ with the following property: $(u,v)$ is a violation to $\cP$ for
some node $u$ on the path from $v$ to the root of $T$. The probability of rejection of the BST tester is precisely $\mu_\cD(X)$.
We claim that $X$ is a vertex cover of $\VG(f,\cP)$ which proves the lemma using \Lem{characterization}.
Pick any violation $(x,y)$ and assume without loss of generality $f(x) - f(y) > \pdi(x,y)$.
Let $z$ be the lowest common ancestor of $x$ and $y$ in $T$. By the BST property, either 
$x < z < y$ of $x>z>y$. By the linearity property of $\pdi$, we get $\pdi(x,y) = \pdi(x,z) + \pdi(z,y)$. This implies either $f(x) - f(z) > \pdi(x,z)$ or $f(z) - f(y) > \pdi(z,y)$, that is,
either $(x,z)$ or $(y,z)$ is a violation implying one of them is in $X$.
\end{proof}
\begin{restatable}{lemma}{optbst}
\label{thm:tree-optbst}\label{lem:optbst}
For any BST $T$, there is a $24\eps^{-1}\Delta(T;\cD)$-query line monotonicity-tester.
\end{restatable}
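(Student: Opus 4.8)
The plan is to obtain the line tester directly from the BST tester $\mathrm{BST}(T)$ of \Lem{tree-tester} by standard independent amplification; that lemma already contains the only structural content (that a single run of $\mathrm{BST}(T)$ exhibits an explicit violated pair with probability at least $\dist_\cD(f,\cP)$), so what remains is bookkeeping of the error probability and the query count. Throughout one may assume $\eps \le 1$, since otherwise no function is $\eps$-far and there is nothing to prove; I would prove the statement for an arbitrary bounded-derivative property $\cP$, with monotonicity as the special case.

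First I would fix the tester: run $m := \lceil 2/\eps \rceil$ independent copies of $\mathrm{BST}(T)$, and reject iff some copy rejects. Completeness is immediate from the description of $\mathrm{BST}(T)$: a copy rejects only upon producing a pair violating $\cP$, so a function in $\cP$ is never rejected (one-sided error). For soundness, suppose $\dist_\cD(f,\cP) > \eps$. By \Lem{tree-tester} each copy independently fails to reject with probability at most $1 - \dist_\cD(f,\cP) < 1-\eps$, hence all $m$ copies accept with probability at most $(1-\eps)^m \le (1-\eps)^{2/\eps} \le e^{-2} < 1/3$.

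Next I would bound the expected number of queries. In a single copy the tester draws $v \sim \cD$ and queries the $\depth_T(v)+1$ vertices on the root-to-$v$ path (and makes no query when $v$ is the root), so its expected cost equals $\Delta(T;\cD) + \Pr_{v\sim\cD}[v \neq \mathrm{root}]$; since every non-root vertex has depth at least $1$, $\Pr_{v\sim\cD}[v\neq\mathrm{root}] \le \Exp_{v\sim\cD}[\depth_T(v)] = \Delta(T;\cD)$, so one copy costs at most $2\Delta(T;\cD)$ in expectation. Summing over the $m \le 3/\eps$ copies (using $\eps \le 1$), the expected total is at most $6\,\eps^{-1}\Delta(T;\cD) \le 24\,\eps^{-1}\Delta(T;\cD)$, as claimed.

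The one point needing care is that $\mathrm{BST}(T)$ makes a \emph{random} number of queries (governed by the depth of a $\cD$-random node), so the natural guarantee is on the \emph{expected} query count; if one insists on a worst-case bound of the same order, I would halt the tester once it has made $24\eps^{-1}\Delta(T;\cD)$ queries and accept, which by Markov's inequality changes the acceptance probability on $\eps$-far inputs by only a constant, re-absorbed by a constant-factor increase in $m$ and in the query budget. Apart from this, there is no genuine obstacle: the transitivity of violations along a search path (a violated ancestor--descendant pair forces a violated ancestor--node pair, via the linearity of $\pdi$) and the reduction of the rejection probability to a vertex cover of $\VG(f,\cP)$ are both already carried out inside \Lem{tree-tester}.
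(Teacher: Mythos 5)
Your proposal is correct and takes essentially the same approach as the paper: amplify $\mathrm{BST}(T)$ by $O(1/\eps)$ independent repetitions, bound the per-run expected query count by $2\Delta(T;\cD)$ using the fact that non-roots have depth $\geq 1$, and convert the expected-query bound into a worst-case one by truncating at the $24\eps^{-1}\Delta(T;\cD)$ budget and controlling the extra abort probability via Markov. The only cosmetic difference is that the paper runs exactly $2/\eps$ copies (giving expected total $\leq 4\eps^{-1}\Delta$, hence abort probability $\leq 1/6$ against the $24\eps^{-1}\Delta$ budget, which together with the $(1-\eps)^{2/\eps} \leq e^{-2} < 1/6$ miss probability cleanly yields $2/3$ soundness), whereas your $\lceil 2/\eps\rceil$ and coarser bookkeeping would need the slight constant re-tuning you already flag.
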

\begin{proof} 
The expected number of queries made by the BST tester is $\sum_{v: \textrm{ non-root}} \Pr[v]\cdot (\depth_T(v) + 1) = (1-\Pr[\textrm{root}]) + \Delta(T;\cD) \leq 2\cdot\Delta(T;\cD)$.
\full{\[\sum_{v: \textrm{ non-root}} \Pr[v]\cdot (\depth_T(v) + 1) = (1-\Pr[\textrm{root}]) + \Delta(T;\cD) \leq 2\cdot\Delta(T;\cD)\]}
The expected depth is at least $(1-\Pr[\textrm{root}])$ since non-roots have depth at least $1$.
To get a bonafide tester with deterministic query bounds, 
run the BST tester $2/\eps$ times, aborting (and accepting) if the total number of queries exceeds $24\Delta(T;\cD)/\eps$. 
The expected total number of queries is at most $4\Delta(T;\cD)/\eps$.
By Markov's inequality, the probability that the tester aborts is  $\leq 1/6$.  
By \Lem{tree-tester}, if $\dist_\cD(f;\cP) > \eps$,
the probability that this tester does not find a violation is at most $(1-\eps)^{2/\eps} \leq 1/6$. With probability $\geq (1-1/6-1/6) = 2/3$, the tester rejects an $\eps$-far function.
\end{proof}
Choose $T$ to be the optimal BST to get the following theorem.
\begin{theorem}\label{thm:ub-line-dist-known}
There exists a $24\eps^{-1}\Delta^*(\cD)$-query tester for any bounded derivative property over the line. 
\end{theorem}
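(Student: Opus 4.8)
The plan is to obtain this as an immediate instantiation of \Lem{optbst}. First, recall that the optimal BST for $\cD$ exists: the classical Knuth--Yao dynamic program computes, among all binary search trees over the totally ordered universe $[n]$, one minimizing the expected depth $\Exp_{v\sim\cD}[\depth_T(v)]$, and by definition the value of this minimum is $\Delta^*(\cD)$. Let $T^*$ denote such a tree, so that $\Delta(T^*;\cD)=\Delta^*(\cD)$.

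Next, I would apply \Lem{optbst} with $T=T^*$. Its proof only invokes \Lem{tree-tester}, which holds for \emph{any} bounded derivative property $\cP$ (not just monotonicity), so the same argument goes through verbatim for the general $\cP(\B)$: running the BST tester of the shaded box $2/\eps$ times and aborting (accepting) once the query count exceeds $24\eps^{-1}\Delta(T^*;\cD)$ gives a tester that is non-adaptive and one-sided. One-sidedness is clear, since a function in $\cP$ has no violating pair among the queried ancestor--descendant vertices and is thus always accepted. For soundness, the proof of \Lem{optbst} bounds the expected total number of queries by $4\eps^{-1}\Delta(T^*;\cD)$, so Markov's inequality makes the abort probability at most $1/6$; and by \Lem{tree-tester} each round independently catches a violation with probability at least $\dist_\cD(f,\cP)>\eps$, so the probability of missing in all $2/\eps$ rounds is at most $(1-\eps)^{2/\eps}\le 1/6$. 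Hence an $\eps$-far function is rejected with probability at least $2/3$, while the deterministic query bound is $24\eps^{-1}\Delta(T^*;\cD)=24\eps^{-1}\Delta^*(\cD)$, as claimed.

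There is essentially no obstacle here: the theorem is a direct corollary of \Lem{optbst} once one plugs in the depth-minimizing tree. The only point worth a sentence of care is that the optimal BST must be taken with respect to the marginal $\cD$ on the keys $[n]$ (with no "gap" weights), which is precisely the regime the Knuth--Yao recurrence handles, and which matches the fact that the tester only ever queries points of $[n]$ lying on a root-to-leaf path.
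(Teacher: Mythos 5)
Your proof is correct and matches the paper's: the theorem is obtained by instantiating Lemma~\ref{lem:optbst} with the depth-minimizing BST $T^*$, and you rightly observe that although Lemma~\ref{lem:optbst} is phrased for monotonicity, its proof rests only on Lemma~\ref{lem:tree-tester}, which already applies to any bounded-derivative property. The extra care about the abort/Markov bookkeeping and the one-sidedness is exactly the content of the paper's own Lemma~\ref{lem:optbst} proof, so nothing is missing.
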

Note that once the tree is fixed, the BST tester only needs random samples from the distribution.
Pick $T$ to be the balanced binary tree of depth $O(\log n)$ to get a {\em distribution-free} tester.
\begin{theorem}
There exists a $24\eps^{-1}\log n$-query {\em distribution free} tester for any  bounded-derivative property over the line. 
\end{theorem}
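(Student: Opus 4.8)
The plan is to specialize \Lem{optbst} to a tree chosen \emph{before} the distribution is seen. First I would fix $T$ to be a balanced binary search tree on $[n]$ --- for concreteness, the tree built by recursively placing the median of each contiguous sub-range at the root of its subtree --- so that every node of $T$ has depth at most $\lceil\log_2 n\rceil$. Since this bound is pointwise, for \emph{every} distribution $\cD$ on $[n]$ we get $\Delta(T;\cD)=\Exp_{v\sim\cD}[\depth_T(v)]\le\log n$. Feeding this $T$ into the argument behind \Lem{optbst} --- whose proof only invokes \Lem{tree-tester}, and that lemma holds for every bounded-derivative property $\cP$ --- yields a tester that has one-sided error (it never rejects $f\in\cP$, since by \Lem{dist} no pair of queried points can be a violation), makes at most $24\eps^{-1}\Delta(T;\cD)\le 24\eps^{-1}\log n$ queries, and rejects every $f$ with $\dist_\cD(f,\cP)>\eps$ with probability at least $2/3$.

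The one genuinely new point --- and the reason this deserves a separate statement --- is \emph{distribution-freeness}, so here I would argue as follows. Once $T$ is fixed, the BST tester's action on a sampled $v\sim\cD$ is to query $f$ along the root-to-$v$ path of $T$ and reject on a violating pair; that path is a function of $T$ alone, so the tester never consults any description of $\cD$ and uses it purely through i.i.d.\ samples. Consequently the soundness guarantee $\Pr[\textrm{reject}]\ge\dist_\cD(f,\cP)$ of \Lem{tree-tester} applies verbatim for an arbitrary (not necessarily known, and not necessarily product) distribution $\cD$ on $[n]$, and the ``repeat $2/\eps$ times, abort once $24\eps^{-1}\log n$ queries are exceeded'' wrapper from the proof of \Lem{optbst} goes through unchanged, delivering the deterministic query bound and the $2/3$ confidences.

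I do not expect any real obstacle: the substance is already in \Lem{tree-tester} and \Lem{optbst}, and this theorem is exactly the instance where the set of queries along every root-to-leaf path is independent of $\cD$. The only place to be slightly careful is to rely on the \emph{worst-case} $O(\log n)$ depth of the balanced tree rather than a mere expected-depth bound, so that the single deterministic query cap $24\eps^{-1}\log n$ is simultaneously valid for all input distributions --- in contrast to the optimal-BST instantiation in \Thm{ub-line-dist-known}, where the cap depends on $\cD$ through $\Delta^*(\cD)$.
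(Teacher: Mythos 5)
Your proposal is correct and follows exactly the paper's intended argument: the paper's own proof is the one-line remark ``Note that once the tree is fixed, the BST tester only needs random samples from the distribution. Pick $T$ to be the balanced binary tree of depth $O(\log n)$,'' and you have simply spelled out the two points it leaves implicit --- that the pointwise depth bound $\depth_T(v) \le \lceil\log_2 n\rceil$ makes the abort threshold $24\eps^{-1}\log n$ valid uniformly over $\cD$, and that the tester's query set is a function of $T$ and the samples alone, so \Lem{tree-tester} applies without knowledge of $\cD$.
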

\subsection{Testers for the Hypergrid.}\label{sec:hg-ub}
Given a series of BSTs $T_1, T_2, \ldots, T_d$ corresponding to each dimension, we have 
the following hypergrid BST tester. 
\begin{shaded}
{\bf Hypergrid BST Tester ($T_1, T_2, \ldots, T_d$)}
\begin{asparaenum}
\itemsep0em 
\item Sample $x \sim \cD$.
\item Choose dimension $r$ u.a.r. and let $\ell$ be the $r$-line through $x$.
\item Run \textbf{BST Tester($T_r$)} on $\frest{f}{\ell}$.
\end{asparaenum}
\end{shaded}
\begin{lemma}\label{lem:hyper-bst} For any set of BSTs $T_1, T_2, \ldots, T_d$,
the probability of rejection is at least $\dist_\cD(f,\cP)/4d$.
\end{lemma}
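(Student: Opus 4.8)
The plan is to condition on the random choices of the tester and reduce everything to \Lem{tree-tester} and the dimension reduction theorem. Fix $f$, the BSTs $T_1,\dots,T_d$, and the property $\cP$. The Hypergrid BST Tester first samples $x\sim\cD$ and a dimension $r\in[d]$ uniformly. The key observation is that the pair (choice of $r$-line $\ell$ through $x$, position of $x$ within $\ell$) has exactly the product law $\cD_{-r}\times\cD_r$: since $\cD=\prod_i\cD_i$ is a product distribution, the coordinates $x_i$ for $i\neq r$ (which determine $\ell$) are distributed as $\cD_{-r}$ and are independent of $x_r$ (which determines the position on $\ell$), distributed as $\cD_r$.

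First I would fix the dimension $r$ and the line $\ell$, and invoke \Lem{tree-tester} applied to the one-dimensional function $\frest{f}{\ell}:[n]\mapsto\R$, the BST $T_r$, and the distribution $\cD_r$ on $[n]$. This gives
$$\Pr_{v\sim\cD_r}\bigl[\textrm{BST Tester}(T_r)\textrm{ rejects }\frest{f}{\ell}\bigr]\ \geq\ \dist_{\cD_r}(\frest{f}{\ell},\cP).$$
Next I would average over $\ell\sim\cD_{-r}$ (which, by the observation above, is precisely the conditional distribution of $\ell$ given that dimension $r$ was chosen), obtaining
$$\Pr[\textrm{reject}\mid r]\ \geq\ \Exp_{\ell\sim\cD_{-r}}\bigl[\dist_{\cD_r}(\frest{f}{\ell},\cP)\bigr]\ =\ \dist^r_\cD(f,\cP).$$

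Finally I would average over the uniform choice of $r\in[d]$ and apply the Dimension Reduction Theorem (\Thm{dimred}):
$$\Pr[\textrm{reject}]\ =\ \frac1d\sum_{r=1}^d \Pr[\textrm{reject}\mid r]\ \geq\ \frac1d\sum_{r=1}^d \dist^r_\cD(f,\cP)\ \geq\ \frac{\dist_\cD(f,\cP)}{4d},$$
which is the claimed bound. There is no real obstacle here: the only point needing care is the first step, namely verifying that conditioned on choosing dimension $r$, the sample $x\sim\cD$ induces exactly the distribution $\cD_{-r}$ over $r$-lines and, independently, $\cD_r$ over the position within the chosen line, so that \Lem{tree-tester} applies verbatim to $\frest{f}{\ell}$ with sampling distribution $\cD_r$. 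Everything else is just taking iterated expectations and quoting \Thm{dimred}.
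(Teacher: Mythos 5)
Your proof is correct and follows essentially the same route as the paper: condition on the chosen dimension $r$, observe that the induced distribution over $r$-lines is $\cD_{-r}$, apply \Lem{tree-tester} per line to get a rejection probability of at least $\dist^r_\cD(f,\cP)$, then average over $r$ and invoke \Thm{dimred}. The only addition is your careful remark about the product structure making $\ell$ and the within-line position independent, which is a harmless (and in fact not strictly needed, since the BST Tester draws a fresh $v\sim\cD_r$) bit of extra justification.
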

\begin{proof} 
Condition on an $r$-line being chosen. The probability distribution over $r$-lines for this tester is $\cD_{-r}$.
By \Lem{tree-tester}, the rejection probability is at least 
$\Exp_{\ell \sim \cD_{-r}} [\dist_{\cD_r}(\frest{f}{\ell},\cP)] = \dist^r_{\cD}(f,\cP)$.
The overall rejection probability is at least $\sum_{r=1}^d \frac{\dist^i_{\cD}(f,\cP)}{d} \geq \frac{\dist_\cD(f,\cP)}{4d}$, by \Thm{dimred}. 
\end{proof}
\noindent
The expected number of queries made by this procedure is at most $\frac{1}{d}\!\cdot\!\sum_{r=1}^d \!2\Delta(T_r;\cD_r)$. 
Repeating it $O(d/\eps)$ times to get the desired tester. The proof of the following is identical to that of \Lem{optbst} and is omitted.
\begin{restatable}{lemma}{optbst-hg}
\label{thm:tree-optbst}\label{lem:optbst-hg}
For any collections of BSTs $(T_1,\ldots,T_d)$, there is a $100\eps^{-1}\sum_{i=1}^d\Delta(T;\cD)$-query tester for any bounded derivative property. 
\end{restatable}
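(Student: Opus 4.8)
The plan is to combine the hypergrid BST tester analysis (\Lem{hyper-bst}) with the standard amplification-and-abort trick already used in the proof of \Lem{optbst}. First I would observe that a single run of the Hypergrid BST Tester$(T_1,\ldots,T_d)$ has expected query cost at most $\frac{1}{d}\sum_{r=1}^d 2\Delta(T_r;\cD_r)$: conditioned on dimension $r$ being picked (probability $1/d$), the inner BST Tester$(T_r)$ queries the path from the sampled leaf to the root, which has expected length $\Delta(T_r;\cD_r)+(1-\Pr[\text{root of }T_r]) \le 2\Delta(T_r;\cD_r)$, exactly as in \Lem{optbst}. Averaging over the uniform choice of $r$ gives the claimed bound.

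Next, set $D := \sum_{r=1}^d \Delta(T_r;\cD_r)$ and run the Hypergrid BST Tester independently $\lceil 8d/\eps\rceil$ times, keeping a running total of queries and aborting (and accepting) the moment the total exceeds, say, $48 D/\eps$ (the exact constant is chosen so the final bound is $100\eps^{-1}D$ after accounting for all slack; I would not fuss over the precise value). The expected total number of queries over all repetitions is at most $\lceil 8d/\eps\rceil \cdot \frac{2D}{d} \le \frac{24D}{\eps}$ for $\eps$ small enough (and trivially the statement is vacuous or easy for large $\eps$), so by Markov's inequality the abort probability is at most $1/6$. For soundness: if $\dist_\cD(f,\cP)>\eps$, then by \Lem{hyper-bst} each run rejects with probability at least $\dist_\cD(f,\cP)/(4d) > \eps/(4d)$, so the probability that none of the $\lceil 8d/\eps\rceil$ runs rejects is at most $(1-\eps/(4d))^{8d/\eps} \le e^{-2} < 1/6$. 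A union bound over the abort event and the all-runs-miss event gives rejection probability $\ge 1 - 1/6 - 1/6 = 2/3$ on $\eps$-far functions. For completeness, the BST tester only ever rejects upon finding an actual violating pair (each queried pair that is flagged satisfies $f(u)-f(v)>\pdi(u,v)$ for some $u,v$ on the line), so $f\in\cP$ is never rejected; this gives the one-sided guarantee.

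The only mildly delicate point — and the one I would be most careful about — is bookkeeping the constants so that the deterministic query bound comes out to exactly $100\eps^{-1}D$ while simultaneously keeping the abort probability below $1/6$; this is the same juggling act as in \Lem{optbst}, just with the extra factor of $d$ from the dimension-reduction loss in \Lem{hyper-bst} cancelling against the extra factor of $d$ in the number of repetitions. Everything else is routine: the expected-query computation is linearity of expectation, the soundness is \Lem{hyper-bst} plus $(1-x)^{1/x}\le e^{-1}$, and the completeness is inherited from \Lem{tree-tester}. Since the proof is word-for-word analogous to that of \Lem{optbst} with \Lem{tree-tester} replaced by \Lem{hyper-bst}, I would simply state it as such and omit the repeated details, exactly as the paper does.
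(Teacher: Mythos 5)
Your proposal is correct and matches the paper's intended argument exactly: the paper explicitly states the proof is "identical to that of \Lem{optbst}" after observing the per-run expected query cost is $\frac{1}{d}\sum_r 2\Delta(T_r;\cD_r)$, and your proof reproduces that same expected-cost computation, the $O(d/\eps)$ repetition, the Markov abort, and the soundness bound via \Lem{hyper-bst}. The only wrinkle is that your intermediate constants ($48D/\eps$ abort threshold against a $\le 24D/\eps$ expectation only bounds the abort probability by $1/2$, not $1/6$), but you flag this yourself and the fix is routine: with $\lceil 8d/\eps\rceil$ repetitions the expectation is $\le 16D/\eps + O(D)$, so aborting at $100D/\eps$ keeps the abort probability below $1/6$ while matching the stated query bound.
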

\noindent
As in the case of the line we get the following as corollaries. 
\distknown*
\distfree*
The upper bound $\sum_{r=1}^d \Delta^*(\cD_r) $ is at most $H(\cD)$, but can be much smaller, and it is clearest in the case of the hypercube. In the hypercube, each $\cD_r$ is  given by $(\mu_r,1-\mu_r)$. Set $\theta_r := \min(\mu_r,1-\mu_r)$. The optimal BST places the point of larger mass on the root and has expected depth $\theta_r$.  
\distknowncube*
It is instructive to open up this tester.  It samples a point $x$ from the distribution $\cD$ and picks a dimension $r$ uniformly at random.
With probability $\theta_r$, it queries both endpoints of $(x,x\oplus \be_r)$. With probability $(1-\theta_r)$, it does {\em nothing}. This process is repeated $O(d/\eps)$ times.
When $\theta_r = \mu_r =1/2$, this is the standard edge tester. 
\section{Lower Bounds}\label{sec:lb}
We prove that the upper bounds of \Sec{ub} are tight up to the dependence on the distance parameter $\eps$. 
As alluded to in \Sec{ourtechs}, we can only prove lower bounds for {\em stable} product distributions.
These are distributions where small perturbations to the mass function do not change $\Delta^*$ drastically.
\begin{definition}[{\bf Stable Distributions}]\label{def:stable}
A product distribution $\cD$ is said to be $(\epsilon,\rho)$-stable if for all product distributions $\cD'$ with $||\cD - \cD'||_{\TV} \leq \eps$, 
$\Delta^*(\cD') \geq \rho\Delta^*(\cD)$. 
\end{definition}
The uniform distribution on $[n]^d$ is $(\eps,1-o(1))$-stable, for any constant $\eps < 1$.
The Gaussian distribution also shares the same stability. 
An example of an unstable
distribution is the following. Consider $\cD$ on $[n]$, where the probability on the first $k = \log n$ 
elements is $(1-\eps)/k$, and is $\eps/(n-k)$ for all other elements. Let $\cD'$ have all its mass uniformly spread
on the first $k$ elements. We have $||\cD - \cD'||_\TV = \epsilon$ but $\Delta^*(\cD) \approx \eps\log n$ and $\Delta^*(\cD') \approx \log k = \log\log n$.
\mainlowerbound*
\submit{
The first observation is that testing monotonicity can be reduced to testing bounded-derivative property. 
\begin{theorem} \label{thm:red} Fix domain $[n]^d$ and a distribution $\cD$.
Suppose there exists a $Q$-query tester for testing a bounded-derivative property $\cP$  with distance parameter $\epsilon$. Then there
exists a $(Q + 10/\epsilon)$-query tester for monotonicity for functions $f:[n]^d\mapsto\NN$ over $\cD$ with distance parameter $2\epsilon$.
\end{theorem}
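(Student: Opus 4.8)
The plan is to convert the assumed $\cP$-tester into a monotonicity tester by mapping the input $f$ to a real-valued function $g$ and running the $\cP$-tester on $g$, answering each query to $g$ by a single query to $f$. I work with the quasimetric form $\cP=\cP(\pdi)$ of \Lem{dist} and the vertex-cover characterization of \Lem{characterization} (which applies to monotonicity as well). Fix a ``lower-boundary template'' $\psi:[n]^d\to\R$ by $\psi(x):=\sum_{r=1}^{d}L_r(x_r)$ with $L_r(k):=\sum_{t=1}^{k-1}l_r(t)$, so $\partial_r\psi(x)=l_r(x_r)$ and hence $\psi\in\cP$. A direct evaluation of \Eqn{supergeneralLip} gives the two properties this template is chosen for, where $x\preceq y$ denotes the coordinatewise order: (i) if $x\preceq y$ then $\psi(x)-\psi(y)=\pdi(x,y)$; (ii) in general $\pdi(x,y)-(\psi(x)-\psi(y))=\sum_{r:\,x_r>y_r}\sum_{t=y_r}^{x_r-1}(u_r(t)-l_r(t))\ge 0$, and this is at least $2\delta$ whenever some coordinate of $x$ strictly exceeds the corresponding coordinate of $y$, where $\delta:=\half\min_{r\in[d],\,t\in[n-1]}(u_r(t)-l_r(t))>0$.

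The core gadget: given any $h:[n]^d\to\R$ whose values span an interval of length $D$ and any $c$ with $0<c\le 2\delta/\max(D,1)$, set $g:=\psi+c\,h$. If $h$ is monotone then $g\in\cP$: for $x\preceq y$, $g(x)-g(y)=\pdi(x,y)+c(h(x)-h(y))\le\pdi(x,y)$ since $h(x)\le h(y)$; for every other pair, $c(h(x)-h(y))\le cD\le 2\delta$, so $g(x)-g(y)=(\psi(x)-\psi(y))+c(h(x)-h(y))\le(\psi(x)-\psi(y))+2\delta\le\pdi(x,y)$ by (ii). In the other direction, for \emph{any} $c>0$ and \emph{any} $h$: if $x\prec y$ with $h(x)>h(y)$ then $g(x)-g(y)=\pdi(x,y)+c(h(x)-h(y))>\pdi(x,y)$, so $(x,y)$ is a $\cP$-violation of $g$; hence $\VG(h,\MON)\subseteq\VG(g,\cP)$ and, by \Lem{characterization}, $\dist_\cD(g,\cP)\ge\dist_\cD(h,\MON)$.

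Since $c$ must be fixed without reading all of $f$, I apply the gadget to a \emph{clipped} copy of $f$. The monotonicity tester (parameter $2\epsilon$): (a) draw $k=\Theta(1/\epsilon)$ points $z_1,\dots,z_k\sim\cD$, query $f$ there, and set $\hat\ell=\min_i f(z_i)$, $\hat u=\max_i f(z_i)$, $\tilde f(x):=\min(\max(f(x),\hat\ell),\hat u)$, $c:=2\delta/\max(\hat u-\hat\ell,1)$, $g:=\psi+c\,\tilde f$; (b) run the given $\cP$-tester with parameter $\epsilon$ on $g$, answering each query ``$g(x)$'' by querying ``$f(x)$'' and computing $\psi(x)+c\,\tilde f(x)$; (c) accept iff it accepts. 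The cost is $k+Q\le 10/\epsilon+Q$. Completeness: if $f$ is monotone then $\tilde f$, a pointwise nondecreasing reshaping of $f$, is monotone with spread $\le\hat u-\hat\ell$, so the gadget gives $g\in\cP$ for \emph{every} outcome of the samples and the $\cP$-tester accepts with probability $>2/3$. Soundness: if $\dist_\cD(f,\MON)>2\epsilon$, a routine quantile estimate (choosing $\hat\ell,\hat u$ to bracket, say, the $(\epsilon/3)$-quantiles of $f$ under $\cD$ from each side) shows that with probability $1-2e^{-\Omega(k\epsilon)}$ we have $\mu_\cD(\{f\ne\tilde f\})=\mu_\cD(\{f<\hat\ell\}\cup\{f>\hat u\})\le 2\epsilon/3$, whence $\dist_\cD(\tilde f,\MON)\ge\dist_\cD(f,\MON)-\mu_\cD(\{f\ne\tilde f\})>\epsilon$, so $\dist_\cD(g,\cP)\ge\dist_\cD(\tilde f,\MON)>\epsilon$ and the $\cP$-tester rejects with probability $>2/3$. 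Composing the two events gives the claimed tester; as usual the $\cP$-tester's error constant can be taken small enough by $O(1)$ repetitions at a constant-factor cost that is immaterial for the lower-bound application.

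The step I expect to need the most care is establishing properties (i) and (ii) of the template, since this is precisely where all three properties of $\pdi$ from \Lem{dist-prop} are used and it must be done for \emph{incomparable} pairs, not just chains. The remaining points --- that clipping preserves monotonicity (for completeness) while destroying at most $O(\epsilon)$ of $\dist_\cD(f,\MON)$ once the clip window captures enough $\cD$-mass (for soundness), and that $g$ is determined by $f$ pointwise so queries pass through one-for-one --- are routine.
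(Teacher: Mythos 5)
Your proposal is correct and takes essentially the same route as the paper: the paper's reduction sets $g(x)=\frac{\delta}{2R}f(x)-\pdi(\bzero,x)$, and your $\psi(x)=\sum_r L_r(x_r)$ equals $-\pdi(\bzero,\cdot)$ up to an additive constant, so your gadget $g=\psi+c\tilde f$ is the same linear map with an equivalent slack $\delta$ and a two-sided (rather than one-sided) clipping step to normalize the range via $\Theta(1/\epsilon)$ samples. Your observation that only the one-sided containment $\VG(\tilde f,\MON)\subseteq\VG(g,\cP)$ is needed (rather than the paper's full violation-graph equality) is a mild simplification, and the acknowledged slack in the constants matches the informality already present in the paper's own proof.
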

Note that the distribution $\cD$ may possibly be unknown in the distribution-free setting. We sketch the construction.
Given a function $f$ for which we need to test monotonicity, and given a bounded derivative property $\cP$, 
we construct a function $g$ defined as $g(x) = \frac{\delta}{2R}f(x) - \pdi(\bzero,x)$. Here $R$ is the largest value $f$ can take, and $\delta$ is a certain `slack' parameter
dependent on $\pdi$. One can show that the violation graphs of $f$ with respect to monotonicity, and $g$ with respect to $\cP$ are {\em identical}, and so, a tester for $\cP$ can be used to test monotonicity of $f$. 
Hence, it suffices to prove lower bounds for monotonicity. We use the proof strategy set 
up in~\cite{E04, ChSe13-2} that reduce general testers to comparison-based testers.
We encapsulate the main approach in the following theorem, proven implicitly
in~\cite{ChSe13-2}. (We use $\MON$ to denote the monotonicity property.)
\begin{theorem} \label{thm:lb-frame} 
Fix domain $[n]^d$, product distribution $\cD$, proximity parameter $\eps$,
and positive integer $L$ possibly depending on $\cD$ and $\eps$. A pair $(x,y)$ distinguishes function $g$ from $h$ if $h(x) < h(y)$ and $g(x) > g(y)$.
A set $Q \subset [n]^d$ distinguishes $g$ from $h$ if there exists $(x,y) \in Q$ doing so.
Suppose there is a collection of `hard' functions $h,g_1,\ldots,g_L: [n]^d \mapsto \NN$ such that
\begin{asparaitem}
	\itemsep-0.1em
	\item The function $h$ is monotone.
	\item Every $\dist_\cD(g_i,\MON) \geq \epsilon$.
	\item Any set $Q \subset [n]^d$ distinguishes at most $|Q|$ of the $g_i$'s from $h$.
\end{asparaitem}
Then any (even adaptive, two-sided) monotonicity tester w.r.t. $\cD$ for functions $f:[n]^d\mapsto\NN$ with distance parameter $\epsilon$ must make $\Omega(L)$ queries.
\end{theorem}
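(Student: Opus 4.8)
The plan is to prove Theorem~\ref{thm:lb-frame} in two stages. First, reduce general (adaptive, two-sided) monotonicity testers to \emph{comparison-based} testers --- those whose next query and final accept/reject decision depend only on the relative order of the function values seen so far (together with the fixed partial order on $[n]^d$). This reduction is exactly the framework of Fischer~\cite{E04} as used in~\cite{ChSe13-2}, and it rests on the observation, implicit in \Lem{characterization}, that both ``$f$ is monotone'' and $\dist_\cD(f,\MON)$ are invariant under post-composing $f$ with any strictly increasing map $\pi$ on the range: such a $\pi$ leaves the violation graph $\VG(f,\MON)$, and hence its minimum $\mu_\cD$-cover, unchanged. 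Therefore running any tester $\mathcal{T}$ on $\pi\circ f$ for a random strictly increasing $\pi$ yields a tester with identical success guarantees on every input; a Ramsey-type argument on the (infinite) range lets us choose the distribution of $\pi$ so that $\mathcal{T}$'s decisions depend only on the comparison pattern of the queried values, i.e.\ $\mathcal{T}$ becomes comparison-based with the same query bound up to constants. Thus it suffices to show a comparison-based tester with proximity parameter $\epsilon$ needs $\Omega(L)$ queries.

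For the second stage, assume without loss of generality (perturbing all of $h,g_1,\dots,g_L$ consistently with a tiny generic function if necessary) that each of these functions takes pairwise distinct values, so that ``$Q$ does not distinguish $g_i$ from $h$'' is equivalent to ``$g_i$ and $h$ induce the same total order on $Q$''. Let the comparison-based tester use at most $q$ queries, and for a setting $\omega$ of its coins let $A_\omega$ be the resulting deterministic comparison-based algorithm. Run $A_\omega$ on the monotone function $h$ and let $Q_h(\omega)$ be the (deterministic) set of points it queries, so $|Q_h(\omega)|\le q$. The key claim, proved by induction on the query index, is that on \emph{any} $g_i$ such that no pair of $Q_h(\omega)$ distinguishes $g_i$ from $h$, the algorithm $A_\omega$ makes exactly the same sequence of queries and reaches exactly the same decision as on $h$: at each step the next query and the running decision are functions of the points queried so far and their comparison pattern alone, and the no-distinguishing hypothesis forces that pattern to agree with the one seen on $h$ at every prefix (adaptivity is handled step by step, and injectivity is what forbids a prefix where $h$ has a tie but $g_i$ does not). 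Hence $A_\omega$ accepts such a $g_i$ if and only if it accepts $h$.

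To finish, average over $\omega$ and over a uniformly random $i\in[L]$. By the third hypothesis, $Q_h(\omega)$ distinguishes at most $|Q_h(\omega)|\le q$ of the $g_i$, so the probability over $(i,\omega)$ that $Q_h(\omega)$ distinguishes $g_i$ is at most $q/L$. Since $A_\omega$ accepts $g_i$ whenever it accepts $h$ and fails to distinguish, and since the correct tester accepts the monotone $h$ with probability at least $2/3$, we get
$$\Pr_{i,\omega}[A_\omega \text{ accepts } g_i] \;\ge\; \Pr_{\omega}[A_\omega \text{ accepts } h] - \frac{q}{L} \;\ge\; \frac{2}{3} - \frac{q}{L}.$$
On the other hand each $g_i$ is $\epsilon$-far from $\MON$, so the tester accepts each $g_i$, and hence a random $g_i$, with probability at most $1/3$. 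Combining the two bounds gives $q\ge L/3$, so comparison-based testers need $\Omega(L)$ queries, and by the first-stage reduction so do general adaptive two-sided testers. The main obstacle is the first stage --- making rigorous that adaptivity over actual values buys nothing beyond adaptivity over comparisons --- which is precisely where the invariance of $\dist_\cD(\cdot,\MON)$ under monotone range relabelings and the Ramsey argument are essential; once comparison-based testers are reduced to, the averaging argument above is routine.
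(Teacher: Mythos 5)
Your overall route is exactly the one the paper intends (and leaves implicit, citing~\cite{E04,ChSe13-2}): first reduce adaptive two-sided testers to comparison-based testers via the Ramsey-type relabeling argument, then run the deterministic comparison-based algorithm on the monotone function $h$ and argue that on any $g_i$ not distinguished by the query set the execution and decision are identical, so a union bound over the at most $|Q|$ distinguished functions forces $\Omega(L)$ queries. Your averaging computation in the second stage is the standard one and is fine.

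The gap is the step ``assume without loss of generality that all of $h,g_1,\ldots,g_L$ are injective, by a consistent tiny generic perturbation.'' This reduction is not justified, and in fact cannot be carried out from the stated hypotheses. A common tie-breaker $\tau$ added to all functions creates \emph{new} distinguishing pairs: if $h(x)<h(y)$ but $g_i(x)=g_i(y)$, breaking the tie with $\tau(x)>\tau(y)$ makes $(x,y)$ distinguish the perturbed $g_i$ from the perturbed $h$; and if $h(x)=h(y)$ while $g_i(x)>g_i(y)$, breaking the $h$-tie with $\tau(x)<\tau(y)$ does the same. Different $g_i$'s (and the monotonicity requirement on the perturbed $h$) impose conflicting constraints on $\tau$, so no single generic tie-breaking preserves the third hypothesis that every $Q$ distinguishes at most $|Q|$ of the $g_i$'s. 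The issue is not cosmetic: with ties permitted, the theorem as literally stated is degenerate --- take $h$ to be a constant (hence monotone) function; then no pair ever satisfies $h(x)<h(y)$, the third hypothesis holds vacuously for arbitrarily large collections of $\eps$-far functions, yet monotonicity is testable with $O(\eps^{-1}d\log n)$ queries. So any proof must use, beyond the listed bullets, a tie-compatibility property of the hard family: whenever the $\{<,=,>\}$-pattern of $g_i$ on a pair differs from that of $h$, that pair (in one of its orientations) is a distinguishing pair. This is exactly what the implicit argument of~\cite{ChSe13-2} exploits, and it does hold for every family used in this paper (the line functions $h(i)=3i$ and the $g_j$'s of \Sec{lb-line} are injective, and in the hypercube/hypergrid constructions $h(x)=h(y)$ iff $g_i(x)=g_i(y)$). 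The clean fix for your write-up is either to add this condition as an explicit hypothesis and verify it in each application, or to replace ``distinguishes'' by ``induces a different comparison pattern''; with that change your inductive same-execution claim and the final counting go through unchanged. (Your first stage is only sketched, but so is the paper's; citing the Ramsey reduction of~\cite{E04,ChSe13-2} is consistent with what the authors themselves do.)
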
 
In \Sec{lb-line} we first describe hard functions for $[n]$, which cements the connection between testing and search trees. 
The general proof of \Thm{the-lower-bound}  is more complicated and we give a sketch in \Sec{lb-intuit}.  
}
\full{
\subsection{Reduction from monotonicity to bounded-derivative property}\label{sec:bdp-to-mono}
Consider a function $f: [n]^d \mapsto [R]$ with where $R \in \NN$. 
Let $\pdi$ be the distance function obtained by bounding family $\B$.
We let $\bzero \in [n]^d$ be $(0,0,\ldots,0)$. 
We use $\prec$ to denote the natural partial order in $[n]^d$, and let $\hcd(x,y)$ be the highest
common descendant of $x,y \in [n]^d$.
We first prove an observation about triangle equality.
\begin{observation}\label{obs:equal} If $\pdi(\bzero,x) + \pdi(x,y) = \pdi(\bzero,y)$,
then $x \prec y$.
\end{observation}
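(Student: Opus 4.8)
The plan is to recast the equality hypothesis as the statement that a certain closed walk in the weighted digraph of \Def{dist} has weight zero, and then invoke the fact—recorded right after that definition—that every cycle in this digraph has strictly positive weight.

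Let $w := \hcd(x,y)$, the coordinatewise minimum of $x$ and $y$, so $w_r = \min(x_r,y_r)$ for every $r$. First I would check that the Linearity clause of \Lem{dist-prop} applies to each of the triples $(\bzero,w,x)$, $(\bzero,w,y)$, and $(x,w,y)$: in every coordinate $r$ we have $0 \le w_r \le x_r$ and $0 \le w_r \le y_r$ (since $\bzero$ is the bottom element and $w_r$ is a minimum), and $w_r$ lies between $x_r$ and $y_r$, so all three betweenness conditions hold. Applying linearity to these triples and cancelling the common summand $\pdi(\bzero,w)$ then collapses the difference to
\[
\pdi(\bzero,x)+\pdi(x,y)-\pdi(\bzero,y)=\pdi(w,x)+\pdi(x,w),
\]
which is exactly the weight of the closed walk $w\to x\to w$.

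The second step is to evaluate this closed-walk weight. Since $w\prec x$ and the coordinates of $x,y$ (hence of $w$) are at least $1$, formula \Eqn{supergeneralLip} gives $\pdi(w,x)=-\sum_{r}\sum_{t=w_r}^{x_r-1}l_r(t)$ and $\pdi(x,w)=\sum_{r}\sum_{t=w_r}^{x_r-1}u_r(t)$, so their sum equals $\sum_{r}\sum_{t=w_r}^{x_r-1}\bigl(u_r(t)-l_r(t)\bigr)$. Because $\B$ is a bounding family, every term $u_r(t)-l_r(t)$ is strictly positive (\Def{bound}); hence this sum is nonnegative, and it equals $0$ exactly when the index set $\{(r,t): w_r\le t<x_r\}$ is empty, i.e.\ when $w=x$. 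Under the hypothesis the difference is $0$, forcing $w=x$, which is precisely $x_r\le y_r$ for all $r$, that is, $x\prec y$.

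I do not expect a real obstacle here; the only points needing care are verifying the three betweenness conditions for the linearity applications and remembering that $\pdi(\bzero,\cdot)$ is evaluated with $\bzero$ at the bottom of the order (note the $w$-route keeps $\bzero$ only inside the cancelled term $\pdi(\bzero,w)$, so it never needs the bound functions evaluated below $1$). If one prefers to avoid $w$ and $\hcd$ entirely, the same conclusion follows by expanding $\pdi(\bzero,x)+\pdi(x,y)-\pdi(\bzero,y)$ directly from \Eqn{supergeneralLip} coordinate by coordinate: coordinates with $x_r\le y_r$ telescope to $0$, while each coordinate with $x_r>y_r$ contributes $\sum_{t=y_r}^{x_r-1}(u_r(t)-l_r(t))>0$, so the difference is $0$ iff no coordinate has $x_r>y_r$.
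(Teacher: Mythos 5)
Your argument is correct and is essentially identical to the paper's own proof: both route through $w=\hcd(x,y)$, apply linearity to the same three triples to cancel down to $\pdi(w,x)+\pdi(x,w)=0$, and conclude $w=x$ because this closed-walk weight is a sum of strictly positive $u_r(t)-l_r(t)$ terms. The only cosmetic difference is that you instantiate \Eqn{supergeneralLip} explicitly where the paper argues in words; the final remark about a direct coordinatewise expansion is a fine alternative but not needed.
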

\begin{proof} By linearity, $\pdi(x,y) = \pdi(x,\hcd(x,y)) + \pdi(\hcd(x,y),y)$.
Since $\hcd(x,y) \prec x$, by linearity again, 
$\pdi(\bzero,x) = \pdi(\bzero,\hcd(x,y)) + \pdi(\hcd(x,y),x)$. (Similarly for $y$.)
Putting it all into the `if' condition,
\begin{align*}
\pdi(\bzero,\hcd(x,y)) + \pdi(\hcd(x,y),x) + \pdi(x,\hcd(x,y)) + \pdi(\hcd(x,y),y)
	= \pdi(\bzero,\hcd(x,y)) + \pdi(\hcd(x,y),y)
\end{align*}
This yields $\pdi(\hcd(x,y),x) + \pdi(x,\hcd(x,y)) = 0$. Suppose $\hcd(x,y) \neq x$.
The length (in terms of $\B$)
of the path from $\hcd(x,y)$ to $x$ involves a sum of $u_i(t)$ terms, and the reverse
path involves corresponding $-l_i(t)$ terms. Since $u_i(t) > l_i(t)$, the total path length
from $\hcd(x,y)$ to $x$ and back is strictly positive. Therefore, $\hcd(x,y) = x$ and $x \prec y$.
\end{proof}
Let $U$ be the set of incomparable (ordered) pairs in $[n]^d$.
Define $\delta := \min_{(x,y) \in U} \{\pdi(\bzero,x)+\pdi(x,y) - \pdi(\bzero,y)\}$. 
By \Obs{equal}, $\delta > 0$.
Define
\[ g(x) := \frac{\delta}{2R}\cdot f(x) - \pdi(\bzero,x) \]
\begin{lemma}\label{lem:bijective}
$\dist_\cD(g,\cP) =\dist_\cD(f,\MON)$.
\end{lemma}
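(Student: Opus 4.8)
The plan is to show that the violation graph of $f$ with respect to monotonicity coincides \emph{exactly} with the violation graph of $g$ with respect to $\cP$, and then invoke \Lem{characterization} (which says $\dist_\cD(\cdot,\cdot)$ equals the minimum $\mu_\cD$-mass vertex cover of the respective violation graph) to conclude equality of distances. So the core of the argument is the set identity $\VG(g,\cP)=\VG(f,\MON)$ as edge sets on the common vertex set $[n]^d$; since the two distances are both characterized by the same vertex-cover optimization over the same graph, they must agree. This reduces the whole lemma to a pointwise comparison of the two violation conditions.

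First I would unwind the definitions. A pair $(x,y)$ is a violation for $g$ w.r.t. $\cP$ iff $g(x)-g(y)>\pdi(x,y)$. Substituting $g(z)=\frac{\delta}{2R}f(z)-\pdi(\bzero,z)$, this becomes $\frac{\delta}{2R}\big(f(x)-f(y)\big) > \pdi(x,y) + \pdi(\bzero,x) - \pdi(\bzero,y)$. Now split into two cases according to whether $(x,y)$ is a comparable pair or an incomparable pair in the partial order on $[n]^d$. If $x\prec y$: by linearity of $\pdi$ (\Lem{dist-prop}, item 2, applied to $\bzero\prec x\prec y$) we have $\pdi(\bzero,x)+\pdi(x,y)=\pdi(\bzero,y)$, so the right-hand side is $0$ and the violation condition reads $f(x)-f(y)>0$, i.e.\ $f(x)>f(y)$, which is exactly a monotonicity violation of $f$ on the comparable pair $(x,y)$. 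If $x\succ y$ (the other comparable case), the same linearity identity for $\bzero\prec y\prec x$ gives $\pdi(\bzero,x)-\pdi(\bzero,y)=\pdi(y,x)$, so the right-hand side is $\pdi(x,y)+\pdi(y,x)\geq 0$ — strictly positive unless $x=y$ by the bounding-family property $u_r>l_r$ used in \Obs{equal} — while $f(x)-f(y)$ can be at most $R$, bounded by $\frac{\delta}{2R}\cdot R = \delta/2$; here I'd need to check this falls below the threshold, which it does since $\pdi(x,y)+\pdi(y,x)>0$ and the scaling was chosen precisely so that $\frac{\delta}{2R}|f(x)-f(y)|\le\delta/2$ cannot clear any strictly positive gap of size $\ge\delta$... actually for $x\succ y$ we need the reverse direction too, so I'd instead note $(x,y)$ is a $\MON$-violation iff $(y,x)$ is, and handle it by symmetry with the $x\prec y$ case. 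For incomparable $(x,y)$: by definition of $\delta$, $\pdi(x,y)+\pdi(\bzero,x)-\pdi(\bzero,y)\ge\delta$, whereas $\frac{\delta}{2R}(f(x)-f(y))\le\frac{\delta}{2R}\cdot R=\delta/2<\delta$, so $(x,y)$ is never a $\cP$-violation for $g$; and incomparable pairs are never $\MON$-violations for $f$ either. Thus in every case the two violation conditions match.

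The main obstacle I anticipate is bookkeeping the directionality: both $\VG$ and the partial order are defined asymmetrically (an edge $(x,y)$ vs $(y,x)$, and $x\prec y$ vs $x\succ y$), and $\pdi$ itself is asymmetric, so I must be careful that the "$>$" in $g(x)-g(y)>\pdi(x,y)$ is tracked consistently with which of $\{x,y\}$ is the larger element, and that the $\delta/2$ slack genuinely dominates on \emph{both} orientations of every comparable pair that is not a violation and on every incomparable pair. The choice $\delta=\min_{(x,y)\in U}\{\pdi(\bzero,x)+\pdi(x,y)-\pdi(\bzero,y)\}>0$ (positive by \Obs{equal}) together with the $\frac{\delta}{2R}$ scaling is exactly engineered to make this slack argument go through, so once the case analysis above is laid out cleanly, each case is a one-line inequality. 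After establishing $\VG(g,\cP)=\VG(f,\MON)$, the conclusion $\dist_\cD(g,\cP)=\dist_\cD(f,\MON)$ is immediate from \Lem{characterization} applied to each side.
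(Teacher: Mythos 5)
Your overall strategy matches the paper's: show that $\VG(g,\cP)$ and $\VG(f,\MON)$ are the same (undirected) graph, then invoke \Lem{characterization}, which says both distances equal the minimum $\mu_\cD$-mass of a vertex cover of the respective graph. Your $x\prec y$ sub-case and your incomparable sub-case are both correct and exactly reproduce the paper's computations.

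However, your treatment of the $x\succ y$ sub-case has a genuine gap, and the ``symmetry'' fallback does not patch it. Fix an unordered pair $\{x,y\}$ with, WLOG, $x\prec y$. The $\cP$-violation condition for $g$ on this edge is a disjunction: $g(x)-g(y)>\pdi(x,y)$ \emph{or} $g(y)-g(x)>\pdi(y,x)$. You correctly show the first disjunct is equivalent to $f(x)>f(y)$, but you must also rule out that the second disjunct ever holds on its own, since otherwise $\VG(g,\cP)$ could strictly contain $\VG(f,\MON)$. Symmetry of the MON-violation relation on unordered pairs does not help here, because the two disjuncts of the $\cP$-condition are genuinely different inequalities. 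Expanding the second disjunct and using linearity (since $x\prec y$), it becomes $\frac{\delta}{2R}(f(y)-f(x))>\pdi(y,x)+\pdi(x,y)$; since the left side is at most $\delta/2$, this cannot hold \emph{provided} you know $\pdi(y,x)+\pdi(x,y)\ge\delta$. As you yourself observed mid-proof, that lower bound does not follow from defining $\delta$ only over incomparable pairs. The clean fix --- which the paper uses implicitly in its ``by choice of $\delta$, the RHS must be zero'' step --- is to take $\delta$ to be $\min\{\pdi(\bzero,x)+\pdi(x,y)-\pdi(\bzero,y)\}$ over all ordered pairs with $x\not\prec y$ (equivalently, the minimum \emph{positive} value of that expression, which is positive by \Obs{equal} plus the triangle inequality). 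With $\delta$ so defined, the $x\succ y$ case is handled by exactly the same calculation as the incomparable case; you should abandon the symmetry argument and just carry out that one-line bound. After that, your plan is complete and identical in substance to the paper's proof.
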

\begin{proof} We show that $(u,v)$ violates $\cPd{\pdi}$ of $g$ iff it violates monotonicity of $f$.
First, the `only if' case.
Assume $g(u)-g(v)>\pdi(u,v)$. Plugging in the expression for $g(\cdot)$ and rearranging,
\begin{eqnarray*}
\frac{\delta}{2R}(f(u)-f(v))> \pdi(\bzero,u) + \pdi(u,v) - \pdi(\bzero,v)
\end{eqnarray*}
By triangle inequality on the RHS, $f(u) > f(v)$.
Note that $f(u)-f(v)\le R$ so $\frac{\delta}{2R}(f(u)-f(v))\leq \delta/2$. 
So $\delta/2 > \pdi(\bzero,u) + \pdi(u,v) - \pdi(\bzero,v)$. By choice of $\delta$, the RHS must be zero.
By \Obs{equal}, $u \prec v$, and $(u,v)$ is a violation to monotonicity of $f$.
Now the `only if' case, so $u \prec v$ and $f(u)>f(v)$. Note that $\pdi(\bzero, v)=\pdi(\bzero,u)+\pdi(u,v)$. 
We deduce that $(u,v)$ is also a violation to $\cPd{\pdi}$ for $g$.
$$g(u)-g(v)=\frac{\delta}{2R}(f(u)-f(v))+\pdi(\bzero,v)-\pdi(\bzero,u)=\frac{\delta}{2R}(f(u)-f(v))+\pdi(u,v)>\pdi(u,v)$$
\end{proof} 
Our main reduction theorem is the following.
\begin{theorem} \label{thm:red} Fix domain $[n]^d$ and a product distribution $\cD$.
Suppose there exists a $Q$-query tester for testing a bounded-derivative property $\cP$  with distance parameter $\epsilon$. Then there
exists a $Q + 10/\epsilon$-query tester for monotonicity for functions $f:[n]^d\mapsto\NN$ over $\cD$ with distance parameter $2\epsilon$.
\end{theorem}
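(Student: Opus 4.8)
The plan is to package the distance‑preserving map of \Lem{bijective} into a monotonicity tester, using the $10/\epsilon$ extra queries only to cope with the fact that the range of $f$ is unbounded. Suppose momentarily that we knew $R := \max_x f(x)$. Then, with $\delta>0$ the constant defined above (positive by \Obs{equal}), set $g := \tfrac{\delta}{2R}f - \pdi(\bzero,\cdot)$; \Lem{bijective} gives $\dist_\cD(g,\cP)=\dist_\cD(f,\MON)$. I would simulate the $Q$‑query $\cP$‑tester on $g$ with proximity parameter $\epsilon$, answering each query to $g(x)$ by a single query to $f(x)$ (both $\pdi(\bzero,x)$ and $\delta$ are fixed by $\B$), and output its verdict. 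A monotone $f$ gives $g\in\cP$, accepted w.p.\ $>2/3$; an $f$ that is $2\epsilon$‑far from monotone gives a $g$ that is $2\epsilon$‑far, hence $\epsilon$‑far, from $\cP$, rejected w.p.\ $>2/3$. This already proves the theorem when $R$ is known, with no extra queries; the factor $2$ in the distance parameter is precisely the slack I reserve for estimating $R$.

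To estimate $R$, I would draw $m := 10/\epsilon$ points $x_1,\dots,x_m \sim \cD$, query $f$ on each, and put $R' := \max_i f(x_i)$ (say $R'\ge 1$ by convention, to avoid division by zero). Let $B := \{x : f(x) > R'\}$. Since $\mu_\cD(B) = \Pr_{y\sim\cD}[f(y) > \max_{i\le m} f(x_i)]$ and among $m+1$ i.i.d.\ draws the last one strictly exceeds the other $m$ with probability at most $1/(m+1)$, we get $\Exp[\mu_\cD(B)] \le 1/(m+1)$, hence by Markov $\Pr[\mu_\cD(B) > \epsilon] \le 1/((m+1)\epsilon) < 1/10$. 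Now define the truncation $\tilde f(x) := \min(f(x), R')$, so that $\tilde f : [n]^d \mapsto \{0,1,\dots,R'\}\subseteq\NN$ has $\max_x \tilde f(x)=R'$. If $f$ is monotone then so is $\tilde f$; and since $\tilde f$ agrees with $f$ off $B$, the triangle inequality for $\dist_\cD(\cdot,\MON)$ yields $\dist_\cD(\tilde f,\MON) \ge \dist_\cD(f,\MON) - \mu_\cD(B)$.

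The monotonicity tester (with parameter $2\epsilon$) is then: draw the $m=10/\epsilon$ samples and form $R'$; run the $\cP$‑tester with parameter $\epsilon$ on $\tilde g := \tfrac{\delta}{2R'}\tilde f - \pdi(\bzero,\cdot)$, answering each query to $\tilde g(x)$ with one query to $f(x)$; output the $\cP$‑tester's answer. The query count is $Q + 10/\epsilon$. Applying \Lem{bijective} to $\tilde f$ (whose range is bounded by $R'$) gives $\dist_\cD(\tilde g,\cP)=\dist_\cD(\tilde f,\MON)$. If $f$ is monotone, $\tilde f$ is monotone, $\tilde g\in\cP$, and the $\cP$‑tester accepts w.p.\ $>2/3$. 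If $f$ is $2\epsilon$‑far from monotone, then with probability $\ge 9/10$ over the samples $\mu_\cD(B)\le\epsilon$, so $\dist_\cD(\tilde g,\cP)=\dist_\cD(\tilde f,\MON)\ge 2\epsilon-\epsilon=\epsilon$ and the $\cP$‑tester rejects w.p.\ $>2/3$; these combine to rejection probability at least $\tfrac{9}{10}\cdot\tfrac23>\tfrac12$, which standard amplification lifts to $2/3$. The only genuine obstacle here is the unboundedness of the range of $f$; once it is removed by this sample‑and‑truncate step the statement is a mechanical consequence of \Lem{bijective}, and essentially the same argument goes through in the distribution‑free setting, since the $10/\epsilon$ extra queries are just samples from $\cD$.
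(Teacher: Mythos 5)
Your proof takes essentially the same approach as the paper's: draw $10/\epsilon$ samples to estimate the maximum, truncate $f$ at that value, and apply \Lem{bijective} to the truncated function. Your version is in fact more careful than the paper's, which simply asserts ``$\dist_\cD(f,f')<\epsilon$'' as though it were deterministic; you correctly identify it as a probabilistic event, justify it via the $1/(m+1)$ symmetry argument plus Markov, and flag the resulting need for constant-factor amplification to restore rejection probability $2/3$ (a detail the paper's stated query count $Q+10/\epsilon$ quietly elides).
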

\begin{proof} The monotonicity tester first queries $10/\eps$ points of $[n]^d$,
each i.i.d. from $\cD$.
Let the maximum $f$-value among these be this $M$. 
Consider the truncated function $f':[n]^d \mapsto [M]$, where $f'(x) = M$ if $f(x) \geq M$
and $f'(x) = f(x)$ otherwise. If $f$ is monotone, $f'$ is monotone.
Note that $\dist_\cD(f,f') < \eps$. So if $f$ is $2\eps$-far from monotone,
$f'$ is $\eps$-far from monotone. We can apply the $\cP(\B)$ tester
on the function $g$ obtained from \Lem{bijective}.
\end{proof}
\subsection{Monotonicity Lower Bound Framework.}\label{sec:monotone-lb-framework}
The lower bound for monotonicity testing goes by the proof strategy set 
up in~\cite{ChSe13-2}. This is based on arguments in~\cite{E04,ChSe13-2} that 
reduce general testers to comparison-based testers.
We encapsulate the main approach in the following theorem, proven implicitly
in~\cite{ChSe13-2}. (We use $\MON$ to denote the monotonicity property.)
\begin{theorem} \label{thm:lb-frame} 
Fix domain $[n]^d$, distribution $\cD$, proximity parameter $\eps$,
and positive integer $L$ possibly depending on $\cD$ and $\eps$. A pair $(x,y)$ distinguishes function $g$ from $h$ if $h(x) < h(y)$ and $g(x) > g(y)$.
Suppose there is a collection of `hard' functions $h,g_1,\ldots,g_L: [n]^d \mapsto \NN$ such that
\begin{asparaitem}
	\itemsep-0.1em
	\item The function $h$ is monotone.
	\item Every $\dist_\cD(g_i,\MON) \geq \epsilon$.
	\item Pairs in any set $Q \subset [n]^d$, can distinguish at most $|Q|$ of the $g_i$'s from $h$.
\end{asparaitem}
Then any (even adaptive, two-sided) monotonicity tester w.r.t. $\cD$ for functions $f:[n]^d\mapsto\NN$ with distance parameter $\epsilon$ must make $\Omega(L)$ queries.
\end{theorem}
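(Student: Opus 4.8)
The plan is to follow the route of \cite{E04,ChSe13-2}: first reduce arbitrary (adaptive, two-sided) monotonicity testers to \emph{comparison-based} ones --- testers whose behaviour depends only on the pattern of $<,=,>$ among the values they have queried --- and then run a direct counting argument against comparison-based testers using the three hypotheses. It is convenient to assume $h$ and all the $g_i$ are injective: this is automatic for the hard families one actually constructs, and in general can be arranged by a tiny order-consistent perturbation that leaves monotonicity of $h$, the distances $\dist_\cD(g_i,\MON)\ge\epsilon$, and the family of distinguishing pairs unchanged. Under this assumption, for any query set $Q\subseteq[n]^d$ the comparison patterns of $h$ and of $g_i$ restricted to $Q$ agree if and only if $Q$ contains no pair distinguishing $g_i$ from $h$ (in either orientation).

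The first step --- and the one where I expect the real work to be --- is the comparison reduction. The driving fact is an invariance under monotone relabelling: if $\phi:\RR\mapsto\RR$ is strictly increasing, then $f$ is monotone iff $\phi\circ f$ is, and $\dist_\cD(f,\MON)=\dist_\cD(\phi\circ f,\MON)$, since a closest monotone function to $f$ maps under $\phi$ to a closest monotone function to $\phi\circ f$ and conversely. Consequently any monotonicity tester returns the same verdict, in distribution, on $f$ and on $\phi\circ f$ for every such $\phi$. From a $q$-query tester $A$ one then builds a $q$-query comparison-based tester $A'$: $A'$ runs $A$, but whenever $A$ issues a query $A'$ supplies a synthetic value chosen consistently with the order relations among the points queried so far. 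Restricted to any fixed finite set of queried points, the synthetic values are a monotone relabelling of the true ones, so the invariance forces $A'$ to have the same acceptance probability as $A$; hence $A'$ is a valid monotonicity tester making at most $q$ queries. The delicate part --- the content ``proven implicitly in \cite{ChSe13-2}'' --- is to make the adaptive simulation rigorous, since the order constraints that the synthetic values must satisfy are revealed incrementally along the transcript, and one must verify that $A$ cannot leverage this to detect the substitution.

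Granting the reduction, it remains to show that a comparison-based monotonicity tester $A'$ with proximity parameter $\epsilon$ must make $\Omega(L)$ queries; suppose it makes at most $q$. Run $A'$ on $h$: for each fixing of its coins this is a deterministic computation touching a set $Q$ of at most $q$ points and producing a verdict, and since $h$ is monotone the verdict is ``accept'' for a set of coin-fixings of probability at least $2/3$. Fix one such coin-fixing, with query set $Q$, and run the \emph{same} deterministic process on a function $g_i$: as long as every comparison it performs agrees with the corresponding comparison on $h$, it issues the identical next query, so by induction on the query steps it reproduces the transcript $Q$ verbatim and again outputs ``accept''; by injectivity this can fail only if $Q$ contains a pair distinguishing $g_i$ from $h$. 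By the third hypothesis applied to $Q$ (which has $|Q|\le q$), at most $q$ of the indices $i\in[L]$ are ``caught'' in this way, so for a uniformly random $i$ the process accepts $g_i$ with probability at least $(L-q)/L$. Averaging over the good coin-fixings and over $i$, the tester $A'$ accepts a random $g_i$ with probability at least $\tfrac{2(L-q)}{3L}$; but correctness forces $A'$ to accept each $g_i$, hence also the average over $i$, with probability at most $1/3$. Thus $\tfrac{2(L-q)}{3L}\le\tfrac13$, i.e.\ $q\ge L/2$, so any comparison-based tester --- and, via the reduction, any adaptive two-sided tester --- makes $\Omega(L)$ queries.
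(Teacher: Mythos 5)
The paper does not actually supply a proof of this theorem; it states it as ``proven implicitly in~\cite{ChSe13-2}'' and relies on the comparison-based-reduction framework of~\cite{E04,ChSe13-2}. Your proposal reconstructs exactly that intended argument: reduce to comparison-based testers, then run a counting argument showing that a comparison-based tester accepting $h$ on a coin-sequence must also accept all but at most $|Q|$ of the $g_i$'s on that same sequence. The counting step is correct and yields $q \geq L/2$ as you compute.

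Two small points deserve flagging, though neither undermines the argument for the paper's purposes. First, the sentence ``any monotonicity tester returns the same verdict, in distribution, on $f$ and on $\phi\circ f$'' is not literally true: what is invariant under strictly increasing $\phi$ is the \emph{property} (membership in $\MON$ and distance $\dist_\cD(\cdot,\MON)$), not the tester's output distribution. The correct use of this invariance is the one you make next: the synthetic-substitution tester $A'$ on input $f$ reproduces $A$'s execution on some monotone relabelling $\phi\circ f$, and since $\phi\circ f$ has the same monotonicity status and distance as $f$, the correctness of $A$ transfers to $A'$. Second, the injectivity reduction ``by a tiny order-consistent perturbation'' is not quite available when the range is $\NN$ rather than $\RR$, and even after rescaling, tie-breaking consistently can, in principle, manufacture new distinguishing pairs (if $h(x)=h(y)$ while some $g_i(x)>g_i(y)$ and some $g_j(x)<g_j(y)$, no tie-break for $h$ avoids creating a distinguishing pair for one of them). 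In practice this is moot because the hard families the paper actually constructs (\Sec{lb-line}, the hypercube, the hypergrid aggregation) are injective, but it means the theorem as stated tacitly assumes injectivity, or needs the third hypothesis strengthened to count all comparison-pattern disagreements rather than only strict-inequality reversals. You were right to isolate the comparison reduction as the nontrivial content; the Ramsey-style argument that makes the incremental synthetic-value assignment rigorous is precisely what~\cite{ChSe13-2} supplies, and it is reasonable for you to cite it rather than reprove it.
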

In \Sec{lb-line} and \Sec{lb-cube},
we first describe hard functions for the line and the hypercube domain, respectively. 
The general hypergrid is addressed in \Sec{lb-hg}. 
}
\subsection{The Line} \label{sec:lb-line}
\begin{theorem}\label{thm:lb-line}
Fix a parameter $\eps$. If $\cD$ is $(2\eps,\rho)$-stable, then any $\epsilon$-monotonicity tester w.r.t. $\cD$ for functions $f:[n]^d\mapsto\NN$ requires $\Omega(\rho\Delta^*(\cD))$ queries.
\end{theorem}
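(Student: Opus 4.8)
The plan is to instantiate the comparison‑based framework of \Thm{lb-frame}: it suffices to exhibit a monotone $h:[n]\to\NN$ together with functions $g_1,\dots,g_L$, each $\eps$‑far from $\MON$ under $\cD$, with $L=\Omega(\rho\Delta^*(\cD))$, such that every query set $Q$ distinguishes only $O(|Q|)$ of the $g_\ell$ from $h$ (the harmless constant is absorbed into the $\Omega(\cdot)$ in the conclusion). I take $h(x)=x$. The family $\{g_\ell\}$ is read off a suitably balanced, near‑optimal binary search tree $T$ for $\cD$ — obtained from the classical optimal BST, so that $\Delta(T;\cD)=\Theta(\Delta^*(\cD))$, and arranged to be as weight‑balanced at every internal node as the individual point masses allow. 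For a node $v$ write $I_v=I_{v_L}\cup\{v\}\cup I_{v_R}$ for its key‑interval, split by $v$. There is one hard function per ``level'' $\ell$: $g_\ell$ agrees with $h$ on every key of depth $<\ell-1$, and inside the interval $I_v$ of each depth‑$(\ell-1)$ node $v$ it performs a \emph{block swap}, sending one part of $I_v\setminus\{v\}$ (cut at its $\cD$‑weighted median) to the top of the value range $[\min I_v,\max I_v]$ and the other part, together with $v$, to the bottom, order‑preserving within each part.

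A short case analysis shows the only violating pairs of $g_\ell$ are pairs $x<y$ whose $\lca_T(x,y)$ has depth exactly $\ell-1$ with $x$ in the lower block; in particular each pair is a violation of \emph{at most one} $g_\ell$. Hence $Q$ distinguishes $g_\ell$ only when $Q$ contains two points whose $T$‑lca has depth $\ell-1$, and since the lca‑closure of a $|Q|$‑point set in a tree has fewer than $2|Q|$ nodes, $Q$ distinguishes fewer than $2|Q|$ of the $g_\ell$, which is the budget condition. The distance of $g_\ell$ is read off \Lem{characterization}: its violation graph is a disjoint union, over depth‑$(\ell-1)$ nodes $v$, of complete bipartite graphs between the two blocks of $I_v$, so $\dist_\cD(g_\ell,\MON)=\sum_{v:\,\depth_T(v)=\ell-1}\bigl(\text{smaller of the two block masses of }I_v\bigr)$. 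When the split at $v$ and just below it is balanced this is a constant fraction of $\mu_\cD(I_v\setminus\{v\})$, hence a constant fraction of $p_\ell:=\mu_\cD(\{x:\depth_T(x)\ge\ell\})$ after summing over $v$.

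It remains to produce enough levels on which $p_\ell$ is large \emph{and} the blocks are balanced; this is where $(2\eps,\rho)$‑stability enters. Let $\ell^*$ be the first level with $p_{\ell^*}<2\eps$ (with constants tuned so that a balanced $g_\ell$ with $p_\ell\ge2\eps$ is $\eps$‑far); move the at‑most‑$2\eps$ total mass of keys at depth $\ge\ell^*$ onto the $<2^{\ell^*}$ keys of smaller depth, obtaining a distribution $\cD'$ with $\|\cD-\cD'\|_{\TV}\le2\eps$ and support of size $<2^{\ell^*}$, so $\Delta^*(\cD')\le\ell^*$. Stability forces $\ell^*\ge\rho\Delta^*(\cD)$, giving $\Omega(\rho\Delta^*(\cD))$ levels $\ell<\ell^*$ with $p_\ell\ge2\eps$. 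On such a level balance can fail at $v$ only because of a key whose own mass dwarfs its subtree, and a single such heavy key can spoil only $O(1/\eps)$ levels (those whose relevant interval‑mass is within a constant factor of its own); discarding the at most $O(1/\eps)$ spoiled levels still leaves $\Omega(\rho\Delta^*(\cD))$ usable ones whenever $\rho\Delta^*(\cD)=\omega(1/\eps)$ — and the claim is vacuous otherwise. Feeding $h,g_1,\dots,g_L$ into \Thm{lb-frame} yields the theorem.

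The main obstacle is precisely this last step: simultaneously ensuring $\Delta(T;\cD)=O(\Delta^*(\cD))$ (so that $\ell^*$ is comparable to $\Delta^*(\cD)$), balance at every retained level (so that each $g_\ell$ is genuinely $\eps$‑far, not merely far on average), and that heavy keys spoil only boundedly many levels. Pinning down a BST meeting all three demands, and the accompanying accounting, is the delicate technical core; by contrast the budget condition and the exact distance formula via \Lem{characterization} are essentially automatic once the construction is fixed.
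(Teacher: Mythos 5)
Your overall plan — use \Thm{lb-frame} with a family of block-swap functions read off the levels of a binary search tree, compute distance via \Lem{characterization}, and bound the number of usable levels via stability — is the same as the paper's. But there are two genuine gaps in the execution, and both are resolved by the same device in the paper.

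First, the cut inside each interval must be at the BST node $v$ itself (splitting $I_v\setminus\{v\}$ into the left subtree $S_\ell$ and the right subtree $S_r$, with $v$ appended to one of them), \emph{not} at the $\cD$-weighted median of $I_v\setminus\{v\}$. With your cut, the split point $q$ can lie strictly inside $S_\ell$, and then a violating pair $x<y$ with $y\le q<x<v$ has $\lca_T(x,y)$ inside $S_\ell$, at depth $\ge\ell$ rather than $\ell-1$. Your claim that every violating pair of $g_\ell$ has lca at depth exactly $\ell-1$ — the premise for the budget bound and for ``each pair violates at most one $g_\ell$'' — is then false: the same pair can straddle the split at several levels, and the $|Q|$-budget argument via lca-closure breaks down.

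Second, once the cut is at $v$, you still need both blocks to be near-balanced so that $\sum_v\min(\text{block masses})=\Omega(p_\ell)$, and here your proposal is still handwaving. The paper avoids the issue entirely by a specific choice of tree and a specific tie-break: it uses the \emph{median BST} (root is the smallest $t$ with $\mu([1,t])\ge 1/2$, then recurse), whose ``median property'' says any subtree together with its parent outweighs the sibling subtree; and it always places $v$ with the \emph{lighter} of $S_\ell,S_r$. This makes $\min$ of the two block masses equal to $\max(\mu(S_\ell),\mu(S_r))\ge\bigl(\mu(S_\ell)+\mu(S_r)\bigr)/2$, and summing over nodes at depth $\ell-1$ gives $\ge p_\ell/2$ at \emph{every} level, with nothing to discard (\Clm{line-interval}). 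Your workaround of throwing away ``spoiled'' levels is not justified: there is no reason a single heavy key spoils only $O(1/\eps)$ levels (the number depends on the tree, and your ``classical optimal BST, arranged to be as weight-balanced as the point masses allow'' is not pinned down enough to control it), and the fallback ``the claim is vacuous when $\rho\Delta^*(\cD)=O(1/\eps)$'' is incorrect — a $\Theta(1/\eps)$ query lower bound is not automatic. The step you flag at the end as the ``delicate technical core'' is exactly what the median property and the lighter-side placement dispose of; without them the construction does not close.
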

Not surprisingly, the lower bound construction is also based on BSTs. 
We specifically use the \emph{median BST}~\cite{Melhorn75}. 
When $n=1$, then the tree is the singleton. For a general $n$, 
let $t \in [n]$ be the smallest index such that $\mu(\set{1,\cdots,t}) \geq 1/2$ (henceforth, in this section, we use $\mu$ to denote $\mu_\cD$). 
The root of $T$ is $t$. Recur the construction on the intervals $[1,t-1]$ and $[t+1,n]$.
By construction, 
the probability mass of any subtree together with its parent is greater than the probability mass of the 
sibling subtree. This {\em median property} will be utilized later.
We follow the framework of \Thm{lb-frame} to construct a collection of hard functions. 
The monotone function $h$ can be anything; $h(i) = 3i$ works. We will construct a function $g_j$ ($j \geq 1$) for each non-root level of the median BST. 
Consider the nodes at depth $j-1$ (observe the use of $j-1$, and not $j$). Each of these corresponds to an interval, and we denote this sequence
of intervals by $\inter{j}{1}, \inter{j}{2}, \ldots$. (Because internal nodes of the tree are also elements in $[n]$,
there are gaps between these intervals.) Let $L_{\geq j} := \{x: \depth_T(x)\geq j\}$ be the nodes at depth $j$ and higher.
We have the following simple claim.
\begin{claim}\label{clm:line-interval}
$\inter{j}{k}$ can be further partitioned into $\inter{j}{k,\lleft}$ and $\inter{j}{k,\rright}$ such that
$\sum_k \min\left(\mu(\inter{j}{k,\lleft}), \mu(\inter{j}{k,\rright})\right)\!\geq\! \frac{\mu(L_{\geq j})}{2}.$
\end{claim}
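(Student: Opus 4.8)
The plan is to decode each interval $\inter{j}{k}$ in terms of the two child subtrees of the median-BST node that heads it, and then let the median property do the work. Let $t_k\in\inter{j}{k}$ be the (depth-$(j-1)$) node whose subtree is exactly the interval $\inter{j}{k}$, and let $A_k$ and $B_k$ be the intervals spanned by its left and right child subtrees (either possibly empty), so that $\inter{j}{k}=A_k\sqcup\{t_k\}\sqcup B_k$ as a prefix, a singleton, and a suffix. The first step is the bookkeeping observation that $L_{\ge j}$ decomposes along these intervals: every element of $\inter{j}{k}$ except $t_k$ sits at depth $\ge j$, and conversely any node of depth $\ge j$ has a unique ancestor of depth $j-1$ and hence lies in exactly one set $\inter{j}{k}\setminus\{t_k\}=A_k\cup B_k$; since distinct depth-$(j-1)$ nodes have disjoint subtrees, $L_{\ge j}=\bigsqcup_k(A_k\cup B_k)$ and therefore
\[\mu(L_{\ge j})=\sum_k\big(\mu(A_k)+\mu(B_k)\big).\]

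Next I would invoke the median property at the node $t_k$: the mass of either child subtree, together with the mass of $t_k$, is at least the mass of the other child subtree, i.e. $\mu(A_k)\le\mu(B_k)+\mu(t_k)$ and $\mu(B_k)\le\mu(A_k)+\mu(t_k)$ (when one child is absent this still holds, now directly from the defining property that $t_k$ is the smallest element with at least half the mass of $\inter{j}{k}$ below-or-at it). I then define the promised split by attaching the ``middle'' element $t_k$ to whichever of $A_k,B_k$ is lighter: if $\mu(A_k)\le\mu(B_k)$ put $\inter{j}{k,\lleft}:=A_k\cup\{t_k\}$ and $\inter{j}{k,\rright}:=B_k$ (and symmetrically otherwise). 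This is still a genuine left/right (prefix/suffix) partition of $\inter{j}{k}$ because $t_k$ is order-adjacent to both $A_k$ and $B_k$, and by the inequality above the lighter side remains no heavier after receiving $t_k$, so
\[\min\!\big(\mu(\inter{j}{k,\lleft}),\,\mu(\inter{j}{k,\rright})\big)=\max\!\big(\mu(A_k),\,\mu(B_k)\big)\ge\frac{\mu(A_k)+\mu(B_k)}{2}.\]

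Finally I would sum this over $k$ and substitute the first display: $\sum_k\min(\mu(\inter{j}{k,\lleft}),\mu(\inter{j}{k,\rright}))\ge\frac12\sum_k(\mu(A_k)+\mu(B_k))=\mu(L_{\ge j})/2$, which is the claim. The only place that needs care is the bookkeeping in the first step --- correctly accounting for which elements of $\inter{j}{k}$ lie at depth exactly $j-1$ versus $\ge j$, so that $\mu(L_{\ge j})$ splits \emph{exactly} as $\sum_k(\mu(A_k)+\mu(B_k))$ --- together with the degenerate cases of a missing child, where the ``subtree-plus-parent $\ge$ sibling'' inequality has to be re-derived from the definition of the median element rather than quoted; everything else is a one-line consequence of the median property.
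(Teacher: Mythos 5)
Your proof is correct and follows essentially the same route as the paper: split each interval at the depth-$(j-1)$ node, attach that node to the lighter of its two child subtrees, and invoke the median property of the BST to bound the resulting minimum below by the average of the two child-subtree masses. The only difference is that you spell out the bookkeeping identity $\mu(L_{\geq j}) = \sum_k(\mu(A_k)+\mu(B_k))$ and the degenerate missing-child case explicitly, which the paper leaves implicit.
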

\begin{proof} Consider the node $u_k$ corresponding $\inter{j}{k}$, and let the nodes in the left and right subtrees
be $S_\ell$ and $S_r$. If $\mu(S_\ell) \leq \mu(S_r)$, then $\inter{j}{k,\lleft} = S_\ell\cup u_k$ 
and $\inter{j}{k,\rright} = S_r$. Otherwise, $\inter{j}{k,\lleft} = S_\ell$ and $\inter{j}{k,\rright} = u_k \cup S_r$.
By the median property of the BST, $\min(\mu(\inter{j}{k,\lleft}), \mu(\inter{j}{k,\rright})) = \max(\mu(S_\ell),\mu(S_r))$ $\geq (\mu(S_\ell) + \mu(S_r))/2$.
\end{proof}
We describe the non-monotone $g_j$'s and follow up with some claims. Let $\lca(x,y)$ denote the least common ancestor of $x$ and $y$ in $T$.
\begin{equation}
\label{eq:try} g_{j}(x) = 
\left\{ 
\begin{array} {l l l}
2x \quad \textrm{if $x \notin \bigcup_k \inter{j}{k}$}\\
2x + 2(b-m) + 1 \quad \textrm{if $x \in \inter{j}{k,\lleft} = [a,m]$, ~~where $\inter{j}{k} = [a,b]$.}\\
2x - 2(m-a) - 1 \quad \textrm{if $x \in \inter{j}{k,\rright} = [m+1,b]$, where $\inter{j}{k} = [a,b]$.}
\end{array} \right.
\end{equation}
\noindent
\begin{restatable}{claim}{gj}
\label{clm:gj} 
	{\sf (i)} $\distmon{g_j}{\cD}\!\geq\!\frac{\mu(L_{\geq j})}{2}$.
	{\sf (ii)} If $(x,y)$ distinguishes $g_j$ from $h$, then $\lca(x,y)$ lies in level $(j-1)$. 
\end{restatable}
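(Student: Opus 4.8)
Throughout I write $\mu$ for $\mu_\cD$, as in the surrounding text.

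For part (i) the plan is to invoke the vertex-cover characterization of distance (\Lem{characterization}): $\dist_\cD(g_j,\MON)$ is the minimum $\mu$-mass of a vertex cover of $\VG(g_j,\MON)$. The first step is the elementary observation, straight from \Eqn{try}, that on an interval $\inter{j}{k}=[a,b]$ with split point $m$ the function $g_j$ equals $2w+2(b-m)+1$ on the left half $[a,m]=\inter{j}{k,\lleft}$ and $2w-2(m-a)-1$ on the right half $[m+1,b]=\inter{j}{k,\rright}$; hence the smallest value of $g_j$ on $\inter{j}{k,\lleft}$ is $2a+2(b-m)+1$, the largest value on $\inter{j}{k,\rright}$ is $2b-2(m-a)-1$, and these differ by exactly $2$. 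Thus for every $x\in\inter{j}{k,\lleft}$ and $y\in\inter{j}{k,\rright}$ we have $x\le m<m+1\le y$ and $g_j(x)>g_j(y)$, so $(x,y)$ is a violation; equivalently $\VG(g_j,\MON)$ contains the complete bipartite graph between the two halves of each $\inter{j}{k}$. Since a vertex cover of a complete bipartite graph must contain one entire side, any vertex cover $X$ satisfies $\mu(X\cap\inter{j}{k})\ge\min(\mu(\inter{j}{k,\lleft}),\mu(\inter{j}{k,\rright}))$; summing over the pairwise disjoint intervals $\inter{j}{k}$ and applying \Clm{line-interval} gives $\dist_\cD(g_j,\MON)\ge\sum_k\min(\mu(\inter{j}{k,\lleft}),\mu(\inter{j}{k,\rright}))\ge\mu(L_{\geq j})/2$.

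For part (ii), note first that since $h(i)=3i$ is strictly increasing, a pair $(x,y)$ distinguishes $g_j$ from $h$ precisely when $x<y$ and $g_j(x)>g_j(y)$. The plan is a short case analysis on where $x$ and $y$ sit among the ``internal'' nodes of depth $\le j-2$ (on which $g_j(w)=2w$) and the intervals $\inter{j}{1},\inter{j}{2},\dots$; the estimates above show $g_j$ maps any interval $[a,b]$ into $[2a+1,2b+1]$. If $x,y$ are both internal, $g_j(x)=2x<2y=g_j(y)$. If exactly one is internal, or if they lie in two distinct intervals, the left point's value lies weakly below and the right point's value weakly above the boundary separating their blocks, so again $g_j(x)<g_j(y)$. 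Hence a distinguishing pair has $x,y$ in the same interval $\inter{j}{k}$, and since $g_j$ is strictly increasing on each half, necessarily $x\in\inter{j}{k,\lleft}$ and $y\in\inter{j}{k,\rright}$. Finally I would locate $\lca(x,y)$: by the construction in \Clm{line-interval} the split point $m$ of $\inter{j}{k}$ is either the depth-$(j-1)$ node $u_k$ rooting the subtree $\inter{j}{k}$ (when its left subtree is lighter) or $u_k-1$ (when its left subtree is heavier); either way $x$ lies in the left subtree of $u_k$ or equals $u_k$, and $y$ lies in its right subtree or equals $u_k$, so $\lca(x,y)=u_k$, a node of depth $j-1$.

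I expect the only substantive point — and the only place the exact offsets in \Eqn{try} matter — to be the ``$+2$ gap'' in part (i), which makes every left–right cross-pair inside an interval a strict violation rather than a tie. The fiddliest bit is the case analysis in (ii) that excludes ``mixed'' pairs (one internal node and one interval point, or two points in different intervals), which relies on the crude bracketing $g_j([a,b])\subseteq[2a+1,2b+1]$ together with the fact that consecutive intervals and the internal nodes between them are separated in the linear order; one must also keep the two parametrizations $m=u_k$ and $m=u_k-1$ straight when identifying $\lca(x,y)$, noting that a degenerate (empty) half of $\inter{j}{k}$ simply contributes $0$ in (i) and yields no distinguishing pair in (ii).
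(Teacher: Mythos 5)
Your proposal is correct and follows essentially the same route as the paper: the same ``$+2$ gap'' computation to produce a complete bipartite violation structure inside each $\inter{j}{k}$, the same reduction to $\min(\mu(\inter{j}{k,\lleft}),\mu(\inter{j}{k,\rright}))$ via the vertex-cover characterization, and the same use of the bracketing $g_j(\inter{j}{k})\subseteq[2a+1,2b+1]$ to force a distinguishing pair into a single interval for part (ii). The only difference is cosmetic: you spell out the final $\lca(x,y)=u_k$ identification (including the two parametrizations $m=u_k$ or $m=u_k-1$ and the degenerate empty-half case) and the cross-block case analysis in (ii), both of which the paper leaves implicit.
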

\begin{proof} All elements in $\inter{j}{k,\lleft}$ are in violation with all elements in $\inter{j}{k,\rright}$ for all $k$.
To see this, let $x\in \inter{j}{k,\lleft}$ and $y\in \inter{j}{k,\rright}$, and so $x \prec y$.  Denote $\inter{j}{k} = [a,b]$,
$$g_j(x) - g_j(y) = 2x + 2(b-m) + 1 - 2y + 2(m-a) +1  = 2(x-a) +2(b-y) + 2 > 0$$
The vertex cover of the violation graph of $g_i$ has mass at least $\sum_k \min(\mu(\inter{j}{k,\lleft}),\mu(\inter{j}{k,\rright})) \geq \mu(L_{\geq j})/2$ (\Clm{line-interval}). 
This proves part (i). To prove part (ii), let $x \prec y$ distinguish $g_j$ from $h$, so $g_j(x) > g_j(y)$. We claim there exists a $k^*$ such that 
$x \in \inter{j}{k^*,\lleft}$ and $y\in \inter{j}{k^*,\rright}$. 
For any $\inter{j}{k} = [a,b]$, the $g_j$ values lie in $[2a+1,2b+1]$. Hence, if $x \in \inter{j}{k}$
and $y \notin \inter{j}{k}$ (or vice versa), $(x,y)$ is not a violation.
So $x$ and $y$ lie in the same $\inter{j}{k^*}$, But the function restricted to $\inter{j}{k^*,\lleft}$
or $\inter{j}{k^*,\rright}$ is increasing, completing the proof.
\end{proof}
\noindent
The following claim is a simple combinatorial statement about trees.
\begin{restatable}{claim}{lcaa}\label{clm:lca}
Given a subset $Q$ of $[n]$, let $\lca(Q) = \set{\lca(x,y) : x, y \in Q}$. Then $|\lca(Q)| \leq |Q| - 1$.
\end{restatable}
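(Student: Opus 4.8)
The plan is to prove the bound by induction on $|Q|$, exploiting the recursive structure of the \lca{} operation on the rooted tree $T$. As in the intended application, the pairs $x,y$ range over \emph{distinct} elements of $Q$; when $|Q|\le 1$ there are no such pairs, so $\lca(Q)=\emptyset$ and the inequality holds (with equality at $|Q|=1$).

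For the inductive step, fix $Q$ with $k=|Q|\ge 2$ and let $v^*$ be the \lca{} of the entire set $Q$, that is, the root of the minimal subtree of $T$ spanning $Q$. The key observation is that for distinct $x,y\in Q$ the node $\lca(x,y)$ is either $v^*$ itself or lies strictly inside the subtree rooted at one child of $v^*$ --- namely the child whose subtree contains both $x$ and $y$. I then split into two cases. First, if $v^*\in Q$, set $Q'=Q\setminus\{v^*\}$: any pair containing $v^*$ has \lca{} equal to $v^*$, and any pair inside $Q'$ has \lca{} in $\lca(Q')$, so $\lca(Q)\subseteq\{v^*\}\cup\lca(Q')$; the inductive hypothesis then gives $|\lca(Q)|\le 1+(|Q'|-1)=k-1$. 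Second, if $v^*\notin Q$, then by minimality of $v^*$ the elements of $Q$ meet the subtrees of at least two children of $v^*$; let $Q_1,\dots,Q_m$ with $m\ge 2$ and $\sum_{t}|Q_t|=k$ be the nonempty traces of $Q$ on these child subtrees. A pair with endpoints in two different $Q_t$'s has \lca{} equal to $v^*$, while a pair inside a single $Q_t$ has \lca{} lying in $\lca(Q_t)$, which sits strictly below $v^*$ inside that child subtree. Since distinct child subtrees are disjoint, the union $\{v^*\}\cup\bigcup_{t}\lca(Q_t)$ is disjoint and contains $\lca(Q)$, so by induction $|\lca(Q)|\le 1+\sum_{t=1}^{m}(|Q_t|-1)=1+k-m\le k-1$. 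This completes the induction.

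I do not anticipate a genuine difficulty; the one point requiring care is that $\lca(x,y)$ can itself be an element of $Q$ --- precisely when one of $x,y$ is an ancestor of the other --- so $\lca(Q)$ need not be disjoint from $Q$, and a naive counting that treats them as disjoint would fail. The case analysis on whether the global \lca{} $v^*$ belongs to $Q$ is exactly what handles this, and the fact that $m\ge 2$ in the second case is what drives the arithmetic $1+\sum_t(|Q_t|-1)\le k-1$. As a back-up there is a direct, non-inductive proof: writing $Q=\{q_1<\dots<q_k\}$, one shows $\lca(Q)=\{\lca(q_i,q_{i+1}):1\le i\le k-1\}$ using that in a BST over $[n]$ the subtree rooted at any node spans a contiguous sub-interval of $[n]$ (so that $\lca(q_i,q_j)$ is a common ancestor of all of $q_i,\dots,q_j$, and must coincide with one of the consecutive \lca's); the bound $|\lca(Q)|\le k-1$ is then immediate.
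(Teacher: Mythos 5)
Your proof is correct. It is in the same spirit as the paper's proof (an induction that decomposes $Q$ according to subtree structure), but the decomposition is organized differently and, in my view, more cleanly. The paper introduces $P\subseteq Q$, the set of elements of $Q$ with no ancestor in $Q$, and splits into three cases ($P=Q$, $|P|=1$, and $1<|P|<|Q|$); the $P=Q$ case requires a separate, non-inductive observation that $\lca(Q)$ consists of the internal (branching) nodes of the Steiner tree of $Q$, and the $1<|P|<|Q|$ case applies the inductive hypothesis both to $P$ and to each subtree trace $S_p$. You instead anchor the recursion at the global \lca\ $v^*$ of $Q$ and split on whether $v^*\in Q$; this collapses the paper's third case, since every ``cross'' pair contributes only the single node $v^*$, and the arithmetic $1+\sum_t(|Q_t|-1)=1+k-m\le k-1$ (with $m\ge 2$ by minimality of $v^*$) closes the induction with no extra lemma. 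One small remark: for a BST $m=2$ always; your $m\ge 2$ phrasing makes the argument work for arbitrary rooted trees, which is a mild generalization. Your back-up direct argument---that $\lca(Q)=\{\lca(q_i,q_{i+1}):1\le i<k\}$ for the sorted listing $q_1<\cdots<q_k$, using that each BST subtree occupies a contiguous interval of $[n]$---is a genuinely different, non-inductive route not in the paper and is arguably the slickest proof; it does, however, rely on the BST structure, whereas the inductive argument works for any rooted tree.
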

\full{
\begin{proof}
The proof is by induction on $|Q|$. The base case of $|Q| = 2$ is trivial. Suppose $|Q| > 2$. 
Consider the subset $P \subseteq Q$ of all elements of $Q$, none of whose ancestors are in $Q$.
Also observe that if $P = Q$, then $\lca(Q)$ are precisely the internal nodes of a binary tree whose leaves are $Q$, and therefore $|\lca(Q)| \leq |Q|-1$.
If $P$ is a singleton, then $\lca(Q) = \lca(Q \setminus P) + 1 \leq |Q \setminus P| -1 + 1 = |Q|-1$ (inequality from induction hypothesis).
So assume $P \subset Q$ and $|P| \neq 1$.
For $p \in P$, let $S_p$ be the set of elements of $Q$ appearing in the tree rooted at $p$. 
For every $x \in S_p$ and $y \in S_{p'}$ ($p \neq p'$), $\lca(x,y) = \lca(p, p')$.
Furthermore, the sets $S_p$ non-trivially partition $Q$.
 Therefore, $\lca(Q) = \lca(P) \cup \bigcup_{p \in P}\lca(S_p)$.  Applying the induction hypothesis, 
 $|\lca(Q)| \leq |P| - 1 + \sum_{p \in P} |S_p| - |P| = |Q| - 1$. 
\end{proof}
}
\noindent
Let $\ell_\eps$ be the largest $\ell$ such that $\mu(L_{\geq \ell}) \geq 2\eps$. By \Clm{gj}.(i), the collection of functions $\{g_1,\ldots,g_{\scriptscriptstyle \ell_\eps}\}$ are each $\eps$-far from monotone.  By \Clm{gj}.(ii) and \Clm{lca}, a subset $Q \subseteq [n]$ can't distinguish more than $|Q|$ of these functions from $h$.
 \Thm{lb-frame} gives an $\Omega(\ell_\eps)$ lower bound and \Thm{lb-line} follows from \Clm{lepsilon_is_large}.
\begin{claim}\label{clm:lepsilon_is_large}
$\ell_\epsilon \geq \rho\Delta^*(\cD)$.
\end{claim}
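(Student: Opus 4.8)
The plan is to invoke the stability hypothesis on a carefully perturbed distribution. The idea is that $\ell_\eps$ is precisely the depth of the median BST $T$ at which $2\eps$ worth of $\cD$-mass has been consumed, so the distribution obtained from $\cD$ by ``pushing the deep mass up to the root'' is $2\eps$-close to $\cD$ yet has optimal BST depth at most $\ell_\eps$; stability then forbids $\Delta^*(\cD)$ from exceeding $\ell_\eps/\rho$. Concretely, set $m := \mu(L_{\geq \ell_\eps + 1})$, the $\cD$-mass of the nodes of $T$ at depth at least $\ell_\eps + 1$; by maximality of $\ell_\eps$ in its definition, $m < 2\eps$.

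First I would define a distribution $\cD'$ on $[n]$ by setting $\cD'(x) = 0$ for every $x$ with $\depth_T(x) \geq \ell_\eps + 1$, adding the removed mass $m$ to the $\cD$-mass of the root of $T$, and leaving $\cD'(x) = \cD(x)$ for every remaining $x$. This is a legitimate probability distribution over $[n]$ (hence, over the line, a product distribution), and since the only changes from $\cD$ are a total decrease of $m$ over the deep nodes together with a compensating increase of $m$ at the root (whose depth is $0$, so it is never a deep node), we get $\|\cD - \cD'\|_{\TV} = m < 2\eps$.

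Next I would bound $\Delta^*(\cD')$ using the tree $T$ itself as a witness: every node carrying positive $\cD'$-mass has $T$-depth at most $\ell_\eps$ — the root because its depth is $0$, and every other surviving node because it already had depth at most $\ell_\eps$ — so $\Delta^*(\cD') \leq \Delta(T;\cD') = \sum_x \cD'(x)\,\depth_T(x) \leq \ell_\eps$. Finally, applying $(2\eps,\rho)$-stability of $\cD$ to $\cD'$, which is permitted since $\|\cD - \cD'\|_{\TV} < 2\eps$, yields $\Delta^*(\cD') \geq \rho\,\Delta^*(\cD)$; chaining the two inequalities gives $\ell_\eps \geq \Delta^*(\cD') \geq \rho\,\Delta^*(\cD)$, which is the claim. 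I do not expect a genuine obstacle here — the construction is essentially forced by the definition of $\ell_\eps$ — and the only point worth stating carefully is that $\cD'$ lives over the same ordered ground set $[n]$, so that $T$, with all $n$ of its labelled nodes retained, remains a valid BST for it and the zero-mass deep nodes contribute nothing to the expected depth.
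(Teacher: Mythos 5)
Your proof is correct and takes essentially the same approach as the paper: truncate the mass below depth $\ell_\eps$, check that the resulting $\cD'$ is within $2\eps$ in TV distance, use the tree $T$ (restricted to positive-mass nodes) to witness $\Delta^*(\cD') \leq \ell_\eps$, and invoke stability. The only (cosmetic) difference is that you push the removed mass onto the root, whereas the paper redistributes it proportionally over the surviving vertices — both give TV distance exactly $\mu(L_{\geq \ell_\eps+1}) < 2\eps$ and the rest of the argument is identical.
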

\begin{proof}
Consider the distribution $\cD'$ that transfers all the mass from $L_{\geq \ell_\eps+1}$ to the remaining vertices proportionally. 
That is, if $\nu:=\mu(L_{\geq \ell_\eps+1})$, then  $\mu_{\cD'}(i) = 0$ for $i\in L_{\geq \ell_\eps+1}$, and $\mu_{\cD'}(i) = \mu_{\cD}(i)/(1-\nu)$ for the rest. 
Observe that $||\cD - \cD'||_{\TV} = \mu_{\cD}(L_{\geq \ell_\eps+1}) < 2\eps$. Also observe that since $T$ is a binary tree of height $\ell_\eps$, $\ell_\eps \geq \Delta^*(\cD')$: the LHS is the max depth, the RHS is the (weighted) average depth. Now, we use stability of $\cD$. Since $\cD$ is $(2\eps,\rho)$-stable, $\Delta^*(\cD') \geq \rho\Delta^*(\cD)$. 
\end{proof} 
\full{
\subsection{The Boolean hypercube}\label{sec:lb-cube}
For the boolean hypercube, the lower bound doesn't require the stability assumption. 
Any product distribution over $\{0,1\}^d$ is determined by the $d$ fractions $(\mu_1,\ldots,\mu_d)$, where $\mu_r$ is the probability of $0$ on the $r$-th coordinate. Let $\theta_r := \min(\mu_r,1-\mu_r)$. 
\begin{theorem}\label{thm:lb-cube}
Any monotonicity tester w.r.t. $\cD$ for functions $f:\{0,1\}^d\mapsto\NN$ with distance parameter $\epsilon\leq 1/10$ must make $\Omega\left(\sum_{r=1}^d \min(\mu_r,1-\mu_r)\right)$ queries.
\end{theorem}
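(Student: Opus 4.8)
The plan is to apply the monotonicity lower‑bound framework \Thm{lb-frame} on the domain $\{0,1\}^d$, producing a family of $L=\Omega\!\big(\sum_{r}\theta_r\big)$ hard functions, where $\theta_r:=\min(\mu_r,1-\mu_r)$. I would use the monotone reference function $h(x)=\sum_{i=1}^{d}2^{i}x_i$, and for a set $S\subseteq[d]$ define
\[
g_S(x)\;=\;h(x)-\sum_{r\in S}2^{r+1}x_r
\]
(shifted by a constant to be $\NN$‑valued). Because all coefficients involved are super‑increasing, the sign of $h(x)-h(y)$, and likewise of $g_S(x)-g_S(y)$, is controlled by the largest coordinate $i^\star=i^\star(x,y)$ on which $x$ and $y$ differ. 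A one‑line case check then gives: \textbf{(i)} for $x\prec y$, the pair is a monotonicity violation of $g_S$ iff $\max(y\setminus x)\in S$; and \textbf{(ii)} $(x,y)$ distinguishes $g_S$ from $h$ (that is, $h(x)<h(y)$ and $g_S(x)>g_S(y)$) iff $i^\star(x,y)\in S$ and $x_{i^\star}=0$.

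For the distance requirement, $\bigcup_{r\in S}\{x:x_r=v_r\}$, where $v_r$ is the lighter value of coordinate $r$ (so $\Pr_\cD[x_r=v_r]=\theta_r$), is a vertex cover of the violation graph of $g_S$ by \textbf{(i)}, giving $\dist_\cD(g_S,\MON)\le 1-\prod_{r\in S}(1-\theta_r)\le\sum_{r\in S}\theta_r$. For a matching lower bound I would build a fractional matching in the violation graph supported on the hypercube edges $(z,z+\be_r)$, $r\in S$, weighting each such edge by a multiple of its lighter endpoint's $\cD$‑mass and pinning, for the edge in direction $r$, the coordinates of $S$ preceding $r$ (in a fixed order) to their heavy value so that every vertex capacity is respected; the contributions then telescope to value $\Omega\!\big(\sum_{r\in S}\theta_r\big)$ as long as $\sum_{r\in S}\theta_r=O(1)$, whence $\dist_\cD(g_S,\MON)=\Theta\!\big(\sum_{r\in S}\theta_r\big)$. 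Now greedily partition $[d]$ into blocks $S_1,\dots,S_L$ (discarding one leftover block) so that each $\sum_{r\in S_t}\theta_r$ lies in $[C\eps,\,C\eps+\tfrac12]$ for an absolute constant $C$ large enough that $\dist_\cD(g_{S_t},\MON)\ge\eps$; this is possible since each $\theta_r\le\tfrac12$, and since each block carries $\theta$‑mass $O_\eps(1)$ we obtain $L=\Omega\!\big(\sum_r\theta_r\big)$ (the target bound being vacuous when $\sum_r\theta_r=O(1)$).

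It remains to check the ``few distinguishing pairs'' hypothesis: any $Q\subseteq\{0,1\}^d$ distinguishes at most $|Q|$ of the $g_{S_t}$ from $h$. Since the $S_t$ partition $[d]$, \textbf{(ii)} implies $Q$ distinguishes $g_{S_t}$ only if some pair $x,y\in Q$ has $i^\star(x,y)\in S_t$, so the number of distinguished functions is at most $\big|\{\,i^\star(x,y):x,y\in Q\,\}\big|$. Reading coordinates from $d$ down to $1$, $i^\star(x,y)$ is exactly the level at which the root‑to‑$x$ and root‑to‑$y$ paths diverge in the binary trie spanned by $Q$, so this quantity is the number of distinct levels of branching nodes of that trie, which is at most $|Q|-1$ --- the combinatorial fact behind \Clm{lca}. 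Feeding $h,g_{S_1},\dots,g_{S_L}$ into \Thm{lb-frame} then yields the claimed $\Omega\!\big(\sum_r\min(\mu_r,1-\mu_r)\big)$ lower bound. The step I expect to be the real work is the distance lower bound $\dist_\cD(g_{S_t},\MON)=\Omega\!\big(\sum_{r\in S_t}\theta_r\big)$: violations coming from distinct coordinates of a block overlap inside the cube (already for a block of two balanced coordinates a naive vertex cover costs only $\tfrac12$ rather than $\sum_{r\in S}\theta_r$), so engineering the fractional matching, and choosing the blocking constant so the per‑coordinate contributions genuinely add up to something $\ge\eps$ while tracking the light/heavy bookkeeping, is the delicate part; the distinguishing characterization and the trie estimate are routine.
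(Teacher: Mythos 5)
Your proposal is correct in outline and genuinely takes a different route from the paper's. The paper sorts the coordinates, groups them into blocks $I_a$ of $\theta$-mass in $[1/2,1)$, and uses \emph{block-AND indicators} $\chi_a(x)=\mathbf{1}[\forall i\in I_a,\ x_i=1]$ to build the hard functions $h=\sum_a 2^a\chi_a$ and $g_a = h - (2^a+1)\chi_a$. This makes the distance argument essentially trivial: conditioning on the non-$I_a$ coordinates, the violation graph on each slice is a star between the singleton $X_1(\v)$ and its complement $X_0(\v)$, so any vertex cover must contain one side entirely, and both sides have conditional mass $\ge e^{-2}>1/10$. You instead use \emph{linear} hard functions $g_S(x)=\sum_{r\notin S}2^r x_r - \sum_{r\in S}2^r x_r$ and characterize violations and distinguishing pairs by the top differing coordinate $i^\star$. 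That part is clean and your capture count (trie/lca argument, matching the paper's Claim~\ref{clm:lca}/\ref{clm:cap}) is fine.

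The place where the two routes genuinely diverge in difficulty is the distance lower bound, and you correctly flag it. The paper's $\chi_a$ trick yields a constant lower bound directly; with linear $g_S$ you must show $\dist_\cD(g_S,\MON)\gtrsim\sum_{r\in S}\theta_r$, and your fractional-matching sketch does work: with the ordering $r_1<\cdots<r_k$ of $S$, weighting direction-$r_j$ edges supported on the slab where $r_i$ is heavy for $i<j$ by a constant multiple of the lighter endpoint's mass respects the capacities with $w\ge 1/(1+\sum_j\theta_{r_j}/(1-\theta_{r_j}))$, and the telescoping total value is $w(1-\prod_j(1-\theta_{r_j}))$. This is $\Omega(T)$ for block mass $T=O(1)$, but the hidden constant is noticeably worse than the paper's and depends on $T$ non-monotonically, so the blocking constant $C$ must be chosen in a window (roughly $T\in[0.2,1]$ works for $\eps\le 1/10$; ``large enough $C$'' is not the right intuition, since the $\prod(1-\theta)$ factor decays). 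You can rescue a cleaner statement by using that $\dist(g_{S'},\MON)\le\dist(g_S,\MON)$ whenever $S'\subseteq S$, so you only need the matching bound at the lower end of each block. In short: same framework (Theorem~\ref{thm:lb-frame}, block partition, trie-based capture count), different hard functions; the paper's indicator construction buys a much simpler and constant-robust distance bound, while your linear construction is arguably more natural and unifies the ``top coordinate'' viewpoint, at the price of a more delicate matching/antichain estimate whose constants you would still have to pin down.
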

We begin with the basic setup. A tester for non-trivial $\eps$ makes
at least $1$ query, so we can assume that $\sum_{r=1}^d \theta_r > 1$. 
For ease of exposition, assume $\theta_r = \mu_r$, for all $1\leq r \leq d$. 
(If not, we need to divide into two cases depending on $\theta_r$ and argue analogously for each case.)
Assume wlog $\theta_1\leq \theta_2 \leq \cdots \leq \theta_d$. 
Partition $[d]$ into contiguous segments $I_1,\ldots,I_b,I_{b+1}$ such that for each $1\leq a\leq b$,  $\sum_{r\in I_a}\theta_r \in [1/2,1)$. Observe that $b = \Theta\left(\sum_r\theta_r\right)$.
For $1\leq a\leq b$, define the indicator functions $\chi_a:\{0,1\}^d\mapsto\{0,1\}$ as follows:
$$ \chi_a(x) = 
\left\{ 
\begin{array} {l l}
1& \quad \textrm{if $\forall i \in I_a, x_i=1$}\\
0& \quad \textrm{otherwise \quad ($\exists i \in I_a, x_i = 0$)}
\end{array} \right.
$$
By \Thm{lb-frame}, we need to define the set of functions with appropriate properties.
The monotone function $h(\cdot)$ is defined as $h(x) = \sum_{a=1}^b \chi_a(x)2^a$. The functions $g_1,\ldots,g_b$ are defined as 
$$ g_a(x) = 
\left\{ 
\begin{array} {l l}
h(x) - 2^{r} - 1& \quad \textrm{if $\chi_a(x) = 1$}\\
h(x) & \quad \textrm{if $\chi_a(x) = 0$}
\end{array} \right.
$$
We prove all the desired properties.
\begin{claim}
For all $a$, $\dist_\cD(g_a,\MON) \geq 1/10$.
\end{claim}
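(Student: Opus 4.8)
The plan is to lower-bound the $\mu_\cD$-mass of every vertex cover of $\VG(g_a,\MON)$ by passing to a transparent subgraph and then invoking \Lem{characterization}. First I would compute $g_a$ on a single \emph{column}: fix the coordinates outside $I_a$ to an arbitrary $z\in\{0,1\}^{[d]\setminus I_a}$ and set $C_z=\{(u,z):u\in\{0,1\}^{I_a}\}$. Every $\chi_{a'}$ with $a'\neq a$ depends only on coordinates of $I_{a'}\subseteq[d]\setminus I_a$, so on $C_z$ we have $h(x)=c_z+2^a\chi_a(x|_{I_a})$ for a constant $c_z$ depending only on $z$, and hence $g_a(x)=h(x)-(2^a+1)\chi_a(x)=c_z-\chi_a(x|_{I_a})$. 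Thus $g_a$ equals $c_z$ on all of $C_z$ except at the single point $p_z:=(\mathbf{1}_{I_a},z)$, where it equals $c_z-1$.

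Next I would read off the violations. In $C_z$ the point $p_z$ dominates every other point $q$ in the coordinatewise order, and $g_a(q)=c_z>c_z-1=g_a(p_z)$, so each pair $(q,p_z)$ with $q\in C_z\setminus\{p_z\}$ is a violation; since $g_a$ takes only two values on $C_z$ with $p_z$ the unique minimizer, these are the \emph{only} within-column violations. So the within-column violation graph is a star centered at $p_z$ with $2^{|I_a|}-1$ leaves. Because the columns $C_z$ partition $\{0,1\}^d$, the union $H$ of these stars over all $z$ has edge set contained in that of $\VG(g_a,\MON)$, so every vertex cover of $\VG(g_a,\MON)$ is a vertex cover of $H$, and \Lem{characterization} yields $\dist_\cD(g_a,\MON)\ge\min\{\mu_\cD(X):X\text{ a vertex cover of }H\}$. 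As $H$ is a disjoint union of stars, this equals $\sum_z\min\bigl(\mu_\cD(p_z),\mu_\cD(C_z\setminus\{p_z\})\bigr)$. Writing $\nu:=\prod_{i\in I_a}(1-\mu_i)=\Pr_{x\sim\cD}[\chi_a(x)=1]$ and letting $\cD_{-I_a}$ be the marginal of $\cD$ on coordinates outside $I_a$, we have $\mu_\cD(p_z)=\nu\,\mu_{\cD_{-I_a}}(z)$ and $\mu_\cD(C_z\setminus\{p_z\})=(1-\nu)\,\mu_{\cD_{-I_a}}(z)$, so the sum is $\min(\nu,1-\nu)\sum_z\mu_{\cD_{-I_a}}(z)=\min(\nu,1-\nu)$.

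Finally I would bound $\nu$ using the segment property $\sum_{i\in I_a}\theta_i\in[1/2,1)$ together with $\theta_i=\mu_i\le 1/2$ (the standing assumption of the subsection; the general case is analogous). Since $4^{-t}\le 1-t\le e^{-t}$ for $t\in[0,1/2]$, we get $\nu=\prod_{i\in I_a}(1-\theta_i)\ge 4^{-\sum_i\theta_i}>4^{-1}=\tfrac14$ and $\nu\le e^{-\sum_i\theta_i}\le e^{-1/2}$, so $1-\nu\ge 1-e^{-1/2}>\tfrac14$. Hence $\dist_\cD(g_a,\MON)\ge\min(\nu,1-\nu)>\tfrac14\ge\tfrac1{10}$. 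I expect the only step needing care is the justification that discarding the cross-column violations (i.e.\ replacing $\VG(g_a,\MON)$ by the star forest $H$) can only decrease the minimum vertex-cover mass, and the verification that the within-column violation graph really is a star; both become routine once $g_a$ is rewritten as $c_z-\chi_a$ on each column.
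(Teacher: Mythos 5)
Your proof is correct and follows essentially the same route as the paper: both partition the cube into columns by fixing the coordinates outside $I_a$, observe that $g_a$ is constant on each column except at the all-ones point for $I_a$ (so any vertex cover must take that point or all the rest of the column), and then bound the resulting mass via the same exponential estimates applied to $\sum_{i\in I_a}\theta_i\in[1/2,1)$. Your explicit identification of the within-column violation graph as a star forest and your use of $1-t\ge 4^{-t}$ (rather than the paper's $1-t\ge e^{-2t}$) are harmless presentational variants that happen to yield the slightly better constant $1/4$ instead of $1/10$.
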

\begin{proof} Let $I$ denote $I_a$, and $J = [n] \setminus I$.
Think of $x = (x_{I}, x_{J})$. Fix $\v$ in $\{0,1\}^{|J|}$,
and define sets $X_1(\v) := \{x | \chi_a(x) = 1, x_{J}=\v\}$ (a singleton) and 
$X_0(\v) = \{x | \chi_a(x) = 0, x_{J}=\v\}$. 
Note that $\bigcup_\v(X_1(\v)\cup X_0(\v))$ forms a partition of the cube.
For $c \neq a$, $\chi_c(x)$ is the same for all $x \in (X_0(\v) \cup X_1(\v))$. Hence, for {\em any} $x \in X_0(\v)$ and $y \in X_1(\v)$,
$x \prec y$ and $g_a(x) > g_a(y)$.
Any vertex cover in the violation graph must contain either $X_1(\v)$ or $X_0(\v)$, for each $\v$. 
Let $\cD_I$ be the conditional distribution on the $I$-coordinates. 
In the following, we use the inequalities $\sum_{i \in I}\theta_i \in [1/2,1)$ and $1 - t \in [e^{-2t},e^{t}]$ for $t\leq 1/2$.
\begin{eqnarray*}
\mu_{\cD_I}(X_1(\v)) & = & \prod_{i\in I}(1-\theta_i) \geq \exp(-2\sum_{i\in I}\theta_i) \geq e^{-2} > 1/10 \\
\mu_{\cD_I}(X_0(\v)) & = & 1 - \prod_{i\in I}(1-\theta_i) \geq 1 - \exp(-\sum_{i\in I}\theta_i) \geq 1 - e^{-1/2} > 1/10.
\end{eqnarray*}
For each $\v$, the conditional mass of the vertex cover is at least $1/10$, and therefore, the $\mu_\cD$ mass of the vertex cover is at least $1/10$.
\end{proof}
A pair $x,y$ in $[n]^d$ \emph{captures} index $a$
if $a$ is the largest index such that $\chi_a(x) \neq \chi_a(y)$. Furthermore, a set $Q$ captures $a$ if it contains a pair capturing $a$.
\begin{restatable}{claim}{hccap}
\label{clm:hccap}
If $Q$ distinguishes $g_a$ from $h$, then $Q$ must capture $a$.
\end{restatable}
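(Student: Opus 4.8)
The plan is to prove the slightly stronger statement that a \emph{single} pair $(x,y)$ which distinguishes $g_a$ from $h$ must itself capture $a$; since $Q$ distinguishing $g_a$ from $h$ means some pair of $Q$ does, and $Q$ captures $a$ as soon as it contains a pair capturing $a$, the claim follows at once. So I would fix a pair with $h(x) < h(y)$ and $g_a(x) > g_a(y)$ and argue entirely about it.

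First I would pin down $\chi_a(x)$ and $\chi_a(y)$. Recall that $g_a$ equals $h$ on $\{z : \chi_a(z) = 0\}$ and equals $h - 2^a - 1$ on $\{z : \chi_a(z) = 1\}$. A short case check on $(\chi_a(x),\chi_a(y))$ does the work: if $\chi_a(x) = \chi_a(y)$ then $g_a(x) - g_a(y) = h(x) - h(y) < 0$, a contradiction; if $\chi_a(x) = 1,\ \chi_a(y) = 0$ then $g_a(x) - g_a(y) = (h(x) - h(y)) - (2^a+1) < 0$, again a contradiction. Hence $\chi_a(x) = 0$ and $\chi_a(y) = 1$, and unpacking $g_a(x) > g_a(y)$ gives $h(y) - h(x) < 2^a + 1$; since $h$ is integer-valued and $h(y) > h(x)$, this yields the two-sided bound $1 \le h(y) - h(x) \le 2^a$.

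The core of the argument is then a carry-free binary estimate showing that no higher index can differ. Write $d_c := \chi_c(y) - \chi_c(x) \in \{-1,0,1\}$, so that $h(y) - h(x) = \sum_{c=1}^{b} d_c\, 2^c$ with $d_a = 1$; the goal is $d_c = 0$ for all $c > a$. Isolating the high-order part, $\sum_{c > a} d_c\, 2^c = (h(y) - h(x)) - 2^a - \sum_{c < a} d_c\, 2^c$. Using $1 \le h(y) - h(x) \le 2^a$ and $\bigl|\sum_{c<a} d_c\, 2^c\bigr| \le \sum_{c=1}^{a-1} 2^c = 2^a - 2$, the right-hand side lies strictly inside $(-2^{a+1}, 2^{a+1})$; but the left-hand side is an integer multiple of $2^{a+1}$, hence it equals $0$. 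Finally, were some $d_c$ nonzero for $c > a$, taking the largest such index $c^*$ would give $\bigl|\sum_{c>a} d_c\, 2^c\bigr| \ge 2^{c^*} - \sum_{a < c < c^*} 2^c = 2^{c^*} - (2^{c^*} - 2^{a+1}) = 2^{a+1} > 0$, a contradiction. So all those $d_c$ vanish, meaning $a$ is the largest index $c$ with $\chi_c(x) \ne \chi_c(y)$: the pair $(x,y)$ captures $a$, and therefore so does $Q$.

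The only delicate point — mild, but genuine — is this last binary step, and specifically the need to use that $d_a$ is pinned to $+1$ (not merely $d_a \ne 0$) together with the smallness bound $h(y) - h(x) \le 2^a$. Without the sign being forced, the value $2^a$ could equally well be realized by $d_{a+1} = 1,\ d_a = -1$, so a strictly higher $\chi$-bit could disagree; it is precisely the case analysis of the first step (which rules out $\chi_a(x) = 1,\ \chi_a(y) = 0$) that fixes the sign, and the ``$-1$'' in the definition of $g_a$ that keeps the threshold at $2^a$. Everything else is routine bookkeeping, so I expect the write-up to be short.
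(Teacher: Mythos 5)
Your proof is correct and takes essentially the same route as the paper: both hinge on the observation that $g_a$ differs from $h$ by only $2^a+1 < 2^{a+1}$, so a reversal between $g_a$ and $h$ forces all $\chi_c$-bits above $a$ to agree, established by carry-free binary arithmetic and the ``largest differing index'' trick. The paper runs it as a one-shot contradiction (assume a larger differing index $c$, lower-bound $h(y)-h(x)$, conclude $g_a(y)>g_a(x)$), whereas you first extract the two-sided bound $1 \le h(y)-h(x) \le 2^a$ and then use a divisibility-by-$2^{a+1}$ argument; the two are interchangeable, and your extra case analysis pinning $\chi_a(x)=0,\chi_a(y)=1$ fills in a step the paper states without proof.
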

\begin{proof} Consider  $x, y \in Q$ where $h(x) < h(y)$ but $g_a(x) > g_a(y)$.
It must be that $\chi_a(x) = 0$ and $\chi_a(y) = 1$. Suppose this pair does not capture $a$.
There must exist index $c > a$ (let it be the largest) such that $\chi_c(x) \neq \chi_c(y)$. Because $h(x) < h(y)$, 
$\chi_c(x) = 0$ and $\chi_c(y) = 1$. 
By definition, $g_a(y) - g_a(x) = (h(y) -2^a - 1) - h(x)$.
We have $h(y) - h(x) = \sum_{t=1}^c (\chi_t(y) - \chi_t(x))2^t$.
Since $\chi_a(x) = \chi_c(x) = 0$ and $\chi_a(y) = \chi_c(y) = 1$,
$h(y) - h(x) \geq 2^c + 2^a - \sum_{t < c:t\neq a} 2^t$.
Combining, 
$$ g_a(y) - g_a(x) \geq 2^c + 2^a - \sum_{t < c:t\neq a} 2^t  - 2^a - 1 = 2^a > 0.$$
\end{proof}
\begin{claim}\label{clm:cap}[Lifted from~\cite{ChSe13-2}.]
 A set $Q$ captures at most $|Q|-1$ coordinates.
\end{claim}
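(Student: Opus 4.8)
The plan is to reduce the claim to a purely combinatorial fact about binary strings, of exactly the same flavour as \Clm{lca}. First I would replace each point $x\in\{0,1\}^d$ by its \emph{signature} $\chi(x):=(\chi_1(x),\dots,\chi_b(x))\in\{0,1\}^b$, read as a $b$-bit string with coordinate $b$ most significant. Directly from the definition of the indicator functions $\chi_a$, a pair $x,y$ captures index $a$ exactly when $a$ is the largest coordinate on which $\chi(x)$ and $\chi(y)$ differ; hence which indices are captured by $Q$ depends only on the set $S:=\{\chi(x):x\in Q\}\subseteq\{0,1\}^b$, and $|S|\le|Q|$. So it suffices to show that $C(S)$, the set of indices arising as the top disagreement of some pair of strings in $S$, satisfies $|C(S)|\le|S|-1$.

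I would prove $|C(S)|\le|S|-1$ by induction on $|S|$, branching on the top coordinate — this mirrors the proof of \Clm{lca} (alternatively, one can invoke \Clm{lca} itself, applied to the complete binary trie on $\{0,1\}^b$, with least common ancestors playing the role of branching levels). The base case $|S|\le 1$ is trivial, since then $C(S)=\emptyset$. For the inductive step, write $S=S_0\cup S_1$ according to the value of coordinate $b$, and let $S_0',S_1'$ be these two sets with coordinate $b$ deleted. If one side is empty, then no pair disagrees at coordinate $b$, so $C(S)=C(S_0')$ (say) and induction finishes. If both sides are nonempty, then coordinate $b$ is captured, every ``mixed'' pair captures exactly $b$, and every ``same-side'' pair captures exactly what its image in $S_0'$ or $S_1'$ captures; thus $C(S)=\{b\}\cup C(S_0')\cup C(S_1')$, and induction gives $|C(S)|\le 1+(|S_0|-1)+(|S_1|-1)=|S|-1$. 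Feeding this through $|S|\le|Q|$ gives the claim, and combined with \Clm{hccap} it shows that any query set $Q$ distinguishes at most $|Q|-1$ of the $g_a$'s from $h$, as \Thm{lb-frame} requires.

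There is essentially no serious obstacle here; the two points needing a moment of care are (i) verifying that the capture relation really factors through the signature map, which is immediate from the definitions, and (ii) observing in the recursion that deleting the constant top coordinate of a same-side pair does not change which lower coordinate is its top disagreement. The only genuine bookkeeping is checking that the three pieces $\{b\}$, $C(S_0')$, $C(S_1')$ are exhaustive and that their sizes add up with the $-1$'s precisely as in \Clm{lca}.
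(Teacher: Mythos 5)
Your proof is correct and follows essentially the same route as the paper's: an induction with a binary split, using the observation that every cross pair captures exactly the split index while same-side pairs are handled recursively. One small bookkeeping point: because you split on the fixed top coordinate $b$ (rather than, as the paper does, on the largest \emph{captured} index, which forces both halves to be nonempty), the case where one side is empty leaves $|S'_0|=|S|$ unchanged, so the induction must formally be on the number of coordinates $b$ (or on the pair $(b,|S|)$) rather than on $|S|$ alone; this is easily patched and does not affect the substance. Your abstraction of passing to the signature map $\chi$ and proving a clean statement about the set $S\subseteq\{0,1\}^b$ is a tidy way to make explicit that capture depends only on signatures, which the paper leaves implicit.
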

\begin{proof} We prove this by induction on $|Q|$. When $|Q| = 2$, this is trivially true.
Otherwise, pick the largest coordinate $j$ captured by $Q$ and let $Q_0 =\{x: x_j = 0\}$ and $Q_1 = \{x:x_j = 1\}$. 
By induction, $Q_0$ captures at most $|Q_0|-1$ coordinates, and $Q_1$ captures at most $|Q_1| -1$ coordinates.
Pairs $(x,y)\in Q_0\times Q_1$ only capture coordinate $j$. The total number of captured coordinates is at most $|Q_0| - 1 + |Q_1| - 1 + 1 = |Q| - 1$.
\end{proof}
We can now invoke \Thm{lb-frame} to get an $\Omega(b)=\Omega(\sum_r\theta_r)$ lower bound thereby proving \Thm{lb-cube}.
\medskip
The hypercube lower bound can be generalized to give a weak lower bound for hypergrids,
which will be useful for proving the stronger bound.
Fix a dimension $r$. For any $1\leq j\leq n$, define $\theta^j_r := \min(\sum_{k\leq j} \mu_{\cD_r}(k), 1 - \sum_{k\leq j} \mu_{\cD_r}(k))$. Define $\theta_r := \max_{1\leq j\leq n} \theta^j_r$.
Note that $\theta_r$ generalizes the above definition for the hypercube. The following theorem follows by a reduction to the hypercube lower bound. 
\begin{restatable}{theorem}{hgpart}
\label{thm:hg-part1}\label{thm:lb-hg-part1}
Any monotonicity tester on the hypergrid with distance parameter $\eps\leq 1/10$, makes
$\Omega\left(\sum_{r=1}^d \theta_r\right)$ queries.
\end{restatable}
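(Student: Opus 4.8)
The plan is to reduce monotonicity testing on $[n]^d$ with respect to $\cD$ to monotonicity testing on $\{0,1\}^d$ with respect to a suitable product distribution, and then invoke \Thm{lb-cube}. For each dimension $r$, fix a threshold $j_r$ attaining the maximum in the definition of $\theta_r$, so that $\theta_r = \theta^{j_r}_r = \min(p_r,1-p_r)$ with $p_r := \sum_{k\le j_r}\mu_{\cD_r}(k)$. Define the coordinatewise collapse $\phi_r:[n]\to\{0,1\}$ by $\phi_r(k)=0$ iff $k\le j_r$, and set $\phi=(\phi_1,\dots,\phi_d):[n]^d\to\{0,1\}^d$. Since each $\phi_r$ is nondecreasing, $\phi$ is order-preserving. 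Let $\cD'$ be the pushforward of $\cD$ under $\phi$; it is the product distribution on $\{0,1\}^d$ with $\Pr[z_r=0]=p_r$, so that $\min(p_r,1-p_r)=\theta_r$ for every $r$.

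The core step is the distance inequality: for every $g:\{0,1\}^d\to\NN$, $\dist_\cD(g\circ\phi,\MON)\ \ge\ \dist_{\cD'}(g,\MON)$ (equality in fact holds, but only this direction is needed). I will prove it using a \emph{random monotone section} of $\phi$: draw $\psi_r(0)$ from $\cD_r$ conditioned on $\{1,\dots,j_r\}$ and $\psi_r(1)$ from $\cD_r$ conditioned on $\{j_r{+}1,\dots,n\}$, independently across $r$, and set $\psi=(\psi_1,\dots,\psi_d)$. Every realization of $\psi$ is order-preserving (the low segment lies entirely below the high segment) and satisfies $\phi\circ\psi=\mathrm{id}$; moreover, if $z\sim\cD'$ then $\psi(z)\sim\cD$. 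Now pick a monotone $h':[n]^d\to\NN$ achieving $\dist_\cD(g\circ\phi,\MON)$. For each fixed $\psi$, the function $h'\circ\psi$ is monotone, so $\dist_{\cD'}(g,\MON)\le\Pr_{z\sim\cD'}[g(z)\neq h'(\psi(z))]=\Pr_{z\sim\cD'}[(g\circ\phi)(\psi(z))\neq h'(\psi(z))]$, using $\phi\circ\psi=\mathrm{id}$. Averaging over $\psi$ and using $\psi(z)\sim\cD$ yields $\dist_{\cD'}(g,\MON)\le\Pr_{x\sim\cD}[(g\circ\phi)(x)\neq h'(x)]=\dist_\cD(g\circ\phi,\MON)$, as claimed. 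The companion easy fact is that $g$ monotone implies $g\circ\phi$ monotone, because $\phi$ is order-preserving.

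With these in hand the reduction is immediate. Given any $Q$-query $\eps$-tester $\mathcal{A}$ for $\MON$ on $[n]^d$ with respect to $\cD$, build a tester $\mathcal{B}$ for $\MON$ on $\{0,1\}^d$ with respect to $\cD'$: on query access to $g$, simulate $\mathcal{A}$, answering each query $x\in[n]^d$ of $\mathcal{A}$ by $g(\phi(x))$, so that $\mathcal{A}$ is effectively run on $g\circ\phi$ (the map $\phi$ is determined by $\cD$, which $\mathcal{B}$ may use). If $g$ is monotone then $g\circ\phi$ is monotone and $\mathcal{A}$, hence $\mathcal{B}$, accepts with probability $>2/3$; if $\dist_{\cD'}(g,\MON)>\eps$ then by the distance inequality $\dist_\cD(g\circ\phi,\MON)>\eps$, so $\mathcal{A}$, hence $\mathcal{B}$, rejects with probability $>2/3$. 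Thus $\mathcal{B}$ is a $Q$-query $\eps$-tester with $\eps\le 1/10$, and \Thm{lb-cube} forces $Q=\Omega\big(\sum_{r=1}^d\min(p_r,1-p_r)\big)=\Omega\big(\sum_{r=1}^d\theta_r\big)$. (When $\sum_r\theta_r=O(1)$ the bound is vacuous, so this may be assumed large.)

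The step I expect to be the main obstacle is the distance inequality $\dist_\cD(g\circ\phi,\MON)\ge\dist_{\cD'}(g,\MON)$: a naive deterministic section of $\phi$ transports $\cD'$ to a distribution supported on only $2^d$ corner points rather than to $\cD$, so a careless correction argument loses control of the measure; it is precisely the randomized-section device — together with the observation that a uniformly random monotone section transports $\cD'$ exactly to $\cD$ — that makes the argument go through. The remaining ingredients (order-preservation of $\phi$, the pushforward computation, and feeding the result into \Thm{lb-cube}) are routine.
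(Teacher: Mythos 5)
Your proof is correct and follows the same reduction as the paper: both fix thresholds $j_r$, collapse the hypergrid to $\{0,1\}^d$ via the same coordinatewise map, push $\cD$ forward to the same $\cD'$, observe that $\min(p_r,1-p_r)=\theta_r$, simulate the hypergrid tester to get a hypercube tester, and invoke \Thm{lb-cube}. The one place you diverge is the distance step: the paper merely asserts $\dist_{\cD'}(f,\MON)=\dist_\cD(f\circ\phi,\MON)$ with a pointer that it is ``akin to'' \Thm{bhg} (which proceeds via a vertex-cover bijection between the two violation graphs), whereas you prove the one needed inequality directly via a random monotone section $\psi$ of $\phi$ --- drawing each $\psi_r(0)$, $\psi_r(1)$ from the conditional laws of $\cD_r$ on the low and high segments, so that $\psi$ is order-preserving, satisfies $\phi\circ\psi=\mathrm{id}$, and pushes $\cD'$ forward to exactly $\cD$. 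That coupling argument is clean and self-contained, avoids any appeal to the violation-graph machinery, and in my view is a nicer way to discharge the step the paper leaves implicit; it buys transparency at essentially no cost. (The only formality worth a sentence: if some $\theta_r=0$ then one of the two conditioning events for $\psi_r$ may be null; such coordinates contribute nothing to $\sum_r\theta_r$ and can simply be dropped or $\psi_r$ set arbitrarily on the null branch.)
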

\begin{proof}
For $1\leq r \leq d$, let $1\leq j_r\leq n$ be the $j$ such that $\theta_r = \theta^j_r$. Project the hypergrid onto a Boolean hypercube using the following mapping $\psi:[n]^d\to\{0,1\}^d$: for $x\in [n]^d$, $\psi(x)_r = 0$ if $x_r \leq j_r$, and $1$ otherwise. The corresponding product distribution $\cD'$ on the hypercube puts $\mu_{\cD'_r}(0) = \sum_{k\leq j_r}\mu_{\cD_r}(k)$, for all $r$. Note that $\min(\mu_r,1-\mu_r) = \theta_r$. Given any function $f$ on $\{0,1\}^d$, extend it to $g$ over the hypergrid in the natural way: for $x\in [n]^d$, $g(x) = f(\psi(x))$. 
Note that $\dist_{\cD'}(f,\MON) = \dist_{\cD}(g,\MON)$. (This is akin to \Thm{bhg}.) 
Any tester for $g$ over $[n]^d$ induces a tester for $f$ on $\{0,1\}^d$ with as good a query complexity: whenever the hypergrid tester queries $x\in [n]^d$, the hypercube tester queries $\psi(x)$. Therefore, the lower bound \Thm{lb-cube} for the hypercube implies \Thm{hg-part1}.
\end{proof}
}
\full{
\subsection{The Hypergrid.}\label{sec:lb-hypergrid}\label{sec:lb-hg}
Our main lower bound result is the following, which implies \Thm{the-lower-bound} via \Thm{red}.
\begin{theorem}\label{thm:lb-hg}
For any parameter $\epsilon < 1/10$, 
and for any $(\const\eps,\rho)$-stable, product distribution $\cD$,
any (even adaptive, two-sided) montonicity tester w.r.t. $\cD$ for functions $f:[n]^d \mapsto \NN$ with proximity parameter $\eps$
requires $\Omega(\rho\Delta^*(\cD))$ queries.
\end{theorem}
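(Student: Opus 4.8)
The plan is to apply the monotonicity lower‑bound framework \Thm{lb-frame}: exhibit a monotone $h:[n]^d\to\NN$ together with functions $g_1,\dots,g_L:[n]^d\to\NN$ such that every $\dist_\cD(g_t,\MON)\ge\eps$ and any $Q\subseteq[n]^d$ distinguishes at most $|Q|$ of the $g_t$ from $h$, with $L=\Omega(\rho\Delta^*(\cD))$ (\Thm{the-lower-bound} then follows from \Thm{lb-hg} via the reduction \Thm{red}). As in \Sec{lb-line} the construction is driven by search trees: for each dimension $r$ take the \emph{median} BST $T_r$ for $\cD_r$ and set $\eps^r_j:=\mu_{\cD_r}(\{x:\depth_{T_r}(x)\ge j\})$, so $\sum_{j\ge1}\eps^r_j=\Exp_{\cD_r}[\depth_{T_r}]$. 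Fix rapidly increasing ``dimension weights'' $W_1\ll\cdots\ll W_d$ and let $h(x)=\sum_r W_r\phi_r(x_r)$ for order‑preserving encodings $\phi_r$ with wide gaps; $h$ is monotone. For every $r$ and level $j$ of $T_r$ I imitate the line perturbation of \Sec{lb-line} inside coordinate $r$ at scale $\Theta(W_r)$: on each level‑$(j{-}1)$ interval of $T_r$ swap the values of its left and right halves by just enough to flip the $h$‑order across that interval. The resulting $g^r_j$ is non‑monotone, its violations occur only between pairs $(x,y)$ agreeing on all dimensions above $r$ whose $r$‑coordinates lie on opposite sides of a common level‑$(j{-}1)$ interval of $T_r$ (the weight gap makes such a pair ``capture'' dimension $r$), and $\dist_\cD(g^r_j,\MON)=\Theta(\eps^r_j)$ exactly as in \Clm{gj}. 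Since perturbations attached to distinct dimensions act on disjoint families of captured pairs, the function $h+\sum_{r\in S}(\text{perturbation }g^r_{j_r})$ built from perturbations in distinct dimensions has violation graph whose minimum vertex cover has mass $\Theta(\sum_{r\in S}\eps^r_{j_r})$ --- \emph{the distances add}.

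Next, the aggregation. Because individual $\eps^r_j$ can be $\ll\eps$ when the marginals are unstable, I fix a threshold $\tau=2\eps/d$, keep only the pairs with $\eps^r_j\ge\tau$ (a prefix of levels in each dimension, as $\eps^r_j$ decreases in $j$), and partition the kept pairs into bundles $Y_1,\dots,Y_L$, each using \emph{at most one level per dimension} and with $\sum_{(r,j)\in Y_t}\eps^r_j=\Theta(\eps)$; since a bundle may gather up to $d$ kept levels, each of mass $\ge\tau$, it can always be grown until its mass reaches $2\eps$, and the non‑kept (deep) levels have total mass $\le d\tau=O(\eps)$. The hard functions $g_t:=h+\sum_{(r,j)\in Y_t}(\text{perturbation }g^r_j)$ then have distance $\ge\tfrac12\sum_{(r,j)\in Y_t}\eps^r_j\ge\eps$ to monotonicity, by the additivity above. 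For the count, put $\cD'=\prod_r\cD'_r$ where $\cD'_r$ equals $\cD_r$ but with the mass of the non‑kept levels of $T_r$ moved to the root of $T_r$; then $\|\cD-\cD'\|_\TV=O(\eps)\le\const\eps$, the truncated $T_r$ is a valid BST for $\cD'_r$ so $\Delta^*(\cD'_r)\le\sum_{(r,j)\text{ kept}}\eps^r_j$, hence $\Delta^*(\cD')=O(\eps L)$, and stability of $\cD$ gives $\rho\Delta^*(\cD)\le\Delta^*(\cD')=O(\eps L)$, i.e.\ $L=\Omega(\rho\Delta^*(\cD))$.

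Then, the distinguishing count. If $(x,y)$ distinguishes $g_t$ from $h$, the weight gap forces the top dimension $r$ on which $\phi_r(x_r)\ne\phi_r(y_r)$ to satisfy $(r,j)\in Y_t$, where $j$ is the unique level with $\lca_{T_r}(x_r,y_r)$ at depth $j{-}1$ and $x_r,y_r$ on opposite sides of that interval; thus each pair ``captures'' at most one bundle. To improve the crude $\binom{|Q|}{2}$ bound to $|Q|$, one recurses in the style of \Clm{cap} and \Clm{lca}: split $Q$ by the position of its top coordinate relative to $T_d$ --- cross‑parts can only capture bundles that use dimension $d$, within‑parts recurse on the lower dimensions --- while inside a fixed dimension the level structure is controlled by the $\lca$‑counting of \Clm{lca}. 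This yields that $Q$ captures at most $|Q|-1$ bundles, completing the application of \Thm{lb-frame}.

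The main obstacle is the aggregation together with its interaction with the distinguishing count: one must bundle the elementary perturbations into partial transversals over the dimensions that are individually $\Theta(\eps)$‑far (forcing the use of the median/balance property of the $T_r$'s both to get the $\Theta(\eps^r_j)$ distance bound and to bound $\Delta^*(\cD'_r)$), waste only $O(\eps)$ total mass so that stability at $\const\eps$ applies and $\rho$ enters exactly as in \Thm{lb-line}, keep the perturbations orthogonal enough that their distances genuinely add \emph{inside} a bundle (no cheap vertex cover may be shared between the per‑dimension violation families), and yet retain enough rigidity that one query pair distinguishes at most one bundle. The $d=1$ case of \Thm{lb-line} already shows the $\rho$ factor is unavoidable, so controlling the discarded mass via stability is intrinsic rather than an artifact; carrying out the bundling and the orthogonality/rigidity estimates simultaneously is the delicate technical heart of the argument.
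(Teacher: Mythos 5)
Your overall strategy --- apply \Thm{lb-frame}, build hard functions by aggregating per-dimension perturbations driven by median BSTs, and count bundles via stability through a $\cD'$ with truncated trees --- is genuinely the same roadmap the paper follows. The distinguishing count by ``capture'' recursion and the $\|\cD-\cD'\|_{\TV}=O(\eps)$ accounting are also right. But there is a real gap in the step you yourself flag as the crux: the claim that distances add inside a bundle.

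Your bundles are chosen by the global threshold $\eps^r_j\ge\tau=2\eps/d$. That keeps $\eps^r_{\psi(r)}$ from being too small, but places \emph{no} constraint on $\eps^r_{\psi(r)-1}$. The additivity you need is of the form: the vertex-cover mass of $\VG(g_t,\MON)$ is at least $\sum_{r}\tfrac{\eps^r_{\psi(r)}}{2}\cdot\mu_{\cD_{-r}}\bigl(\prod_{s\ne r}\abo{s}{\psi(s)-1}\bigr)$, where $\abo{s}{\psi(s)-1}$ is the ``shallow'' part of $T_s$ (mass $1-\eps^s_{\psi(s)-1}$). The disjointness of the per-dimension violating regions --- which is what lets you sum --- only applies when you intersect with these shallow regions, and so the per-$r$ contribution carries the factor $\prod_{s\ne r}(1-\eps^s_{\psi(s)-1})$. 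Nothing in your threshold bounds this factor away from $0$: e.g.\ with $d=2$, $\eps=0.005$, $\tau=0.005$, and each $T_r$ having $\eps^r_{\psi(r)-1}=0.99$, $\eps^r_{\psi(r)}=0.01$, your bundle $\{(1,\psi(1)),(2,\psi(2))\}$ has kept mass $0.02>\eps$, yet the vertex-cover mass is only $\approx 2\cdot\tfrac{0.01}{2}\cdot 0.01=10^{-4}\ll\eps$. So $g_t$ is not $\eps$-far. The paper avoids this by restricting to ``allowed'' levels satisfying $\below{r}{j}\geq\below{r}{j-1}/2$ (a \emph{local} doubling condition), which makes the shallow product $\ge\prod_s(1-2\below{s}{\psi(s)})\ge e^{-4}$ once $\sum_s\below{s}{\psi(s)}<1$; your threshold is not a substitute for that condition. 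Your counting argument then needs to be re-done with the allowed-level filter in place (the paper's \Lem{countX} does exactly this, and has to treat the level-1 mass $\sum_r\below{r}{1}$ separately via the hypercube bound of \Thm{heavy}), which is a second, smaller gap you leave unaddressed: there is no guarantee your kept levels can be partitioned into bundles of mass $\Theta(\eps)$ without a leftover whose contribution to $\Delta^*(\cD')$ you would need to account for.
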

}
\full{\subsubsection{The intuition}\label{sec:lb-intuit}}
\submit{\subsection{Intuition for the Hypergrid Proof}\label{sec:lb-intuit}}
Since we already have a proof for $d=1$ in \Sec{lb-line}, an obvious
approach to prove \Thm{the-lower-bound} is via some form of induction on the dimension.
Any of the
$g_j$-functions on $[n]$ in \Sec{lb-line} can be extended the obvious way to a function on $[n]^d$.
Given (say) $g_j:[n] \mapsto \NN$, we can define $f:[n]^d \mapsto \NN$ as $f(x) = g_j(x_1)$.
Thus, we embed the hard functions for $\cD_1$ along dimension $1$. One can envisage a way do the same for dimension $2$, 
and so on and so forth, thereby leading to $\sum_i \Delta^*(\cD_i)$ hard functions in all.
There is a caveat here. The construction of \Sec{lb-line} for (say) $\cD_1$ requires
the stability of $\cD_1$. Otherwise, we don't necessarily get $\Omega(\Delta^*(\cD_1))$ functions
with distance at least $\epsilon$.
For instance, if the root of the median BST has more than $(1-\epsilon)$ fraction of  the weight, we get at most one hard function of distance at least $\epsilon$.
So, the above approach requires stability of all the \emph{marginals} of $\cD$. Unfortunately, there exist stable product
distributions with all marginals unstable. Consider $\cD = \prod_r \cD_r$, where
each $\cD_r = (\frac{1}{(n-1)d},\ldots,\frac{1}{(n-1)d}, 1-\frac{1}{d})$.
Note that $\Delta^*(\cD) \approx \log n$.
Each $\cD_r$ is individually  unstable (for $\eps > 1/d$), since there is a $\cD'_i$ with all the mass on the $n$th coordinate, such that $\|\cD_i - \cD'_i\|_{TV} = 1/d$ and $\Delta^*(\cD'_i) = 0$.
On the other hand, it is not hard to see that $\cD$ is $(1/100,1/100)$-stable. 
A new idea is required to construct the lower bound.
\full{To see this, suppose there is a product distribution $\cD'$
such that $\Delta^*(\cD') < \Delta^*(\cD)/100 = (\log n)/100$. Markov's inequality implies that
for $\Omega(d)$ dimensions, $\|\cD'_r - \cD_r\|_{\TV} = \Omega(1/d)$. A calculation shows that $\|\cD - \cD'\|_{\TV}$
must be at least $1/100$. In sum, for any constants $\eps,\rho$, there exist $(\eps,\rho)$-stable distributions $\cD$ such that
each marginal $\cD_r$ is only $(\eps/d,\rho)$-stable.
This is a major roadblock for a lower bound construction, and therefore
a new idea is required.}
\full{We design an \emph{aggregation} technique that does the following. Start with 1D functions $g^1_{j_1}$ and $g^2_{j_2}$
that are hard functions from \Sec{lb-line} for $\cD_1$ and $\cD_2$ respectively. Suppose the corresponding
distances to monotonicity are $\eps^{(1)}$ and $\eps^{(2)}$. We construct a function $f:[n]^d \mapsto \NN$
that is $\eps^{(1)} + \eps^{(2)}$-far, so we can effectively add their distances. If we can aggregate
$\Omega(d)$ 1D functions, each with distance $\eps/d$, then we get a desired hard function.
As can be expected, this construction is quite delicate, because we embed violations in many dimensions
simultaneouly. Furthermore, we need to argue that this aggregation can produce enough ``independent"
hard functions, so we get a large enough lower bound (from \Thm{lb-frame}). And that is where the hard work lies.
}
\submit{Our way out of the impasse is to {\em aggregate} hard functions from different dimensions in such a way that their distances to monotonicity add up.
To explain this better, we introduce some notation. Call the median BST for $\cD_r$ as $T_r$, and let us use $\Delta_r$ to denote the expected depth of $T_r$ w.r.t. $\cD_r$.
Abusing notation, let $\Delta(\cD) = \sum_{r=1}^d \Delta_r$. 
Let $\dep{r}{j} := \{x: \depth_{T_r}(x) \geq j\}$.  We use the shorthand $\below{r}{j}$ to denote $\mu_{\cD_i}(\dep{r}{j})$. 
The discussion of the previous paragraph points to the issue that $\below{r}{j}$ can be $\ll \epsilon$ for $j \approx \Delta_r$. 
However, the global sum of all these quantities is large enough, as following simple but crucial claim asserts.
This follows since $\Exp[Z] = \sum_{k\in \N}\Pr[Z\geq k]$, for any non-negative, integer valued random variable.
\begin{claim} \label{clm:search-depth} 
$\sum_r \sum_{j\geq 1} \below{r}{j} = \sum_r \Exp_{x\sim \cD_r}[\depth_{T_r}(x)] = \Delta(\cD)$.
\end{claim}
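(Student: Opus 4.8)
The plan is to observe that the first equality is nothing more than the layer-cake (tail-sum) formula for the expectation of a non-negative integer-valued random variable, applied separately in each coordinate, and that the second equality is just the definition of $\Delta(\cD)$.

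First I would fix a dimension $r$ and consider the random variable $Z_r := \depth_{T_r}(x)$ where $x \sim \cD_r$. Since every node of the median BST $T_r$ sits at a non-negative integer depth, $Z_r$ is a non-negative integer-valued random variable whose tail events are exactly the layers we care about: for each $j \geq 1$, $\{\,x : Z_r \geq j\,\} = \dep{r}{j}$, and hence $\Pr_{x \sim \cD_r}[Z_r \geq j] = \mu_{\cD_r}(\dep{r}{j}) = \below{r}{j}$.

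Next I would apply the identity $\EX[Z] = \sum_{k \geq 1}\Pr[Z \geq k]$, which holds for any non-negative integer-valued $Z$ by writing $Z = \sum_{k \geq 1}\mathbf{1}[Z \geq k]$ and interchanging the (non-negative) sum with the expectation (Tonelli). Applied to $Z_r$ this gives $\sum_{j \geq 1}\below{r}{j} = \Exp_{x\sim\cD_r}[\depth_{T_r}(x)] = \Delta_r$. Summing this over $r \in [d]$ yields $\sum_r\sum_{j\geq 1}\below{r}{j} = \sum_r \Delta_r = \Delta(\cD)$, the last step being the (abused) definition of $\Delta(\cD)$.

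There is no substantive obstacle here; the only points worth a sentence of care are that the inner sum over $j$ is finite (because $T_r$ has finite height, so $\below{r}{j} = 0$ for $j$ exceeding that height) and that the interchange of summation and expectation is legitimate since all terms are non-negative.
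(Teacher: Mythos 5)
Your proposal is correct and matches the paper's argument exactly: the paper justifies the claim with the one-line observation that $\Exp[Z] = \sum_{k\in \N}\Pr[Z\geq k]$ for any non-negative integer-valued $Z$, which is precisely the layer-cake identity you apply to $Z_r = \depth_{T_r}(x)$ after noting $\Pr_{x\sim\cD_r}[Z_r \geq j] = \below{r}{j}$. You have simply spelled out the details that the paper leaves implicit.
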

The aggregation step picks a collection of hard functions $g^r_{j_r}$s as constructed in \Sec{lb-line},
each corresponding to the layer $j_r$ of $T_r$, with at most one function per tree, 
such that $\sum_{r ~\textrm{picked}}\below{r}{j_r} \approx \epsilon$. 
We form a single hard function from such a collection whose distance to monotonicity is of the 
order of this sum. 
The stability of $\cD$ is crucially used (with \Clm{search-depth}) to argue that $\Omega(\Delta^*(\cD))$ such
`independent' collections exist, so the framework of \Thm{lb-frame} applies.
This technical details are quite subtle and can be found in the full version of the paper.
}
\full{
\subsubsection{Setup and Construction} 
Fix $\epsilon$ and let $\epsilon' = \const\epsilon$. Fix 
the $(\epsilon',\rho)$-stable distribution $\cD$. 
Since $\cD$ is $(\epsilon',\rho)$-stable, for any $\cD'$ with $\|\cD'-\cD\|_{\TV} \leq \epsilon'$, we have 
$\Delta^*(\cD') \geq \rho\Delta^*(\cD)$. 
We denote the median BST for $\cD_r$ as $T_r$, $\Delta_r$ as the expected depth w.r.t. $\cD_r$, and $\Delta(\cD) = \sum_{r=1}^d \Delta_r$. 
The following shows that the median BST is near optimal.
}
\full{
\begin{lemma} \label{lem:median} For any product distribution $\cD = \prod_r \cD_r$, $\Delta(\cD) \leq 5\Delta^*(\cD)$.
\end{lemma}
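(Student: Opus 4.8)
The plan is to reduce the statement to its one‑dimensional content and then prove a clean per‑marginal bound by a short case analysis plus induction. Since $\Delta(\cD)=\sum_{r=1}^d\Delta_r$ and $\Delta^*(\cD)=\sum_{r=1}^d\Delta^*(\cD_r)$, it suffices to show that for every distribution $\mu$ on a finite interval the median‑BST cost $f(\mu):=\Exp_{v\sim\mu}[\depth_{T}(v)]$ (where $T$ is the median tree of $\mu$) satisfies $f(\mu)\le 5\Delta^*(\mu)$; summing over $r$ with $\mu=\cD_r$ then gives the lemma. Both $f$ and $\Delta^*$ are linear under scaling $\mu$ by a constant, so I may work with sub‑probability measures and normalize to a probability distribution whenever convenient.

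First I would record two standard ingredients. (a) \emph{Upper bound on the median tree.} The construction guarantees that every node $v$ has $\mu_\cD(I_v)\le 2^{-\depth_T(v)}$ for its subtree–interval $I_v$ (each split sends at most half the remaining mass to either child — the quantitative form of the median property), hence $\depth_T(x)\le\log_2(1/\mu(x))$ for every key $x$, and therefore $f(\mu)\le\sum_x\mu(x)\log_2(1/\mu(x))=H(\mu)$. (b) \emph{Lower bound on the optimal tree.} Encoding key $x$ by the sequence of left/right turns on its root‑path followed by a ``stop'' symbol yields, for any BST $T'$, a prefix‑free ternary code of length $\depth_{T'}(x)+1$; by Kraft's inequality and the noiseless source‑coding bound, $\Delta^*(\mu)+1\ge H(\mu)/\log_2 3$, i.e. $H(\mu)\le\log_2 3\,(\Delta^*(\mu)+1)$ (this relation is also in~\cite{Melhorn75}). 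I also use the trivial $\Delta^*(\mu)\ge 1-\max_x\mu(x)$, valid since every BST roots at a single key.

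With these, the argument is a two‑case induction on $|\mathrm{supp}\,\mu|$ (base cases $|\mathrm{supp}\,\mu|\le 1$ are trivial, $f=\Delta^*=0$). If $\Delta^*(\mu)\ge 1/2$, combine (a) and (b): $f(\mu)\le H(\mu)\le\log_2 3\,(\Delta^*(\mu)+1)\le\log_2 3\cdot 3\Delta^*(\mu)=3\log_2 3\cdot\Delta^*(\mu)<5\Delta^*(\mu)$, using $1\le 2\Delta^*(\mu)$ and $3\log_2 3<5$. If $\Delta^*(\mu)<1/2$, then $\max_x\mu(x)>1/2$ by the trivial bound; let $p$ be the unique heaviest key. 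The median tree roots at $p$ (the cumulative mass of $\{1,\dots,p\}$ is the first to reach $1/2$), and \emph{every} optimal tree must also root at $p$ — otherwise $\depth_{T^*}(p)\ge 1$ forces $\Delta^*(\mu)\ge\mu(p)>1/2$, a contradiction. By the optimal‑substructure property of BSTs and the recursive definition of the median tree, both quantities decompose identically: $f(\mu)=(1-\mu(p))+f(\mu|_{<p})+f(\mu|_{>p})$ and $\Delta^*(\mu)=(1-\mu(p))+\Delta^*(\mu|_{<p})+\Delta^*(\mu|_{>p})$, where $\mu|_{<p},\mu|_{>p}$ have strictly smaller support. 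Applying the induction hypothesis and $1-\mu(p)\ge 0$ gives $f(\mu)\le 5(1-\mu(p))+5\Delta^*(\mu|_{<p})+5\Delta^*(\mu|_{>p})=5\Delta^*(\mu)$, closing the induction.

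The only genuine obstacle is the additive ``$-1$'' in the entropy lower bound for $\Delta^*$: a naive chain $f(\mu)\le H(\mu)\le\log_2 3\,(\Delta^*(\mu)+1)$ gives only an additive‑plus‑multiplicative bound, and an additive slack summed over $d$ coordinates would be fatal. The case split is precisely what kills it — the additive term is harmless once $\Delta^*(\mu)$ is bounded below by a constant, and when $\Delta^*(\mu)$ is below that constant the distribution is so concentrated that the heaviest key is forced to the root of both trees, letting the induction take over with no loss. I expect the write‑up to be short; note that the constant $5$ is exactly calibrated to $3\log_2 3=4.75\ldots$ and to the clean recursion in the concentrated case.
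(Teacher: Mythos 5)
Your proposal matches the paper's own proof essentially step for step: the same entropy bounds ($\Delta_r \leq H(\cD_r)$ from the median splitting, and $H(\cD_r) \leq \log_2 3\,(\Delta^*(\cD_r)+1)$ from the left/right/stop ternary-code argument attributed to Mehlhorn), the same case split at $\Delta^*(\cD_r) \geq 1/2$, and the same induction in the concentrated case where the heavy key forces both roots to coincide. Your write-up is if anything slightly cleaner in the inductive step — the inequality chain as printed in the paper has the roles of $\Delta$ and $\Delta^*$ transposed (it reads as bounding $\Delta^*$ by $5\Delta$, which is trivial), whereas your version $f(\mu)\le 5(1-\mu(p))+5\Delta^*(\mu|_{<p})+5\Delta^*(\mu|_{>p})=5\Delta^*(\mu)$ states the intended induction correctly.
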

\begin{proof} 
Fix a coordinate $r$. The depth of a vertex $u$ in $T_r$ is at most $\log_2(1/\mu_{\cD_r}(u))$, so we get $\Delta_r \leq H(\cD_r)$, the Shannon entropy of $\cD_r$.
It is also known (cf. Thm 2 in~\cite{Melhorn75}) that $H(\cD_r) \leq \log_23(\Delta^*(\cD_r) + 1)$. To see this, notice that any BST can be converted into a prefix-free ternary code
of expected length $(\Delta^*(\cD_r) +1)$, say, over the alphabet `left',`right', and `stop'.
Therefore, if $\Delta^*(\cD_r) \geq 1/2$, we have $\Delta_r \leq 5\Delta^*(\cD_r)$.
If $\Delta^*(\cD_r) < 1/2$, then since $\Delta^*(\cD_r) \geq 1 - \Pr[\textrm{root}]$, 
we get $\mu^* := \mu_{\cD_r}(u^*) > 1/2$ where $u^*$ is the root of the optimal BST $T^*$. But this implies $u^*$ is also the root of $T_r$ by construction of the median BST. Now we can prove via induction.
If $p$ and $q$ are the total masses of the nodes in the left and right sub-tree of $T^*$ (and therefore also $T_r$), and $\Delta^*_1$ (resp. $\Delta_1$) and $\Delta^*_2$ (resp. $\Delta_2$)be the expected depths of these subtrees in $T^*$ (resp. $T_r$), then we get, $\Delta^*(\cD_r) = p\Delta^*_1  + q\Delta^*_2 + (1-\mu^*)  \leq 5p\Delta_1 + 5q\Delta_2 + (1-\mu^*) \leq 5\Delta_r$.
\end{proof}
}
\noindent
\full{
\Thm{lb-frame} requires the definition of a monotone function and a collection of $\eps$-far from monotone
functions with additional properties. The monotone function is $\val(x) := \sum_{r=1}^d 2(2n+1)^r x_r$.
The non-monotone functions (which we refer to as ``hard" functions) are constructed via aggregation. 
From \Sec{lb-line}, for each dimension $r$ and each level $j \geq 1$ in tree $T_r$,
we have a 1D ``hard" function $\hard{r}{j}:[n] \mapsto \NN$. 
It is useful to abstract out some of the properties of $\hard{r}{j}$ that were
proved in \Sec{lb-line}.
Let $\lev{r}{j}$ be the nodes in $T_r$ at level $j$. Each level corresponds to a collection of intervals of $[n]$. 
We use $\dep{r}{j} := \bigcup_{j' \geq j} \lev{r}{j'}$ and $\abo{r}{j} = \bigcup_{j' < j} \lev{r}{j'}$.
We use the shorthand $\below{r}{j}$ to denote $\mu_{\cD_r}(\dep{r}{j})$. The following lemma is a restatement of \Clm{line-interval} and \Clm{gj}.
\begin{lemma} \label{lem:hard-prop} Consider $\hard{r}{j}:[n] \mapsto \NN$, for $j \geq 1$
	All violations to monotonicity are contained in intervals corresponding to $\lev{r}{j-1}$, and the distance to monotonicity is at least $\below{r}{j}/2$. 
	Furthermore, any violation $(x,y)$ has $\lca(x,y)$ in $\lev{r}{j-1}$.
\end{lemma}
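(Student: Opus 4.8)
The plan is to observe that $\hard{r}{j}$ is exactly the one-dimensional hard function $g_j$ constructed in \Sec{lb-line}, now instantiated with the marginal $\cD_r$ and its median BST $T_r$; the lemma is then the translation of \Clm{line-interval} and \Clm{gj} into the present notation, under the dictionary $\dep{r}{j}\leftrightarrow L_{\geq j}$, $\below{r}{j}=\mu_{\cD_r}(\dep{r}{j})\leftrightarrow \mu(L_{\geq j})$, and ``$\inter{j}{k}$'' $\leftrightarrow$ ``the subinterval of $[n]$ spanned by the subtree of $T_r$ rooted at the $k$-th node of $\lev{r}{j-1}$''. So I would not reprove anything from scratch; I would re-extract the two structural facts that do the work.

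First, for the assertions about the location of violations, I would record the value-range fact implicit in the proof of \Clm{gj}: from \Eqn{try}, on an interval $\inter{j}{k}=[a,b]$ all values of $\hard{r}{j}$ lie in $[2a+1,2b+1]$, whereas off $\bigcup_k\inter{j}{k}$ the function is simply $x\mapsto 2x$. Since the intervals $\inter{j}{k}$ and the ``gap'' points between them occur in increasing order of their index sets, the corresponding value ranges are themselves monotonically ordered. Hence a violating pair $(x,y)$ --- that is, $x<y$ with $\hard{r}{j}(x)>\hard{r}{j}(y)$ --- can only arise with $x,y$ inside a common interval $\inter{j}{k^*}$, and, since $\hard{r}{j}$ is strictly increasing on each of $\inter{j}{k^*,\lleft}$ and $\inter{j}{k^*,\rright}$, we must have $x\in\inter{j}{k^*,\lleft}$ and $y\in\inter{j}{k^*,\rright}$. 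This gives at once that every violation lies in an interval corresponding to $\lev{r}{j-1}$, and that $\lca(x,y)$ is the node $u_{k^*}\in\lev{r}{j-1}$ to which $\inter{j}{k^*}$ corresponds (whether $x$ lies in the left subtree and $y$ in the right, or one of $x,y$ equals $u_{k^*}$ itself), which is the third assertion.

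Second, for the distance bound I would use the complementary arithmetic check, also from the proof of \Clm{gj}: every point of $\inter{j}{k,\lleft}$ forms a violation with every point of $\inter{j}{k,\rright}$. Consequently any vertex cover of $\VG(\hard{r}{j},\MON)$ must, for each $k$, contain all of $\inter{j}{k,\lleft}$ or all of $\inter{j}{k,\rright}$, so its $\mu_{\cD_r}$-mass is at least $\sum_k\min\big(\mu_{\cD_r}(\inter{j}{k,\lleft}),\mu_{\cD_r}(\inter{j}{k,\rright})\big)\geq\mu_{\cD_r}(\dep{r}{j})/2=\below{r}{j}/2$ by \Clm{line-interval}; \Lem{matchmatch} (distance equals minimum vertex-cover mass) then yields $\dist_{\cD_r}(\hard{r}{j},\MON)\geq\below{r}{j}/2$.

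I do not anticipate a genuine obstacle: this lemma is bookkeeping over the construction of \Sec{lb-line}. The only point meriting a word of care is the degenerate case where a node of $\lev{r}{j-1}$ is a leaf of $T_r$, so its interval is a single point and one of $\inter{j}{k,\lleft},\inter{j}{k,\rright}$ is empty --- such $k$ contribute no violations and nothing to the sum, so the estimates are unaffected.
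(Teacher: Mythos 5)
Your proposal is correct and matches the paper's own treatment: \Lem{hard-prop} is stated in the text to be a direct restatement of \Clm{line-interval} and \Clm{gj} under the notational dictionary you spell out, and the facts you re-extract (value ranges on intervals, the all-pairs left--right violations, the vertex-cover argument via \Lem{matchmatch}, and the median-property bound) are exactly those in the proofs of those claims. Your remark about the degenerate leaf case is a harmless clarification that does not change the argument.
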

}
\full{
The aggregation process takes as input a \emph{map} $\psi:[d] \mapsto \{\bot\} \cup \{2,3,4,\ldots\}$. Note that 
if $\psi(r) \neq \bot$, then $\psi(r) > 1$. Informally, $\psi(r)$, when not equating to $\bot$, tells us the level of $T_r$ whose hard function is to be included in the aggregation.
We define $\mappos := \{r | \psi(r) \neq \bot\}$, the subset of relevant dimensions.
Given the map $\psi$, we aggregate the collection of 1D functions $\{\hard{r}{\psi(r)} | r \in \mappos\}$ into a single hard function for $[n]^d$ as follows.
\begin{equation}
\label{eq:gg}
\gg(x) := \sum_{r \in \mappos} (2n+1)^r \hone{r}{\psi(r)}(x_r) + \sum_{r \notin \mappos} 2(2n+1)^r x_r
\end{equation}
Observe that the latter sum is identical to the corresponding portion in $\val(x)$.
The first summand takes the hard function corresponding to the $\psi(r)$th level of $T_r$ for $r\in \mappos$ and aggregates them via multiplying them with a suitable power of $(2n+1)$.
\begin{definition} \label{def:psi} A map $\psi$ is \emph{\bf useful} if the following are true.
\begin{asparaitem}
	\item $\sum_{r \in \mappos} \below{r}{\psi(r)} \in (\eps',1)$
	\item For all $r \in \mappos$, $\below{r}{\psi(r)} \geq \frac{\below{r}{\psi(r)\!-\!1}}{2}$.
\end{asparaitem}
\end{definition}
\noindent
In plain English, the first point states that total distance of the hard functions picked should be at least $\epsilon'$. The second point is a technicality which is required to argue about the  distance of the aggregated function. It states that in each relevant $T_r$, the total mass on the nodes lying in the $\psi(r)$th layer and below shouldn't be much smaller than the total mass on the nodes lying on the $(\psi(r)\!-\!1)$th layer and below.
}
\full{
\begin{lemma}\label{lem:dist-to-mon} If $\psi$ is useful, $\dist_\cD(\gg,\MON) \geq \eps$.
\end{lemma}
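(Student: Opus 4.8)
The plan is to lower bound the distance of the aggregated function $\gg$ to monotonicity by exhibiting, for each relevant dimension $r \in \mappos$, a large "mass of violations" that is concentrated in disjoint regions of the hypergrid and cannot be simultaneously repaired. Concretely, I would show $\dist_\cD(\gg,\MON) \geq \sum_{r \in \mappos} \below{r}{\psi(r)}/C$ for a small constant $C$, which by usefulness of $\psi$ (first bullet of \Def{psi}) is at least $\eps'/C \geq \eps$ once $\const$ is chosen large enough.

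First I would analyze which pairs form violations of $\gg$. Because of the base-$(2n+1)$ packing in \Eqn{gg}, the value $\gg(x)$ is dominated by its highest-order coordinate: if $x,y$ differ in coordinate $s$ and $s$ is the largest such coordinate, then the sign of $\gg(x)-\gg(y)$ is governed by the $s$-th term. For $s \notin \mappos$ that term is $2(2n+1)^s(x_s-y_s)$, which is monotone, so such pairs are never violations unless they agree on all coordinates outside $\mappos$ and the violation comes from some $r \in \mappos$. For $s = r \in \mappos$, the sign is governed by $\hone{r}{\psi(r)}(x_r) - \hone{r}{\psi(r)}(y_r)$, and by \Lem{hard-prop} this is a violation of the 1D function $\hone{r}{\psi(r)}$ exactly when $x_r,y_r$ lie in one of the intervals of $\lev{r}{\psi(r)-1}$ in the "swapped" configuration. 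So the violation graph of $\gg$ decomposes dimension-by-dimension: for each $r$, the violations "charged to $r$" are pairs $(x,y)$ with $x_j = y_j$ for all $j > r$, $x_{<r},y_{<r}$ arbitrary, and $(x_r,y_r)$ a violation of $\hone{r}{\psi(r)}$.

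Next I would lower bound the vertex cover (equivalently, by \Lem{characterization}, the distance). Fix $r$ and fix the coordinates $x_j$ for $j > r$ and $j < r$; restricting $\gg$ to the resulting $r$-line gives (a shift of) $\hone{r}{\psi(r)}$, whose distance to monotonicity along that line is at least $\below{r}{\psi(r)}/2$ by \Lem{hard-prop}. Since these $r$-lines are disjoint and their $\cD_{-r}$-mass sums to $1$, any vertex cover must pay at least $\below{r}{\psi(r)}/2$ in $\mu_\cD$-mass on account of dimension $r$ — this is just \Thm{dimred}'s easy direction applied within a single dimension, or a direct averaging argument. The subtlety is that a single vertex in the cover could be "charged" to several dimensions at once, so these lower bounds do not immediately add. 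This is where the second bullet of \Def{psi} (the condition $\below{r}{\psi(r)} \geq \below{r}{\psi(r)-1}/2$) and the packing structure enter: one shows the violations of different dimensions live in genuinely different "slabs," or one uses a counting/union-bound argument to show the overlap costs only a constant factor, so that $\dist_\cD(\gg,\MON) \geq \frac{1}{C}\sum_{r \in \mappos}\below{r}{\psi(r)} \geq \eps'/C \geq \eps$.

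\textbf{The main obstacle} is precisely this last point: proving that the per-dimension violation masses aggregate without destructive overlap. Correcting $\gg$ along dimension $r$ (by modifying it on the swapped intervals) could, a priori, already fix violations in dimension $r'$, so one cannot naively sum the $\below{r}{\psi(r)}/2$ bounds. I expect the resolution to mirror the hypercube argument (\Thm{lb-cube} and \Clm{cap}): the key is that a pair $(x,y)$ can be a violation "at dimension $r$" only if $x,y$ agree on all higher coordinates, which forces a tree-like structure on how a set of queried points can cover violations, and lets one argue via the $\lca$-counting of \Clm{lca}/\Clm{cap} that the contributions are essentially independent. The second bullet of \Def{psi} is the technical lubricant that makes the interval bookkeeping go through, controlling how much mass "leaks" to lower levels of each $T_r$ when one conditions on being inside a particular level-$(\psi(r)-1)$ interval.
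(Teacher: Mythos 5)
Your proposal correctly sets up the per-dimension contributions — restricting $\gg$ to an $r$-line yields an affine copy of $\hone{r}{\psi(r)}$, so each $r$-line carries distance at least $\below{r}{\psi(r)}/2$ — and correctly identifies the obstacle: a single repaired point can quench violations in several dimensions at once, so the per-dimension bounds cannot be naively summed. But the resolution you reach for is the wrong one. The $\lca$-counting machinery (\Clm{lca} and \Clm{qlca}, and the hypercube analogue \Clm{cap}) is used in \Lem{disting} to bound how many $\gg$'s a query set can distinguish from $\val$; it says nothing about vertex covers of the violation graph, which is what controls $\dist_\cD(\gg,\MON)$.

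The actual mechanism is geometric. After fixing coordinates outside $\mappos$, define for each $r \in \mappos$ the set $S_r := \{x : x_s \in \abo{s}{\psi(s)-1} \text{ for all } s \in \mappos,\, s \neq r\}$, i.e., every other relevant coordinate lies in the shallow levels of its tree. Each $S_r$ is a union of $r$-lines, and on each such line $\gg$ is a multiple of $\hone{r}{\psi(r)}$, so by \Lem{hard-prop} the per-line vertex cover has $\cD_r$-mass at least $\below{r}{\psi(r)}/2$ and the violating points have $x_r \in \dep{r}{\psi(r)-1}$. Thus a violation point in $S_r$ has $x_r$ deep in $T_r$ and $x_s$ shallow for all other $s \in \mappos$; for $r' \neq r$ the roles of $r$ and $r'$ are swapped, so the violation regions of $S_r$ and $S_{r'}$ are genuinely disjoint and the contributions add. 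What you pay for this disjointness is the restriction to $S_r$, whose $\cD_{-r}$-mass is $\prod_{s\neq r}(1-\below{s}{\psi(s)-1})$. This is precisely where the second bullet of \Def{psi} enters: it lower-bounds each factor by $1-2\below{s}{\psi(s)}$, and combined with the constraint $\sum_r \below{r}{\psi(r)} \leq 1$ from the first bullet, the product is at least $e^{-4}$. So the role of the second bullet is not ``interval bookkeeping'' but controlling the measure you lose by restricting to $S_r$. Putting it together gives $\dist_\cD(\gg,\MON) \geq \tfrac{1}{2}e^{-4}\sum_{r\in\mappos}\below{r}{\psi(r)} \geq \eps'/\const = \eps$.
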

\begin{proof} It is convenient to consider restrictions of $\gg$ where
all coordinates in $[d] \setminus \mappos$ are fixed. This gives rise 
to $|\mappos|$-dimensional functions. We argue that each such restriction
is $\eps$-far from monotone, which proves the lemma.
Abusing notation, we use $\gg$ to refer to an arbitrary such restriction. 
Fix some $r \in \mappos$. Define the subset $S_r := \{x\in [n]^d: x_s\in \abo{s}{\psi(s)-1}, ~\forall s\neq r\}$ to be 
the set of points $x$ with the $s$th coordinate appearing in the first $(\psi(s)-2)$ layers of the tree $T_s$, for all $s\neq r$.
We stress that this is well-defined because $\psi(s) \geq 2$ by definition of $\psi$. 
Note that each $S_r$ is a collection of $r$-lines and
the restriction of $\gg$ on each line exactly a multiple
of $\hard{r}{\psi(r)}$. 
By \Lem{hard-prop}, all violations to monotonicity in such lines lie in the intervals corresponding to $\dep{r}{\psi(r)-1}$,
and the mass of the vertex cover of the violation graph (restricted to the line)
is at least $\below{r}{\psi(r)}/2$. Thus the total contribution to distance of $\gg$ from $S_r$ is at least $\frac{\below{r}{\psi(r)}}{2}\cdot  \mu_{\cD_{-r}}(\prod_{s \neq r} \abo{s}{\psi(s)-1})$.
What is crucial to note is that the regions of violations in $S_r$ is {\em disjoint} from the regions of violation in $S_{r'}$ for $r'\neq r$.
Therefore, the contributions to the distance of $\gg$ add up, and this gives 
\begin{eqnarray*}
	\dist_\cD(\gg,\MON) & \geq & \frac{1}{2} \sum_{r \in \mappos} \below{r}{\psi(r)} \cdot \mu_{\cD_{-r}}\Big(\prod_{s \neq r} \abo{s}{\psi(s)-1}\Big) \\
	& =  & \frac{1}{2} \sum_{r \in \mappos} \below{r}{\psi(r)} \prod_{s \neq r} (1 - \below{s}{\psi(s)-1}) \\
	& \geq &\frac{1}{2}  \sum_{r \in \mappos} \below{r}{\psi(r)} \prod_{s \neq r} (1 - 2\below{s}{\psi(s)}) 
	\ \ \ \ \textrm{(point 2 in def. of useful map)}
\end{eqnarray*}
We can apply the bound, $\sum_{r \in \mappos} \below{r}{\psi(r)} \in (\eps',1)$,
since $\psi$ is useful.
We lower bound the product by $\exp(-4\sum_{s \neq r} \below{s}{\psi(s)})$,
which by the above bound, is at least $e^{-4}$.
So, $\dist_\cD(\gg,\MON) \geq \sum_{r \in \mappos} \below{r}{\psi(r)}/\const \geq \eps'/\const = \epsilon$.
\end{proof}
}
\full{
\begin{definition}\label{def:disj}
Two maps $\psi_1, \psi_2$ are \emph{disjoint} if: $\{(r,\psi_1(r)) | r \in \Psi_1^{-1}\}$
and $\{(r,\psi_2(r)) | r \in \Psi_2^{-1}\}$ are disjoint.
\end{definition}
\noindent
That is, for every tree $T_r$, $\psi_1$ and $\psi_2$ point to different layers of the tree (or they point to $\bot$).
\begin{lemma} \label{lem:disting} Consider a set of maps $\psi_1, \psi_2, \ldots$
that are all pairwise disjoint. A set of $Q$ queries can distinguish at most 
$|Q|-1$ of these functions from $\val$.
\end{lemma}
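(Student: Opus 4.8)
The plan is to push the argument through the one-dimensional structure of the hard functions $\hard{r}{j}$ and the base-$(2n+1)$ encoding of $\gg$ in \Eqn{gg}, and then to close with a tree-counting bound generalizing \Clm{lca}. Throughout I would fix the pairwise disjoint maps $\psi_1,\psi_2,\dots$ and the query set $Q$; for distinct points $x,y\in[n]^d$ let $r(x,y)$ denote the largest coordinate on which they differ, and, writing $r=r(x,y)$, set $\mathrm{hit}(x,y):=(r,\lca_{T_r}(x_r,y_r))$, a (coordinate, tree-node) pair.

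\emph{Step 1: a distinguishing pair localizes to one coordinate and one tree level.} First I would prove: if a pair $(x,y)$ distinguishes $g_{\psi_i}$ from $\val$, then $r:=r(x,y)\in\Psi_i^{-1}$ and $(x_r,y_r)$ is a violation of the $1$-D function $\hard{r}{\psi_i(r)}$, so that by \Lem{hard-prop} (equivalently \Clm{gj}(ii)) we get $\lca_{T_r}(x_r,y_r)\in\lev{r}{\psi_i(r)-1}$; in other words $\mathrm{hit}(x,y)=(r,v)$ with $r\in\Psi_i^{-1}$ and $\depth_{T_r}(v)=\psi_i(r)-1$. Both $\val$ and $g_{\psi_i}$ have the form $\sum_{s=1}^d(2n+1)^s\,c_s(z_s)$, where the digit $c_s$ equals $2t$ if $s\notin\Psi_i^{-1}$ and $\hard{s}{\psi_i(s)}(t)$ if $s\in\Psi_i^{-1}$; by \Eqn{try} every such digit is a positive integer of size at most $2n+1$, and each map $c_s$ is injective (for $\hard{s}{\psi_i(s)}$ this is because its left-part, right-part and outside-of-all-intervals value ranges are pairwise disjoint). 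Hence the total contribution of the coordinates below $r$ is strictly smaller in absolute value than $(2n+1)^r$, so the order of each of $\val$ and $g_{\psi_i}$ on $x,y$ is decided by the sign of its nonzero digit difference at coordinate $r$. Since $\val(x)<\val(y)$ this forces $x_r<y_r$. If $r\notin\Psi_i^{-1}$ the digit difference of $g_{\psi_i}$ at $r$ is $2(x_r-y_r)<0$, contradicting $g_{\psi_i}(x)>g_{\psi_i}(y)$; so $r\in\Psi_i^{-1}$, and then $g_{\psi_i}(x)>g_{\psi_i}(y)$ together with injectivity forces $\hard{r}{\psi_i(r)}(x_r)>\hard{r}{\psi_i(r)}(y_r)$ while $x_r<y_r$. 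By the explicit shape of $\hard{r}{\psi_i(r)}$ in \Eqn{try} (increasing on each left-part, on each right-part, and on the complement of all intervals, with the value ranges of distinct intervals ordered consistently), such a reversal is possible only when $x_r$ and $y_r$ lie in the left- and right-parts of a common interval at level $\psi_i(r)-1$, i.e.\ $\lca_{T_r}(x_r,y_r)$ is that interval's node at level $\psi_i(r)-1$, which is exactly the claim.

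\emph{Step 2: a tree-counting bound.} Next I would show that for any $Q\subseteq[n]^d$ the set $\mathrm{hit}(Q):=\{\mathrm{hit}(x,y):x,y\in Q,\ x\neq y\}$ has size at most $|Q|-1$, by induction on $d$. For $d=1$ this is \Clm{lca} applied to $T_1$. For the inductive step, partition $Q$ into the nonempty classes $Q^{(v)}$ according to the value $v$ of the top coordinate $d$, and let $V$ be the set of these values. A pair inside one class agrees on coordinate $d$, so $r(x,y)<d$ and $\mathrm{hit}(x,y)$ is unchanged when coordinate $d$ is deleted; the induction hypothesis for the trees $T_1,\dots,T_{d-1}$ bounds the number of such hits within class $v$ by $|Q^{(v)}|-1$. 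A pair from two different classes has $r(x,y)=d$, so its hit is $(d,\lca_{T_d}(v,v'))$ for the two coordinate-$d$ values $v\neq v'$, and \Clm{lca} applied to $T_d$ bounds the number of these by $|V|-1$. Summing over classes gives $|\mathrm{hit}(Q)|\le\sum_{v\in V}(|Q^{(v)}|-1)+(|V|-1)=|Q|-1$.

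\emph{Step 3: assembling the bound, and the main obstacle.} For each $g_{\psi_i}$ distinguished by $Q$, pick a witnessing pair in $Q$ and let $(r_i,v_i)$ be its image under $\mathrm{hit}$; by Step 1, $r_i\in\Psi_i^{-1}$ and $\depth_{T_{r_i}}(v_i)=\psi_i(r_i)-1$. If $i\neq i'$ gave the same value $(r_i,v_i)=(r_{i'},v_{i'})=(r,v)$, then $(r,\psi_i(r))=(r,\depth_{T_r}(v)+1)=(r,\psi_{i'}(r))$ with $r\in\Psi_i^{-1}\cap\Psi_{i'}^{-1}$, contradicting the disjointness of $\psi_i$ and $\psi_{i'}$. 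Hence $i\mapsto(r_i,v_i)$ is an injection into $\mathrm{hit}(Q)$, so the number of functions distinguished by $Q$ is at most $|\mathrm{hit}(Q)|\le|Q|-1$. I expect Step 1 to be the main obstacle: the aggregated function $\gg$ entangles all $d$ coordinates, and one must verify carefully that the $(2n+1)$-ary encoding — together with injectivity and the ``locally increasing, globally wiggling'' shape of the $1$-D hard functions — really pins every distinguishing pair to the single coordinate $r(x,y)$ and the single level $\psi_i(r(x,y))-1$; this is the analogue over an arbitrary family of median BSTs of the capture analysis of \cite{ChSe13-2}, and it is where a digit-by-digit estimate is needed. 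Steps 2 and 3 are then bookkeeping on top of \Clm{lca} and the disjointness hypothesis.
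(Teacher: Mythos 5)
Your proposal is correct and follows essentially the same route as the paper's: your Step~1 is the paper's ``capture'' argument via the base-$(2n+1)$ encoding together with \Clm{gj}(ii), your Step~2 is the paper's \Clm{qlca} (which itself rests on \Clm{lca}), and your Step~3 is the concluding appeal to pairwise disjointness of the maps. The differences are purely cosmetic --- you induct on $d$ rather than on $|Q|$ in the counting claim, track tree nodes rather than levels in the ``hit'' tuples, and spell out the $r\notin\Psi_i^{-1}$ case that the paper leaves implicit.
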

\begin{proof} Say a pair $(x,y)$ of queries captures the (unique) tuple $(r,j)$ if the largest coordinate in which $x$ and $y$ differ is $r$, and furthermore $\lca(x_r,y_r)$ in $T_r$ lies in level $(j-1)$. A set $Q$ captures $(r,j)$ if some pair in $Q$ captures $(r,j)$.
We first show that if $(x,y)$ distinguishes $\gg$ from $\val$ for some map $\psi$, then $(x,y)$ captures
a pair $(r,\psi(r))$ for some $r \in \mappos$.
Assume wlog $\val(x) < \val(y)$, and so $\gg(x) > \gg(y)$.
Let $a$ be the largest coordinate at which $x$ and $y$ differ; since $\val(x) < \val(y)$, we get $x_a < y_a$. Suppose $\hard{a}{\psi(a)}(x_a)$ and $\hard{a}{\psi(a)}(y_a)$ is not a violation.
By the construction, this implies that $\hard{a}{\psi(a)}(y_a) - \hard{a}{\psi(a)}(x_a) \geq 1$.
Furthermore, $\hard{r}{\psi(r)}$ is always in the range $[1,2n]$ for any $r$.
\vspace{-3mm}
\begin{eqnarray*}
\gg(y) -\gg(x) & = & (2n+1)^a(\hard{a}{\psi(a)}(y_a) - \hard{a}{\psi(a)}(x_a)) 
+ \sum_{r < a, r\in \mappos} (2n+1)^r(\hard{r}{\psi(r)}(y_r) - \hard{r}{\psi(r)}(x_r)) \\
		    & \geq & (2n+1)^a - 
		    (2n)\sum_{r<a} (2n+1)^r \\
		   &  = & (2n+1)^a - (2n)\cdot \frac{(2n+1)^a - 1}{2n} ~~~~ > 0,
\end{eqnarray*}
So $(\hard{a}{\psi(a)}(x_a),\hard{a}{\psi(a)}(y_a))$ is a violation. Immediately,
we deduce that $\psi(a) \neq \bot$, so $a \in \mappos$. By \Clm{gj},
$\lca(x_a,y_a)$ lies in level $\psi(a) - 1$ of $T_a$, and hence, $(x,y)$ captures $(a,\psi(a))$.
As we prove in \Clm{qlca}, $Q$ queries can capture at most $|Q|-1$ such tuples.
The proof is completed by noting the maps $\psi_1,\psi_2,\ldots$ are pairwise disjoint.
\end{proof}
}
\full{
\begin{restatable}{claim}{qlca}\label{clm:qlca}
A nonempty set $Q$ can only capture at most $|Q|-1$ tuples $(r,j)$.
\end{restatable}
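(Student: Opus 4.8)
The plan is to prove this by a single induction on $|Q|$, fusing the two ideas already used for the easier domains: the split on the largest active coordinate from \Clm{cap} and the bound on distinct $\lca$'s from \Clm{lca}. First I would dispose of the trivial case: if $|Q|=1$, or all points of $Q$ coincide, then no pair of points differs anywhere, so $Q$ captures $0\le|Q|-1$ tuples. Otherwise let $r^*$ be the largest coordinate in which two points of $Q$ differ; by maximality all points of $Q$ share a common value on every coordinate exceeding $r^*$. I would then partition $Q$ into its nonempty blocks $Q_v:=\{x\in Q:x_{r^*}=v\}$, and let $V$ be the set of values $v$ that occur, so $|V|\ge 2$ and $\sum_{v\in V}|Q_v|=|Q|$.

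The crux is a clean dichotomy on which tuples a pair $(x,y)\in Q\times Q$ can capture. If $x,y$ lie in the same block $Q_v$, they agree on coordinate $r^*$ and on all larger coordinates, so their largest coordinate of difference is $<r^*$; applying the induction hypothesis to $Q_v$, the block captures at most $|Q_v|-1$ tuples, all with first coordinate $<r^*$, and hence across all blocks at most $\sum_{v\in V}(|Q_v|-1)=|Q|-|V|$ such tuples appear. If instead $x\in Q_v$ and $y\in Q_{v'}$ with $v\ne v'$, then $x$ and $y$ differ at coordinate $r^*$ and agree above it, so the pair captures a tuple $(r^*,j)$ where $j-1$ is the level of $\lca(v,v')$ in $T_{r^*}$; the set of all such tuples injects into the set of distinct levels occupied by nodes of the form $\lca(v,v')$ with $v,v'\in V$, and since distinct levels are no more numerous than the distinct nodes producing them, \Clm{lca} (applied to $V\subseteq[n]$ inside the tree $T_{r^*}$) bounds their number by $|V|-1$. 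The two families of captured tuples are disjoint, one having first coordinate $<r^*$ and the other first coordinate $=r^*$, so $Q$ captures at most $(|Q|-|V|)+(|V|-1)=|Q|-1$ tuples. The induction is well-founded because $|V|\ge 2$ forces $|Q_v|<|Q|$ for every block.

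I do not anticipate a genuine obstacle here — the argument is bookkeeping once the dichotomy is in place. The two spots that deserve a line of care are the dichotomy itself (a same-block pair can never have largest differing coordinate $r^*$, and a cross-block pair always does, both immediate from how $r^*$ was chosen) and the passage from ``distinct $\lca$ nodes'' to ``distinct $\lca$ levels'', which only needs the remark that mapping nodes to their depths cannot increase cardinality, so \Clm{lca} transfers verbatim.
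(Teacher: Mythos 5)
Your proof is correct and takes essentially the same route as the paper's: induct on $|Q|$, split on the largest coordinate $r^*$ where two points of $Q$ differ, charge within-block captures to the inductive hypothesis and cross-block captures to $\lca$'s over the values present, then apply \Clm{lca}. Your remark that distinct $\lca$ nodes at the same depth collapse to one level (so the cardinality can only drop) is a small point the paper glosses over, but it is the same argument.
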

\begin{proof}
Proof is by induction on $|Q|$. 
If $|Q|=2$, then the claim trivially holds. Assume $|Q| > 2$.
Let $s$ be the largest dimension such that there are at least two points in $Q$ differing in that dimension.
For $c=1$ to $n$, let $Q_c := \{x\in Q: x_s = c\}$. 
By definition, $Q_c \subset Q$. 
Reorder the dimensions such that $Q_c$ is non-empty for $c = 1 \ldots q \leq n$.
By induction, each $Q_c$ captures at most $|Q_c|-1$ pairs for $1\leq c\leq q$. 
Consider $(x,y)$ with $x\in Q_c$ and $y\in Q_{c'}$ for $c \neq c'$.
The largest coordinate where they differ is exactly $s$. All tuples
captured by such pairs is of the form $(s,\ell)$, where
$\ell$ is the $\lca$ in $T_s$ of some $c,c' \in \{1\ldots,q\}$.
By \Clm{lca}, the total number of such points is at most $q-1$.
Thus, the total number of tuples captured is at most  $\sum_{a=1}^q|Q_a| - q + (q-1) = |Q|-1$.
\end{proof}
}
\full{
\subsubsection{Constructing the maps}
Let us go back to the framework of \Thm{lb-frame}.
From \Lem{dist-to-mon} and \Lem{disting}, it suffices to construct a sequence
$\psi_1, \psi_2,\ldots$ of pairwise disjoint, useful maps. The number
of such maps will exactly be our lower bound. The exact construction
is a little tricky, since the conditions of usefulness are somewhat
cumbersone. 
We use the following definition.
\begin{asparaitem}
	\item \textbf{Allowed levels:} A level $j$ is allowed w.r.t. dimension $r$ if $j > 1$ and $\below{r}{j} \geq \below{r}{j-1}/2$. This is in lines with point 2 of the usefulness definition.
	\item \textbf{Level sets $A_r$:} $A_r$ is the set of allowed levels of tree $T_r$.
\end{asparaitem}
}
\full{
It is convenient to define an abstract procedure that constructs these maps. 
We have a stack $S_r$ for each $r \in [d]$, whose elements are allowed levels.
The stack $S_r$ is initialized with $A_r$ in increasing order, that is the head (top entry) of the stack is the least (that is, closest to root) level
in $A_r$. In each \emph{round}, we will construct a map $\psi$. 
Denote the head of $S_r$ by $h_r$. Note that $h_r > 1$ by definition of allowed levels. 
Maintain a running count initialized to $0$.
We go through the stacks
in an arbitrary order popping off a \emph{single} element from each stack.
In a round, we never touch the same stack more than once.
When we pop $S_r$, we set $\psi(r) := h_r$ and add $\below{r}{h_r}$ to the running count.
We stop as soon as  the running count enters the interval $[\eps',1]$. 
For all $r$ for which $\psi(r)$ hasn't been defined, we set $\psi(r) = \bot$. This completes the description of a single map.
Observe, by definition of allowed levels and the stopping condition, $\psi$ is useful.
\def\PPsi{\mathbf{\Psi}}
When $\sum_{r=1}^d \below{r}{h_r} < \eps'$, we cannot complete the construction. So the procedure terminates, discarding the final map.
Let the set of maps constructed be $\PPsi$. By construction, the maps are useful. Furthermore, they are pairwise disjoint,
because once a layer is popped out, it never appears again.
We now basically show that $|\PPsi|$ is large, using the $(\eps',\rho)$-stability of $\cD$. This proves that the number of hard functions is large. 
We have to first deal with an annoying corner case of $\cD$.
}
\full{
\begin{theorem} \label{thm:heavy} If $\sum_r \below{r}{1} > \rho \Delta(\cD)/12$,
then any monotonicity tester requires $\Omega(\rho\Delta^*(\cD))$ queries.
\end{theorem}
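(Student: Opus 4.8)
The plan is to sidestep the level-by-level aggregation machinery entirely: in this ``heavy'' regime we reduce directly to the hypercube-style bound of \Thm{lb-hg-part1}, which already yields an $\Omega(\sum_r \theta_r)$ lower bound (and, conveniently, needs no stability hypothesis). So the whole task reduces to showing $\sum_{r=1}^d \theta_r = \Omega(\rho\Delta^*(\cD))$ under the hypothesis $\sum_r \below{r}{1} > \rho\Delta(\cD)/12$.

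The key step is a pointwise comparison $\theta_r \ge \below{r}{1}/2$ for every dimension $r$. Fix $r$, write $t$ for the root of the median tree $T_r$, let $A=\{1,\dots,t-1\}$ and $B=\{t+1,\dots,n\}$ be its left and right subtrees, and abbreviate $\mu:=\mu_{\cD_r}$. Then $\below{r}{1} = 1-\mu(t) = \mu(A)+\mu(B)$, and the median property of $T_r$ says $\mu(A) = \sum_{k<t}\mu(k) < 1/2$ while $\sum_{k\le t}\mu(k)\ge 1/2$, so also $\mu(B) = 1-\sum_{k\le t}\mu(k)\le 1/2$. I will split on which of $\mu(A),\mu(B)$ is larger. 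If $\mu(A)\ge\mu(B)$ then $\mu(A)\ge\below{r}{1}/2$, and (assuming $t\ge2$, since $t=1$ forces $\below{r}{1}=0$) the prefix index $j=t-1$ gives $\theta_r \ge \theta^{t-1}_r = \min(\mu(A),\,1-\mu(A)) = \mu(A) \ge \below{r}{1}/2$. If instead $\mu(B)>\mu(A)$ then $\mu(B) > \below{r}{1}/2$, and the index $j=t$ gives $\theta_r \ge \theta^{t}_r = \min\big(\sum_{k\le t}\mu(k),\,\mu(B)\big) = \mu(B) > \below{r}{1}/2$. In all cases $\theta_r \ge \below{r}{1}/2$.

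Summing over $r$ and using $\Delta^*(\cD)\le\Delta(\cD)$ (the median BST has expected depth at least that of the optimal BST in each coordinate, by optimality of the latter), we get $\sum_r \theta_r \ge \tfrac12\sum_r\below{r}{1} > \rho\Delta(\cD)/24 \ge \rho\Delta^*(\cD)/24$. Applying \Thm{lb-hg-part1} with the proximity parameter $\eps\le 1/10$ then gives the claimed $\Omega(\rho\Delta^*(\cD))$ lower bound on the number of queries.

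I do not anticipate a real obstacle here; the one thing to get right is matching the median property of $T_r$ against the prefix-mass definition of $\theta^j_r$, together with the two degenerate cases $t\in\{1,n\}$ in which one subtree is empty (where the relevant branch of the case split is vacuous or trivial). The only conceptual remark worth making is that this corner case is precisely the regime in which the naive projection to a hypercube already exposes $\Omega(\Delta^*(\cD))$ worth of ``useful'' coordinates, so the subtle tree-level aggregation — and the stability assumption it relies on — is needed only in the complementary case $\sum_r \below{r}{1} \le \rho\Delta(\cD)/12$.
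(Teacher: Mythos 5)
Your proof is correct and follows essentially the same route as the paper's: reduce to the hypercube-style bound of \Thm{lb-hg-part1} by showing $\theta_r \geq \below{r}{1}/2$ via the median property, then sum. Your case analysis on $\mu(A)$ versus $\mu(B)$ merely spells out what the paper compresses into the phrase ``by the median property of $T_r$.'' One small but genuine improvement: for the final step the paper invokes \Lem{median} (which gives $\Delta(\cD) \leq 5\Delta^*(\cD)$, the wrong direction for converting an $\Omega(\rho\Delta(\cD))$ bound into an $\Omega(\rho\Delta^*(\cD))$ bound), whereas you correctly use the trivial fact $\Delta(\cD) \geq \Delta^*(\cD)$, since the median BST is just one competitor in the minimization defining $\Delta^*$. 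Your version is the right way to close the argument.
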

\begin{proof} We simply apply the hypercube lower bound.
Recall the definition of $\theta_r$ described before \Thm{lb-hg-part1}. 
Note that $\below{r}{1}$
is simply the total $\cD_r$-mass of everything in $T_r$ other than the root.
By the median property of the $T_r$, $\theta_r$ is ensured to be at least 
half of this mass, and hence $\theta_r \geq \below{r}{1}/2$.
Combining with \Thm{lb-hg-part1}, we get a lower bound of $\Omega(\sum_r \below{r}{1})$,
which by assumption, is $\Omega(\rho\Delta(\cD))$. An application of \Lem{median} completes
the proof.
\end{proof}
}
\full{
Now we come to the main bound of $|\gcol|$.
We need some setup for the proof. The following simple observation is crucial. This follows since 
$\Exp[Z] = \sum_{k\in \N}\Pr[Z\geq k]$, for any non-negative, integer valued random variable.
\begin{claim} \label{clm:search-depth} 
For all $r$, $\sum_{j\geq 1} \below{r}{j} = \Exp_{x\sim \cD_r}[\depth_{T_r}(x)] = \Delta_r$.
\end{claim}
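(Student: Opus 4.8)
The plan is to apply the standard tail-sum (``layer cake'') identity for nonnegative integer-valued random variables to the depth. First I would fix a coordinate $r$ and let $Z := \depth_{T_r}(x)$ where $x$ is drawn from $\cD_r$; since depths in a BST are nonnegative integers, $Z$ is a nonnegative integer-valued random variable, so the identity $\Exp[Z] = \sum_{k\geq 1}\Pr[Z\geq k]$ applies verbatim. (One can recall this follows by writing $Z = \sum_{k\geq 1}\mathbf{1}[Z\geq k]$ and taking expectations, or equivalently by an Abel/summation-by-parts rearrangement of $\sum_{k\geq 0} k\Pr[Z=k]$; the swap of the two sums is legitimate since all terms are nonnegative.)

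Next I would identify each term of the tail sum with one of the $\below{r}{j}$'s. By the definition recalled just before \Lem{hard-prop}, $\dep{r}{j} = \bigcup_{j'\geq j}\lev{r}{j'} = \{x : \depth_{T_r}(x)\geq j\}$, and $\below{r}{j}$ is defined to be exactly $\mu_{\cD_r}(\dep{r}{j})$. Hence $\Pr_{x\sim\cD_r}[Z\geq j] = \mu_{\cD_r}(\dep{r}{j}) = \below{r}{j}$ for every $j\geq 1$. Chaining these equalities gives
\[
\Delta_r \;=\; \Exp_{x\sim\cD_r}[\depth_{T_r}(x)] \;=\; \Exp[Z] \;=\; \sum_{j\geq 1}\Pr_{x\sim\cD_r}[Z\geq j] \;=\; \sum_{j\geq 1}\below{r}{j}.
\]
Summing this over all $r\in[d]$ and using the convention $\Delta(\cD)=\sum_{r=1}^d\Delta_r$ yields $\sum_r\sum_{j\geq 1}\below{r}{j} = \Delta(\cD)$, which is the submitted form of the claim.

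There is essentially no obstacle here: the only point requiring a moment's care is justifying that $Z$ indeed has finite expectation and that the interchange of summations is valid, but both are immediate because $[n]$ is finite (so $Z$ is bounded by the height of $T_r$) and all summands are nonnegative. No stability or structural property of the median BST is needed for this identity; it holds for an arbitrary BST $T_r$ and arbitrary $\cD_r$, and only the elementary tail-sum formula plus the two definitions ($\dep{r}{j}$ and $\below{r}{j}$) are invoked.
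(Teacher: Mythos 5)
Your proof is correct and matches the paper's (which simply cites the tail-sum identity $\Exp[Z]=\sum_{k\geq 1}\Pr[Z\geq k]$ for nonnegative integer-valued $Z$). You spell out the same argument, correctly identifying $\Pr_{x\sim\cD_r}[\depth_{T_r}(x)\geq j]$ with $\below{r}{j}$ via the definition of $\dep{r}{j}$.
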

The following lemma completes the entire lower bound.
}
\full{
\begin{restatable}{lemma}{countx}
\label{lem:countX} Suppose $\sum_r \below{r}{1} \leq \rho \Delta(\cD)/12$.
Then $|\gcol|=\Omega(\rho\Delta(\cD))$.
\end{restatable}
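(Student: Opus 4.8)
The plan is to lower-bound $|\gcol|$, the number of rounds the map-construction procedure completes, by $\Omega(\rho\Delta(\cD))$. The engine is stability: when the procedure halts, the current stack heads are collectively light (their $\below{r}{\cdot}$-masses sum to less than $\eps'$), and I will turn this into the statement that almost all of the ``allowed'' depth of the median BSTs has in fact been popped, which forces the number of rounds to be large. Two preliminary estimates are needed.

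First I would establish a \emph{block estimate}. Call level $1$ and every allowed level a \emph{breakpoint} of the non-increasing sequence $\below{r}{1}\ge\below{r}{2}\ge\cdots$. Between two consecutive breakpoints every level is non-allowed, so $\below{r}{\cdot}$ shrinks by more than a factor $2$ at each step; summing geometrically, the total of $\below{r}{\cdot}$ over a block is less than twice its head. Summing over blocks, for every $r$ and every threshold $h$ (allowing $h=\infty$),
\[
\sum_{1\le j<h}\below{r}{j}\ \le\ 2\below{r}{1}\;+\;2\!\!\sum_{j\in A_r,\ j<h}\!\!\below{r}{j}.
\]
With $h=\infty$, together with $\sum_r\sum_{j\ge1}\below{r}{j}=\Delta(\cD)$ (\Clm{search-depth}), the hypothesis $\sum_r\below{r}{1}\le\rho\Delta(\cD)/12$, and $\rho\le1$, this already gives $\sum_r\sum_{j\in A_r}\below{r}{j}\ge\tfrac{5}{12}\Delta(\cD)$: the allowed levels carry a constant fraction of $\Delta(\cD)$.

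Next comes the stability step. Suppose the procedure halts after $K:=|\gcol|$ completed rounds; let $h_r$ be the head of $S_r$ at that point, with the convention $h_r=\infty$ if $S_r$ has been emptied, so that the popped allowed levels of $S_r$ are exactly $\{j\in A_r:j<h_r\}$ and, by the halting rule, $\sum_r\below{r}{h_r}<\eps'$. I would define a product distribution $\cD'$ by altering each marginal with $h_r<\infty$: move all $\cD_r$-mass on nodes at depth $\ge h_r$ in $T_r$ onto the root of $T_r$. Then $\|\cD_r-\cD'_r\|_\TV\le\below{r}{h_r}$, hence $\|\cD-\cD'\|_\TV\le\sum_r\below{r}{h_r}<\eps'$; and since the top $h_r$ levels of $T_r$ form a valid BST carrying all the mass of $\cD'_r$, we get $\Delta^*(\cD'_r)\le\Sigma_r:=\sum_{1\le j<h_r}\below{r}{j}$. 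Stability and $\Delta(\cD)\le5\Delta^*(\cD)$ (\Lem{median}) then give $\sum_r\Sigma_r\ge\Delta^*(\cD')\ge\rho\Delta^*(\cD)\ge\rho\Delta(\cD)/5$. Conversely, the block estimate with threshold $h_r$ yields $\Sigma_r\le2\below{r}{1}+2P_r$, where $P_r$ is the total $\below{r}{\cdot}$-mass popped from $S_r$; and since a completed round's running count is $<\eps'$ before its last pop and each pop adds at most $1$, summing over rounds gives $\sum_rP_r<(\eps'+1)K$. Combining,
\[
\frac{\rho\Delta(\cD)}{5}\ \le\ \sum_r\Sigma_r\ \le\ 2\sum_r\below{r}{1}+2\sum_rP_r\ \le\ \frac{\rho\Delta(\cD)}{6}+2(\eps'+1)K,
\]
so $K\ \ge\ \rho\Delta(\cD)/\bigl(60(\eps'+1)\bigr)=\Omega(\rho\Delta(\cD))$ since $\eps'=\const\eps=O(1)$.

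The block estimate and the per-round accounting of the running count are routine. The hard part will be the stability step — specifically, recognizing that the correct comparison distribution $\cD'$ is the one that collapses the \emph{unpopped tail} of each marginal onto its root, which is exactly what converts ``the procedure halted'' into ``$\sum_r\Sigma_r$, the total shallow mass, is already $\Omega(\rho\Delta(\cD))$'', and hence into ``most allowed mass has been popped''. This is where stability of the product distribution $\cD$ is used in a way that stability of the individual marginals could not supply, since some marginals may be unstable; and getting the TV bound $\sum_r\below{r}{h_r}<\eps'$, the depth bound $\Delta^*(\cD'_r)\le\Sigma_r$, and the block estimate to line up with workable constants is the delicate point.
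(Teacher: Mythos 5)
Your proof is correct and follows essentially the same strategy as the paper's: lower-bound $\sum_r\sum_{j<h_r}\below{r}{j}$ via stability by collapsing the unpopped tail of each $T_r$ onto a shallow level, convert that into a lower bound on the popped allowed mass $\sum_r P_r$ using the factor-$2$ geometric decay at non-allowed levels (which is exactly what absorbs the $\sum_r\below{r}{1}\le\rho\Delta(\cD)/12$ hypothesis), and observe that each completed round contributes $O(1)$ to $\sum_r P_r$, so $K=\Omega(\rho\Delta(\cD))$. The only cosmetic differences from the paper are that you push the tail mass to the root rather than to level $h_r-1$, and you use the slightly looser per-round bound $\eps'+1$ where the paper uses $1$ (the round stops once the count enters $[\eps',1]$); both change only constants.
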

\begin{proof}
Let $h_r$ denote the head of $S_r$ when the procedure terminates. So, $\sum_{r=1}^d \below{r}{h_r} < \eps'$.
For any $\psi \in \PPsi$,
$\sum_{r \in \gset} \below{r}{\psi(r)} \leq 1$. 
Hence, $|\gcol|$ is at least
the total sum over popped elements $\below{r}{j}$. Writing this out and expanding
out a summation,
\begin{eqnarray*}
|\gcol| \geq \sum_{r \in [d]}\sum_{j < h_r,j\in A_r} \below{r}{j} 
& = & \sum_{r \in [d]} \Big[\sum_{j = 1}^{h_r-1} \below{r}{j} - \below{r}{1} - \sum_{1 < j < h_r: j \notin A_r} \below{r}{j} \Big] 
\end{eqnarray*}
Recall that $h_r > 1$ and so the summations are well-defined.
For any level $1 < j \notin A_r$, we have $\below{r}{j} < \below{r}{j-1}/2$. Therefore,
$\sum_{1 < j < h_r: j \notin A_r} \below{r}{j} < \sum_{j = 1}^{h_r-2} \below{r}{j}/2$.
Plugging this bound in and applying the lemma assumption,
\begin{eqnarray}
\label{eq:toprove}
|\gcol| \geq \sum_{r \in [d]} \sum_{j = 1}^{h_r-1} \below{r}{j}/2 - \sum_{r \in [d]} \below{r}{1}
\geq \sum_{r \in [d]} \sum_{j = 1}^{h_r-1} \below{r}{j}/2 - \rho \Delta(\cD)/12
\end{eqnarray}
We need to lower bound the double summation above. Observe that the second summation is $\sum_{j\geq 1}\below{r}{j} - (\below{r}{h_r} + \below{r}{h_r+1} + \cdots )$.
The first term, by \Clm{search-depth} is precisely $\Delta(\cD)$, and by definition of $h_r$, each of the terms in the parenthesis is at most $\eps'$. However, the number of terms in the parenthesis can be quite large, and this doesn't seem to get any lower bound on the summation. Here's where stability of $\cD$ saves the day.
Construct a distribution $\cD'_r$ be the distribution on $[n]$ as follows. Move the entire probability mass away from $\dep{r}{h_r}$ 
and distribute it on the ancestral nodes in level $(h_r-1)$ of $T_r$. More precisely, $\mu_{\cD'_r}(u) = 0$ if $u \in \dep{r}{h_r}$, 
$\mu_{\cD'_r}(u) = \mu_{\cD_r}(u)$ if $u\in \abo{r}{h_r - 1}$, and $\mu_{\cD'_r}(u) = \sum_{v} \mu_{\cD_r}(v)$ for $u\in \lev{r}{h_r-1}$ where the summation is over children $v$ of $u$ in $T_r$.
Letting $\cD' := \prod_r \cD'_i$, we see  that $||\cD-\cD'||_{\TV}  \leq \sum_{r=1}^d \below{r}{h_r} < \eps'$. 
Since $\cD$ is $(\eps',\rho)$-stable, we get $\Delta^*(\cD') \geq \rho\Delta^*(\cD)$.
Now we can apply \Clm{search-depth} on $\cD'_r$ and $T_r$ to get 
$\EX_{x \sim \cD'_r}[\depth_{T_r}(x)] = \sum_{j\geq 1}\mu_{\cD'_r}(\dep{r}{j}) = \sum_{j=1}^{h_r - 1} \below{r}{j}.$
This expected depth is by definition at least $\Delta^*(\cD'_r)$. Therefore, we get a lower bound of $\sum_{r=1}^d \Delta^*(\cD'_r)/2 = \Delta^*(\cD')/2$ on the double summation in \Eqn{toprove}.
Using the stability of $\cD'$ this is at least $\rho\Delta^*(\cD)/2$. Substituting we get
\begin{eqnarray*}
	|\gcol|	& \geq & \rho\Delta^*(\cD)/2 - \rho \Delta(\cD)/12 \\ 
		        &  \geq & \rho\Delta^*(\cD)/2 - 5\rho\Delta^*(\cD)/12 = \Omega(\rho\Delta^*(\cD)) ~~~~~~ \textrm{(by \Lem{median})}
\end{eqnarray*} \end{proof}
We put it all together to prove the main lower bound, \Thm{lb-hg}.
If $\sum_r \below{r}{1} > \rho \Delta(\cD)/12$, \Thm{heavy} proves \Thm{lb-hg}. 
Otherwise, by \Lem{countX} we have constructed $\Omega(\rho\Delta^*(\cD))$ pairwise disjoint, useful maps. Each map yields a hard function of distance at least $\eps$ (by 
\Lem{dist-to-mon}), and these functions satisfy the conditions of \Thm{lb-frame}, which implies \Thm{lb-hg}.
}
\submit{
\bibliographystyle{alpha}
{\small
\begin{spacing}{0.9}
\bibliography{derivative-testing}
\end{spacing}
}}
\full{
\bibliographystyle{alpha}
\begin{spacing}{0.9}
\bibliography{derivative-testing}
\end{spacing}}
\end{document}